\theoremstyle{plain}
\newtheorem{defi}{Definition}
\newtheorem{defiA}{Definition}
\newtheorem{lemma}[defi]{Lemma}%environment for lemmas in main text, which are restated in appendix
\newtheorem{lemmaT}[defi]{Lemma}%environment for lemmas not restated in appendix
\newtheorem{lemmaA}[defiA]{Lemma}%environment for the restated lemma in appendix
\newtheorem{coro}[defi]{Corollary}
\newtheorem{coroA}[defiA]{Corollary}
\newtheorem{theo}[defi]{Theorem}
\newtheorem{theoA}[defiA]{Theorem}
\newtheorem{hypo}{Hypothesis}
\theoremstyle{definition}
\def\tr{{\rm tr}}   %matrix trace
\def\TR{{\rm TR}}   %map trace
\def\W{\mathbf{W}}  %Rates
\def\J{\mathcal{J}} %Jump superoperator
\def\RR{\mathbf{R}} %Escape rate operator
\def\F{\mathbf{F}}  %Tilter of transition matrix
\def\FF{\mathcal{F}}    %Tilter of transition operator
\def\Q{\mathbf{Q}}  %Transition operator of counted jumps
\def\P{\mathbf{P}}  %Transition operator of all jumps
\def\PP{\mathbb{P}} %Probability law
\def\L{\mathbf{L}}  %Classical Markov generator
\def\LL{\mathcal{L}}    %GKLS generator
\def\A{\mathfrak{A}}    %Set of counted jumps
\def\EE{\mathfrak{E}}   %Set of allowed jumps
\def\E{\mathbb{E}}  %Expectation
\def\S{\mathbf{S}}  %Projection
\def\D{\mathbf{D}}  %Diagonal operator
\def\Id{{\bf I}_{M_k(\mathbb{C})}}    %Identity map
\def\I{{\bf 1}}   %Identity Matrix
\def\Iv{\underline{1}}   %One vector
\begin{document}

\title[Article Title]{Bounds on Fluctuations of First Passage Times for Counting Observables in Classical and Quantum Markov Processes}

\author[1,2]{\fnm{George} \sur{Bakewell-Smith}}
\equalcont{These authors contributed equally to this work.}

\author[1,2,3]{\fnm{Federico} \sur{Girotti}}
\equalcont{These authors contributed equally to this work.}

\author[1,2]{\fnm{M\u{a}d\u{a}lin} \sur{Gu\c{t}\u{a}}}

\author[4,2]{\fnm{Juan} \sur{P. Garrahan}}

\affil[1]{\orgdiv{School of Mathematical Sciences}, \orgname{University of Nottingham}, \orgaddress{\city{Nottingham}, \postcode{NG7 2RD}, \country{UK}}}

\affil[2]{\orgdiv{Centre for the Mathematics and Theoretical Physics of Quantum Non-Equilibrium Systems}, \orgname{University of Nottingham}, \orgaddress{\city{Nottingham}, \postcode{NG7 2RD}, \country{UK}}}

\affil[3]{\orgdiv{Department of Mathematics}, \orgname{Politecnico di Milano}, \orgaddress{\street{Piazza L. da Vinci 32}, \city{Milan}, \postcode{20133}, \country{Italy}}}

\affil[4]{\orgdiv{School of Physics and Astronomy}, \orgname{University of Nottingham}, \orgaddress{\city{Nottingham}, \postcode{NG7 2RD}, \country{UK}}}

\abstract{
We study the statistics of first passage times (FPTs) of trajectory observables in both classical and quantum Markov processes. We consider specifically the FPTs of {\em counting observables}, that is, the times to reach a certain threshold of a trajectory quantity which takes values in the positive integers and is non-decreasing in time. For classical continuous-time Markov chains we rigorously prove: (i) a large deviation principle (LDP) for FPTs, whose corollary is a strong law of large numbers; (ii) 
a concentration inequality for the FPT of the dynamical activity, which provides an upper bound to the probability of its fluctuations to all orders; and (iii) an upper bound to the probability of the tails for the FPT of an arbitrary counting observable. For quantum Markov processes we rigorously prove: 
(iv) the quantum version of the LDP, and subsequent strong law of large numbers, for the FPTs of generic counts of quantum jumps; (v) a concentration bound for the the FPT of total number of quantum jumps, which provides an upper bound to the probability of its fluctuations to all orders, 
together with a similar bound for the sub-class of quantum reset processes which requires less strict irreducibility conditions;  and (vi) a tail bound for the FPT of arbitrary counts. Our results allow to extend to FPTs the so-called ``inverse thermodynamic uncertainty relations'' that upper bound the size of fluctuations in time-integrated quantities. We illustrate our results with simple examples. 
}

\keywords{Classical Markov chains, quantum Markov processes, Counting observables, First passage times, Large deviations, Concentration bounds, Thermodynamic uncertainty relations}

\maketitle

\section{Introduction and Summary of Results}\label{sec:intro}
%%%%%NEW

The evolution of most physical systems, whether they be classical or quantum, is characterised by  fluctuations owing to their interaction with the environment \cite{chandler1987introduction,gardiner2004handbook,gardiner2004quantum,seifert2012stochastic,limmer2024statistical}. The study of such dynamical fluctuations in stochastic systems is of central importance for several reasons. On a practical note, fluctuations have a significant impact on the estimation precision of unknown parameters of the dynamics and on the performance of a system for practical uses, such as in heat engines, mechanical and biological clocks, biological motors, or quantum machines. On a conceptual level, quantifying the probabilities of fluctuations away from typical behaviour allows to systematically classify and explain the properties of the dynamics. 

In a stochastic system the main quantities of interest are time-integrated observables of the trajectories of the dynamics, both of time-asymmetric quantities such as particle and energy currents
\cite{seifert2012stochastic,limmer2024statistical}, and of time-symmetric quantities such as dynamical activities \cite{merolle2005space-time,lecomte2007thermodynamic,garrahan2007dynamical,maes2020frenesy:}, or their quantum analogues such as counts of emissions into the environment \cite{plenio1998the-quantum-jump,gardiner2004quantum}. The statistics of these can be quantified using the tools of the theory of large deviations (LD) \cite{touchette2009the-large,dembo2010large}, which in turn allows one to define an ensemble method (akin to the configuration ensemble method of equilibrium statistical mechanics) for stochastic trajectories, with the associated concept of (dynamical) phases and (dynamical) phase transitions \cite{merolle2005space-time,garrahan2007dynamical,hedges2009dynamic,garrahan2018aspects,jack2019ergodicity}. Closely related to time-integrated quantities are their first passage times (FPTs) \cite{redner2001a-guide}, i.e., the times it takes for the observable to reach a certain fixed threshold. Trajectory ensembles can also be studied in terms of FPTs using LD methods \cite{budini2014fluctuating,kiukas2015equivalence}.

A class of fundamental results for trajectory observables are so-called {\em thermodynamic uncertainty relations} (TURs) \cite{barato2015thermodynamic,gingrich2016dissipation}. In their broadest acceptation, TURs are general \textit{lower bounds} on the probability of fluctuations of time-integrated trajectory observables, expressed in terms of global quantities such as entropy production or dynamical activity. The importance of TURs stems from the fact that they point out fundamental physical limitations, for example that larger precision of estimation (implying smaller fluctuations) can only be attained at the cost of higher entropy production and/or higher dynamical activity. Just like for time-integrated quantities, a form of the TUR also provides lower bounds to the probabilities of fluctuations of FPTs \cite{garrahan2017simple,gingrich2017fundamental}. For a sample of the by now large literature on TURs see for example 
\cite{pietzonka2016universal,polettini2016tightening,pietzonka2017finite-time,horowitz2017proof,proesmans2017discrete-time,dechant2018entropic,brandner2018thermodynamic,macieszczak2018unified,barato2018a-unifying,nardini2018process,carollo2019unraveling,guarnieri2019thermodynamics,chun2019effect,koyuk2019operationally,koyuk2020thermodynamic,liu2020thermodynamic,hasegawa2020quantum,hasegawa2022thermodynamic,vo2022unified} and for a review \cite{horowitz2020thermodynamic}.

In contrast to the lower bounds of the TUR, in recent works \cite{bakewell-smith2023general,girotti2023concentration}
we addressed the problem of formulating general {\em upper bounds} to the probability of fluctuations of time-integrated quantities in both classical \cite{bakewell-smith2023general} 
and quantum \cite{girotti2023concentration} stochastic systems using concentration techniques. 
We called these bounds ``inverse TURs'' (iTUR) \cite{bakewell-smith2023general}. Together with the TUR,  the iTUR provides a range that limits the probability of observing a fluctuation away from the average (and not just a one-sided bound). In fact, we showed in \cite{bakewell-smith2023general} that the iTUR can often be a tighter bound than the TUR. This can offer an advantage for estimation, which is countered by the fact that while the iTUR also depends on general features of the system (relaxation time, shortest time-scale) it is not as operationally accessible as the TUR.

In this paper we extend these upper bounds (or iTURs) to the FPTs of counting observables, i.e. quantities corresponding to the accumulated number of changes of the state of the system along time. We derive these upper bounds on FPTs for both classical and quantum  stochastic systems. Anticipating the notation that we will define in detail below, if $\EE$ indicates the set of all possible jumps given the dynamics (all possible transitions between configurations in the classical case, or all possible dissipative events in the quantum case) and $\A \subseteq \EE$ the subset of jumps that we are interested in, let $K_\A(t)$ be the associated stochastic process that counts the number of jumps 
belonging to $\A$ that occur in a stochastic trajectory up to time $t$. The FPT at level $k \in \mathbb{N}$, which we denote $T_\A(k)$, is the time at which $K_\A(t)$ reaches the value $k$ 
for the first time. For example, consider a quantum system whose state evolves according to a quantum Markov semigroup coupled to several ``emission channels'': when the experimenter performs continuous-time counting of measurements in each of these channels they observe a random sequence of clicks from the sensors; the FPT $T_\A(k)$ is when the count of the clicks of interest reaches $k$. (Throughout we assume the dynamics to be irreducible which implies the existence of a unique stationary state.)

This paper presents two sets of results. The first set is for the FPTs in classical systems evolving with continuous-time Markov dynamics. The second set of results generalises the classical ones to continuous-time quantum Markov chains. Our results below can be summarised as follows:

\smallskip 

(i) We show rigorously that the FPTs of classical counting observables, $T_\A^c(k)$, satisfy a large deviation principle or LDP (Theorem \ref{theo:ldpC} below), which, loosely speaking, means that (cf.\ \cite{budini2014fluctuating,garrahan2017simple})
\begin{equation}
    \PP\left (\frac{T^{\rm c}_\A(k)}{k} =dt \right )\asymp e^{-kI^{\rm c}_\A(dt)} ,
    \nonumber
\end{equation}
(where the label ``c'' stands for ``classical'')
for a non-negative generalised function $I^{\rm c}_\A$, called the LD {\em rate function}. $I^{\rm c}_\A$ admits a unique minimum in $\langle t_\A \rangle_{\rm c}$, where it vanishes. This fact, together with some regularity properties of the rate function, implies the strong law of large numbers:
\begin{equation}
    \lim_{k \rightarrow +\infty}\frac{T^{\rm c}_\A(k)}{k} =\langle t_\A \rangle_{\rm c} \quad {\rm a.s.}. 
    \nonumber
\end{equation}
One can see that $\langle t_\A \rangle_{\rm c}$ is the expected time that it takes to observe a jump in $\A$ in the stationary regime. The law of large numbers states that $T^{\rm c}_\A(k)/k$ converges a.s.\ to some asymptotic value and the LDP ensures that fluctuations away from that value decay exponentially fast in the threshold $k$. 
Notice that these are statements \textit{asymptotic} in $k$. 

\smallskip

(ii) Building on the above, we prove an upper bound for the probability of fluctuations of the FPT of the {\em dynamical activity} away from the asymptotic limit and for \textit{any finite threshold $k$} in the form of a concentration bound (Theorem \ref{theo:dynActC} below)
\begin{equation} 
    \max\left \{\PP \left(\frac{T^{\rm c}_{\EE}(k)}{k}\geq \langle t_{\EE} \rangle_{\rm c} + \gamma \right), \PP \left(\frac{T^{\rm c}_{\EE}(k)}{k}\leq \langle t_{\EE} \rangle_{\rm c} - \gamma \right) \right \}
    \leq
    e^{-k\hat{I}^{\rm c}_\EE(\gamma)} ,
    \nonumber
\end{equation}
for some bounding function $\hat{I}^{\rm c}_\EE$ that depends on general properties of the dynamics. 

\smallskip

(iii) We also prove a more general tail bound for right deviations, valid for every $k$ and for any kind of counting observable (not just the activity) in the form of a concentration bound (Theorem \ref{theo:countObs}) 
\begin{equation} 
    \PP\left(\frac{T^{\rm c}_\A(k)}{k} \geq \langle t_\A \rangle_{\rm c}+ \gamma \right)
    \leq 
    e^{-k\tilde{I}^{\rm c}_\A(\gamma)}, 
    \nonumber
\end{equation}
for every $\gamma \geq \overline{\gamma}\geq 0$, where $\overline{\gamma}$ is a constant depending on $\A$ and on the largest average FPT of the system\footnote{
    A quantity with an analogous meaning appears in some recent bounds obtained for the empirical time to reach a particular state in classical Markov processes \cite{bebon2023controlling}.
}, and where the bounding function $\tilde{I}^{\rm c}_\A$ is expressed in terms of general properties of the dynamics. 

\smallskip

(iv) We prove rigorously that the distribution of FPTs of counts of quantum jumps, $T_\A^q(k)$, also obeys a LDP (\ref{theo:ldpQ}), 
\begin{equation}
   \PP\left (\frac{T^{\rm q}_\A(k)}{k} =dt\right )\asymp e^{-kI^{\rm q}_\A(dt)} ,
    \nonumber
\end{equation}
(where the label ``q'' stands for ``quantum'') with rate function  $I^{\rm q}_\A$ which vanishes at its unique minimum $\langle t_\A \rangle_{\rm q}$. The strong law of large numbers follows,
\begin{equation}
    \lim_{k \rightarrow +\infty}\frac{T^{\rm q}_\A(k)}{k} =\langle t_\A \rangle_{\rm q} \quad {\rm a.s.},
    \nonumber
\end{equation}
with $\langle t_\A \rangle_{\rm c}$ the expected time that it takes to observe a quantum jump in $\A$. 

\smallskip

(v) For the case where the observable is the count of all quantum jumps, we prove the existence of a concentration bound for fluctuations of all orders of the corresponding FPT valid for all $k$, i.e., a quantum analogue of the classical result (ii)
\begin{equation} 
    \max \left \{\PP \left(\frac{T_{\EE}^{\rm q}(k)}{k}\geq \langle t_{\EE} \rangle_{\rm q} + \gamma \right) , \PP\left(\frac{T_{\EE}^{\rm q}(k)}{k}\leq \langle t_{\EE} \rangle_{\rm q} - \gamma \right) \right \}
    \leq
    e^{-k\hat{I}^{\rm q}_\EE(\gamma)} ,
    \nonumber
\end{equation}
where the bounding function $\hat{I}^{\rm q}_\EE$ depends on general properties of the dynamics. We prove this for generic quantum Markov chains (Theorem \ref{theo:dynActQ}) which requires a stronger irreducibility assumption on the dynamics than in the classical case (Hypothesis \ref{hypo:irrQuantPhi}). We also prove a more specific concentration bound for the total count in {\em quantum reset processes} (Theorem \ref{theo:dynActReset}) which requires a less stringent irreducibility condition (Hypothesis \ref{hypo:irrQuantL}). 

\smallskip

(vi) We prove a tail bound for right deviations of the FPT $T^{\rm q}_\A(k)$ of a generic quantum jump count for every $k$ 
\begin{equation} 
    \PP\left (\frac{T^{\rm q}_\A(k)}{k} \geq \langle t_\A \rangle_{\rm c}+\gamma\right )\leq e^{-k\tilde{I}^{\rm q}_\A(\gamma)},
\nonumber
\end{equation}
valid for $\gamma \geq \overline{\gamma}\geq 0$, where $\overline{\gamma}$ depends on $\A$ and on the largest average FPT of the system (Theorem \ref{theo:countObsQ}). 

\smallskip

The proof of every bound we derive below goes through Chernoff's inequality (\cite[Section 2.2]{boucheron2013concentration}) and is obtained upper bounding the corresponding moment generating function (MGF) of the FPTs. Easy corollaries are upper bounds of the variance of the FPTs for every value of $k$ in the stationary regime. This complements the lower bounds to probability of fluctuations given by the TURs, previously obtained in the form of upper bounds on the rate functions or on the relative precision \cite{garrahan2017simple,van-vu2022thermodynamics,hasegawa2022thermodynamic,pietzonka2023thermodynamic}. 

\smallskip 

The rest of the manuscript is organised as follows. The main text is split into two parts: Sec.~\ref{CMarkov} focuses on classical Markov processes, while Sec.~\ref{sec:qmarkov} focuses on quantum Markov processes. 
For the classical case, we introduce notation, definitions,  existing and preliminary results in Subsec.~\ref{sec:pn}. The LDP for classical FPTs is presented in Subsec.~\ref{sec.LD.classical}, the  classical concentration inequality for the FPT of the activity in Subsec.~\ref{sec:dynActC}, and the FPT tail bound in Subsec.~\ref{sec:countObs}. For quantum Markov processes we introduce notation, definitions and previous results in Subsec.~\ref{sec:qPrelims}. The LDP for FPT of quantum jump counts is presented in Subsec.~\ref{subsec:qLDP}, the concentration bound for the FPT of total number of counts in Subsec.~\ref{sec:genQuant}, the FPT of the total count in quantum reset processes  Subsec.~\ref{sec:reset}, and the tail bound for the FPT of more general quantum counting processes in Subsec.~\ref{sec:countObsQ}. In Sec.~\ref{sec:Conc} we provide our conclusions. The Appendices contain the proofs of the new theorems and lemmas presented in the main text.

%%%%%%%%%%%%%%%%%%%%%%%%%%%%%%%%%%%
\section{Large Deviation Principle and Concentration Bounds for FPTs in Classical Markov Processes}\label{CMarkov}
%%%%%%%%%%%%%%%%%%%%%%%%%%%%%%%%%%%%%%%

In this Section we begin by recalling the necessary notions regarding classical Markov chains, introducing notation and stating the assumptions that we make in this paper. In particular, we define first passage times (FPTs) corresponding to counting observables, that is, time-additive observables of trajectories which are non-decreasing (in contrast to currents), cf. \cite{garrahan2017simple}. After that, we prove that the sequence of FPTs satisfies a Large Deviation Principle \cite{touchette2009the-large} and we provide an expression for the {\em rate function} (the scaled logarithm of the probability). Then, we provide an upper bound on the fluctuations of the FPT for the dynamical activity \cite{garrahan2007dynamical,maes2020frenesy:}, and a tail bound for FPTs for generic counting observables. We illustrate our results with simple models, and in particular we discuss the behaviour of the bound for the FPT corresponding to the dynamical activity when the system is at conditions of metastability, i.e., near a first-order phase crossover.

\subsection{Preliminaries and Notation} \label{sec:pn}

Let us consider a continuous-time Markov chain $X:=(X_t)_{t\geq0}$ taking values in a finite configuration space $E$. For an initial distribution $\nu$ over configurations we denote by $\mathbb{P}_\nu$ the corresponding law of the process $X$ and $\mathbb{E}_\nu$ its integral. The stochastic generator of $X$ has the form:
\[
\L=\W-\RR,
\]
where the off-diagonal part, $\W=\sum_{x\neq y}w_{xy}\ket{x}\bra{y}$, encodes the 
rates of jumps (with $w_{xy}$ the transition rate from configuration $x$ to configuration $y$), and the diagonal part, $\RR=\sum_x R_x\ket{x}\bra{x}$, the escape rates (with $R_x = \sum_y w_{xy}$ the escape rate from configuration $x$). The generator $\L$ acts on complex valued functions $f:E \rightarrow \mathbb{C}$ via right multiplication, i.e., 
\[
f(x) \mapsto (\L  f)(x)=\sum_{y \in E}\L_{xy}f(y)=\mathbb{E}_{\delta_x}[f(X_1)] , 
\]
which is similar to the ``Heisenberg picture'' in quantum mechanics. The natural norm to consider in this setting is the $\infty$-norm, i.e.
\[
\|f\|_{\infty}:=\max_{x \in E}|f(x)|, \quad \|\L\|_{\infty\rightarrow \infty}:=\max_{\|f\|_\infty=1}\|\L f\|_\infty.
\]
By duality, $\L$ acts also on complex valued measures on $E$ (which can be identified with their density $\nu:E \rightarrow \mathbb{C}$) via left multiplication (cf.\ Schr\"odinger picture in quantum mechanics):
\[
\nu(x) \mapsto (\L_*\nu)(x):=(\nu \L)(x)=\sum_{y \in E}\nu(y)\L_{yx}=\mathbb{P}_\nu(X_1=x).
\]
Here the natural norm is the dual norm with respect to the $\infty$-norm, which is the $1$-norm:
\[
\|\nu\|_{1}:=\sum_{x \in E}|\nu(x)|, \quad \|\L_*\|_{1 \rightarrow 1}:=\max_{\|\nu\|_1=1}\|\L_*\nu  \|_1=\|\L\|_{\infty\rightarrow \infty}.
\]
% The off diagonal parts of $\L$ represent the transition rates $w_{xy}$, and the diagonals the escape rates $R_x=\sum_{y:y\neq x}w_{xy}$, and we have $x,y \in E$. 
We will often use the notation $\langle \nu, f \rangle$ to denote the integral of $f$ with respect to $\nu$, i.e. $\sum_{x \in E}\nu(x)f(x).$ We state below our main assumption for classical dynamics.

\bigskip \begin{hypo}[Irreducibility of $\L$]\label{hypo:irrC}
    There exists a unique fully supported measure $\hat{\pi}$ satisfying $\hat{\pi}\L=0$.
\end{hypo}

\bigskip The process $X$ can be equivalently described in terms of the corresponding jump process and holding times: indeed any trajectory takes the form of a sequence
\[
\omega=\{(x_0,t_0),(x_1,t_1),(x_2,t_2),\cdots,(x_k,t_k),\dots \},
\]
where $x_i$ is the state of the system before the $i$-th jump and $t_i$ is the time between the $(i-1)$-th and the $i$-th jump (we set $t_0=0$). The process describing the different states of the system along time (\textit{jump process}) is a discrete time Markov process with transition matrix given by 
\begin{equation}
\label{eq:P}
    \P=\RR^{-1}\W
\end{equation}
(notice that due to irreducibility, $R_x>0$ for every $x \in E$). If $\L$ is irreducible, then $\P$ is irreducible too. Indeed 
\[\begin{split}
    \hat{\pi} \L= 0 \Leftrightarrow \hat{\pi} \W= \hat{\pi}\RR \Leftrightarrow \hat{\pi} \RR \P= \hat{\pi} \RR
\end{split}
\]
and $\hat{\pi} \mapsto \hat{\pi}\RR$ is a positive linear bijection. Therefore the unique invariant measure of $\P$, denoted $\pi$ is related to the invariant measure of the continuous-time generator by
\begin{equation}
\label{eq:pi}
\pi=\frac{\hat{\pi}\RR}{\langle \hat{\pi}, \RR \Iv \rangle}
\end{equation}
where $\Iv$ stands for the function identically equal to $1$. Irreducibility of the dynamics means that $\pi$ has full support. Conditional to the jump process, the holding times $t_{i}$ are independent and $t_i$ (for $i\geq 1$) is exponentially distributed with parameter $R_{x_{i-1}}$.

\bigskip In practical applications one might be able to observe only certain jumps of the trajectory: we denote by $K_{xy}(t)$ the process that counts the number of transitions $x \to y$ up to time $t$, and more generally, given a nonempty subset $\A$ of the set of possible jumps $\EE:=\{(x,y):x,y\in E:w_{xy}>0\}$, we denote 
\[
K_\A(t)=\sum_{(x,y) \in \A}K_{xy}(t)
\]
the stochastic process that counts the number of jumps in $\A$ up to time $t$. The {\em dynamical activity}, or total number of jumps, is the observable corresponding to $\A=\EE$. The {\em first passage time} (FPT), $T_\A(k)$ for a trajectory observable $K_\A$ corresponding to the value $k \in \mathbb{N}$ is defined as:
\begin{equation}\label{fptdef}
T_\A(k)=\inf_{t\geq 0}\{t:K_\A(t)=k\}.
\end{equation}
In particular, the first passage time for the total activity corresponding to the level $k$ is given by the sum of the first $k$ holding times:
\begin{equation}\label{fptcounts}
T_\EE(k)=\sum_{i=1}^{k}t_i.
\end{equation}
Using the properties of the holding times described above, one finds that the moment generating function (MGF) of $T_\EE(k)$ is well defined for $u<R_{\min}:=\min_x R_x$ and is given by
\begin{equation} \label{eq:MGFcda}
\mathbb{E}_\nu[e^{uT_\EE(k)}]=\left \langle \nu ,\left ( \frac{\mathbb{\RR}}{\RR-u}\P\right )^k \Iv \right \rangle=\left \langle \nu ,\left ( \frac{\I}{\RR-u}\W\right )^k \Iv \right \rangle.
\end{equation}
An analogous formula can be found for every counting observable of the type in Eq. \eqref{fptdef}. First of all, it is useful to consider the following splitting of the evolution generator $\L$:
\begin{equation}\label{eq:splitL}
\L=\W_1+\underbrace{\W_2-\RR}_{\L_\infty},
\end{equation}
where $\W_1$ holds the rates of transitions in $\A$ and $\W_2$ the rates of transitions not in $\A$. Notice that we can always write the first passage time corresponding to the level $k$ as a sum of times between subsequent jumps in $\A$:
\[T_\A(k)=\sum_{i=1}^{k}s_i, \quad s_i:=T_\A(i)-T_\A(i-1).
\]
The process $Y = (y_0,\dots, y_k,\dots)$ determined by the state of the system at the sequence of times $\{T_\A(k)\}_{k=0}^{+\infty}$ is a discrete time Markov process with transition matrix given by
\begin{equation} \label{eq:qdef}
\Q:=-\frac{\I}{\L_\infty}\W_1.
\end{equation}
Indeed, using that $\L_\infty=\W_2 - \RR$, we can write
\begin{equation}\label{eq:Linfty.inverse}
-\frac{\I}{\L_\infty}=\frac{\I}{\RR-\W_2}=\frac{\I}{\I-\RR^{-1}\W_2}\frac{\I}{\RR}=\sum_{k \geq 0} \left ( \frac{\I}{\RR}\W_2 \right )^k\frac{\I}{\RR},
\end{equation}
and therefore
\begin{equation} \label{eq:qexp}
    \Q=\sum_{k \geq 0} \left ( \frac{\I}{\RR}\W_2 \right )^k\frac{\I}{\RR}\W_1.
\end{equation}

Since $\RR^{-1}\W_1$ and $\RR^{-1}\W_2$ are the sub-Markov operators that encode the probabilities of jumps which do and do not, respectively, belong to $\A$, Eq. \eqref{eq:qexp} expresses the fact that the probability of the jump $x \to y$ %in $\A$ 
for the process $Y$ is obtained by  summing up the probabilities of all possible trajectories of the jump process associated to $X$ that start in $x$, arrive in a state $z$ such that $(z,y) \in \A$ by using only jumps in $\A^C$, and then jump from $z$ to $y$. Integrating over all such possible paths, one can also show that for every $u < \overline{\lambda}:=-\max\{\Re(z): z \in {\rm Sp}(\L_\infty)\}$, the MGF of $T_\A(k)$ can be written as:
\begin{equation}\label{mgfDefC}
\begin{split}
\E_\nu[e^{uT_\A(k)}]&=\left \langle \nu, \left(\frac{\L_\infty}{u+\L_\infty}\Q \right)^k\Iv\right \rangle.
\end{split}
\end{equation}

We remark that for suitable choices of initial distributions and for finite $k$'s, $\E_\nu[e^{uT_\A(k)}]$ might be well defined even for some values of $u$ bigger or equal than $\overline{\lambda}$; nevertheless Theorem \ref{theo:ldpC} shows that in the large $k$ limit, the only values which play a nontrivial role are $u <\overline{\lambda}$.

For the case of dynamical activity, we have already mentioned that $\Q=\P$ is irreducible; more generally, $\Q$ is only irreducible on the subspace $\{y\in E:\exists x\in E:\,(x,y)\in\A\}$, the complement of which is transient. Indeed, $\Q$ admits as unique invariant measure
\begin{equation}\label{eq:invmeas}\varphi=\frac{\hat{\pi} \W_1}{
%\sum_{x,y\in E} \hat{\pi}(x)(\W_{1})_{ xy}
\langle \hat{\pi}, \W_1 \Iv \rangle
},\end{equation}
which, in general, is not fully supported.

From the expression of the moment generating function, using standard theory (see the Theorem \ref{theo:ldpC} below) one obtains that under $\mathbb{P}_\nu$ (for every initial law $\nu$) the following convergence holds true almost surely:
\begin{equation}
\label{eq:mean.t}
\frac{T_\A(k)}{k} \xrightarrow{a.s.} \langle t_\A \rangle:=\left \langle \varphi,-\frac{\I}{\L_\infty}\Iv\right \rangle.
\end{equation}
%FPTs can be used to estimate kinetical 
Lemma \ref{lem:tech1} below ensures that the expressions appearing in \eqref{eq:qdef}, \eqref{eq:qexp} and \eqref{mgfDefC} are well defined and that the identities are true. Before stating the lemma, we need to recall a few notions that will also be useful in the rest of the paper. Given a matrix $\mathbf{A} \in M_n(\mathbb{C})$, the spectral radius of $\mathbf{A}$ is defined as
\[
r(\mathbf{A}):=\max\{|z| :\, z \in {\rm Sp}(\mathbf{A})\}.
\]
The spectral radius is fundamental in studying the convergence of the geometric series $\sum_{k\geq 0} \mathbf{A}^k$, since Gelfand formula states that
\[
\lim_{k\rightarrow +\infty} \|\mathbf{A}^k\|^{\frac{1}{k}}=r(\mathbf{A}).
\]
Therefore, if $r(\mathbf{A})<1$, the series converges.
\begin{lemma} \label{lem:tech1}
    The following statements hold true:
    \begin{enumerate}
        \item $\overline{\lambda}:=-\max\{\Re(z):z \in {\rm Sp}(\L_\infty)\}>0$, hence $\L_\infty$ is invertible;
        \item $r(\RR^{-1}\W_2)<1,$ therefore $\sum_{k \geq 0} \S^k$ is well defined with $\S = \RR^{-1}\W_2$ and one has
        \[
        -\frac{\I}{\L_\infty}=\sum_{k\geq 0}\left ( \frac{\I}{\RR}\W_2 \right )^k\frac{\I}{\RR},
        \]
        \item for every $u <\overline{\lambda}$, one has
        \[
        \E_\nu[e^{uT_{\A}(k)}]=\left \langle \nu, \left(\frac{\L_\infty}{u+\L_\infty}\Q \right)^k\Iv\right \rangle,
        \]
        \item $\left \|\L_\infty^{-1} \right \|^{-1}_{\infty \rightarrow \infty}\leq \overline{\lambda}$.
    \end{enumerate}
\end{lemma}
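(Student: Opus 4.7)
The plan is to treat the four items in the order (2), (1), (3), (4), since each builds on the previous. For (2), I would observe that $\RR^{-1}\W_2 \leq \P := \RR^{-1}\W$ entrywise, with $\P - \RR^{-1}\W_2 = \RR^{-1}\W_1 \neq 0$ because $\A \neq \emptyset$ forces $\W_1 \neq 0$. Since $\P$ is stochastic and irreducible (as noted in the text following \eqref{eq:P}), so that $r(\P) = 1$, the strict Perron--Frobenius monotonicity of the spectral radius under irreducible entrywise domination gives $r(\RR^{-1}\W_2) < 1$. The Neumann series $\sum_{k \geq 0}(\RR^{-1}\W_2)^k$ therefore converges to $(\I - \RR^{-1}\W_2)^{-1}$, and right-multiplication by $\RR^{-1}$ together with the factorisation $-\L_\infty = \RR - \W_2 = \RR(\I - \RR^{-1}\W_2)$ delivers the claimed identity for $-\L_\infty^{-1}$.

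For (1), I would argue by contradiction: if $z \in \mathrm{Sp}(\L_\infty)$ had $\mathrm{Re}(z) \geq 0$ with eigenvector $v \neq 0$, then $(\W_2 - \RR)v = zv$ rewrites as $(\RR + z\I)^{-1}\W_2\, v = v$, the diagonal matrix $\RR + z\I$ being invertible because $R_x > 0$ and $\mathrm{Re}(z) \geq 0$ force $|R_x + z| \geq R_x > 0$. Hence $1 \in \mathrm{Sp}((\RR + z\I)^{-1}\W_2)$. The entrywise bound $|(\RR + z\I)^{-1}\W_2|_{xy} = (\W_2)_{xy}/|R_x + z| \leq (\RR^{-1}\W_2)_{xy}$, combined with $r(A) \leq r(|A|)$ and the monotonicity of $r(\cdot)$ on non-negative matrices, yields $r((\RR + z\I)^{-1}\W_2) \leq r(\RR^{-1}\W_2) < 1$, a contradiction.

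For (3), I would decompose $T_\A(k) = s_1 + \cdots + s_k$ and apply the strong Markov property of $X$ at the $\A$-jump times to reduce the MGF to
\begin{equation*}
    \E_\nu[e^{uT_\A(k)}] = \langle \nu, \Phi_\A(u)^k \Iv\rangle, \qquad \Phi_\A(u)_{xy} := \E_{\delta_x}\bigl[e^{u s_1}\,;\, y_1 = y\bigr].
\end{equation*}
Decomposing $\Phi_\A(u)$ over the number $n \geq 0$ of $\A^c$-jumps preceding the first $\A$-jump, and using that, conditional on the embedded jump chain, the holding times at a state $z$ are independent and exponential with rate $R_z$, a direct computation expands the single-step matrix as $\Phi_\A(u) = \sum_{n \geq 0}[(\RR - u\I)^{-1}\W_2]^n (\RR - u\I)^{-1}\W_1$. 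For $u$ in a neighbourhood of $0$ this Neumann series is absolutely convergent and sums to $-(u\I + \L_\infty)^{-1}\W_1$; substituting $\W_1 = -\L_\infty\Q$ from \eqref{eq:qdef} and commuting $\L_\infty$ with its resolvent then gives $\Phi_\A(u) = \L_\infty(u\I + \L_\infty)^{-1}\Q$. The identity is extended to the whole half-plane $\{\mathrm{Re}(u) < \overline{\lambda}\}$ by analytic continuation.

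Finally, for (4) I would invoke the Perron--Frobenius theorem for matrices with non-negative off-diagonal entries (obtained by shifting $\L_\infty + \alpha\I \geq 0$ for $\alpha$ large enough and applying classical Perron--Frobenius to the resulting non-negative matrix): this asserts that the spectral abscissa $-\overline{\lambda}$ is itself an eigenvalue of $\L_\infty$, with a non-negative eigenvector $v \neq 0$. Then $\L_\infty^{-1} v = -\overline{\lambda}^{-1} v$ and
\begin{equation*}
    \|\L_\infty^{-1}\|_{\infty \to \infty} \geq \frac{\|\L_\infty^{-1} v\|_\infty}{\|v\|_\infty} = \frac{1}{\overline{\lambda}},
\end{equation*}
which is the claimed bound after rearrangement. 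The main obstacle I anticipate is the analytic-continuation step in (3): confirming that the left-hand side is honestly holomorphic on $\{\mathrm{Re}(u) < \overline{\lambda}\}$ requires showing the MGF is finite there, which uses (1) together with an exponential decay estimate for the sub-Markov semigroup $e^{t\L_\infty}$, and is the place where the quantitative content of $\overline{\lambda} > 0$ is genuinely needed.
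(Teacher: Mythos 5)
Your proof is correct, and in items (1)–(2) it follows a genuinely different route from the paper's. The paper proves (1) directly via the stationary measure $\hat{\pi}$: assuming $\L_\infty f = 0$ for some $f\geq 0$, it writes $\L f = \W_1 f \geq 0$ and pairs with $\hat{\pi}$ to force $f = 0$ after invoking irreducibility of $\L$; item (2) is then proved by the same stationary-measure trick applied to $\pi$ and $\P$. You instead prove (2) first, via strict Perron--Frobenius monotonicity of the spectral radius under irreducible entrywise domination ($\RR^{-1}\W_2 \lneq \P$ irreducible with $r(\P)=1$ gives $r(\RR^{-1}\W_2)<1$), and then deduce (1) from (2) by passing to $(\RR+z\I)^{-1}\W_2$ and comparing entrywise. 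This is a clean, purely matrix-analytic argument that dispenses with the stationary measure entirely; the paper's version is shorter on the whole because it treats (1) and (2) as two instances of one pattern. For (3), the paper derives the MGF formula directly on all of $\{u < \overline{\lambda}\}$ by replacing $-(u+\L_\infty)^{-1}$ with the resolvent integral $\int_0^\infty e^{t(u+\L_\infty)}\,dt$ in the Dyson expansion; your version establishes the formula first for small $u$ via the Neumann series and then invokes analytic continuation. That works, but note it introduces an extra obligation — finiteness (hence holomorphy) of the MGF on the whole strip — which you correctly flag as requiring an exponential decay estimate for $e^{t\L_\infty}$. The paper's one-shot resolvent computation side-steps this. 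Item (4) is essentially the same in both: Perron--Frobenius for the Metzler matrix $\L_\infty$ (or, equivalently, for the positive semigroup $e^{t\L_\infty}$) places $-\overline{\lambda}$ in the spectrum, and this gives the lower bound on $\|\L_\infty^{-1}\|_{\infty\to\infty}$; the paper phrases it through $r(\L_\infty^{-1})\geq 1/\overline{\lambda}$, you through the eigenvector $v$, which amounts to the same inequality.
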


The proof of Lemma \ref{lem:tech1} can be found in Appendix \ref{app:tech1}. Loosely speaking, items 1 and 2 hold true because  $\L_\infty$ and $\RR^{-1}\W_2$ are the counterparts of $\RR$ and $\P$, respectively, 
obtained by considering a restricted set of jumps in the original irreducible Markov process.

\bigskip 
Finally, we introduce here some Hilbert space notions which will be needed in formulating our results and will be used in their proofs.
The space of complex functions on $E$ can be turned into a Hilbert space 
$L^2_\pi(E)$ using the inner product $\langle \cdot,\cdot\rangle_\pi$ with respect to the invariant measure $\pi$ defined in equation \eqref{eq:pi}
\[
\langle f,g \rangle_\pi:=\sum_{x\in E}\pi(x)\bar{f}(x)g(x).
\]

We use the notation $\|f\|_\pi$ for the corresponding norm. The adjoint $\mathbf{A}^\dagger$ of an operator $\mathbf{A}$ on $L^2_\pi(E)$ has matrix elements
$$
\mathbf{A}^\dagger_{xy}:=\frac{\pi(y)}{\pi(x)}\mathbf{A}_{yx}.
$$
From this it follows that $\P^\dagger$ is a transition operator in its own right. 
An important quantity in this work is the \emph{absolute spectral gap} of $\P$, which we denote by $\varepsilon$ and is defined as the spectral gap of $\P^\dagger \P$ (the multiplicative symmetrisation of $\P$):
\begin{equation}
\label{eq:spectral.gap.PP}\varepsilon:=1- \max\{\| \P f\|_\pi : \, \langle \pi, f \rangle=0, \, \|f\|_\pi=1\}.
\end{equation}

Using this we define the \emph{Le\'{o}n-Perron operator} $\hat{\P}$ associated to $\P$ as(\cite{LP04})
\begin{equation}
\label{eq.LeonPerron}
\hat{\P}:=(1-\varepsilon)\I+\varepsilon\Pi,
\end{equation}
where $\Pi$ is the map $\Pi: f\mapsto \langle f,\Iv\rangle_\pi\Iv$. $\hat{\P}$ is a self-adjoint transition operator which is simple to handle and will allow us to derive upper bounds for the fluctuations of FPTs of $\P$.

\subsection{Results on Classical Markov processes}
In this section we describe our results for classical Markov processes. We then illustrate these results by considering three simple specific examples.

%%%%%%%%%%%%%%%%%%%%%%%%%%%%%%%%%%%%%%%
\subsubsection{Large Deviation Principle for General Counting Observables}\label{sec.LD.classical}
%%%%%%%%%%%%%%%%%%%%%%%%%%%%%%%%%%%%%%%%

We recall that, given a function $I_\A:\mathbb{R} \rightarrow [0,+\infty]$, the stochastic process $\{T_\A(k)/k\}$ is said to satisfy a Large Deviation Principle with rate function $I_\A$ if for every Borel measurable set $B \subseteq \mathbb{R}$ one has that \cite{touchette2009the-large,dembo2010large}
\[
\begin{split}
&-\inf_{t \in \overset{\circ}{B}}I_\A(t) \leq \liminf_{k\rightarrow +\infty} \frac{1}{k}\log\left (\PP_\nu\left (\frac{T_\A(k)}{k} \in B \right ) \right ), \\
&\limsup_{k\rightarrow +\infty} \frac{1}{k}\log\left (\PP_\nu\left (\frac{T_\A(k)}{k} \in B \right ) \right ) \leq -\inf_{t \in \overline{B}}I_\A(t),
\end{split}\]
where $\overset{\circ}{B}$ and $\overline{B}$ denote the interior and the closure of $B$, respectively. The rate function $I_\A(t)$ is called good if it has compact level sets. 
%The following result holds.
%Loosely speaking, it means that
% $$
% \PP_\nu\left (\frac{T_\A(k)}{k} =t \right ) \approx e^{-kI(t)}.$$

\bigskip\begin{theo} \label{theo:ldpC}
Let us consider any nonempty subset $\A$ of the set of possible jumps. The collection of corresponding FPTs $\{T_{\A}(k)/k\}$ satisfies a LDP with good rate function given by
$$
I_\A(t):=\sup_{u \in \mathbb{R}}\{ut-\log(r(u))\}$$
where
$$
r(u)=\begin{cases} r \left (\Q_u \right ) & \text{ if } u < \overline{\lambda}\\
+\infty & \text{otherwise}\end{cases}$$
where $\Q_u:=-(u+\L_\infty)^{-1}\W_1$ and $\overline{\lambda}:=-\max\{\Re(z):z \in {\rm Sp}(\L_\infty)\}.$
\end{theo}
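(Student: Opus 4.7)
My plan is to apply the Gärtner--Ellis theorem. The main task is to establish existence of, and identify, the scaled cumulant generating function (SCGF)
$$\Lambda_\nu(u):=\lim_{k\to\infty}\frac{1}{k}\log \E_\nu\bigl[e^{uT_\A(k)}\bigr],$$
show that $\Lambda_\nu(u)=\log r(u)$ for every $u<\overline{\lambda}$, that $\Lambda_\nu(u)=+\infty$ for $u\geq\overline{\lambda}$, and that $\Lambda_\nu$ is essentially smooth so that the full LDP with rate function $I_\A(t)=\sup_u\{ut-\log r(u)\}$ follows. The first simplification uses Lemma \ref{lem:tech1}(3): for $u<\overline{\lambda}$ one has
$$\E_\nu\bigl[e^{uT_\A(k)}\bigr]=\left\langle \nu,\left(\frac{\L_\infty}{u+\L_\infty}\Q\right)^{k}\Iv\right\rangle=\langle \nu,\Q_u^k\Iv\rangle,$$
since $\tfrac{\L_\infty}{u+\L_\infty}\Q=-(u+\L_\infty)^{-1}\W_1=\Q_u$.

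For the SCGF I would invoke Perron--Frobenius theory applied to $\Q_u$. Expanding $-(u+\L_\infty)^{-1}$ in a Neumann series exactly as in Lemma \ref{lem:tech1}(2) shows that $\Q_u$ has non-negative entries for every $u<\overline{\lambda}$, and in fact shares the same sparsity pattern as $\Q$, so it is irreducible on the support set $\{y\in E:\exists\, x,\,(x,y)\in\A\}$. Consequently $r(u)=r(\Q_u)$ is a simple, strictly positive eigenvalue with strictly positive left and right Perron eigenvectors on the support, and a standard spectral decomposition yields
$$\frac{1}{k}\log\langle \nu,\Q_u^k\Iv\rangle\;\longrightarrow\;\log r(u),$$
independently of $\nu$. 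The overlap with the Perron eigenvectors is non-degenerate because $\Iv$ is strictly positive on all of $E$, while for any initial $\nu$ concentrated on the transient complement one finite power of $\Q_u$ transports mass into the irreducible block.

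For $u\geq\overline{\lambda}$ I would exploit that the resolvent $(u+\L_\infty)^{-1}$ has a genuine singularity at $u=\overline{\lambda}$, which by the continuous and monotone dependence of the Perron eigenvalue on the entries of $\Q_u$ forces $r(u)\to+\infty$ as $u\uparrow\overline{\lambda}$, giving $\Lambda_\nu(u)=+\infty$ on $[\overline{\lambda},+\infty)$ by convexity and lower semicontinuity. To apply Gärtner--Ellis, I would then verify essential smoothness of $\Lambda_\nu$: on $(-\infty,\overline{\lambda})$ the function $\log r(u)$ is real-analytic by the holomorphic functional calculus applied to the analytic family $u\mapsto\Q_u$ with $r(u)$ a simple isolated eigenvalue, and steepness at $\overline{\lambda}$ is provided by $\log r(u)\to+\infty$. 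Goodness of $I_\A$ follows from the standard Legendre-transform properties of a convex function which is finite on a neighbourhood of the origin.

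The main obstacle I anticipate is the Perron--Frobenius step, specifically showing that the limit $\frac{1}{k}\log\langle \nu,\Q_u^k\Iv\rangle=\log r(u)$ holds uniformly in $\nu$ in spite of the reducibility of $\Q_u$ on the full configuration space $E$: one must verify that contributions from the transient part of $E$ decay no slower than $r(u)^k$ and that, for any initial law $\nu$, the overlap of $\nu$ (after a finite power of $\Q_u$) and of $\Iv$ with the Perron left and right eigenvectors is strictly non-zero. A secondary technical point is proving steepness of $\log r(u)$ at $\overline{\lambda}$ in a clean quantitative way rather than only through the qualitative resolvent blow-up.
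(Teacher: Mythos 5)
Your high-level plan (MGF identity from Lemma~\ref{lem:tech1}, Perron--Frobenius for $\Q_u$, G\"artner--Ellis, steepness at $\overline{\lambda}$) matches the paper's strategy, and you correctly anticipate that the delicate points are the reducibility of $\Q_u$ and the quantitative blow-up of $r(u)$. There are, however, two genuine gaps.

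First, the paper side-steps your ``main obstacle'' by an algebraic change of variable you do not anticipate: $\Q_u x(u)=r(u)x(u)$ is shown to be equivalent to $\L_{s(u)}x(u)=-u\,x(u)$ with $\L_{s(u)}:=\L+(e^{s(u)}-1)\W_1$ and $s(u)=-\log r(u)$, i.e.\ a Doob-type tilt of the \emph{continuous-time} generator. By Lemma~\ref{lem:irredC}, $\L_s$ is primitive for every $s\in\mathbb{R}$, so its Perron eigenvector is unique and strictly positive on all of $E$; this gives at once that $x(u)>0$, that $r(u)$ is algebraically simple, and that $\langle\nu,x(u)\rangle>0$ for every initial law $\nu$, so the lower bound $\frac1k\log\langle\nu,\Q_u^k\Iv\rangle\geq \log r(u)+\frac1k\log\langle\nu,x(u)\rangle$ closes immediately. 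Your route of arguing irreducibility of $\Q_u$ restricted to $\{y:\exists x,(x,y)\in\A\}$ and then transporting mass from the transient part can be made to work, but you would still need to prove that the right Perron eigenvector is strictly positive on the transient complement and that the overlap is uniform in $\nu$; the tilted-generator identity does both in one step.

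Second, the steepness argument via ``continuous and monotone dependence of the Perron eigenvalue on the entries of $\Q_u$'' does not go through. Monotonicity of the entries (which does hold: $\Q_u$ has non-negative, non-decreasing entries in $u$) only gives that $r(u)$ is non-decreasing, and some entries of $\Q_u$ diverging does \emph{not} imply $r(\Q_u)\to\infty$ for a non-negative matrix unless the Perron eigenvector stays bounded away from $0$ componentwise. The paper proves $r(u)\to\infty$ by contradiction: if $r(u)$ stayed bounded then $x(u)$ would converge along a subsequence to the strictly positive Perron eigenvector of $\L_{s(\overline\lambda)}$, so $\min{\rm Sp}(x(u))$ stays away from zero, and then $\|\Q_u\|_{\infty\to\infty}=\|\Q_u\Iv\|_\infty\leq r(u)\|x(u)\|_\infty/\min{\rm Sp}(x(u))$ stays bounded, which contradicts the divergence of $\|\Q_u\|_{\infty\to\infty}$. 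Establishing this last divergence itself requires a non-trivial structural input that you do not mention: one must show that the spectral projection $\mathbf{T}$ of $\L_\infty$ onto the eigenvalue $-\overline\lambda$ has no Jordan block (so that $(u+\L_\infty)^{-1}\mathbf{T}$ blows up like $(\overline\lambda-u)^{-1}$ rather than merging with lower-order terms), which the paper derives from the Lumer--Phillips contraction property of $e^{t\L_\infty}$. Without these two ingredients, the key claim $\lim_{u\uparrow\overline\lambda}\log r(u)=+\infty$ is unproved, and the G\"artner--Ellis steepness hypothesis cannot be verified.
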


The proof of Theorem \ref{theo:ldpC} can be found in Appendix \ref{app:LDPc}; the proof highlights some properties of $r(u)$, which imply (as one would expect) that $I_\A(t)=+\infty$ for $t \leq 0$ and that
\[\lim_{t \rightarrow 0^{+}}I_\A(t)=+\infty,\quad \lim_{t \rightarrow +\infty}I_\A(t)=+\infty,\quad \lim_{t \rightarrow +\infty}I_\A^\prime(t)=\overline{\lambda}.
\]
Moreover, $I_\A(t)$ has a unique minimum in $\langle t_{\A} \rangle$, where it is equal to $0$. The strong law of large numbers is a consequence of the smoothness of $r(u)$ around $0$; see for instance \cite[Theorem II.6.3 and Theorem
II.6.4]{ellis2006entropy}. 
We refer to \cite{budini2014fluctuating,garrahan2017simple} for a more in depth discussion of the physical meaning of this result.

%%%%%%%%%%%%%%%%%%%%%%%%%%%%%%%%%%
\subsubsection{Concentration Bound for Dynamical Activity}\label{sec:dynActC}
%%%%%%%%%%%%%%%%%%%%%%%%%%%%%%%%%%%

Recall that we consider a classical continuous time Markov process with generator $\L$ whose jumps can be described by a discrete time process with transition matrix $\P$, cf. Eq. \eqref{eq:P}. The dynamical activity $K_\EE(t)$ is the total number of configuration changes (referred to also as jumps) occurring in a trajectory up to time $t$ \cite{lecomte2007thermodynamic,garrahan2018aspects,maes2020frenesy:}. The corresponding first passage time $T_\EE(k)$, is the time of the $k$-th jump, cf. Eq. \eqref{fptdef}. The first result of this paper is an upper bound on the probability that the average jump time $T_\EE(k)/k$ deviates from its asymptotic or stationary mean:
$$\langle t_\EE \rangle=\sum_{x\in E}\pi(x)\frac{1}{R_x}.
$$
We now introduce two quantities which appear in the bounds of Theorem \ref{theo:dynActC} below:
\begin{itemize}
\item[1.] the second moment at stationarity:
\begin{equation}
\label{eq:b.c}
2 b_c^2:=\sum_{x \in E}\pi(x) \frac{2}{R^2_x};
\end{equation}
\item[2.] the longest expected waiting time:
\begin{equation}
    c_c:=\max_{x \in E} \left \{\frac{1}{R_x}\right\} = \frac{1}{R_{\rm min}}.
    \label{eq:cc}    
\end{equation}
\end{itemize}

\begin{theo}[Fluctuations of FPT for Activity]\label{theo:dynActC}
Suppose Hypothesis \ref{hypo:irrC} holds ($\L$ is irreducible) and let $\varepsilon$ be the spectral gap of $\P^\dag \P$, cf. Eq. \eqref{eq:spectral.gap.PP}. For every $\gamma > 0$ the following holds true:
\begin{equation*}
\begin{split}
&\PP_\nu \left (\frac{T_{\EE}(k)}{k}\geq \langle t_{\EE} \rangle + \gamma \right) \leq C(\nu) \exp \left ( -k \frac{\gamma^2 \varepsilon}{4b_c^2}h\left ( \frac{5c_c\gamma}{2 b_c^2}\right )\right )\\
{\rm and}\\
&\PP_\nu \left (\frac{T_{\EE}(k)}{k} \leq \langle t_{\EE} \rangle - \gamma \right) \leq C(\nu) \exp \left ( -k \frac{\gamma^2 \varepsilon}{4b_c^2}h\left ( \frac{5c_c\gamma}{2 b_c^2}\right )\right ), \quad k\in \mathbb{N},
\end{split}
\end{equation*}
where $h(x):=(\sqrt{1+x}+\frac{x}{2}+1)^{-1}$ and $C(\nu):=\max_{x\in E} \left\{\nu(x)/\pi(x)\right\}$.
\end{theo}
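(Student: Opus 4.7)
The plan is to combine Chernoff's inequality with the closed form \eqref{eq:MGFcda} of the MGF of $T_\EE(k)$, upper bound that MGF by replacing the transition operator $\P$ with the L\'eon--Perron operator $\hat{\P}$, and then optimize over the Chernoff parameter to obtain the Bennett-type exponent involving $h$.

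For the right-deviation bound, I would first apply Chernoff's inequality with parameter $u \in (0, R_{\min})$:
\[
\PP_\nu\!\left(\frac{T_\EE(k)}{k} \geq \langle t_\EE\rangle + \gamma\right) \leq e^{-uk(\langle t_\EE\rangle + \gamma)}\,\E_\nu[e^{uT_\EE(k)}],
\]
and by \eqref{eq:MGFcda} the MGF equals $\langle \nu, (\RR(\RR-u)^{-1}\P)^k \Iv\rangle$. Since $\RR(\RR-u)^{-1}\P$ has nonnegative entries for such $u$ and $\nu(x) \leq C(\nu)\pi(x)$ pointwise by definition of $C(\nu)$, this yields
\[
\E_\nu[e^{uT_\EE(k)}] \leq C(\nu)\,\E_\pi[e^{uT_\EE(k)}],
\]
which accounts for the overall prefactor $C(\nu)$ in the statement.

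The core of the argument is the bound on the stationary MGF exploiting the spectral gap $\varepsilon$. Following the iTUR strategy of \cite{bakewell-smith2023general}, I would pass to the L\'eon--Perron operator $\hat{\P} = (1-\varepsilon)\I + \varepsilon \Pi$ and establish, via iterated Cauchy--Schwarz in $L^2_\pi(E)$ together with the contraction $\|\P f\|_\pi \leq (1-\varepsilon)\|f\|_\pi$ on the orthogonal complement of constants, an operator-level upper bound of the schematic form
\[
\E_\pi[e^{uT_\EE(k)}] \leq \langle \pi, (\D(u)\hat{\P})^k \Iv\rangle, \qquad \D(u) := \RR(\RR-u)^{-1}.
\]
The virtue of $\hat{\P}$ is that it is self-adjoint with respect to $\langle \cdot,\cdot\rangle_\pi$ and acts as a simple mixture: with probability $1-\varepsilon$ it holds the current state, contributing a factor $R_x/(R_x - u)$ to the MGF, and with probability $\varepsilon$ it redraws from $\pi$. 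This renewal structure makes the right-hand side tractable, and its logarithm can be controlled, via the Taylor expansion $\log(R_x/(R_x-u)) = u/R_x + \tfrac{1}{2}(u/R_x)^2 + O(u^3)$ and an elementary tail bound of the form $\log(R_x/(R_x-u)) \leq u/R_x + \tfrac{1}{2}(u/R_x)^2/(1 - u/R_x)$, in terms of precisely the quantities $b_c^2$ and $c_c$ defined in \eqref{eq:b.c}--\eqref{eq:cc}, giving a Bernstein-shaped bound $\log\E_\pi[e^{u(T_\EE(k) - k\langle t_\EE\rangle)}] \leq k\, \Psi_\varepsilon(u; b_c, c_c)$ for $u$ below a threshold determined by $c_c/\varepsilon$.

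The final step is the Chernoff optimization $\sup_{u>0}[u\gamma - \Psi_\varepsilon(u;\cdot)]$, which produces the stated exponent; the specific square-root shape of $h(x) = (\sqrt{1+x}+x/2+1)^{-1}$ comes from clearing a linear denominator in $\Psi_\varepsilon$ and solving the resulting quadratic in $u$. The left-deviation bound follows by the symmetric argument with $u < 0$, since $\D(u)\P$ still has nonnegative entries, the L\'eon--Perron comparison still applies (now with $R_x/(R_x-u) < 1$), and the same optimization yields a matching exponent. I expect the main technical obstacle to be the operator-level replacement $\P \mapsto \hat{\P}$ inside the noncommutative product $(\D(u)\P)^k$: bounding products of non-self-adjoint operators by the self-adjoint L\'eon--Perron chain is essentially the iTUR machinery of \cite{bakewell-smith2023general}, and here it must be adapted from time-integrated observables to the sum-of-holding-times $T_\EE(k)$, where the diagonal factor $\D(u)$ is now itself $u$-dependent rather than a fixed function of the state.
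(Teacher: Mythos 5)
Your skeleton (Chernoff inequality, reduction of the prefactor to $C(\nu)$, replacement of $\P$ by the L\'eon--Perron operator, Bernstein-shaped bound, Chernoff optimization) is the correct architecture and matches the paper's, and your route to the prefactor --- $\E_\nu[e^{uT_\EE(k)}]\leq C(\nu)\E_\pi[e^{uT_\EE(k)}]$ directly from $\nu \leq C(\nu)\pi$ and positivity of $(\F_u\P)^k\Iv$ --- is a valid, even slightly cleaner, alternative to the paper's $\|\F_u^{1/2}(\nu/\pi)\|_\pi\leq\|\nu/\pi\|_\infty\|\F_u^{1/2}\Iv\|_\pi$ step. The reduction to the L\'eon--Perron operator also aligns with what the paper actually proves: via Lemmas \ref{multOpLem} and \ref{lpLem} (iterated Cauchy--Schwarz) it obtains $\E_\nu[e^{uT_\EE(k)}]\leq C(\nu)\|\F_u^{1/2}\hat{\P}\F_u^{1/2}\|_\pi^k$, which is morally your $\langle\pi,(\F_u\hat{\P})^k\Iv\rangle$ control.

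The genuine gap is in the middle: you propose to extract the exponent from the ``renewal structure'' of the L\'eon--Perron chain together with a per-state Taylor/Bernstein estimate on $\log(R_x/(R_x-u))$, but you never carry this out, and it is not the mechanism that produces the stated constants. The paper instead identifies $\|\F_u^{1/2}\hat{\P}\F_u^{1/2}\|_\pi$ with the spectral radius $r(u)$ of $\hat{\P}\F_u$, expands $\hat{\P}\F_u=\hat{\P}+\sum_{l\geq1}u^l\hat{\P}\D^l$, and invokes Kato's analytic perturbation theory (following Lezaud) to write $r(u)=1+\sum_l u^l r^{(l)}$, bounding each coefficient $r^{(l)}$ by splitting the trace formula into $p=1$ (giving $\langle\D^l\Iv,\Iv\rangle_\pi$) and $p\geq2$ (giving $b_c^2 c_c^{l-2}/\varepsilon^{l-1}$) contributions and then counting terms by the combinatorial estimate $N(l)\leq 5^{l-2}$. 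It is precisely this counting bound that gives rise to the factor $5c_c/\varepsilon$ in the denominator, hence the argument $5c_c\gamma/(2b_c^2)$ of $h$ in the final bound. Your renewal calculation, if completed, would organize the $k$ holding times into geometric$(\varepsilon)$ blocks with iid $\pi$-distributed block states; this is doable but requires tracking a nontrivial rank-one recursion (the recursion for $(\F_u\hat{\P})^j\Iv$ does not close in the scalar $\langle\pi,(\F_u\hat{\P})^j\Iv\rangle$ alone), and there is no reason to expect it would reproduce the specific constants $4b_c^2$, $5c_c$ dictated by Lezaud's combinatorics. The proposed per-state bound $\log(R_x/(R_x-u))\leq u/R_x+\tfrac12(u/R_x)^2/(1-u/R_x)$ controls each holding-time MGF, but does not by itself account for the cross-correlations that the spectral gap $\varepsilon$ is supposed to control, which is where the perturbation machinery (specifically, $\|\S^{(1)}\|_\pi=\varepsilon^{-1}$) enters in the paper. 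In short: right shape, same starting and ending moves, but the core technical step is both unproven and based on a different mechanism whose constants would not obviously match.
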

The proof of Theorem \ref{theo:dynActC} can be found in Appendix \ref{app:dynActC} and follows the same line as in \cite[Theorem 3.3]{lezaud1998chernoff-type}. From the proof, one can see that if $\P$ is self-adjoint, one can derive an upper bound with a slightly different expression which contains the spectral gap of $\P$ instead of its absolute spectral gap.

Let us make few considerations regarding the quantities appearing in the bound. $C(\nu)$ accounts for the difference between the initial measure and the stationary one, in particular $C(\pi)=1$. The absolute spectral gap $\varepsilon$ controls the speed at which an arbitrary density $\nu$ converges to the invariant measure $\pi$ under iterations of the transition operator $\P_*$: indeed, for every $k \geq 1$
\begin{equation*}
\|\P_*^{k}(\nu-\pi)\|_1  \leq \left \|\P^{\dagger k}\left (\frac{\nu^{1/2}}{\pi^{1/2}}-\Iv\right )\right \|_\pi \leq 2\varepsilon^{\frac{k}{2}}\left ( 1-\sum_{x \in E} \nu(x)^{1/2}\pi(x)^{1/2}\right ).
\end{equation*}
This enables one to upper bound the deviation probability of $T_\EE(k)$ using stationary properties of the system. We remark that the use of the spectral gap of $\P^\dagger \P$ instead of the one of $\P$ allows to bound the fluctuations of the first passage time for every $k \geq 1$ and not only asymptotically in $k$. Small values of $\varepsilon$ can correspond in some models to big fluctuations of the first passage time (cf.\ Ref.~\cite{bakewell-smith2023general} and example \ref{ex:phaseTran} below).

The quantity $b_c^2$ encodes the variance of $T_\EE(k)$ in the stationary regime. Indeed, the distribution of the interarrival times $t_i$ at stationarity is the same as the random variable obtained drawing a state $x$ from the invariant distribution $\pi$ and then sampling from an independent exponential random variable with parameter $R_x$. Such a random variable has a variance equal to
\[
2\sum_{x \in E} \pi(x) \frac{1}{R_x^2}-\left(\sum_{x \in E} \pi(x) \frac{1}{R_x}\right)^2.
\]
Notice that the following inequalities hold true:
\[
b_c^2 \leq 2\sum_{x \in E} \pi(x) \frac{1}{R_x^2}-\left(\sum_{x \in E} \pi(x) \frac{1}{R_x} \right)^2\leq 2b_c^2,
\]
hence the variance of the interrarival times at stationarity and $b_c^2$ (see Eq. \eqref{eq:b.c}) differ at most by a factor $2$. The bigger $b_c^2$, the bigger the fluctuations of the first passage time. Finally, as one might reasonably expect, the dependence of the bound on $c_c$ is such that the bigger $c_c$, the heavier the right tail. Notice that the ratio between $b_c^2$ and $c_c$ that appears in the bound can be controlled by the average at stationarity:
\[
\frac{R_{\rm min}}{R_{\max}}\langle t_\EE \rangle \leq \frac{b_c^2}{c_c}=\sum_{x \in E}\pi(x)\frac{R_{\rm min}}{R_x^2} \leq \langle t_\EE\rangle.
\]

On the other hand, $\varepsilon$ and $b_c^2$ are quite independent from each other. For example, if one modifies uniformly the speed of the Markov process $X$, i.e. $\L \rightarrow \lambda\L$ for some positive $\lambda$, one has that the jump process does not change and therefore $\varepsilon$ remains the same, while $b_c^2 \rightarrow \lambda^{-2}b_c^2$. Notice that the bound has the right scaling with respect to this group of transformations: indeed, the bound becomes
\[
C(\nu) \exp \left ( -k \frac{(\lambda\gamma)^2 \varepsilon}{4b_c^2}h\left ( \frac{5c_c\lambda \gamma}{2 b_c^2}\right )\right ),
\]
which corresponds to the upper bound for deviations of the order $\lambda \gamma$ for the original dynamics.

As a consequence of the proof of Theorem \ref{theo:dynActC}, we obtain an upper bound on the variance at stationarity of the FPT corresponding to the dynamical activity. This result complements the lower bound (or thermodynamic uncertainty relation) for the FPT of the activity obtained in \cite{garrahan2017simple}:

\bigskip \begin{coro}\label{dynActiTURC}
The variance of the first passage time for the total activity at stationarity is bounded from above by:
\[
\frac{{\rm var}_\pi(T_{\EE}(k))}{k}\leq \left(1 + \frac{2}{\varepsilon}\right)b_c^2.
\]
\end{coro}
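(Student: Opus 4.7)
The plan is to compute $\mathrm{var}_\pi(T_\EE(k))$ directly from the MGF formula \eqref{eq:MGFcda} and then dominate the resulting autocorrelation sum via the spectral-gap estimate on $\P|_{\Iv^\perp}$ already exploited in the proof of Theorem~\ref{theo:dynActC}. Setting $M_k(u):=\langle\pi,(\frac{\RR}{\RR-u}\P)^k\Iv\rangle$ and $f(x):=1/R_x$, a second-order Taylor expansion of $M_k$ around $u=0$, using that the first and second derivatives of $u\mapsto\RR/(\RR-u)$ at $u=0$ are $\RR^{-1}$ and $2\RR^{-2}$ and that $\pi\P=\pi$, $\P\Iv=\Iv$, yields the exact identity
\[
\mathrm{var}_\pi(T_\EE(k)) \;=\; k\bigl(2b_c^2-\langle t_\EE\rangle^2\bigr) \;+\; 2\sum_{m=1}^{k-1}(k-m)\,\langle \tilde f, \P^m \tilde f\rangle_\pi,
\]
where $\tilde f := f - \langle t_\EE\rangle\Iv$ lies in $\Iv^\perp\subset L^2_\pi$ with $\|\tilde f\|_\pi^2=b_c^2-\langle t_\EE\rangle^2$. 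The first summand is the ``diagonal'' contribution (stationary second moment of a single holding time) and the sum encodes the Markov autocorrelations of $f(x_i)$ along the stationary jump chain.

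The autocorrelations are controlled by the absolute spectral gap. Since $\tilde f\in\Iv^\perp$ and $\P$ preserves $\Iv^\perp$ (because $\pi$ is $\P$-invariant), the definition $\|\P|_{\Iv^\perp}\|_\pi=1-\varepsilon$ gives $\|\P^m\tilde f\|_\pi\le(1-\varepsilon)^m\|\tilde f\|_\pi$, and Cauchy-Schwarz yields $|\langle\tilde f,\P^m\tilde f\rangle_\pi|\le(1-\varepsilon)^m\|\tilde f\|_\pi^2$. Summing the resulting geometric series,
\[
2\sum_{m=1}^{k-1}(k-m)\,|\langle\tilde f,\P^m\tilde f\rangle_\pi| \;\le\; 2k\,\|\tilde f\|_\pi^2\,\frac{1-\varepsilon}{\varepsilon},
\]
so that after dividing by $k$ one obtains $\mathrm{var}_\pi(T_\EE(k))/k \le (2b_c^2-\langle t_\EE\rangle^2) + (b_c^2-\langle t_\EE\rangle^2)\cdot 2(1-\varepsilon)/\varepsilon$.

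A short rearrangement collapses the right-hand side to $\langle t_\EE\rangle^2+(2/\varepsilon)(b_c^2-\langle t_\EE\rangle^2)$, and the Cauchy-Schwarz inequality $\langle t_\EE\rangle^2=\langle\pi,f\rangle^2\le\langle\pi,f^2\rangle=b_c^2$ then bounds this by $(1+2/\varepsilon)b_c^2$, giving the claim. The only nontrivial ingredient is the spectral-gap estimate on $\P|_{\Iv^\perp}$, which is precisely the one that appears in the proof of Theorem~\ref{theo:dynActC}; the main subtlety, and the likely obstacle, is that $\P$ is not assumed self-adjoint, so this estimate has to be phrased through the $L^2_\pi$ operator norm of $\P$ restricted to $\Iv^\perp$ (equivalently, through the multiplicative symmetrisation $\P^\dagger\P$) rather than through eigenvalues of $\P$ itself. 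Once that estimate is available, the rest of the argument is a Taylor expansion and elementary algebra.
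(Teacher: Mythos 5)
Your proof is correct, and it is a genuinely different route from the paper's. The paper derives the variance bound as a byproduct of the Chernoff machinery: it re-uses the inequality $\mathbb{E}_\pi[e^{uT_\EE(k)}]\leq r(u)^k$ established for Theorem~\ref{theo:dynActC}, where $r(u)$ is the Perron eigenvalue of the tilted Le\'{o}n--Perron operator $\hat{\P}\F_u$, and then compares second-order Taylor coefficients at $u=0$: using $r^{(1)}=\langle t_\EE\rangle=\langle\D\Iv,\Iv\rangle_\pi$ and $r^{(2)}=b_c^2+\langle\D\Iv,\hat{\P}(\I-\hat{\P})^{-1}\D\Iv\rangle_\pi$ from the perturbation series, it obtains $r''(0)-(r'(0))^2=\langle t_\EE\rangle^2+2\langle\D\Iv,(\I-\hat{\P})^{-1}\D\Iv\rangle_\pi\leq(1+2/\varepsilon)b_c^2$. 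Your proof instead bypasses the Le\'{o}n--Perron operator and perturbation series entirely: you compute ${\rm var}_\pi(T_\EE(k))$ exactly as a stationary autocorrelation sum $k(2b_c^2-\langle t_\EE\rangle^2)+2\sum_{m=1}^{k-1}(k-m)\langle\tilde f,\P^m\tilde f\rangle_\pi$ (which matches the paper's Lemma~\ref{lem:asympVar} specialised to $\A=\EE$), and then dominate the sum directly via $\|\P^m\tilde f\|_\pi\leq(1-\varepsilon)^m\|\tilde f\|_\pi$. This is more elementary and self-contained — it makes no appeal to the Chernoff-bound proof — and it makes the role of each term ($2b_c^2-\langle t_\EE\rangle^2$ is the i.i.d.\ piece, the sum is the Markov correlation correction) more transparent. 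The ``subtlety'' you flag about $\P$ not being self-adjoint is not actually an obstacle: the paper's $\varepsilon$ defined in \eqref{eq:spectral.gap.PP} is already $1-\|\P|_{\Iv^\perp}\|_{\pi\to\pi}$, the operator-norm contraction constant, so your step $\|\P^m\tilde f\|_\pi\leq(1-\varepsilon)^m\|\tilde f\|_\pi$ follows immediately from that definition with no further argument needed.
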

The proof of Corollary \ref{dynActiTURC} can be found in Appendix \ref{app:dynActC}.

\subsubsection{Tail Bound for General Counting Observables}\label{sec:countObs}
Our second main result is a concentration bound on the tails of the distribution of the FPT for general counting observables, $T_{\A}(k)/k$. Similarly to the above result, this bounds the probability that $ 
T_{\A}(k)/k$ deviates from $\langle t_{\A} \rangle$. Recall that $\L_\infty$ is a sub-Markov generator describing the jumps in $\A^C$, cf. Eq. \eqref{eq:splitL}. We introduce the following notation
\[
\beta: = \left \|\frac{\I}{\L_{\infty}}\right \|_{\infty\rightarrow \infty} .
\]
In the case of the dynamical activity it is simply given by $\beta=\max_x 1/R_x$. In general, 
$\beta$ satisfies 
$\beta\geq \langle t_{\A} \rangle$ by equation \eqref{eq:mean.t}, and as we show below, it can be interpreted as the longest timescale of the system. Indeed, since $-\L^{-1}_\infty$ is a positivity preserving map, one has that
$$\|\L_\infty^{-1}\|_{\infty\rightarrow \infty}=\|\L_\infty^{-1}\Iv\|_\infty=\max_{x \in E}\sum_{y \in E}|L_{\infty,xy}^{-1}|,$$
and since
$$\|\L_\infty^{-1}\Iv\|_\infty=\max_{\nu}- \langle \nu, \L_\infty^{-1}\Iv \rangle=\max_{\nu}\mathbb{E}_\nu[T_{\A}(1)], $$
we obtain
$$
\beta= \max_{\nu}\mathbb{E}_\nu[T_{\A}(1)], 
$$
where $\nu$ is a probability density on the state space. We can now state our second main result:

\bigskip

\begin{theo}[Rare Fluctuations of General Counting Observable FPTs]\label{theo:countObs}
Let $\L$ be irreducible and $\A\subseteq \EE$ be nonempty. For every $k\in \mathbb{N}$ and $\gamma > \beta - \langle t_\A \rangle$
\[
\PP_\nu \left (\frac{T_{\A}(k)}{k} \geq \langle t_{\A} \rangle + \gamma \right) \leq \exp \left ( -k \left(\frac{\gamma+\langle t_{\A} \rangle-\beta}{\beta}-\log\left(\frac{\gamma+\langle t_{\A} \rangle}{\beta}\right)\right)\right ). 
\]
\end{theo}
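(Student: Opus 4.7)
The plan is to apply Chernoff's inequality to the moment generating function identity from item~3 of Lemma~\ref{lem:tech1},
$$\E_\nu[e^{uT_\A(k)}]=\langle \nu, \Q_u^k \Iv\rangle,\qquad \Q_u:=-(u+\L_\infty)^{-1}\W_1,\qquad u<\overline{\lambda},$$
and then to optimise the resulting bound in $u$. The heart of the argument is the operator-level estimate $\Q_u^k\Iv\leq (1-u\beta)^{-k}\Iv$ (pointwise on $E$), valid for $0\leq u<1/\beta$, which directly implies $\E_\nu[e^{uT_\A(k)}]\leq (1-u\beta)^{-k}$ against any probability initial law.

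I would establish this estimate in two steps. First, I would show that $\Q_u$ is positivity preserving for every $u\geq 0$: using the Neumann-series representation
$$-(u+\L_\infty)^{-1}=\sum_{n\geq 0}\bigl((u\I+\RR)^{-1}\W_2\bigr)^n(u\I+\RR)^{-1},$$
which converges because the entries of $(u\I+\RR)^{-1}\W_2$ are dominated by those of $\RR^{-1}\W_2$ (whose spectral radius is $<1$ by item~2 of Lemma~\ref{lem:tech1}), one sees that $-(u+\L_\infty)^{-1}$, and hence $\Q_u=-(u+\L_\infty)^{-1}\W_1$, has non-negative entries. Second, I would compute $\Q_u\Iv$ explicitly. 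Since $\L\Iv=0$ gives $\W_1\Iv=-\L_\infty\Iv$, the resolvent identity yields
$$\Q_u\Iv=(u+\L_\infty)^{-1}\L_\infty\Iv=\sum_{n\geq 0}u^n(-\L_\infty^{-1})^n\Iv,$$
and each summand is bounded by $\beta^n$ in the $\infty$-norm (using $\beta=\|-\L_\infty^{-1}\|_{\infty\to\infty}$). Summing, I obtain $\Q_u\Iv\leq (1-u\beta)^{-1}\Iv$ pointwise on $E$; iterating this pointwise bound using the positivity of $\Q_u$ then gives $\Q_u^k\Iv\leq (1-u\beta)^{-k}\Iv$. Note that $1/\beta\leq \overline{\lambda}$ by item~4 of Lemma~\ref{lem:tech1}, so the constraint $u<1/\beta$ automatically places us in the domain of validity of the MGF formula.

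To finish, I would invoke Markov's inequality: for $0\leq u <1/\beta$,
$$\PP_\nu(T_\A(k)/k\geq \langle t_\A\rangle+\gamma)\leq \exp\bigl(-ku(\langle t_\A\rangle+\gamma)\bigr)(1-u\beta)^{-k},$$
and then maximise $f(u):=u(\langle t_\A\rangle+\gamma)+\log(1-u\beta)$ over $u\in[0,1/\beta)$. Setting $f'(u)=0$ yields $u^*=1/\beta-1/(\langle t_\A\rangle+\gamma)$, which is strictly positive precisely under the hypothesis $\gamma>\beta-\langle t_\A\rangle$. Substituting $u^*$ and simplifying recovers the closed-form exponent of the theorem. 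The main obstacle is controlling $\Q_u^k\Iv$ uniformly in $k$: both the positivity of $\Q_u$ (to propagate the pointwise bound on $\Q_u\Iv$ through iteration) and the exact identification $\beta=\|-\L_\infty^{-1}\|_{\infty\to\infty}$ (to sharpen the Neumann expansion to geometric in $\beta$) are essential, as without either one would end up with a strictly weaker exponent.
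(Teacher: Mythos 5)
Your proposal is correct and follows essentially the same route as the paper: Chernoff bound, then a Neumann-series estimate showing the MGF is dominated by $(1-u\beta)^{-k}$ for $0\leq u<1/\beta$, then optimisation at $u^*=1/\beta-1/(\langle t_\A\rangle+\gamma)$. The one technical difference is how the $k$-fold bound is propagated: the paper uses operator-norm submultiplicativity, $\langle\nu,\Q_u^k\Iv\rangle\leq\|\nu\|_1\|\Q_u\|_{\infty\to\infty}^k\|\Iv\|_\infty$ with $\|\Q_u\|_{\infty\to\infty}\leq\sum_i u^i\beta^i$ (implicitly using $\|\Q\|_{\infty\to\infty}=1$), whereas you establish the pointwise inequality $\Q_u\Iv\leq(1-u\beta)^{-1}\Iv$ and iterate it using positivity preservation of $\Q_u$; both hinge on the same identity $\beta=\|-\L_\infty^{-1}\|_{\infty\to\infty}$ and yield the identical exponent.
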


The proof of the Theorem \ref{theo:countObs} can be found in Appendix \ref{app:countObs}.

\bigskip Here we  comment briefly on the rather simple idea behind it. Let $Z$ be the sum of $k$ independent exponential random variables with parameter $\beta^{-1}$, then by applying the Chernoff bound one obtains that for every $0 \leq u < \beta^{-1}$
\[
\mathbb{P}(Z /k\geq \beta + \gamma^\prime) \leq \exp \left (-k \left (u (\beta+ \gamma^\prime) +\log(\beta) + \log(\beta^{-1}-u) \right ) \right ).
\]
Optimising in $u$ in the previous equation, one gets
\begin{equation} \label{eq:expbound}
\mathbb{P}(Z/k \geq \beta + \gamma^\prime) \leq \exp \left (-k \left (\gamma^\prime \beta^{-1}-\log(1+\gamma^\prime\beta^{-1})\right ) \right ).
\end{equation}
As the interarrival times are distributed according to the matrix-exponential distribution (\cite{BFT08}) with rate matrix $-\L_\infty$, and $\beta=\|\L^{-1}_{\infty*}\|_{1\rightarrow 1}$, the moment generating function of $T_\A (k)$ is bounded from above by that of $Z$. Equation \eqref{eq:expbound} then provides the bound in Theorem \ref{theo:countObs}.

Unlike the case of Theorem \ref{theo:dynActC}, the bound in Theorem \ref{theo:countObs} does not cover small fluctuations and this makes it impossible to use to derive any bound on the variance of $T_{\A}(k)$ in the spirit of Corollary \ref{dynActiTURC}. Nevertheless, using the explicit expression of the variance (see Lemma \ref{lem:asympVar}), one can derive the following bound.

\begin{coro} \label{coro:cgiTUR}
Given any non-empty set of jumps $\A$, the variance of the corresponding first passage time at stationarity is bounded from above by:
\[
\frac{{\rm var}_\varphi(T_{\A}(k))}{k}\leq \left ( 1+ \frac{2}{\tilde{\varepsilon}}\right )\beta^2,
\] 
where
$$\tilde{\varepsilon}:=1-\max\{\|\Q f\|_\infty: \|f\|_\infty=1, \,\langle\varphi,f \rangle=0\}.$$ 
\end{coro}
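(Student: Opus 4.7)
The strategy is to combine the explicit variance formula of Lemma~\ref{lem:asympVar} with the $\infty$-norm contraction of $\Q$ on the $\varphi$-mean-zero subspace that is encoded by $\tilde\varepsilon$. First I would write $T_\A(k)=\sum_{i=1}^k s_i$ with $s_i:=T_\A(i)-T_\A(i-1)$ and condition on the embedded jump chain $Y=(y_0,y_1,\dots)$ governed by $\Q$. Given $Y$, the $s_i$ are independent, with $s_i$ matrix-exponentially distributed with rate matrix $-\L_\infty$ started from $\delta_{y_{i-1}}$; the conditional mean is $f(y_{i-1})$, where $f:=-\L_\infty^{-1}\Iv$ satisfies $0\leq f(x)\leq\beta$ for every $x$. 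Since $\varphi$ is $\Q$-invariant, $(s_i)_{i\geq 1}$ is stationary under $\PP_\varphi$, and Lemma~\ref{lem:asympVar} combined with the conditional independence of $s_i,s_j$ given $Y$ yields
\[
\frac{{\rm var}_\varphi(T_\A(k))}{k}={\rm var}_\varphi(s_1)+\frac{2}{k}\sum_{\ell=1}^{k-1}(k-\ell)\,\langle\varphi,\bar f\,\Q^\ell\bar f\rangle,
\]
where $\bar f:=f-\langle t_\A\rangle\Iv$ and ${\rm cov}_\varphi(s_i,s_{i+\ell})=\langle\varphi,\bar f\,\Q^\ell\bar f\rangle$ by the tower property.

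Next, I would bound the two pieces separately. Using positivity of $-\L_\infty^{-1}$ together with $\|-\L_\infty^{-1}\|_{\infty\to\infty}=\beta$,
\[
{\rm var}_\varphi(s_1)\leq\E_\varphi[s_1^2]=2\langle\varphi,-\L_\infty^{-1}f\rangle\leq 2\beta\,\|f\|_\infty\leq 2\beta^2.
\]
For the covariance terms, since $\varphi\Q=\varphi$ the operator $\Q$ preserves the $\varphi$-mean-zero subspace, and by the very definition of $\tilde\varepsilon$ its restriction to that subspace has $\|\cdot\|_\infty$-operator-norm at most $1-\tilde\varepsilon$. Iterating, together with $\|\bar f\|_\infty\leq\beta$ (because $f(x)\in[0,\beta]$ and $\langle t_\A\rangle\in[0,\beta]$),
\[
\|\Q^\ell\bar f\|_\infty\leq(1-\tilde\varepsilon)^\ell\|\bar f\|_\infty\leq(1-\tilde\varepsilon)^\ell\beta,
\]
so that $|\langle\varphi,\bar f\,\Q^\ell\bar f\rangle|\leq\|\bar f\|_\infty\|\Q^\ell\bar f\|_\infty\leq\beta^2(1-\tilde\varepsilon)^\ell$.

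Combining these bounds with $(k-\ell)/k\leq 1$ and summing the resulting geometric series yields
\[
\frac{{\rm var}_\varphi(T_\A(k))}{k}\leq 2\beta^2+2\beta^2\sum_{\ell=1}^{\infty}(1-\tilde\varepsilon)^\ell=\frac{2\beta^2}{\tilde\varepsilon}\leq\left(1+\frac{2}{\tilde\varepsilon}\right)\beta^2,
\]
as required. The main technical point is producing the variance formula in a form compatible with an $\infty$-norm analysis, so that the spectral-type quantity $\tilde\varepsilon$ (defined via $\|\cdot\|_\infty$ rather than via an $L^2_\varphi$-gap) may be invoked directly; once this is in hand, the contraction on $\varphi$-mean-zero functions is immediate from the definition of $\tilde\varepsilon$ and the invariance of that subspace under $\Q$, and the rest is a routine geometric-series estimate.
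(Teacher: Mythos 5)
Your strategy is the same as the paper's (decompose the variance into the diagonal term plus cross-covariances, control the cross-covariances geometrically via $\tilde\varepsilon$, and sum the series), and your final estimate is even slightly sharper than the paper's intermediate bound. However there is a genuine error in the key intermediate step.

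You write $\E[s_i\,|\,Y]=f(y_{i-1})$ with $f=-\L_\infty^{-1}\Iv$, and conclude by the tower property that $\mathrm{cov}_\varphi(s_i,s_{i+\ell})=\langle\varphi,\bar f\,\Q^\ell\bar f\rangle$. This identification is false. Conditionally on $y_{i-1}$ alone, $s_i$ is indeed matrix-exponentially distributed with mean $f(y_{i-1})$, but once you condition on the \emph{whole} jump chain $Y$ (in particular on $y_i$), the law of $s_i$ is that of the first $\A$-jump time \emph{conditioned on landing at $y_i$}, and its conditional mean depends on both $y_{i-1}$ and $y_i$: explicitly $\E[s_i\,|\,y_{i-1}=x,y_i=z]=(\L_\infty^{-2}\W_1)_{xz}/(-\L_\infty^{-1}\W_1)_{xz}$, which is not a function of $x$ alone. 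So the conditional-independence-plus-tower argument does not yield $\langle\varphi,\bar f\,\Q^\ell\bar f\rangle$. The correct expression, as can be read off the proof of Lemma~\ref{lem:asympVar}, is
\[
\mathrm{cov}_\varphi(s_i,s_{i+\ell})=\left\langle \varphi,\L_\infty^{-1}\Q^\ell(\I-\Pi_\varphi)\L_\infty^{-1}\Iv\right\rangle,
\]
in which $\L_\infty^{-1}$ on the left acts as a genuine (nonlocal) resolvent, not as multiplication by $f$. The two quantities differ in general.

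Fortunately the error does not propagate to the final bound: the correct covariance admits the same estimate, since $(\I-\Pi_\varphi)\L_\infty^{-1}\Iv=-\bar f$ is $\varphi$-mean-zero with $\|\bar f\|_\infty\leq\beta$, $\|\Q^\ell\bar f\|_\infty\leq(1-\tilde\varepsilon)^\ell\beta$, and $\left|\left\langle\varphi,\L_\infty^{-1}g\right\rangle\right|\leq\|\L_\infty^{-1}\|_{\infty\to\infty}\|g\|_\infty=\beta\|g\|_\infty$, giving $\left|\mathrm{cov}_\varphi(s_i,s_{i+\ell})\right|\leq\beta^2(1-\tilde\varepsilon)^\ell$ as you asserted. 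Your bound ${\rm var}_\varphi(s_1)\leq 2\beta^2$ and the geometric-series summation are fine. To repair the proof, replace the paragraph deriving the covariance formula by a direct appeal to Lemma~\ref{lem:asympVar} (or to the identity $\E_\varphi[s_i s_{i+\ell}]=\langle\varphi,\L_\infty^{-1}\Q^\ell\L_\infty^{-1}\Iv\rangle$), and then run your $\infty$-norm estimates on the correct expression.
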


The proof of Corollary \ref{coro:cgiTUR}  can be found in Appendix \ref{app:countObs}. We recall that $\varphi$ is the unique invariant law for ${\bf Q}$ and was defined in Equation \eqref{eq:invmeas}. We remark that Corollary \ref{coro:cgiTUR} together with Chebyschev inequality provides bounds on small deviations as well.

\bigskip The constant $\beta$ may be difficult to compute, especially for large systems. However, it is not hard to check that Theorem \ref{theo:countObs} keeps holding true if we replace $\beta$ with any $\tilde{\beta}\geq \beta$. The corollary that follows shows that it is possible to upper bound $\beta$ (and obtain alternative concentration bounds for the FPT) in terms of the following simpler quantities of the system:
\begin{itemize}
    \item[1.] maximum escape rate:
    \begin{equation}
    \label{eq:Rmax}
    R_{\rm max} := \max_{x\in E}\{R_x\};
    \end{equation}
    \item[2.] minimum transition rate:
    \begin{equation}
    \label{eq:wmin}
    w_{\rm min} := \min_{x,y\in E}\{w_{xy} : w_{xy} > 0\};
    \end{equation}
    \item[3.] minimax jump distance $\tilde{k}$: the minimum $k\in \mathbb{N}$ such that for any initial state $i\in E$ there exists 
    a trajectory $(i_0=i ,i_1, \dots, i_l)$ with $l\leq k$ such that $w_{i_j, i_{j+1}}>0$ for all $j=0,\dots l-1$ 
    and the trajectory ends with a jump in $\A$, i.e. $(i_{l-1}, i_{l})\in \A$.
    %\[
    %\tilde{k} := \inf_{k\in \mathbb{N}}\left \{k : \left \|\W_2^*\frac{\I}{\RR}\right\|_{1\rightarrow 1}^k<1\right \}.
    %\]
   
\end{itemize}

\begin{figure}[t]
    \begin{center}
    \includegraphics[width=\linewidth]{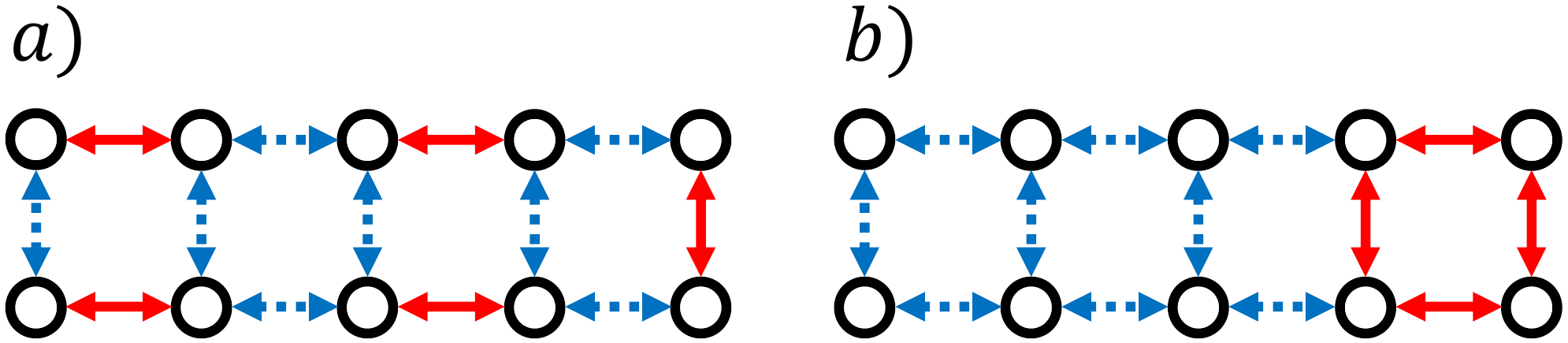}
    \end{center}
    \caption{{\bf Minimax Jump Distance.}
    Configurations of a discrete system are represented by circles, and allowed transitions between them by arrows. Jumps in $\A$ (full/red) contribute to the observable, whilst jumps in $\A^C$ (dotted/blue) do not. (a) System where red jumps are distributed throughout the graph, in this case $\tilde{k}=1$. (b) Uneven distribution of jumps in $\A$, in this case $\tilde{k}=4$.
    }
    \label{minimaxDiagram}
\end{figure}

While $R_{\rm max}$ and $w_{\rm min}$ can be computed easily in terms of the transition rates, the minimax jump distance $\tilde{k}$ can be read off the graph of the process, see Figure \ref{minimaxDiagram}. 
The fact that it is finite follows from the irreducibility of the Markov process.
The corollary is stated below:

\bigskip

\begin{coro}[Simple Upper Bound on $\beta$]\label{countObsBeta}
For general counting observables, the norm $\beta:=\left\|\L_{\infty}^{-1}\right\|_{\infty\rightarrow \infty}$ is bounded from above by:
\[
\beta \leq c_c\tilde{k}\max_{(x,y) \notin \A} \left \{ \frac{R_{x}}{w_{xy}}\right \}^{\tilde{k}-1}\max_{(x,y) \in \A} \left \{ \frac{R_{x}}{w_{xy}}\right \}\leq c_c\tilde{k}\left (\frac{R_{\rm max}}{w_{\rm min}}\right)^{\tilde{k}}=:\tilde{\beta},
\]
with $c_c, R_{\rm max}, w_{\rm min}$ defined in Eqs. \eqref{eq:cc}, \eqref{eq:Rmax}, and respectively \eqref{eq:wmin}.
The concentration bound in Theorem \ref{theo:countObs} holds with $\beta$ replaced by any of the two upper bounds above.
\end{coro}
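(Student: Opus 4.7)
The plan is to exploit the probabilistic interpretation $\beta=\max_{x\in E}\mathbb{E}_{\delta_x}[T_{\A}(1)]$ established just above the corollary, and to bound $\beta$ by a geometric-trial estimate for the first-passage count $\tau$ (the number of jumps of the jump chain $\P$ before the first jump in $\A$), driven by the minimax jump distance $\tilde{k}$.

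First I would write $T_{\A}(1)=\sum_{j=1}^{\tau} t_j$ with $t_j$ the $j$-th holding time. Since, conditionally on the jump-chain trajectory, the $t_j$ are independent exponentials with rates bounded below by $R_{\min}=c_c^{-1}$, conditioning on the jump chain and summing gives
\[
\mathbb{E}_{\delta_x}[T_{\A}(1)] = \mathbb{E}_{\delta_x}\!\left[\sum_{j=1}^{\tau}\frac{1}{R_{X_{j-1}}}\right]\leq c_c\,\mathbb{E}_{\delta_x}[\tau]
\]
uniformly in $x$, so it suffices to bound $\mathbb{E}_{\delta_x}[\tau]$.

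Next, set $a:=\max_{(x,y)\notin \A}\{R_x/w_{xy}\}$ and $b:=\max_{(x,y)\in \A}\{R_x/w_{xy}\}$, both $\geq 1$ since $R_x=\sum_y w_{xy}\geq w_{xy}$, and let $q:=1/(a^{\tilde{k}-1}b)$. I would then show that from every $x\in E$ the probability of a jump in $\A$ occurring within the first $\tilde{k}$ steps of the jump chain is at least $q$. To this end, fix $x$ and choose a trajectory $(i_0=x,i_1,\dots,i_{l_x})$ of \emph{minimal} length $l_x\leq\tilde{k}$ satisfying the minimax definition. By minimality, every intermediate transition $(i_j,i_{j+1})$ for $j=0,\dots,l_x-2$ must lie in $\A^C$ (otherwise the path could be truncated), while the final transition $(i_{l_x-1},i_{l_x})$ is in $\A$. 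Hence the probability under the jump chain of following this exact trajectory equals
\[
\prod_{j=0}^{l_x-2}\frac{w_{i_j,i_{j+1}}}{R_{i_j}}\cdot \frac{w_{i_{l_x-1},i_{l_x}}}{R_{i_{l_x-1}}}\;\geq\; a^{-(l_x-1)}b^{-1}\;\geq\; q,
\]
and on this event $\tau\leq l_x\leq\tilde{k}$.

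Applying the strong Markov property at multiples of $\tilde{k}$ then yields $\mathbb{P}_{\delta_x}(\tau>n\tilde{k})\leq(1-q)^n$ for every $n\geq 0$, so
\[
\mathbb{E}_{\delta_x}[\tau]\;\leq\; \tilde{k}\sum_{n\geq 0}\mathbb{P}_{\delta_x}(\tau>n\tilde{k})\;\leq\; \frac{\tilde{k}}{q},
\]
and combining with the first step gives $\beta\leq c_c\tilde{k}\,a^{\tilde{k}-1}b$, the first inequality. The second inequality is immediate from $a,b\leq R_{\max}/w_{\min}$. The final claim, that Theorem \ref{theo:countObs} still holds with $\beta$ replaced by any upper bound $\tilde{\beta}\geq\beta$, follows because the proof of that theorem bounds the MGF of $T_{\A}(k)$ by the MGF of a sum of $k$ independent exponentials of rate $\beta^{-1}$, which is in turn dominated by the MGF of the same sum with rate $\tilde{\beta}^{-1}\leq\beta^{-1}$; the monotonicity of the resulting Chernoff bound in $\beta$ does the rest. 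The main subtlety will be the minimality argument constraining the optimal trajectory to have exactly $l_x-1$ jumps in $\A^C$ and a single $\A$-jump at the end, which is what produces the asymmetric powers of $a$ and $b$ in the sharper first inequality; the geometric-trial iteration via the strong Markov property is standard once this uniform-in-$x$ lower bound on the success probability $q$ is in hand.
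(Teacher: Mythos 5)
Your proof is correct and reaches the same intermediate bound as the paper; the route is genuinely different in packaging though ultimately exploits the same core observation. The paper works algebraically: it expands $-\L_\infty^{-1}$ as the Neumann series $\sum_{l\geq 0}(\RR^{-1}\W_2)^l\RR^{-1}$, regroups the sum over $l$ into residues modulo $\tilde k$, applies submultiplicativity of $\|\cdot\|_{\infty\to\infty}$ to obtain $\beta\leq c_c\tilde k\sum_{n\geq 0}\|(\RR^{-1}\W_2)^{\tilde k}\|^n$, and then estimates $\|(\RR^{-1}\W_2)^{\tilde k}\|_{\infty\to\infty}$ by exhibiting, via the minimax-distance definition, a length-$l\leq\tilde k$ path whose probability under the jump chain lower-bounds the ``escape'' mass $\RR^{-1}\W_1$. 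Your version instead uses the probabilistic identity $\beta=\max_x\mathbb{E}_{\delta_x}[T_\A(1)]$ established just above the corollary, conditions on the jump chain to reduce to bounding $\mathbb{E}_{\delta_x}[\tau]$, and runs a standard geometric-trial argument via the strong Markov property at multiples of $\tilde k$. These are two faces of the same coin (your $\mathbb{P}_{\delta_x}(\tau>\tilde k)$ is exactly the paper's $\langle\delta_{x}, (\RR^{-1}\W_2)^{\tilde k}\Iv\rangle$, and your $\sum_n\mathbb{P}(\tau>n\tilde k)$ recovers the paper's geometric series), but your phrasing is more elementary and avoids manipulating matrix norms. Two points worth flagging on the comparison: (i) you make explicit the minimality argument that forces all but the last transition of the chosen path into $\A^C$ — this is the step that produces the asymmetric powers $a^{\tilde k-1}b$ rather than the cruder $\max\{a,b\}^{\tilde k}$, and the paper uses it only implicitly when writing $g_-\geq\prod w_{x_jx_{j+1}}/R_{x_j}$; (ii) the final claim that $\beta$ may be replaced by any $\tilde\beta\geq\beta$ is handled identically in both, via monotonicity of the Chernoff bound obtained from the exponential-MGF domination.
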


The proof of Corollary \ref{countObsBeta}  can be found in Appendix \ref{app:countObs}. In the case of total activity, i.e. when $\A=\EE$, one can easily see that $\beta=c_c$.

%%%%%%%%%%%%%%%%%%%%%%%%%%%%%%%%
\subsection{Examples: Classical Concentration Bounds for Markov Processes}
%%%%%%%%%%%%%%%%%%%%%%%%%%%%%%%%%%%%

In this section we illustrate the main results of the classical part of the paper with three simple examples.

%%%%%%%%%%%%%%%%%%%%%%%%%%%%%%%%%%%%%%%%%%
\subsubsection{Statistics of Dynamical Activity in a Three-Level System}\label{ex:3LvlAct}
%%%%%%%%%%%%%%%%%%%%%%%%%%%%%%%%%%%%%%%%%

We illustrate the results of Theorem \ref{theo:dynActC} with the model of a simple three-level system as sketched in Fig.~\ref{rf3level}(a): the set of configurations is $E = \{ 0,1,2 \}$, with reversible transitions $w_{01}=w_{10}=\omega$, $w_{02}=w_{20}=\upsilon$ and $w_{12}=w_{21}=\kappa$. Assuming that $\omega$ is the largest rate, the longest expected waiting time is $c_c=\frac{1}{\kappa+\upsilon}$, whilst $\langle t_{\EE} \rangle$, $b_c^2$ and $\varepsilon$ can easily be determined from the three-dimensional generator $\L$. 
%From Corollary \ref{countObsBeta} 
In addition, we have $\beta = c_c$, for $\A=\EE$. In Fig.~\ref{rf3level}(b) we show the 
exact long time rate function of the activity (full curve/black) 
for a particular set of values of the transitions rates, together with the lower bound from Theorem~\ref{theo:dynActC} (dashed/blue),
$$\tilde{I}_\EE(\gamma)=\frac{\gamma^2\varepsilon}{4b_c^2}h\left(\frac{5c_c\gamma}{2b_c^2}\right) ,$$
and the general lower bound from Theorem~\ref{theo:countObs} (dotted/red),
$$\hat{I}_\EE(\gamma)=\frac{\gamma + \langle t_{\EE} \rangle}{c_c} - 1-\log\left(\frac{\gamma + \langle t_{\EE} \rangle}{c_c}\right). $$
We see that the bound from Theorem~\ref{theo:countObs}  $\hat{I}_\EE(\gamma)$ 
is closer to the exact result than that from Theorem~\ref{theo:dynActC} for large enough deviations. 
Indeed, for $\gamma \gg 1$, one has $\hat{I}_{\EE}(\gamma)\asymp\frac{\gamma}{c_c}$ while $\tilde{I}_{\EE}(\gamma)\asymp \frac{\gamma\varepsilon}{5c_c}<\frac{\gamma}{c_c}$. For comparison, Fig.~\ref{rf3level}(b) we also show the {\em upper} bound to the rate function, the so-called TUR, from \cite{garrahan2017simple} (dot-dashed/magenta): the combination of the TUR and the ``inverse TUR'' from Theorems~\ref{theo:dynActC},\ref{theo:countObs} upper and lower bound the true rate function thus restricting the range of probabilities of rare events of the activity.

\begin{figure}[t]
    \begin{center}
    \includegraphics[width=1\linewidth]{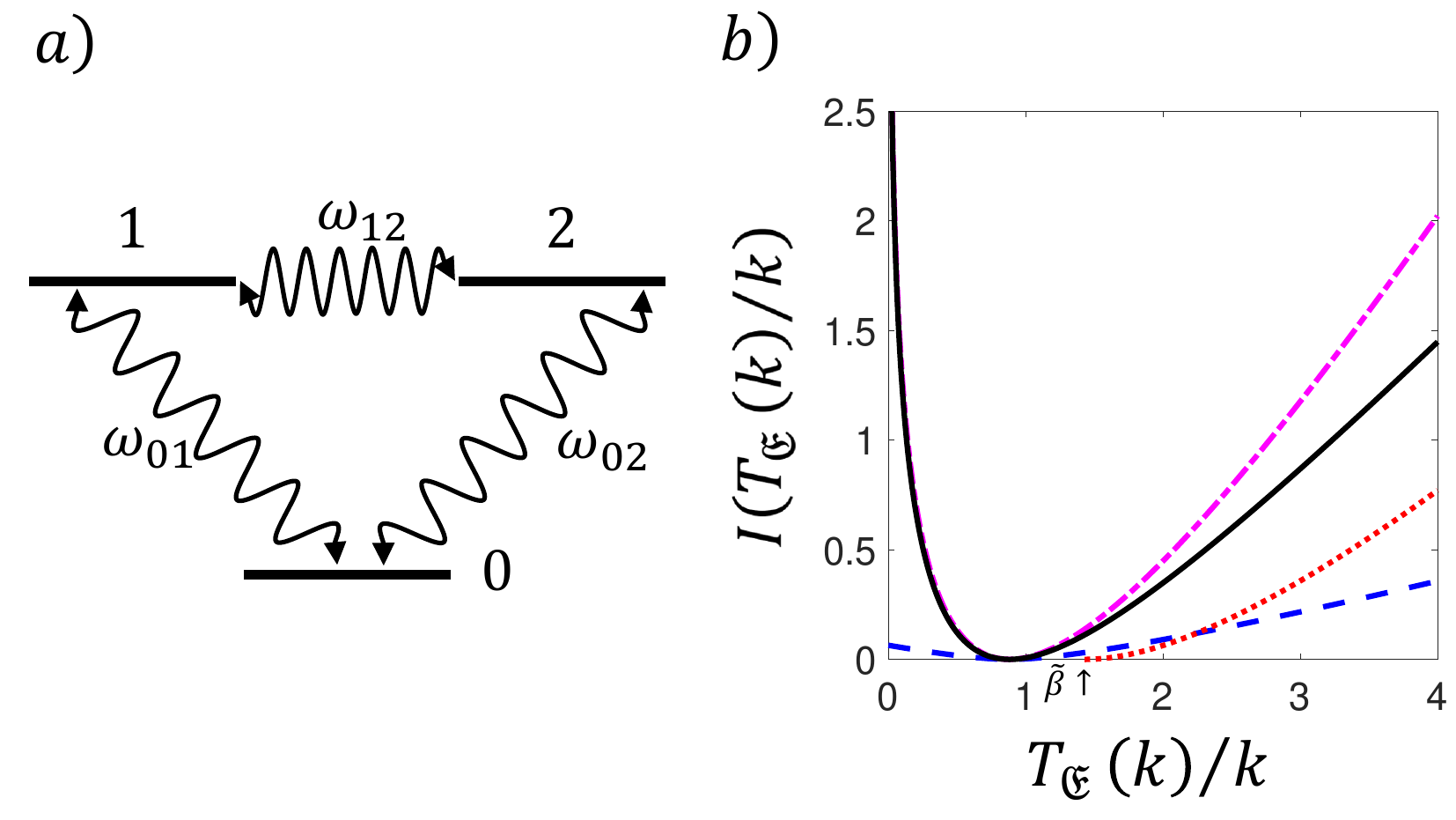}
    \end{center}
    \caption{
        {\bf Bounds on the rate function of the FPT of the activity in a classical three-level system.}
        (a) Sketch of the three-level system. (b) Rate function $I(T_{\EE}(k)/k)$ of the FPT for the dynamical activity, for the case with rates $w_{01}=w_{10}=\omega=1$, $w_{02}=w_{20}=\upsilon=0.5$, $w_{12}=w_{21}=\kappa=0.2$. We show the exact rate function (full curve/black) and the lower bound specific to the activity from Theorem \ref{theo:dynActC} (dashed/blue). We also show the 
        the generic tail bound for counting observables from Theorem \ref{theo:countObs} (dotted/red) which is valid in the region $T_{\EE}(k)/k>\tilde{\beta}=1/R_{\rm min}=1/(\kappa+\upsilon)$ (indicated by the arrow). For comparison we include the upper bound on the rate function (dot-dashed/magenta), known as the TUR \cite{garrahan2017simple}.
    }
    \label{rf3level}
\end{figure}

\begin{figure}[t]
    \begin{center}
    \includegraphics[width=0.6\linewidth]{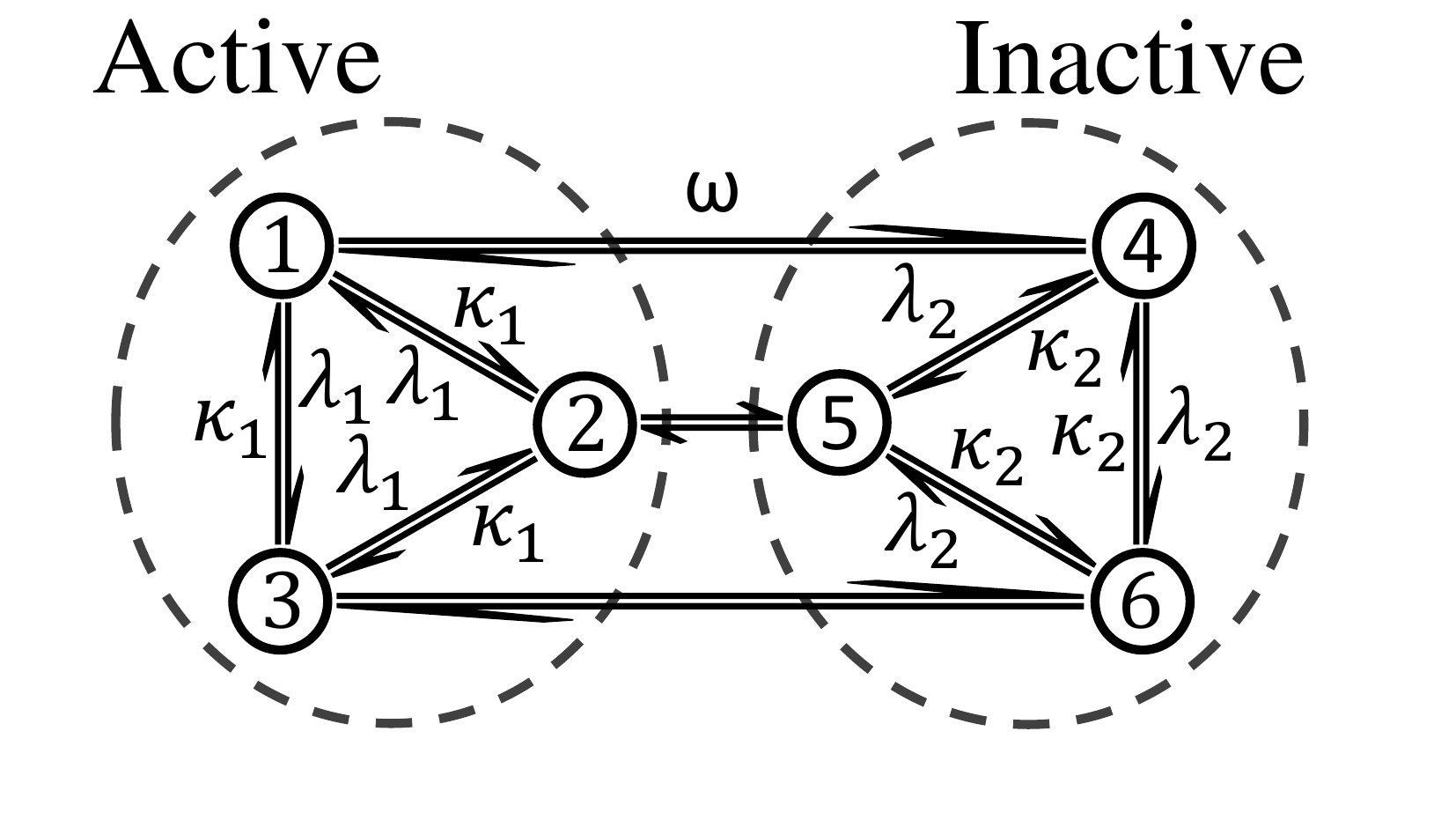}
    \end{center}
    \caption{{\bf Six-state dynamical system.}
    Sketch of a six-state system with two phases, the active phase $E_1$ in the left circle and the inactive phase $E_2$ in the right circle. The phases are separated by edges controlled by $\omega$. For small $\omega$, each phase is metastable, and for $\omega\rightarrow 0$ the size of FPT fluctuations increases. This increase is captured by $\varepsilon$.
    }
    \label{6statePh}
\end{figure}

%%%%%%%%%%%%%%%%%%%%%%%%%%
\subsubsection{Metastability and the Absolute Spectral Gap}\label{ex:phaseTran}
%%%%%%%%%%%%%%%%%%%%%%%%

In the following example we show how closing the absolute spectral gap of the discrete time generator ${\bf P}$ leads to an increase of the fluctuations of the total activity FPT in a simple model, as predicted by the concentration bound in Theorem \ref{theo:dynActC}. We consider the six-state system introduced in \cite{bakewell-smith2023general} composed of two three-state subsystems connected by edges controlled by a rate parameter $\omega$, see Fig.~\ref{6statePh}. For $\omega\rightarrow 0$ the spectral gap of the real part of the generator $\Re(\L)$ vanishes and the configuration space $E$ breaks up into two disconnected components, $E_1 = \{1,2,3\}$ and $E_2 = \{4,5,6\}$. When $\omega$ is non-zero but much smaller than the other rates, the system is metastable, with $E_1$ and $E_2$ becoming long-lived metastable ``phases'', since relaxation within $E_{1,2}$ will be much faster than relaxation in the whole of $E$. 

We now study the statistics of the FPT of the activity in this model. We consider the case where the internal rates in $E_1$ are much larger than those in $E_2$, while maintaining the metastability condition, $\lambda_1,\kappa_1\gg\lambda_2,\kappa_2 \gg \omega$. We call $E_1$ and $E_2$ the ``active phase'' and ``inactive phase'', respectively, as the activity in stationary trajectories is much larger while the system is in $E_1$ than in $E_2$. 
The rate matrix can be written as,
\[
\W = \begin{pmatrix}
    \tilde{\W}_1 & 0\\
    0 & \tilde \W_2
\end{pmatrix}+ \omega \begin{pmatrix}
        0 & \I_3\\
        \I_3 & 0
    \end{pmatrix},
\]
where $\tilde{\W}_{1,2}$ are the $3\times 3$ rate matrices for internal transitions in $E_{1,2} = \{1,2,3\}$, and the 3-dimensional identity is denoted $\I_3$. In Theorem \ref{theo:dynActC} we require the discrete time operator $\P=\RR^{-1}\W$ and its adjoint $\P^\dag$, to form the multiplicative symmertisation
\[
\P^\dag \P = \begin{pmatrix}
    \frac{\tilde{\W_1}^\dag \tilde{\W_1}}{R_1^2} & 0\\
    0 & \frac{\tilde{\W_2}^\dag \tilde{\W_2}}{R_2^2}
\end{pmatrix} + \omega^2\begin{pmatrix}
    \frac{1}{R_2^2}\I_3 & 0\\
    0 & \frac{1}{R_1^2}\I_3
\end{pmatrix}
+\omega\begin{pmatrix}
    0 & \frac{\tilde{\W}_1^\dag}{R_1^2} + \frac{\tilde{\W}_2}{R_2^2}\\
    \frac{\tilde{\W}_2^\dag}{R_2^2} + \frac{\tilde{\W}_1}{R_1^2} & 0
\end{pmatrix},
\]
where $R_{1,2} = \lambda_{1,2} + \kappa_{1,2}$. At $\omega=0$, the spectrum of $\P^\dag \P$ is equal to the union of the spectra of $\tilde{\P}_1^\dag\tilde{\P}_1$ and $\tilde{\P}_2^\dag\tilde{\P}_2$, where $\tilde{\P}_1 = \frac{\tilde{\W}_1}{R_1}$ is the discrete time transition matrix on each $E_1$ and $\tilde{\P}_2$ is that of $E_2$; hence, the algebraic multiplicity of the eigenvalue $1$ is $2$. By continuity of the spectrum for analytic perturbation, the absolute spectral gap vanishes as $\omega\rightarrow 0$. Corollary \ref{dynActiTURC} then implies that the upper bound on the variance of the FPT will explode as this ``phase transition'' point is approached. Fluctuations of $T_\EE(k)$'s get bigger as well: since $\omega$ is much less than either of the escape rates within each metastable phase, the system gets ``stuck'' in either phase, resulting in larger fluctuations of the observed FPT.
%\jpg{What is this section saying about the bounds? This section is not clear}

The behaviour of the fluctuations of $T_\EE(k)$ as $\omega \to 0$ is not immediately apparent based on the form of the expression for the variance given by Lemma \ref{lem:asympVar}. We remark that for finite $k$ the variance remains finite even as the gap closes. This can be seen by rewriting ${\rm var}_\pi(T_\EE(k))$ as

\begin{equation*}
    \begin{split}
    \frac{{\rm var}_\pi\left( T_{\EE}(k)\right)}{k} = \left \langle \pi, \D \Iv \right\rangle ^2 &+ 2 \left \langle \pi, \D\left ( \I + \frac{\sum_{i=1}^{k-1}\sum_{j=1}^{i}\P^j}{k}\right )(\I -\Pi)\D\Iv\right\rangle\\
    \end{split}
    \end{equation*}
which is uniformly bounded in $\omega$. Recall that in the case of total activity $\varphi = \pi$, $\L_\infty^{-1}=-\RR^{-1} = -\D$ and $\Q = \P$. In the limit $k\to \infty$ the expression reduces to the first two terms of Lemma \ref{lem:asympVar} and the behaviour depends solely on $(\I - \P)^{-1}$ and whether this causes a divergence as $\omega \to 0$. From Figure \ref{fig:varcomp} one can see that for this model, the asymptotic variance does diverge and for finite $k$ the fluctuations remain finite as expected. We can however see fingerprints of the asymptotic behaviour for intermediate $k$, which is captured by the upper bound in Corollary \ref{dynActiTURC}.

\begin{figure}[t]
    \begin{center}
    \includegraphics[width=0.5\linewidth]{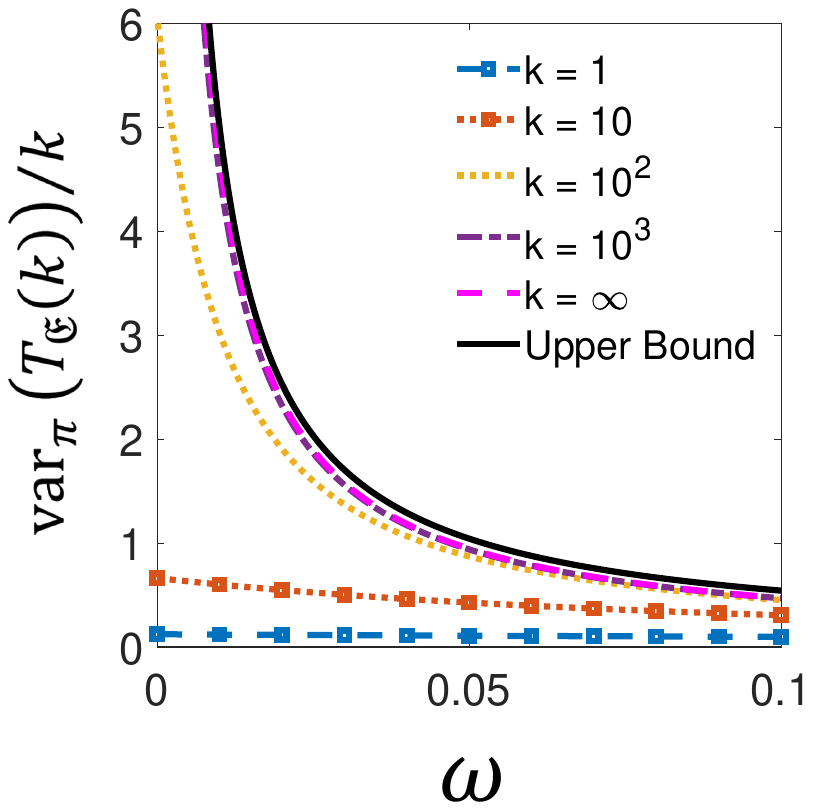}
    \end{center}
    \caption{{\bf Upper bound on the variance of the FPT for activity in a six-state system.} 
    Upper bound (full/black) on the scaled variance of ${\rm var}_\pi(T_\EE(k))/k$ given by Corollary \ref{dynActiTURC}. This is valid for all $k$. We compare with the exact variance (cf Lemma \ref{lem:asympVar}) for several values of $k$: $1$ (dashed-marked/blue), $10^1$ (dotted-marked/orange), $10^2$ (dotted/yellow), $10^3$ (dot-dashed/purple) and for $k=\infty$ (dashed/magenta). We compare these quantities as the controlling parameter $\omega\to 0$ and the system approaches a phase transition. The system is the model given in figure \ref{6statePh} with rates $\lambda_1 = 30$, $\mu_1 = 10$ and $\lambda_2 = 0.3$, $\mu_2 = 0.1$.}
    \label{fig:varcomp}
\end{figure}

\begin{figure}[t]
    \begin{center}
    \includegraphics[width=0.5\linewidth]{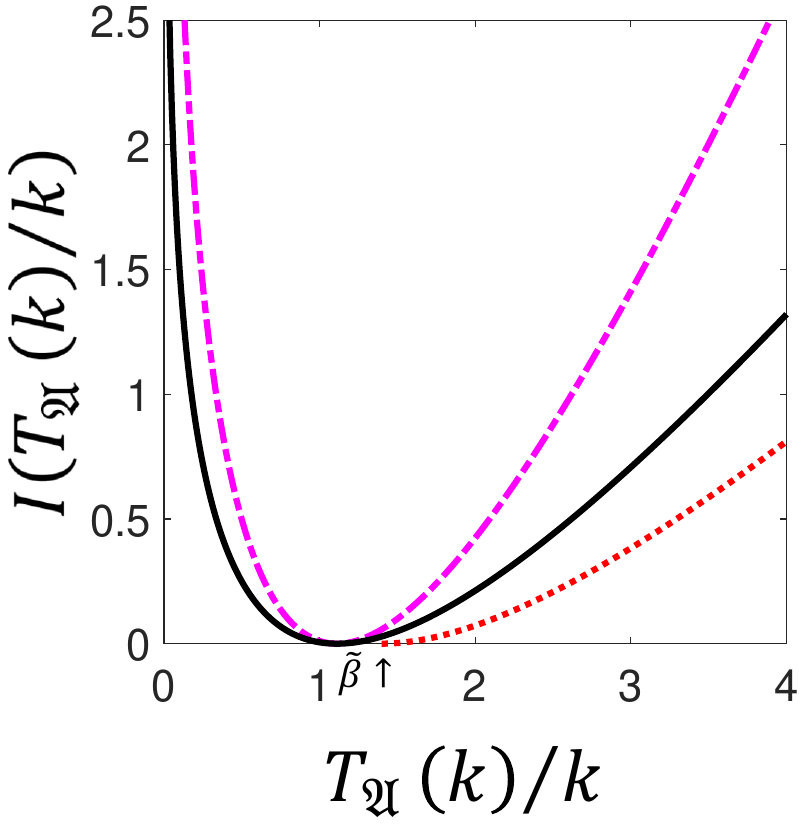}
    \end{center}
    \caption{{\bf Bounds on the rate function of the FPT of a counting observable for a classical three-level system.} 
    Exact rate function $I(T_{\A}(k)/k)$ (full/black) for the FPT of the total count of  
    jumps $0\rightarrow 1$, $1\rightarrow 2$ and $2\rightarrow 0$ in the three-level system of Fig.~\ref{rf3level}, with rates $w_{01}=w_{10}=\omega=1$, $w_{02}=w_{20}=\upsilon=0.9$, $w_{12}=w_{21}=\kappa=0.8$. The tail bound from Theorem \ref{theo:countObs} (dotted/red) bounds deviations in the region $T_{\A}(k)/k>\tilde{\beta}$ with $\beta\leq \tilde{\beta}={(\omega + \upsilon)}/{[\kappa(\kappa+\upsilon)]}$. The rate function is bounded from above using the same method as in Fig.~\ref{rf3level}(dot-dashed/magenta) \cite{garrahan2017simple}.}
    \label{rfCount}
\end{figure}

%%%%%%%%%%%%%%%%%%%%%%%%%%
\subsubsection{Three-Level System Counting Subset of Jumps}
%%%%%%%%%%%%%%%%%%%%%%%%%%

To illustrate the results of Theorem \ref{theo:countObs}, we use the same three-level model as in Sec.~\ref{ex:3LvlAct} but we consider the observable that only counts ``clockwise'' jumps, that is, the $0\rightarrow 1$, $1\rightarrow 2$ and $2\rightarrow 0$ jumps but not their reverses. With this setup, the minimax jump distance is $\tilde{k}=1$ since it is possible to perform a jump in $\A$ which begins at any state. The lower bound to the rate function provided by Theorem \ref{theo:countObs}  is illustrated in Figure \ref{rfCount}.

%%%%%%%%%%%%%%%%%%%%%%%%%%%
\section{Large Deviation Principle and Concentration Bounds for FPTs of Quantum Counting Processes}\label{sec:qmarkov}

%%%%%%%%%%%%%%%%%%%%%%%%%%%%
In this section we present three concentration bounds for \emph{quantum} Markov processes. First we provide a result for \emph{total counts} in quantum Markov processes with general jump operators. By restricting to  \emph{reset processes} (jump operators of rank one), we can weaken the required assumption and obtain a separate result. Finally, we provide a bound for counting a \emph{subset} of jumps. In each of the three cases we illustrate the result in a simple example.

%%%%%%%%%%%%%%%%%%%%%%%%%
\subsection{Preliminaries and Notation}\label{sec:qPrelims}
%%%%%%%%%%%%%%%%%%%%%%
In this section we introduce the basic concepts and tools related to first passage times for  quantum counting processes. We recall that quantum counting processes are used to model detector clicks  when an open quantum system is continuously monitored through the environment \cite{gardiner2004quantum}. Throughout the paper, the system will be finite dimensional and its state space will be the Hilbert space $\mathbb{C}^d$. Quantum states are represented by positive semi-definite matrices with unit trace, that is $\rho\in M_d(\mathbb{C})$ such that $\rho \geq 0$ and $\tr(\rho)=1$.
%Complex functions on the state space are replaced by elements in the algebra of $d$-dimensional complex matrices $M_d(\mathbb{C})$. 
Observables correspond to self adjoint operators on the state space, i.e. $x\in M_d(\mathbb{C})$ such that $x=x^*$. As in the classical case, there is a duality between states and observables expressed by the fact that the expectation of an obsevable $x$ in the state $\rho$ is $\tr(\rho x)$. We can endow $M_d(\mathbb{C})$ with the operator and trace norms which provide natural distances between observables and states, respectively:
\[
\|x\|_\infty:=\max_{u \in \mathbb{C}^d\setminus \{0\}}\frac{\|xu\|}{\|u\|}, \quad 
\|x\|_1:=\tr(|x|).
\]
For any linear map $\Phi$ on $M_d(\mathbb{C})$ describing the evolution of observables (Heisenberg picture), the unique corresponding evolution $\Phi_*$ on states (Schr\"{o}dinger picture) is characterised by
$$
\tr(x\Phi(y))=\tr(\Phi_*(x)y).$$
As in the previous section, we denote by $\|\Phi\|_{\infty \rightarrow  \infty}$ the operator norm on $\Phi$ induced by $\|\cdot \|_\infty$ and analogously for the trace norm. Every physical evolution of a quantum system is given by a quantum channel, i.e. a linear map $\Phi:M_d(\mathbb{C}) \rightarrow M_d(\mathbb{C})$ which satisfies
\begin{enumerate}
\item $\Phi(\I)=\I$ (unital),
\item $\Phi \otimes 
{\bf I}_{M_k(\mathbb{C})}$ is positive for every $k \in \mathbb{N}$ (completely positive)
\end{enumerate}
where $\I$ is the identity matrix and ${\bf I}_{M_k(\mathbb{C})}$ is the identity map on $M_k(\mathbb{C})$. Equivalently, $\Phi_*$ is trace preserving and completely positive. These conditions mirror those satisfied by classical channels/transition operators, but in the quantum setting complete positivity is a stronger requirement that usual positivity, and we refer to  
\cite{nielsen2010quantum} for more details on the theory of quantum channels and the physical interpretation.

%%%%%%%%%%%%%%%%%%%%%%%%%%%%%%%%%%%%%%%%%%%%%%%%%%%%%
\subsubsection{Quantum Markov Semigroups and their Unraveling by Counting Measurements}\label{subsec:QMP}
%%%%%%%%%%%%%%%%%%%%%%%%%%%%%%%%%%%%%%%%%%%%%%%%%%%%%

A quantum Markov semigroup is a family of channels  $\left(\mathcal{T}_t\right)_{t\ge 0}$ acting on $M_d(\mathbb{C})$ such that $\mathcal{T}_0 = {\bf I}_{M_d(\mathbb{C})}$, $\mathcal{T}_t\circ \mathcal{T}_s= \mathcal{T}_{t+s}$ for all $s,t\geq 0$ and 
$t \mapsto {\cal T}_t$ is continuous. Such a semigroup describes (in the Heisenberg picture) the dissipative evolution of a $d$-dimensional open quantum system, in physical situations where certain Markov approximations pertaining to the interaction with the environment apply. A fundamental result \cite{gorini1976completely,lindblad1976on-the-generators} shows that the generator $\mathcal{L}:M_d(\mathbb{C}) \to M_d(\mathbb{C}) $ of such a semigroup takes the form
\begin{equation}
\label{eq:Lindblad}
\mathcal{L}: x \mapsto -i[H,x]
+ \sum_{i \in I}L_i^*xL_i -
-\frac{1}{2}\sum_{i \in I} (L_i^*L_i x +xL_i^*L_i), 
\end{equation}
where $H\in M_d(\mathbb{C})$ is self-adjoint  and 
$L_i \in M_d(\mathbb{C})$ with indices beloging to a finite set $I$. Physically, $H$ is interpreted as being the system hamiltonian while $L_i$ describe the coupling to separate ``emission channels'' in the environment. If the system is prepared in a state $\rho$ and evolves together with the environment for a time period $t$, then its reduced state is given by $\mathcal{T}_{t*}(\rho)$. On the other hand, if the environment is probed by performing continuous-time counting measurements in each of the emission channels, then one  
one observes stochastic trajectories 
$\omega=\{(i_1,t_1),(i_2,t_2),\cdots\}$ 
which record the labels of the jumps together with the times between jumps. In this case one would like to know what is the probability of observing such a trajectory and what is the conditional state of the system given this observation. This is the subject of quantum filtering theory which plays an important role in quantum technology and quantum control theory 
\cite{belavkin1990a-stochastic,bouten2004stochastic,wiseman2009quantum,gough2009the-series}. 
While a full account of the system-environment unitary evolution and subsequent counting measurement goes beyond the scope of this paper, we employ the Dyson series to convey an intuitive answer to the questions formulated above. %$\mathcal{T}_{*t}$ (in the Schr\"{o}dinger picture). 
For this we decompose the generator as 
\begin{equation}
\label{eq:L.split}
\mathcal{L}= 
\mathcal{L}_0 +  \J= \mathcal{L}_0 + \sum_{i\in I} \J_i   ,  
\end{equation}
where
$$
\mathcal{L}_0(x) = G^*x+ xG %-i[H,x]-\frac{1}{2} \sum_i \{L_i^*L_i, x\}
\qquad
\J_i (x)= L_i^* x L_i,
\quad
\mathrm{with}
\quad
G:= iH -\frac{1}{2}\sum_{i\in I} L_i^*L_i.
$$
Note that $\J_i$ is completely positive and  $\mathcal{L}_0$ is the generator of the completely positive semigroup $e^{\mathcal{L}_0} (x) = e^{G^*t} x e^{Gt}$. 
The Dyson expansion of $\mathcal{T}_{*t}$ 
(Schr\"{o}dinger picture) corresponding to the split \eqref{eq:L.split} is
$$
\mathcal{T}_{t*} =
e^{t{\cal L}_{0*}}+\sum_{k=1}^{+\infty}  \int_{\sum_{i=1}^k t_i \leq t} e^{\left (t-\sum_{i=1}^{k}t_i\right ){\cal L}_{0*}}{\cal J}_{i_k*}e^{t_k{\cal L}_{0*}} \cdots {\cal J}_{i_1*} e^{t_1{\cal L}_{0*}} dt_1 \cdots dt_k .
%=\\
%&e^{t {\cal L}_*}(\rho)
$$
By applying both sides to the initial state $\rho$ we find that the evolved system state 
$\rho_t= \mathcal{T}_{*t}(\rho)$ is a mixture of states corresponding to different counting trajectories. Indeed, let us denote
\[\Omega_t=\{\emptyset\} \cup \bigcup_{k=1}^{+\infty} I^k \times \{(t_1,\dots,t_k)\in [0,t]^k: \sum_{i=1}^{k}t_i \leq t\}
\]
the space of counting trajectories up to time $t$, 
and let us endow $\Omega_t$ with the natural $\sigma$-field and denote by $d\mu$ the unique measure such that $\mu(\{\emptyset\})=1$ and $\mu(\{(i_1,\dots, i_k)\times B)$ is the Lebesgue measure of $B$ for every $(i_1,\dots, i_k) \in I^k$, $B \subseteq \{(t_1,\dots,t_k)\in [0,t]^k: \sum_{i=1}^{k}t_i \leq t\}$.
Then we can write
$$
\mathcal{T}_{t*} (\rho)= 
\int_{\Omega_t} \tilde{\varrho}_t(\omega) \mu(d\omega)
=\int_{\Omega_t} 
\frac{d\mathbb{P}_t}{d\mu} (\omega)\varrho_t(\omega) \mu(d\omega)
$$
where for each counting trajectory 
$\omega =\{(i_1 , t_1),\dots (i_k, t_k)\}\in \Omega_t$. The \emph{unnormalised} system state conditional on observing $\omega$ is given by 
\begin{equation}  \label{eq:unnorm.conditional.state}
\tilde{\varrho}_t(\omega) 
 = e^{\left (t-\sum_{i=1}^{k}t_i\right ){\cal L}_{0*}}{\cal J}_{i_k*} e^{t_1{\cal L}_{0*}}\cdots {\cal J}_{i_1*} e^{t_1{\cal L}_{0*}}(\rho) 
\end{equation}
while 
\begin{equation}
    \label{eq:conditional.state}
\frac{d\mathbb{P}_t}{d\mu} (\omega) = \tr(\tilde{\varrho}_t(\omega)), \quad
\varrho_t(\omega) =
\frac{\tilde{\varrho}_t(\omega)}{\tr(\tilde{\varrho}_t(\omega))}
\end{equation}
represent the probability density, and the \emph{normalised} conditional state, respectively.

With this interpretation, the Dyson expansion expresses the fact that by averaging over all the conditional states $\varrho_t(\omega)$ one obtains the reduced system state $\rho_t$. Note that in order to avoid confusion, we use different symbols for the conditional and reduced system states.

Based on equations \eqref{eq:unnorm.conditional.state} and \eqref{eq:conditional.state} we deduce that during time periods with no jumps the conditional state evolves continuously as 
$$
\varrho_t \mapsto \varrho_{t+s }:= 
\frac{e^{s\mathcal{L}_{0*}}(\varrho_t)}
{ \tr [e^{s\mathcal{L}_{0*}}(\varrho_t )]},
$$

and at the time of a count with index $i$ the state has an instantaneous jump
$$
\varrho_t\mapsto 
\frac{\J_{i*}(\varrho_t)}{\tr [\J_{i*}(\varrho_t)]}.
$$
In addition, the probability density for the time of the first jump after $t$ is
$$
w(s) = \tr [\J_*e^{s\mathcal{L}_{0*}}(\varrho_t) ] = -\tr [\mathcal{L}_{0*}e^{s\mathcal{L}_{0*}}(\varrho_t)]
$$ 
where we use the fact that $\J(\mathbf{1})+ \mathcal{L}_{0}(\mathbf{1}) =\mathcal{L}(\mathbf{1}) =0$. 
We will now show how to generate the count trajectories in a recursive manner which is reminiscent of the generation of trajectories of classical Markov processes. Given a trajectory $\omega= \{(i_1,t_1),(i_2,t_2), \dots\} $, we denote by $\varrho_k$ the state immediately after the $k^{\rm th}$ count, with $\varrho_0= \rho$ denoting the initial state. 

\bigskip

\noindent
{\bf Iterative procedure for generating quantum trajectories.} The interarrival times and quantum trajectories can be generated recursively with respect to $k=0,1,\dots$: given $\varrho_k$ we draw $(\varrho_{k+1}, i_{k+1}, t_{k+1})$ as follows:
\begin{enumerate}
    \item the $(k+1)$th interrarival time 
    $t_{k+1}$ is drawn from the density 
    $$
    w(t)=-\tr(\LL_{0*}e^{t\LL_{0*}}(\varrho_{k}))
    $$
    \item \label{step.2}
    given $t_{k+1}$, the label $i_{k+1}$ is sampled from the following distribution:
    $$p(j)=\frac{\tr(\Phi_{j*}\LL_{0*}e^{t_{k+1}\LL_{0*}}(\varrho_{k}))}{\tr(\Phi_*\LL_{0*}e^{t_{k+1}\LL_{0*}}(\varrho_{k}))},  \quad \Phi_j:=-\LL_0^{-1}{\cal J}_j, \quad \Phi := \sum_{j\in I} \Phi_j
    $$
    \item we define
    $$\varrho_{k+1}=\frac{{\cal J}_{i_{k+1}*}(e^{t_{k+1}\LL_{0*}}(\varrho_{k}))}{\tr({\cal J}_{i_{k+1}*}(e^{ t_{k+1}\LL_{0*}}(\varrho_{k})))}.$$
\end{enumerate}
 
The map $\Phi:=\sum_{j\in I} \Phi_{j}=-\LL_0^{-1}{\cal J}$ appearing in step 2. is the analogous of $\P$ in Section \ref{sec:pn} and will play a central role in the following. Lemma \ref{lem:tech1Q} shows that 
$\LL_0^{-1}$ is well defined and is equal to $-\int_0^{+\infty}e^{t \LL_0} dt$, hence $\Phi$ is a completely positive map. Moreover, using that $\LL(\I)=0$, one has
$$
\Phi(\I)=-\LL_0^{-1}{\cal J}(\I)=\LL_0^{-1}\LL_0(\I)=\I,$$
hence $\Phi$ is a quantum channel. 

\bigskip

As in the classical case, establishing results on law of large numbers, large deviations, or concentration bounds, requires some type of assumption on the ergodicity of the dynamics. We now introduce two irreducibility assumptions which will later be invoked in separate occasions in our results.
\bigskip

\begin{hypo}[Irreducibility of $\mathcal{L}$]\label{hypo:irrQuantL}
The generator $\mathcal{L}$ is irreducible. This means that there is no non-trivial projection $P$ such that $\mathcal{L}(P)\geq 0$ or equivalently, there exists a unique strictly positive state $\hat{\sigma}$ satisfying $\mathcal{L}_*(\hat{\sigma})=0$.
\end{hypo}

\bigskip 

\begin{hypo}[Irreducibility of $\Phi$]\label{hypo:irrQuantPhi}
The channel $\Phi$ is irreducible. This means that there is no non-trivial projection $P$ such that $\Phi(P)\geq P$ or equivalently, there exists a unique strictly positive state $\sigma$ satisfying $\Phi_*(\sigma) = \sigma$.
\end{hypo}

As in the classical case, there is a close connection between the continuous-time generator $\LL$ and the channel $\Phi$. The following lemma clarifies these connections and shows that Hypothesis \ref{hypo:irrQuantPhi} is strictly stronger than Hypothesis  \ref{hypo:irrQuantL}.

\bigskip

\begin{lemmaT}\label{lemma.statioarystate.generators}
The generator $\LL$ has a unique invariant state if and only if $\Phi$ does. If $\Phi$ is irreducible then $\mathcal{L}$ is irreducible, but the converse is generally not true.
\end{lemmaT}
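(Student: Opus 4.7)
The plan is to exploit the identity $-\LL_0^{-1}=\int_0^{+\infty}e^{t\LL_0}dt$ supplied by Lemma \ref{lem:tech1Q}, which in particular shows that $-\LL_0^{-1}$ is a completely positive map. Using this I will proceed in three steps: (i) construct an explicit bijection between invariant states of $\LL_*$ and of $\Phi_*$, thereby equating their uniqueness; (ii) transfer irreducibility from $\Phi$ to $\LL$ by contraposition on the projection characterizations in Hypotheses \ref{hypo:irrQuantL} and \ref{hypo:irrQuantPhi}; and (iii) exhibit a counterexample showing the converse of (ii) fails. The delicate step is (iii), since the example must simultaneously have $\LL_0$ invertible (so that $\Phi$ is defined), $\LL$ irreducible, and $\Phi$ reducible.

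For (i), given a state $\hat\sigma$ with $\LL_*(\hat\sigma)=0$, equivalently $\J_*(\hat\sigma)=-\LL_{0*}(\hat\sigma)$, I set $\sigma:=\J_*(\hat\sigma)/\tr(\J_*(\hat\sigma))$. The denominator is strictly positive, since $\tr(\J_*(\hat\sigma))=0$ would force $\J_*(\hat\sigma)=0$ and then $\LL_{0*}(\hat\sigma)=0$, contradicting the invertibility of $\LL_{0*}$. A direct computation then gives
\[
\Phi_*(\sigma)=-\J_*\LL_{0*}^{-1}(\sigma)=\frac{\J_*\LL_{0*}^{-1}\LL_{0*}(\hat\sigma)}{\tr(\J_*(\hat\sigma))}=\sigma .
\]
Conversely, given $\sigma$ with $\Phi_*(\sigma)=\sigma$, I set $\hat\sigma:=-\LL_{0*}^{-1}(\sigma)/\tr(-\LL_{0*}^{-1}(\sigma))$; positivity preservation of $-\LL_{0*}^{-1}$ makes $\hat\sigma$ a bona fide state, and one verifies $\LL_*(\hat\sigma)\propto \LL_{0*}(-\LL_{0*}^{-1}(\sigma))+\J_*(-\LL_{0*}^{-1}(\sigma))=-\sigma+\sigma=0$. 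The two assignments are mutually inverse, so the affine dimensions of the two fixed-point sets coincide; in particular each has a unique element iff the other does.

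For (ii) I argue by contraposition. If $\LL$ is reducible, then $\LL(P)=\LL_0(P)+\J(P)\geq 0$ for some nontrivial projection $P$, hence $\J(P)\geq -\LL_0(P)$. Applying the positivity-preserving map $-\LL_0^{-1}$ to both sides yields
\[
\Phi(P)=-\LL_0^{-1}\J(P)\;\geq\;-\LL_0^{-1}(-\LL_0(P))=P,
\]
so $P$ witnesses the reducibility of $\Phi$, as required.

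For (iii) I take the driven, damped two-level atom on $\mathbb{C}^2$: in the basis $|0\rangle,|1\rangle$, set $H=\omega(|0\rangle\langle 1|+|1\rangle\langle 0|)$ with $\omega\neq 0$ and a single jump operator $L=\sqrt{\gamma}\,|0\rangle\langle 1|$ with $\gamma>0$. Diagonalization of $G=iH-\tfrac{1}{2}L^*L$ shows that its eigenvalues have strictly negative real parts, so $\LL_0$ is invertible and $\Phi$ is well defined. The generator $\LL$ is irreducible because no proper subspace of $\mathbb{C}^2$ is invariant under both $L^*=|1\rangle\langle 0|$ and $H$. On the other hand $\J_*(\rho)=\gamma\rho_{11}|0\rangle\langle 0|$, so every post-jump state equals $|0\rangle\langle 0|$; hence $\Phi_*(|0\rangle\langle 0|)$ is a positive multiple of $|0\rangle\langle 0|$, and by trace preservation it equals $|0\rangle\langle 0|$. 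This produces an invariant state of $\Phi_*$ with support strictly smaller than the whole space, so $\Phi$ is reducible, completing the counterexample.
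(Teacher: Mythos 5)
Your proof is correct, and the core bijection in step (i) is exactly the one the paper constructs, though you are more careful about checking that the two assignments $\hat\sigma\mapsto\J_*(\hat\sigma)/\tr\J_*(\hat\sigma)$ and $\sigma\mapsto -\LL_{0*}^{-1}(\sigma)/\tr(-\LL_{0*}^{-1}(\sigma))$ are mutual inverses; the paper simply exhibits the two directions and leaves that as implicit. (Your remark about ``affine dimensions'' is slightly off since the normalisations are not affine, but bijectivity of the maps between the two fixed-point sets is what you actually use, and you do establish that.) Where you genuinely diverge is in step (ii): the paper transfers irreducibility by arguing that an irreducible $\Phi$ has a strictly positive stationary state $\nu$, and then that $-\LL_{0*}^{-1}(\nu)>0$, so $\LL$ has a strictly positive stationary state and hence is irreducible. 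You instead work with the projection characterisations in Hypotheses \ref{hypo:irrQuantL} and \ref{hypo:irrQuantPhi}: from $\LL(P)\geq 0$ you get $\J(P)\geq -\LL_0(P)$ and then apply the positivity-preserving $-\LL_0^{-1}$ to obtain $\Phi(P)\geq P$. This is a clean, purely algebraic contraposition, arguably more self-contained than the paper's route via strict positivity. Finally, for the ``converse fails'' claim the paper's proof offers no explicit counterexample (it only hints at the issue by remarking that the forward map may destroy strict positivity, and elsewhere in the text notes the criterion $\cap_i\ker L_i^*=\{0\}$); your two-level emitter with $H=\omega(|0\rangle\langle 1|+|1\rangle\langle 0|)$, $L=\sqrt{\gamma}\,|0\rangle\langle 1|$ supplies a concrete one, and your verification that $\Phi_*$ maps every state to $|0\rangle\langle 0|$ (by the rank-one structure of $\J_*$ together with trace preservation) is correct, as is the check that the eigenvalues of $G$ have strictly negative real part. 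The only small weakness is that ``no proper subspace of $\mathbb{C}^2$ is invariant under both $L^*$ and $H$'' is a heuristic for the irreducibility of $\LL$ rather than a verification of Hypothesis \ref{hypo:irrQuantL}; it would be cleaner to cite the well-known strictly positive steady state of the driven two-level atom for $\omega\neq 0$.
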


The proof of Lemma \ref{lemma.statioarystate.generators} can be found in the appendix, section \ref{app:markovProofQ}.

%%%%%%%%%%%%%%%%%%%%%%%%%%%%%%%%%%%%%%%%%%%%%%%%%%%%%%%%
\subsubsection{First Passage Time for the Counting Process}\label{sec:fptCount}
%%%%%%%%%%%%%%%%%%%%%%%%%%%%%%%%%%%%%%%%%%%%%%%%%%%%%%%%%%

Consider the counting process described in section \ref{subsec:QMP} and let $N_i(t)$ be the stochastic process given by the number of counts with label $i\in I$ up to time $t$ in the measurement trajectory $\omega$. More generally, for any subset $\A\subseteq I$ we define the counting observable 
\[
N_{\A}(t) = \sum_{i\in \A}N_i(t).
\]
When $\A=I$, $N_{I}(t)$ is referred to as the total number of counts. The corresponding first passage times (FPTs) are defined in the same way as in the classical case:
\begin{equation}
\label{eq:FPT.quantum}
T_{\A}(k):=\inf_{t\geq 0}\{t: N_{\A}(t)=k\}.
\end{equation}

The following splitting of the generator is relevant in order to study the properties of the stochastic process $T_{\A}(k)$:
\[
\mathcal{L} = \J_\A + \mathcal{L}_{\infty},
\]
where $\J_\A(x) = \sum_{i\in \A} L_i^* x L_i$ accounts for the change of state after a jump in $\A$ and $\LL_\infty$ for the average evolution between jumps in $\A$. We denote as $\Psi$ the transition operator analogous to $\Q$ in the classical case:
\begin{equation} \label{eq:psi}
\Psi (x) = -\mathcal{L}_{\infty}^{-1}\circ \J_\A (x).
\end{equation}
If Hypothesis \ref{hypo:irrQuantL} holds, then $\Psi$ admits a unique invariant state $\varsigma$ (which might have a non-trivial kernel). Note that if $\A = I$ then $\mathcal{L}_{\infty} = \LL_0$ and $\Psi = \Phi$.

The following lemma shows that all the objects introduced so far are well defined and allows us to write the Laplace transform for general counting observables.

\begin{lemma}\label{lem:tech1Q}
Assume that Hypothesis \ref{hypo:irrQuantL} ($\mathcal{L}$ is irreducible) holds. Then the following statements are true:
    \begin{enumerate}
        \item $\overline{\lambda}:= -\max\{\Re(z): z \in {\rm Sp}({\cal L}_\infty)\}>0$, hence ${\cal L}_\infty$ is invertible;
        \item for every $u < \overline{\lambda}$, one has
        \[
        \E_\rho[e^{uT_{\A}(k)}]=\tr\left ( \rho \left((u+\mathcal{L}_\infty)^{-1}\mathcal{L}_\infty\Psi \right)^k(\I)\right ),
        \]
        \item $\|{\cal L}_\infty^{-1}\|_{\infty \rightarrow \infty}^{-1} \leq \overline{\lambda}.$
    \end{enumerate}
\end{lemma}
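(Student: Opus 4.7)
The plan is to treat the three items in order. Items 1 and 3 will rely on Perron--Frobenius theory for the completely positive semigroup generated by $\mathcal{L}_\infty$, whereas item 2 comes from the strong Markov property of the quantum trajectory process combined with the Dyson expansion. Throughout I use the decomposition $\mathcal{L}=\mathcal{L}_\infty+\J_\A$ with $\J_\A(x)=\sum_{i\in\A}L_i^*xL_i$, so that $\mathcal{L}_\infty=\mathcal{L}_0+\sum_{i\notin\A}L_i^*(\cdot)L_i$ generates a completely positive semigroup $e^{t\mathcal{L}_\infty}$ which is sub-Markov because $\tr(e^{t\mathcal{L}_{\infty*}}(\rho))$ equals the probability of observing no $\A$-jumps up to time $t$.

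For item 1, sub-Markovianity of $e^{t\mathcal{L}_\infty}$ combined with Perron--Frobenius for positive semigroups gives that the spectral bound $-\overline{\lambda}$ is attained by a real eigenvalue with a positive eigenvector and that $\overline{\lambda}\ge 0$. To rule out $\overline{\lambda}=0$ I would take a dual eigenvector $\sigma\ge 0$, $\sigma\ne 0$, with $\mathcal{L}_{\infty*}(\sigma)=0$ and observe that $\mathcal{L}_*(\sigma)=\J_{\A*}(\sigma)\ge 0$ has zero trace by the trace-preserving property of $\mathcal{L}_*$. This forces $\J_{\A*}(\sigma)=\mathcal{L}_*(\sigma)=0$, so Hypothesis~\ref{hypo:irrQuantL} gives $\sigma\propto\hat\sigma>0$. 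But then $L_i\sigma L_i^*=0$ for every $i\in\A$ forces $L_i=0$ for all $i\in\A$, contradicting the implicit assumption that $\A$ indexes at least one nonzero jump operator. Hence $\overline{\lambda}>0$ and $\mathcal{L}_\infty$ is invertible. This trace-preservation trick is the main non-routine step of the whole proof.

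For item 2, I proceed by induction on $k$. The Dyson expansion for the semigroup generated by $\mathcal{L}=\mathcal{L}_\infty+\J_\A$ identifies $e^{t\mathcal{L}_{\infty*}}(\rho)$ as the unnormalised conditional state given ``no $\A$-jumps in $[0,t]$'', so the density of $T_\A(1)$ is $w(t)=\tr[\J_{\A*}(e^{t\mathcal{L}_{\infty*}}(\rho))]=\tr[\rho\,e^{t\mathcal{L}_\infty}(\J_\A(\I))]$. For $u<\overline{\lambda}$ one has $\int_0^{+\infty}e^{ut}e^{t\mathcal{L}_\infty}\,dt=-(u+\mathcal{L}_\infty)^{-1}$, and using the identity $\mathcal{L}_\infty\Psi=-\J_\A$ (immediate from $\Psi=-\mathcal{L}_\infty^{-1}\J_\A$) the base case reads $\E_\rho[e^{uT_\A(1)}]=\tr[\rho\,((u+\mathcal{L}_\infty)^{-1}\mathcal{L}_\infty\Psi)(\I)]$. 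For the inductive step I condition on the pair $(T_\A(1),\varrho_1)$ where $\varrho_1=\J_{\A*}(e^{t\mathcal{L}_{\infty*}}(\rho))/\tr[\J_{\A*}(e^{t\mathcal{L}_{\infty*}}(\rho))]$, apply the strong Markov property to substitute the inductive hypothesis at level $k-1$ for $\E_{\varrho_1}[e^{u(T_\A(k)-T_\A(1))}]$, and collapse the $t$-integral into one more factor of $(u+\mathcal{L}_\infty)^{-1}\mathcal{L}_\infty\Psi$.

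For item 3, Perron--Frobenius again supplies a non-zero $X\ge 0$ with $\mathcal{L}_\infty(X)=-\overline{\lambda}X$; then $\mathcal{L}_\infty^{-1}(X)=-\overline{\lambda}^{-1}X$ yields $\|\mathcal{L}_\infty^{-1}\|_{\infty\to\infty}\ge\overline{\lambda}^{-1}$, which is the claim. The overall structure mirrors the classical Lemma~\ref{lem:tech1}, so once item 1 is in place everything else is essentially a translation of that proof into the completely positive setting.
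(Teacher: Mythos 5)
Your proof is correct and would establish the lemma, but it diverges from the paper's in two places. For item 1, both arguments rest on Perron--Frobenius for the sub-Markov semigroup $e^{t\mathcal{L}_\infty}$ and derive a contradiction if $\overline{\lambda}=0$, but you argue in the Schr\"odinger picture while the paper uses the Heisenberg picture: the paper takes a nonnegative $x$ with $\mathcal{L}_\infty(x)=0$, uses $\tr(\hat\sigma\,\J_\A(x))=\tr(\hat\sigma\,\mathcal{L}(x))=0$ together with $\hat\sigma>0$ to get $\J_\A(x)=\mathcal{L}(x)=0$, and then invokes irreducibility a second time (one-dimensionality of $\ker\mathcal{L}$) to force $x\propto\I$ and hence $\J_\A(\I)=0$; you instead take a nonnegative $\sigma$ with $\mathcal{L}_{\infty*}(\sigma)=0$, observe that trace-preservation of $\mathcal{L}_*$ forces $\J_{\A*}(\sigma)=\mathcal{L}_*(\sigma)=0$, use irreducibility once to identify $\sigma$ with the faithful $\hat\sigma$, and conclude $L_i=0$ for $i\in\A$. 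Your dual route replaces the second use of irreducibility with the elementary positivity argument $L_i\hat\sigma L_i^*=0\Rightarrow L_i=0$, which is a nice simplification. For item 2 the paper writes the joint density of the increments directly from the Dyson expansion and Laplace-transforms in all increment variables at once, whereas you induct on $k$ using the strong Markov property at the stopping time $T_\A(1)$; both work, the paper's one-shot computation being slightly shorter while yours is more probabilistically explicit (though you should check carefully that the normalising factor of $\varrho_1$ cancels against the density of $T_\A(1)$ in the inductive step, which it does). Item 3 is essentially the same argument in both: operator norm dominates spectral radius, and by spectral mapping $r(\mathcal{L}_\infty^{-1})\geq\overline{\lambda}^{-1}$. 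One shared caveat worth making explicit: both your proof and the paper's implicitly assume that at least one $L_i$ with $i\in\A$ is nonzero.
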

The proof of Lemma \ref{lem:tech1Q} can be found in Appendix \ref{app:markovProofQ}. From the expression of the moment generating function given in point 2 of the previous lemma, one can use standard techniques to show that
\begin{equation}\label{eq:asymptotic.fpt}
\frac{T_\A(k)}{k} \xrightarrow{a.s.} \langle t_\A \rangle:=-\tr\left (\varsigma \mathcal{L}_\infty^{-1}(\I)\right)
%=\tr \left(\J_{\A*} (\varsigma)\right) 
\end{equation}
where $\varsigma$ is the unique invariant state of $\Psi$ defined in equation \eqref{eq:psi}. Our goal will be to investigate what is the probability of $T_\A(k)$ deviating from $\langle t_\A \rangle$.

\bigskip We introduce some more notation that will be useful in proving the concentration bounds for the FPTs. We consider the Hilbert space structure of $M_d(\mathbb{C})$ endowed with the following inner product:
\[
\langle x,y\rangle_\sigma:=\tr(\sigma^{\frac{1}{2}}x^*\sigma^{\frac{1}{2}}y), \quad x,y\in M_d(\mathbb{C})
\]
and we denote it by $L^2(\sigma)$. Unlike the classical case, there are infinitely many inner products induced by the stationary state $\sigma$ of $\Phi$; the choice we adopt is known as Kubo-Martin-Schwinger (KMS) inner product. The norm with respect to this inner product will be denoted by $\|x\|_\sigma$. The KMS inner product allows us to define the trace of a map $\mathcal{E}:M_d(\mathbb{C})\rightarrow M_d(\mathbb{C})$, by:
\[
\TR(\mathcal{E})=\sum_{i=1}^{d^2}\langle x_i,\mathcal{E}(x_i)\rangle_\sigma,
\]
for an orthonormal basis $\{x_i\}$ of $M_d(\mathbb{C})$. The adjoint of an operator $\mathcal{E}$ with respect to this inner product can be expressed in terms of the predual map $\mathcal{E}_*$ as 
\begin{equation}
\label{eq:dagger.adjoint}
\mathcal{E}^\dag(x)=\Gamma^{-\frac{1}{2}}\circ\mathcal{E}_*\circ\Gamma^{\frac{1}{2}}(x)
\end{equation}
where $\Gamma^a(x)=\sigma^ax\sigma^a$ for every $a\in \mathbb{R}$. 

Given a quantum channel $\Phi$ with invariant state $\sigma$, its absolute spectral gap $\varepsilon$ is defined as $1$ minus the square root of the second largest eigenvalue of the multiplicative symmetrisation of $\Phi$, namely $\Phi^\dag\Phi$. As in the classical case, the proof of the concentration bound in Theorem \ref{theo:dynActQ} will make use of the following Le\'{o}n-Perron operator corresponding to $\Phi$ 
\[
\hat{\Phi}=(1-\varepsilon)\Id+\varepsilon\Pi,
\]
where $\Id$ is the identity map, and $\Pi$ is the map $\Pi x\mapsto \tr(\sigma x)\I$ for $x\in M_d(\mathbb{C})$. $\hat{\Phi}$ is a quantum channel with unique invariant state $\sigma$ and which is self adjoint with respect to the KMS inner product induced by $\sigma$.

%%%%%%%%%%%%%%%%%%%%%%%%%%%%%
\subsection{Results on Quantum Markov Processes}
%%%%%%%%%%%%%%%%%%%%%%%%%%%%%

We now present our three concentration bounds for quantum Markov processes, two for the total counts process, followed by a bound on the right tail for counts of a subset of jumps. We then illustrate the three results with simple examples.

%%%%%%%%%%%%%%%%%%%%%%%%%%%%%%%%

\subsubsection{Large Deviation Principle for General Counting
Observables}
\label{subsec:qLDP}

We start by obtaining a large deviation principle for quantum counting processes, in analogy with what we did in Sec.~\ref{sec.LD.classical}. 

\bigskip\begin{theo}\label{theo:ldpQ} Consider a nonempty subset $\A$ of the emission channels. The FPT $T_\A(k)/k$ satisfies a large deviation principle with good rate function given by
$$
I_\A(t):=\sup_{u \in \mathbb{R}}\{ut-\log(r(u))\}$$
where
$$
r(u)=\begin{cases} r \left (\Psi_u \right ) & \text{ if } u < \overline{\lambda}\\
+\infty & \text{o.w.}\end{cases}$$
where $\Psi_u(x):= -(u+\LL_\infty)^{-1}\J_\A(x)$ and $\overline{\lambda}:=-\max\{\Re(z):z \in {\rm Sp}(\LL_\infty)\}.$
\end{theo}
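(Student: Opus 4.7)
My plan is to apply the G\"artner-Ellis theorem, identifying the scaled cumulant generating function (SCGF) $\Lambda(u) := \lim_{k\to\infty}\frac{1}{k}\log\E_\rho[e^{uT_\A(k)}]$ with $\log r(u)$, from which the rate function $I_\A$ arises as the Legendre-Fenchel transform. The first step is to simplify the MGF expression from Lemma~\ref{lem:tech1Q}. Using $\Psi = -\LL_\infty^{-1}\J_\A$, one computes $(u+\LL_\infty)^{-1}\LL_\infty\Psi = -(u+\LL_\infty)^{-1}\J_\A = \Psi_u$, so that for $u < \overline{\lambda}$ the MGF collapses to $\tr(\rho\,\Psi_u^k(\I))$. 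This mirrors exactly the classical reduction used in the proof of Theorem~\ref{theo:ldpC}.

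The next step is to identify $\frac{1}{k}\log\tr(\rho\,\Psi_u^k(\I)) \to \log r(\Psi_u)$. For $u<\overline{\lambda}$, the map $u+\LL_\infty$ has spectrum with strictly negative real part, so $-(u+\LL_\infty)^{-1} = \int_0^{+\infty} e^{t(u+\LL_\infty)}\,dt$ is an integral (with positive weight $e^{tu}$) of the CP maps $e^{t\LL_\infty}$ and hence itself completely positive. Composing with the CP map $\J_\A$ yields that $\Psi_u$ is CP, so by the Perron-Frobenius theorem for CP maps, $r(\Psi_u)$ is an eigenvalue admitting a nonzero positive semidefinite eigenmatrix $Y_u \geq 0$. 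The upper bound $\limsup_k \frac{1}{k}\log\tr(\rho\,\Psi_u^k(\I))\leq\log r(\Psi_u)$ follows from $\tr(\rho\,\Psi_u^k(\I))\leq\|\Psi_u^k\|_{\infty\to\infty}\|\rho\|_1$ together with Gelfand's formula. For the matching lower bound I would exploit positivity: controlling the Jordan structure at the Perron eigenvalue, one shows $\Psi_u^k(\I) \gtrsim r(\Psi_u)^k\,Y_u$ up to subexponential corrections, and then uses $\tr(\rho\, Y_u)>0$ (passing to an invariant state of the dual map $\Psi_u^\dag$ to handle any degenerate initial states).

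With $\Lambda(u) = \log r(\Psi_u)$ identified on $(-\infty, \overline{\lambda})$ and $+\infty$ otherwise, I would verify the hypotheses of G\"artner-Ellis: lower semicontinuity, convexity, and essential smoothness (differentiability on the interior of the effective domain together with steepness at its boundary). Analyticity of $u \mapsto \Psi_u$ combined with simplicity of the Perron eigenvalue, via Kato's analytic perturbation theory, yields differentiability; convexity follows because $\Lambda$ is a limit of log-MGFs; steepness at $u = \overline{\lambda}$ follows from the blow-up of $-(u+\LL_\infty)^{-1}$, which forces $r(\Psi_u) \to +\infty$ as $u\uparrow \overline{\lambda}$. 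Finally, the regularity of $I_\A$ at $t=0$ and $t=+\infty$ mentioned after Theorem~\ref{theo:ldpC} carries over via the Legendre transform, yielding goodness of the rate function.

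The main obstacle is establishing the SCGF lower bound and the essential smoothness of $\Lambda$ without assuming irreducibility of the CP map $\Psi_u$: its Perron eigenvalue may be degenerate or fail to admit a strictly positive eigenmatrix, in which case one must either restrict to the invariant subspace of the Perron spectral projection and prove the lower bound there, or regularize $\Psi_u$ by a small CP perturbation and pass to the limit using continuity of $u \mapsto r(\Psi_u)$. A secondary subtlety is justifying the interchange of limits needed when $\rho$ lies on a boundary of the positive cone; this is handled by noting that the bound needs only hold for one suitably chosen initial state and then transferring via absolute continuity estimates on the conditional dynamics.
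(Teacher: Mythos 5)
You have correctly identified the overall scaffolding (reduce the MGF to $\tr(\rho\,\Psi_u^k(\I))$, identify the SCGF with $\log r(\Psi_u)$ via G\"artner--Ellis, and use analytic perturbation theory for smoothness), and you have also correctly flagged the central difficulty: $\Psi_u$ need not be irreducible, so the Perron eigenvector $x(u)$ is only guaranteed positive semidefinite and the eigenvalue $r(\Psi_u)$ is not automatically simple or isolated. But your two proposed fixes do not close this gap. Restricting to the Perron spectral subspace does not give you $\tr(\rho\,x(u))>0$ for a boundary state $\rho$, and a small CP regularization of $\Psi_u$ does not respect the MGF structure (the regularized operator is no longer of the form $-(u+\LL_\infty)^{-1}\J_\A$ for some generator), so passing limits through the Perron data, the simplicity of the eigenvalue, and the derivative $\log r(u)'$ is not straightforward.

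The paper resolves this by a different and decisive observation: the eigenvalue equation $\Psi_u x(u)=r(u)x(u)$ is algebraically equivalent to $\LL_{s(u)}(x(u))=-u\,x(u)$, where $\LL_{s(u)}:=\LL+(e^{s(u)}-1)\J_\A$ is an $s$-tilted Lindblad generator and $s(u)=-\log r(u)$. Lemma~\ref{lem:irredQ} then shows that under Hypothesis~\ref{hypo:irrQuantL} every such tilted generator is primitive (via a Dyson-series argument on the unnormalised conditional evolutions), so its Perron eigenvector is unique and strictly positive. This transfers to $\Psi_u$ and gives simplicity, strict positivity of $x(u)$, and hence $\tr(\rho\,x(u))>0$ for every state $\rho$, without ever needing $\Psi_u$ itself to be irreducible. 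Your steepness step has the same structural weakness: you claim blow-up of the resolvent forces $r(\Psi_u)\to+\infty$, but the passage from operator-norm blow-up to spectral-radius blow-up, and the separate divergence of $\log r(u)'$, both rely on the strict positivity of $x(u)$ (and on first ruling out a Jordan block of $\LL_\infty$ at $-\overline{\lambda}$ by the contraction-semigroup argument), which again rests on the tilted-generator lemma you are missing.
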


The proof of Theorem \ref{theo:ldpQ} is in Appendix \ref{app:proof.ldpQ}.

\subsubsection{Concentration Bound on Total Number of Counts}\label{sec:genQuant}

%%%%%%%%%%%%%%%%%%%%%%%%%%%%%%%

The first passage time for the total number of counts $T_I(k)$ is the time it takes to observe $k$ counts of any kind on the system. Our first main result for quantum Markov processes is a quantum version of Theorem \ref{theo:dynActC} - a bound on the fluctuations of the FPT $T_I(k)$ for total jumps. We note that in the quantum framework, ``activity'' is usually referred to as total ``counts'' or ``jumps'' \cite{garrahan2010thermodynamics} (but other definitions exist \cite{nishiyama2023exact}). From equation \eqref{eq:asymptotic.fpt}, the asymptotic mean in this case is:
\[
\langle t_I \rangle := -\tr(\sigma\mathcal{L}_0^{-1}(\I)).
\]
We define 
\begin{equation}
\label{eq:cq}
c_q:=\|\mathcal{L}_0^{-1}\|_{\sigma\rightarrow \sigma}
\end{equation}
and note that this is the non-commutative counterpart of $c_c$, cf.~\eqref{eq:cc}.

\bigskip

\begin{theo}[Fluctuations of FPT for Total Counts ]\label{theo:dynActQ}
Assume that Hypothesis \ref{hypo:irrQuantPhi} holds ($\Phi$ be irreducible) and let  $\varepsilon$ be the absolute spectral gap of $\Phi$. Then, for every $\gamma>0$:

\begin{equation*}
\begin{split}
&\PP_\rho \left (\frac{T_I(k)}{k} \geq \langle t_I \rangle + \gamma \right) \leq C(\rho) \exp \left ( -k \frac{\gamma^2 \varepsilon}{8c_q^2}h\left ( \frac{5\gamma}{2 c_q}\right )\right )\\{\rm and}\\
&\PP_\rho \left (\frac{T_I(k)}{k} \leq \langle t_I \rangle - \gamma \right) \leq C(\rho) \exp \left ( -k \frac{\gamma^2 \varepsilon}{8{c_q^2}}h\left ( \frac{5\gamma}{2 c_q}\right )\right ),\quad k\in \mathbb{N},
\end{split}
\end{equation*}
where $h(x):=(\sqrt{1+x}+\frac{x}{2}+1)^{-1}$, $C(\rho):=\left\|\sigma^{-\frac{1}{2}}\rho\sigma^{-\frac{1}{2}}\right\|_\sigma$ and $c_q$ is defined in Eq. \eqref{eq:cq}.
\end{theo}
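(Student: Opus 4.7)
The plan is to obtain both tails by Chernoff's inequality applied to the MGF expression from Lemma \ref{lem:tech1Q}. Specialising to $\A=I$, one has $\mathcal{L}_\infty=\mathcal{L}_0$ and $\Psi=\Phi$, so for $u\in(0,\overline{\lambda})$
\[
\PP_\rho\!\left(\tfrac{T_I(k)}{k}\geq \langle t_I\rangle+\gamma\right)\leq e^{-uk(\langle t_I\rangle+\gamma)}\,\tr\!\left(\rho\,\mathcal{M}_u^k(\I)\right),\qquad \mathcal{M}_u:=(u+\mathcal{L}_0)^{-1}\mathcal{L}_0\,\Phi,
\]
and similarly for the left tail taking $u<0$ and inverting the inequality. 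The whole proof then reduces to bounding $\tr(\rho\mathcal{M}_u^k(\I))$ and optimising $u$, following the strategy of \cite{lezaud1998chernoff-type} but in the non-commutative KMS framework.

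The first step is to strip the initial state by Cauchy--Schwarz in the KMS inner product induced by the stationary state $\sigma$ of $\Phi$:
\[
\tr(\rho\,\mathcal{M}_u^k(\I))=\langle \sigma^{-1/2}\rho\sigma^{-1/2},\mathcal{M}_u^k(\I)\rangle_\sigma\leq C(\rho)\,\|\mathcal{M}_u^k(\I)\|_\sigma,
\]
which produces the prefactor $C(\rho)=\|\sigma^{-1/2}\rho\sigma^{-1/2}\|_\sigma$. Next I would expand $(u+\mathcal{L}_0)^{-1}\mathcal{L}_0=\Id-u\,(u+\mathcal{L}_0)^{-1}$ and, for $|u|$ small enough, control $\|(u+\mathcal{L}_0)^{-1}\|_{\sigma\to\sigma}$ by a Neumann series around $u=0$ whose leading term has norm $c_q=\|\mathcal{L}_0^{-1}\|_{\sigma\to\sigma}$. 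This casts $\mathcal{M}_u$ as $\Phi$ plus an explicit $O(u)$ perturbation whose $L^2(\sigma)$ operator norm is controlled by $c_q$.

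The crucial step is the Léon--Perron comparison: decompose every observable along $\I$ and its $\langle\cdot,\cdot\rangle_\sigma$-orthogonal complement, and use the contractivity estimate on the orthogonal complement, $\|\Phi x-\tr(\sigma x)\I\|_\sigma\leq (1-\varepsilon)^{1/2}\|x-\tr(\sigma x)\I\|_\sigma$, which follows from the definition of $\varepsilon$ as the gap of the multiplicative symmetrisation $\Phi^\dagger\Phi$. Iterating this together with the perturbative expansion above allows one to replace $\Phi$ by the self-adjoint Léon--Perron channel $\hat{\Phi}=(1-\varepsilon)\Id+\varepsilon\Pi$ inside the $k$-fold product, at the cost of a norm inequality $\|\mathcal{M}_u^k(\I)\|_\sigma\leq\|\hat{\mathcal{M}}_u^k(\I)\|_\sigma$. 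Once everything is expressed through the rank-two self-adjoint $\hat{\Phi}$, the iteration reduces to a $2\times 2$ spectral problem in the basis $\{\I,\,\I^\perp\}$, and a standard computation yields an explicit upper bound of the form $\lambda_{\max}(u)^k$ with $\lambda_{\max}(u)=1+u\langle t_I\rangle+\tfrac{u^2 c_q^2}{\varepsilon}+O(u^3 c_q^3/\varepsilon^2)$.

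The final step is the usual Chernoff optimisation in $u$. Combining with the prefactor $e^{-uk(\langle t_I\rangle+\gamma)}$ and maximising over $u$, one obtains, after bounding the elementary function involved by the Bennett-type function $h(x)=(\sqrt{1+x}+x/2+1)^{-1}$, the exponent $\tfrac{\gamma^2\varepsilon}{8c_q^2}\,h(5\gamma/(2c_q))$; the left-tail bound is symmetric and follows from the same argument with $u<0$. The main obstacle I anticipate is the Léon--Perron comparison step: unlike in the classical case, the factor $(u+\mathcal{L}_0)^{-1}\mathcal{L}_0$ does not commute with $\Phi$, so the substitution $\Phi\rightsquigarrow\hat{\Phi}$ in the non-commutative product $\mathcal{M}_u^k$ requires either a careful functional-calculus argument in $L^2(\sigma)$ or a direct inductive estimate tracking the decomposition $x=\tr(\sigma x)\I+(x-\tr(\sigma x)\I)$ through each iterate, and it is at this point that the quantum bound picks up the extra factor of $2$ in $c_q^2$ compared to the classical result.
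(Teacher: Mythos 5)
Your high-level plan — Chernoff, KMS Cauchy--Schwarz to extract $C(\rho)$, Léon--Perron comparison, perturbation of the spectral radius, then optimisation — is the right strategy, and you correctly flag that the Léon--Perron step is where the quantum proof differs from the classical one. However the mechanism you sketch at that step does not go through, and this is where the paper needs a specific device you have not identified.

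You propose to show $\|\mathcal{M}_u^k(\I)\|_\sigma\leq\|\hat{\mathcal{M}}_u^k(\I)\|_\sigma$ with $\hat{\mathcal{M}}_u:=\FF_u\hat{\Phi}$, and then to reduce to a ``$2\times 2$ spectral problem in the basis $\{\I,\I^\perp\}$''. Neither of these works. First, $\FF_u=(u\,\Id+\LL_0)^{-1}\LL_0$ is not self-adjoint in $L^2(\sigma)$ (unlike the classical $\F_u$, which is a positive multiplication operator), so $\hat{\mathcal{M}}_u$ remains non-self-adjoint and you have gained nothing: you cannot compute $\|\hat{\mathcal{M}}_u^k\|_\sigma$ as a power of a top eigenvalue. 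Second, $\FF_u$ is not block-diagonal with respect to $\I\oplus\I^\perp$, so there is no $2\times 2$ reduction even after replacing $\Phi$ by $\hat{\Phi}$. The paper instead applies a sandwich inequality (Lemma \ref{lpLem}) with $\mathbf{A}=\FF_u$, $\mathbf{B}=\Id$, giving
\[
\|\FF_u\Phi\|_\sigma\leq\bigl\|\FF_u\hat{\Phi}\FF_u^\dagger\bigr\|_\sigma^{1/2},
\]
where the crucial point is that $\FF_u\hat{\Phi}\FF_u^\dagger$ \emph{is} self-adjoint and positive, hence its norm is its spectral radius $r(u)$, and one gets $\E_\rho[e^{uT_I(k)}]\leq C(\rho)\,r(u)^{k/2}$. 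The factor $\tfrac{1}{2}$ in the exponent $k/2$ is exactly what compensates the doubling in the expansion coefficients: since $\FF_u$ appears on both sides, the $l$-th order term is $\Phi^{(l)}=\sum_{j=0}^{l}(-\LL_0^{-1})^{l-j}\hat{\Phi}(-\LL_0^{\dagger})^{-j}$, which has $l+1$ summands and therefore satisfies only $\|\Phi^{(l)}\|_\sigma\leq (l+1)c_q^l\leq (2c_q)^l$. In particular the first perturbation coefficient is $r^{(1)}=2\langle t_I\rangle$, not $\langle t_I\rangle$ as your sketch assumes, and the extra factor of $2$ in $8c_q^2$ is traced to this $(2c_q)^l$ estimate fed through the Kato perturbation series, not to the Léon--Perron substitution per se. After that, the proof is a genuine (infinite-order) Kato expansion with the Lezaud-style $5^{l-2}$ combinatorial count, not a two-level spectral calculation. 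So your outline has the right ingredients but a wrong intermediate inequality and an oversimplified spectral step; filling it in requires the non-commutative sandwich lemma and the full perturbative sum, both of which the paper supplies explicitly.
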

The proof of Theorem \ref{theo:dynActQ} can be found in Appendix \ref{app:proof.th.dynActQ}.
As in the classical case, the following corollary follows from the proof of Theorem     \ref{theo:dynActQ}.

\bigskip

\begin{coro}\label{dynActiTURQ}
The variance of the first passage time for total counts is bounded from above by:
\[
\frac{{\rm var}_\sigma(T_{I}(k))}{k}\leq \left (\frac{4}{\varepsilon}-(1-\varepsilon)\right ) c_q^{2}.
\]
\end{coro}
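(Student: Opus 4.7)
The plan is to read off the variance bound from the moment generating function (MGF) estimate established inside the proof of Theorem \ref{theo:dynActQ}. By Lemma \ref{lem:tech1Q}(2), specialised to $\A=I$ (so $\LL_\infty=\LL_0$ and $\Psi=\Phi$), the MGF under the invariant state $\sigma$ admits the compact representation
$$
\mathbb{E}_\sigma\bigl[e^{uT_I(k)}\bigr]
= \tr\bigl(\sigma (M_u\Phi)^k(\I)\bigr),
\qquad M_u := (u+\LL_0)^{-1}\LL_0,
$$
with $M_0 = \Id$, $M'_u|_{u=0} = -\LL_0^{-1}$ and $M''_u|_{u=0} = 2\LL_0^{-2}$. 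In particular the MGF equals $1$ at $u=0$, and differentiating once at $u=0$ reproduces $\mathbb{E}_\sigma[T_I(k)] = k\langle t_I\rangle$ thanks to the $\Phi$-invariance of $\sigma$.

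Next, I would invoke the Le\'on--Perron surrogate $\hat\Phi = (1-\varepsilon)\Id + \varepsilon\Pi$ in exactly the manner used in the proof of Theorem \ref{theo:dynActQ}: replacing $\Phi$ by $\hat\Phi$ inside $(M_u\Phi)^k$, with correction terms controlled via the KMS inner product and the bound $\|\LL_0^{-1}\|_{\sigma\to\sigma}=c_q$, yields a scalar upper bound of the form $\exp\bigl(k\phi(u)\bigr)$ valid for $u$ in a real neighbourhood of $0$, with $\phi(0)=0$ and $\phi'(0)=\langle t_I\rangle$. Because the true log-MGF and the surrogate $k\phi(u)$ agree to first order at $u=0$, comparing second derivatives gives
$$
{\rm var}_\sigma(T_I(k)) = \frac{d^2}{du^2}\log\mathbb{E}_\sigma\bigl[e^{uT_I(k)}\bigr]\bigg|_{u=0} \leq k\,\phi''(0),
$$
so the corollary reduces to a closed-form evaluation of $\phi''(0)$.

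Finally I would compute $\phi''(0)$ explicitly. The rank-one projector $\Pi$ satisfies $\Pi^2=\Pi$ together with the relation $\Pi M_u \Pi = \tr(\sigma M_u(\I))\,\Pi$, so the iterate $(M_u\hat\Phi)^k$ decomposes along the invariant direction and its KMS-orthogonal complement: on the invariant subspace $\hat\Phi$ acts as the identity, while on the complement it acts as $(1-\varepsilon)\Id$, producing a geometric series that telescopes into factors of $1/\varepsilon$. After Taylor expansion, bounding the quadratic contribution via the KMS-operator-norm identity $\|\LL_0^{-1}\|_{\sigma\to\sigma}^2 = c_q^2$, the coefficient of $u^2/2$ in $\phi(u)$ evaluates to $(4/\varepsilon - (1-\varepsilon))c_q^2$, which is the claimed bound.

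The main obstacle is verifying that the cross-terms between the $\Id$- and $\Pi$-blocks of $\hat\Phi$ collapse cleanly when one expands $(M_u\hat\Phi)^k$ to second order in $u$, so that $\phi(u)$ is genuinely linear in $k$ at order $u^2$; this is precisely what guarantees that the resulting variance bound scales linearly (and not quadratically) in $k$, and that the constant comes out as $(4/\varepsilon - (1-\varepsilon))c_q^2$ rather than involving higher-order norms of $\LL_0^{-1}$.
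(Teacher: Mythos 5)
Your overall strategy is exactly the paper's: write the log-MGF at the stationary initial state $\sigma$, compare it with the exponential bound $\frac{k}{2}\log r(u)$ obtained in the proof of Theorem~\ref{theo:dynActQ}, note that both Taylor expansions agree at orders $0$ and $1$ (here $C(\sigma)=1$ is used implicitly), and read off the variance bound from the order-$u^2$ coefficient. That reduction is sound, including the one-sided-neighbourhood subtlety (the bound holds only for $u\geq 0$, but that still forces the quadratic coefficients to be ordered).

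The gap is in the actual evaluation of the quadratic term. Two points. First, your description of the surrogate is imprecise: the paper does not ``replace $\Phi$ by $\hat\Phi$ inside $(M_u\Phi)^k$''. It bounds $\|\Phi_u^k\|_\sigma\leq\|\Phi_u\|_\sigma^k$ and then invokes Lemma~\ref{lpLem} to dominate $\|\Phi_u\|_\sigma$ by $\|\FF_u\hat\Phi\FF_u^\dagger\|_\sigma^{1/2}$; the operator whose spectral radius $r(u)$ is being expanded is the \emph{self-adjoint symmetrisation} $\FF_u\hat\Phi\FF_u^\dagger$, not $\FF_u\hat\Phi$. The extra $\FF_u^\dagger$ and the resulting factor $k/2$ (rather than $k$) in the exponent matter for the constants; your convention $\phi(u)=\frac12\log r(u)$ absorbs the $1/2$ correctly, but the statement should make it clear which operator is being perturbed. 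Second, and more substantively, the claim that $\phi''(0)$ ``evaluates to $(4/\varepsilon-(1-\varepsilon))c_q^2$'' is not what happens: one obtains an explicit but messy expression for $r^{(2)}$ from the Kato perturbation formula~\eqref{eq:ru_exp_q} with $\Phi^{(1)}=-\LL_0^{-1}\hat\Phi-\hat\Phi(\LL_0^\dagger)^{-1}$ and $\Phi^{(2)}=\LL_0^{-2}\hat\Phi+\LL_0^{-1}\hat\Phi(\LL_0^\dagger)^{-1}+\hat\Phi(\LL_0^\dagger)^{-2}$, then shows that the combination $\tfrac12(r''(0)-r'(0)^2)$ is \emph{bounded above} by $(4/\varepsilon-(1-\varepsilon))c_q^2$ term by term, using $\|\S^{(1)}\|_\sigma=\varepsilon^{-1}$, $\|\hat\Phi(\Id-\Pi)\|_\sigma=1-\varepsilon$, and $\|\LL_0^{-1}\|_{\sigma\to\sigma}=c_q$. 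The ``decomposition along the invariant direction, telescoping into factors of $1/\varepsilon$'' picture is the right intuition but is not a proof: $\FF_u$ does not commute with $\Pi$, the cross-terms do not cancel, and recovering precisely $4/\varepsilon-(1-\varepsilon)$ (as opposed to, say, $4/\varepsilon$ alone) requires tracking those cross-terms explicitly. You flagged this obstacle yourself; that is where the bulk of the actual work lies, and it is not supplied.
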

The proof of Corollary \ref{dynActiTURQ} can be found in Appendix \ref{app:proof.th.dynActQ}.
Recall from section \ref{sec:qPrelims} that if Hypothesis \ref{hypo:irrQuantL} holds ($\LL_*$ admits a unique and strictly positive invariant state), then the uniqueness of the invariant state of $\Phi_*$ is guaranteed, but not its strict positivity, hence the need for Hypothesis \ref{hypo:irrQuantPhi}. One can show however, that if Hypothesis \ref{hypo:irrQuantL} holds, then the invariant state of $\Phi_*$ is strictly positive if and only if $\cap_{i=1}^{|I|}\ker (L_{i}^*)= \emptyset$.
%%%%%%%%%%%%%%%%%%%%%%%%%%%
\subsubsection{Concentration Bound on Total Number of Counts for Reset Processes}\label{sec:reset}
%%%%%%%%%%%%%%%%%%%%%%
In Theorem \ref{theo:dynActQ} we proved a concentration bound for the FPT corresponding to the total number of counts, under the assumption that Hypothesis \ref{hypo:irrQuantPhi} holds. In this subsection we consider \emph{quantum reset processes} which are characterised by jump operators that have rank one, and we  derive a FPT concentration bound using the weaker Hypothesis \ref{hypo:irrQuantL}.

Let us assume that the jump operators are of the form
\begin{equation}\label{eq.reset}
L_{i}=\ket{y_i}\bra{x_i} \quad x_i,y_i \in \mathbb{C}^d \setminus \{0\}.
\end{equation}
Without any loss of generality, we can assume that $\|y_i\|=1$. After observing a click of the $i$-th detector, the state of the system is known and is equal to $\ket{y_i}\bra{y_i}$. In this case, by applying step \ref{step.2} of the iterative procedure in section \ref{subsec:QMP} we find that the sequence of click indices is a classical Markov chain on $I$ with transition matrix $\P:=(p_{ij})$
\begin{equation}\label{transProbReset}
p_{ij} = - \langle x_j|\LL_{0*}^{-1}(\ket{y_i}\bra{y_i})x_j \rangle.
\end{equation}
We remark that Hypothesis \ref{hypo:irrQuantL} is sufficient to imply the irreducibility of the classical transition operator $\P$. Indeed  since $\mathcal{L}_*(\hat{\sigma})=0$ with stationary state $\hat{\sigma}>0$, we have
$$
\mathcal{L}_{0*} (\hat{\sigma}) = 
-\J_* (\hat{\sigma}) =
-\sum_{i\in I} \langle x_i|\hat{\sigma}|x_i\rangle \cdot |y_i\rangle \langle y_i|
$$
which implies
\[\begin{split}
\sum_{i\in I} \langle x_i|\hat{\sigma}|x_i \rangle 
p_{ij} = -\sum_{i \in I}\langle x_i|\hat{\sigma}|x_i \rangle \langle x_j|\LL_{0*}^{-1}(\ket{y_i}\bra{y_i})x_j \rangle=\\
-\left \langle x_j\left |\LL_{0*}^{-1}\left (\sum_{i \in I}\langle x_i|\hat{\sigma}|x_i \rangle \ket{y_i}\bra{y_i}\right )\right |x_j \right \rangle=\langle x_j|\hat{\sigma}|x_j \rangle
\end{split}\]
so the stationary state of $\P$ is
 $$\pi(i):= \frac{\langle x_i| \hat{\sigma}| x_i \rangle}{\sum_{j \in I}\langle x_j| \hat{\sigma}| x_j \rangle}
 $$
 which is fully supported since $\hat{\sigma} >0$. 

The waiting times are not exponentially distributed as in the case of a classical continuous time Markov process, instead their probability density function after observing a click of the type $i$ is given by:
\begin{equation}\label{pdfReset}
f_i(t):=-\tr(\LL_{0*}e^{t\LL_{0*}}(\ket{y_i}\bra{y_i}).
\end{equation}
We now introduce the quantities used in the result of this section:

\begin{itemize}
    \item[1.] asymptotic value of $T_I(k)/k$:
    \[
    \langle t_I \rangle:=-\sum_{i\in I}\pi(i) \tr\left( \mathcal{L}_{0*}^{-1} (\ket{y_i}\bra{y_i})\right);
    \]

    \item[2.] average of 1-norm of $\LL_{0*}^2$ in stationarity:
    \[
    b_r^2 := \sum_{i\in I}\pi(i) \left \| \mathcal{L}_{0*}^{-2}(\ket{y_i}\bra{y_i}) \right \|_{1};
    \]
    \item[3.] superoperator norm of $\mathcal{L}^{-1}_0$:
    \[
    c_r:=\left \|\mathcal{L}_0^{-1}\right \|_{\infty\rightarrow\infty}=\|\LL_0^{-1}(\mathbf{1})\|_\infty.
    \]
\end{itemize}

The last equality in the expression of $c_r$ is due to Theorem \ref{th:RD} in Appendix \ref{app:markovProofQ} and makes the superoperator norm analytically computable.
%We now state the result.

\bigskip

\begin{theo}[Fluctuations of FPT for Total Counts in Reset Processes]\label{theo:dynActReset}
Assume that Hypothesis \ref{hypo:irrQuantL} holds ($\mathcal{L}$ be irreducible) the jump operators are of the form \eqref{eq.reset} (reset process). Let $\varepsilon$ be the spectral gap of $\P^\dag \P$. For every $\gamma>0$:
\begin{equation*}
\begin{split}
&\PP_\nu \left (\frac{T_I(k)}{k} \geq \langle t_I \rangle + \gamma \right) \leq C(\nu) \exp \left ( -k \frac{\gamma^2 \varepsilon}{4b_r^2}h\left ( \frac{5c_r\gamma}{2 b_r^2}\right )\right )\\
{\rm and}\\
&\PP_\nu \left (\frac{T_I(k)}{k} \leq \langle t_I \rangle - \gamma \right) \leq C(\nu) \exp \left ( -k \frac{\gamma^2 \varepsilon}{4b_r^2}h\left ( \frac{5c_r\gamma}{2 b_r^2}\right )\right ),\quad k\in \mathbb{N},
\end{split}
\end{equation*}
where $h(x):=(\sqrt{1+x}+\frac{x}{2}+1)^{-1}$ and $C(\nu):=(\sum_{i \in I} \nu(i)^2/\pi(i))^{\frac{1}{2}}$. Here, $\PP_\nu$ is the probability measure induced by the initial state given by
$$\sum_{i \in I}\nu(i)\ket{y_i}\bra{y_i}, \quad \sum_{i \in I}\nu(i) =1, \quad \nu(i)\geq 0.
$$
\end{theo}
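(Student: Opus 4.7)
The plan is to adapt the Lezaud-style Bernstein argument used for the classical Theorem~\ref{theo:dynActC}, exploiting the fact that a reset process trivialises to a classical discrete-time Markov chain on the label set $I$ once we condition on the observed clicks. This chain has transition matrix $\P$ from~\eqref{transProbReset} and stationary measure $\pi(i)\propto \langle x_i|\hat{\sigma}|x_i\rangle$, which is fully supported under Hypothesis~\ref{hypo:irrQuantL} (as already noted in Section~\ref{sec:reset}). Starting from the exponential Chernoff bound
$$\PP_\nu\bigl(T_I(k)/k\ge \langle t_I\rangle + \gamma\bigr)\le e^{-uk(\langle t_I\rangle+\gamma)}\,\E_\nu[e^{uT_I(k)}],\qquad u\in(0,\overline{\lambda}),$$
I would compute the MGF explicitly by iterating the trajectory-generating recursion of Section~\ref{subsec:QMP}: the joint density of $(\tau_1,i_1,\ldots,\tau_k,i_k)$ from initial state $\rho_0=\sum_{i}\nu(i)\ket{y_{i}}\bra{y_{i}}$ is $\nu(i_0)\prod_{m=1}^{k}\langle x_{i_m}|e^{\tau_m\LL_{0*}}\ket{y_{i_{m-1}}}\bra{y_{i_{m-1}}}|x_{i_m}\rangle$, and integrating the waiting times against $e^{u\tau_m}$ yields
$$\E_\nu[e^{uT_I(k)}]=\langle \nu, K_u^k\,\Iv\rangle,\qquad K_u(i,j)=-\langle x_j|(u+\LL_{0*})^{-1}\ket{y_i}\bra{y_i}|x_j\rangle,$$
an entry-wise nonnegative $|I|\times|I|$ matrix with $K_0=\P$. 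The left tail is handled symmetrically by taking $u<0$, for which the resolvent $(u+\LL_{0*})^{-1}$ remains bounded since $-\LL_{0*}$ has strictly positive real spectrum.

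The next step is to control $\langle \nu, K_u^k\,\Iv\rangle$ on the Hilbert space $L^2_\pi(I)$, following the Le\'{o}n--Perron strategy of the classical proof. Using the resolvent expansion
$$-(u+\LL_{0*})^{-1}=\sum_{n\ge 0} u^n (-\LL_{0*}^{-1})^{n+1},$$
one writes $K_u=\P+u M_1+u^2 M_2+\cdots$: the linear-in-$u$ term $M_1$ is controlled on $\pi$-average by $b_r^{2}=\sum_i\pi(i)\|\LL_{0*}^{-2}\ket{y_i}\bra{y_i}\|_{1}$, while the higher-order tail is geometrically dominated by $\|\LL_{0*}^{-1}\|_{\infty\to\infty}=c_r$. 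A Le\'{o}n--Perron dominance step replaces $\P$ by the self-adjoint surrogate $\hat{\P}=(1-\varepsilon)\I+\varepsilon\Pi$, which shares the absolute spectral gap $\varepsilon$ but is diagonalisable against the rank-one projection $\Pi$ onto the constants. Combining these ingredients should deliver a Bennett-type bound
$$\tfrac{1}{k}\log\E_\nu[e^{uT_I(k)}]\le u\langle t_I\rangle + \tfrac{u^2 b_r^2}{\varepsilon\,(1-\alpha\,u\,c_r)}+\tfrac{\log C(\nu)}{k}$$
for some numerical $\alpha$, where the prefactor $C(\nu)=(\sum_i \nu(i)^2/\pi(i))^{1/2}$ arises from Cauchy--Schwarz on $L^2_\pi(I)$ when changing the reference measure from $\nu$ to $\pi$. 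Inserting this into the Chernoff bound and optimising over $u$ in its admissible range yields the claimed inequality with $h(x)=(\sqrt{1+x}+x/2+1)^{-1}$, which is precisely the output of a quadratic-plus-geometric-correction optimisation.

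The main technical obstacle is that, unlike the classical case where the MGF operator factorises as $\RR(\RR-u)^{-1}\P$ with the first factor diagonal in the state basis, in the reset setting $K_u$ does not factor: the matrix-exponential waiting-time contribution and the label-chain transitions are coupled through the single resolvent of $\LL_{0*}$ applied to the rank-one operators $\ket{y_i}\bra{y_i}$. The heart of the argument is therefore an operator-level decoupling that separates the waiting-time contribution (yielding $b_r^2$ and $c_r$) from the chain-mixing contribution (yielding $\varepsilon$), uniformly in $k$. The Le\'{o}n--Perron machinery is the right tool to achieve this decoupling, but its adaptation to matrix-exponential interarrival times requires careful bookkeeping of the resolvent expansion of $\LL_{0*}$ evaluated against the reset pure states $\ket{y_i}\bra{y_i}$.
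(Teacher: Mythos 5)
Your outline correctly identifies the MGF as $\E_\nu[e^{uT_I(k)}]=\langle\nu, K_u^k\Iv\rangle$ with kernel $K_u(i,j)=-\langle x_j|(u+\LL_{0*})^{-1}(\ket{y_i}\bra{y_i})|x_j\rangle$, and the resolvent expansion $K_u=\P+uM_1+u^2M_2+\cdots$ is the right starting point. But the closing paragraph is where the proposal stops being a proof: you explicitly acknowledge that $K_u$ does \emph{not} factor as a diagonal multiplier times $\P$, and then assert, without saying how, that the Le\'on--Perron machinery will ``achieve this decoupling''. That decoupling is not careful bookkeeping; it is the missing idea. The whole Lezaud reduction used in Theorem~\ref{theo:dynActC} passes from the tilted kernel to the self-adjoint surrogate $\F_u^{1/2}\hat{\P}\F_u^{1/2}$ via Lemma~\ref{lpLem}, and that step uses the factorisation $K_u=\F_u\P$ with $\F_u$ diagonal and positive in an essential way: it is what lets $\P$ be sandwiched symmetrically and replaced by $\hat{\P}$, yielding a $k$-uniform norm bound. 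With a non-factoring series $K_u=\P+uM_1+\cdots$ whose coefficients $M_l$ are genuinely non-diagonal matrices on $I$, there is no obvious way to insert $\hat{\P}$ inside the $k$-fold product, because the $M_l$ do not commute past $\P$. Your plan therefore has a genuine gap at precisely the step you flagged.

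It is worth noting that the paper's own proof takes the opposite stance: it writes $\E_\nu[e^{uT_I(k)}]=\langle\nu,(\F_u\P)^k\Iv\rangle$ with $(\F_u)_{ii}=\tr(\ket{y_i}\bra{y_i}(\Id+u\LL_0^{-1})^{-1}(\I))$, i.e.\ it asserts exactly the factorisation you reject, and then runs the classical Lezaud argument with diagonal coefficients $\D^{(l)}_{ii}=\tr(\ket{y_i}\bra{y_i}(-\LL_0^{-1})^l(\I))$. That identity amounts to the claim that, conditional on the current label $i$, the interarrival time $\tau$ and the next label $j$ are independent. This is automatic in the classical case (competing exponentials) and trivially true when $|I|=1$ (the single-channel renewal case used in Sec.~\ref{ex:reset}), but for a multi-channel reset process with $H\neq 0$ the conditional state $e^{tG}\ket{y_i}$ rotates in time, so the relative weights $|\langle x_j|e^{tG}|y_i\rangle|^2$ across $j$ are $t$-dependent and the factorisation fails (one can check on a two-channel qubit that $K_u(1,1)\neq(\F_u)_{11}\P_{11}$ for $u\neq 0$). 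Your observation and the paper's displayed identity are therefore in direct tension: you should either establish that identity rigorously (in which case your plan reduces to the paper's), or find a different route around the non-commuting $M_l$; as written, neither is done.
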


\begin{coro}\label{dynActiTURreset}
The variance of the first passage time for total counts is bounded from above by:
\[
\frac{{\rm var}_\pi(T_I(k))}{k}\leq \left (1 + \frac{2}{\varepsilon} \right )b_r^2.
\]
\end{coro}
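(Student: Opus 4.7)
The plan is to follow the same strategy as in the proof of Corollary~\ref{dynActiTURC}, by extracting a variance bound from the moment generating function (MGF) estimate that underlies the concentration inequality in Theorem~\ref{theo:dynActReset}. The proof of that theorem goes via Chernoff's inequality and is driven by a bound of the form
\[
\mathbb{E}_\pi\!\left[\exp\!\bigl(u(T_I(k)-k\langle t_I\rangle)\bigr)\right]\leq \exp\!\bigl(k\,\psi(u)\bigr),\qquad |u|\text{ small},
\]
where $\psi$ is the log-MGF associated with a Le\'{o}n--Perron comparison: the classical click-index chain $\P$ is replaced by the self-adjoint transition operator $\hat{\P}=(1-\varepsilon)\I+\varepsilon\Pi$, while the matrix-exponential waiting-time densities $f_i(t)=-\tr\!\bigl(\mathcal{L}_{0*}e^{t\mathcal{L}_{0*}}(\ket{y_i}\bra{y_i})\bigr)$ are left unchanged. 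By construction $\psi(0)=\psi'(0)=0$.

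The next step is to expand $\psi$ to second order at $u=0$ and identify $\psi''(0)$. Two contributions appear: a per-step contribution coming from the marginal second moment of one waiting time at stationarity, which through the reset identity $\mathbb{E}_\pi[\tau^2]=\sum_i\pi(i)\,2\tr(\mathcal{L}_{0*}^{-2}(\ket{y_i}\bra{y_i}))=2b_r^2$ produces a term of order $b_r^2$; and a correlation contribution mediated by $\hat{\P}$. Since $\hat{\P}$ acts as multiplication by $(1-\varepsilon)$ on the $\pi$-mean-zero subspace, the cross-correlation part telescopes to a geometric series $\sum_{n\geq 1}(1-\varepsilon)^n=(1-\varepsilon)/\varepsilon$ and generates the $2/\varepsilon$ factor. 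Adding the two pieces yields $\psi''(0)=(1+2/\varepsilon)\,b_r^2$.

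Finally, I would invoke the standard identity $\left.\tfrac{d^2}{du^2}\log\mathbb{E}[e^{uX}]\right|_{u=0}={\rm var}(X)$ together with the convexity of $u\mapsto \log\mathbb{E}[e^{uX}]$ to conclude
\[
\frac{{\rm var}_\pi(T_I(k))}{k}\leq \psi''(0)=\left(1+\frac{2}{\varepsilon}\right)b_r^2.
\]
The main technical obstacle, relative to the classical case, is that the conditional waiting-time laws $f_i$ are now matrix-exponential rather than exponential, so the per-step MGF is a rational function of $u$ involving $(\mathcal{L}_{0*}+u)^{-1}$ rather than the simple $R_i/(R_i-u)$. The second-order expansion must therefore be carried out through the resolvent identity, and one has to check carefully that the per-step variance term is correctly identified with $b_r^2$ (the $\pi$-weighted average of $\|\mathcal{L}_{0*}^{-2}(\ket{y_i}\bra{y_i})\|_1$) rather than with the strictly smaller $\|m_1\|_\pi^2$, so as to obtain the clean constant $(1+2/\varepsilon)$ rather than a looser prefactor.
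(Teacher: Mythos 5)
Your proposal matches the paper's proof: the paper compares the Taylor expansion of $\log\mathbb{E}_\pi[e^{uT_I(k)}]$ against $k\log r(u)$ at second order, uses that $r'(0)=\langle t_I\rangle$ so the first-order terms cancel, and bounds $r''(0)-r'(0)^2$ by the per-step second-moment contribution $b_r^2$ together with the Le\'{o}n--Perron correlation term $\frac{2}{\varepsilon}\|(\I-\Pi)\D\Iv\|_\pi^2\leq \frac{2}{\varepsilon}b_r^2$, exactly as you describe. One small remark: the appeal to ``convexity of the log-MGF'' is not what closes the argument --- what is actually used is that the two functions agree to first order at $u=0$ and the inequality between them holds on a half-neighbourhood, which forces the ordering of the second-order Taylor coefficients; this is how the paper argues and your computation supports it.
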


The proof of Theorem \ref{theo:dynActReset} can be found in Appendix \ref{app:proof_theo:dynActReset}. Note that a classical Markov chain can be embedded into a quantum Markov process, by setting $H=0$, $I =\EE$, and the jump operators 
%$L_{xy}=\sqrt{w_{xy}}\ket{y}%\bra{x}$.
$L_{ij} = \sqrt{w_{ij}}\ket{x_j}\bra{x_i}$ 
for an orthonormal basis 
$\{\ket{x_j}\}_{j=1}^d$.

%%%%%%%%%%%%%%%%%%%%%%%%%%%%%%%%%%
\subsubsection{Tail Bound for General Counting Observables}\label{sec:countObsQ}

%%%%%%%%%%%%%%%%%%%%%%%%%%%%%%%
Our final result provides the quantum analogue to the bound of Theorem \ref{theo:countObs}. We consider the FPT 
$T_\A (k)$ for the observable $N_\A(t)$ which counts the number of jumps with label in the subset $\A\subseteq I$, cf.~\eqref{eq:FPT.quantum}. The next result gives an upper bound to the tails of the FPT distribution under the assumption that  Hypothesis \ref{hypo:irrQuantL} holds.

Recall that we introduced the generator decomposition 
\[
\mathcal{L} = \J_\A + \mathcal{L}_{\infty},
\]
where $\J_\A(x) = \sum_{i\in \A} L_i^* x L_i$.
We denote 
\[
\beta := \left \| \LL_{\infty*}^{-1}\right\|_{1\rightarrow 1}.
\]
%Using this quantity, we can state the result.
\begin{theo}[Rare Fluctuations of General Quantum Counting Observable FPTs]\label{theo:countObsQ}
    Assume that Hypothesis \ref{hypo:irrQuantL} holds ($\mathcal{L}$ be irreducible), and let $\A\subseteq I$ be nonempty. For every $\gamma > \beta - \langle t_\A \rangle$:
    \[
    \PP_\nu \left (\frac{T_{\A}(k)}{k} \geq \langle t_{\A} \rangle + \gamma \right) \leq \exp \left ( -k \left(\frac{\gamma+\langle t_{\A} \rangle-\beta}{\beta}-\log\left(\frac{\gamma+\langle t_{\A} \rangle}{\beta}\right)\right)\right ), k\in \mathbb{N}.
    \]
    \end{theo}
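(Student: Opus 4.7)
The strategy mirrors the classical argument of Theorem~\ref{theo:countObs}: I will bound the moment generating function (MGF) of $T_{\A}(k)$ by that of a sum of $k$ i.i.d.\ exponential random variables with rate $\beta^{-1}$, then apply the Chernoff bound and optimise. By Lemma~\ref{lem:tech1Q}, for $u<\overline{\lambda}$,
\[
\E_\rho[e^{uT_{\A}(k)}]=\tr\bigl(\rho\, \mathcal{M}_u^k(\I)\bigr),\qquad \mathcal{M}_u:=(u+\LL_\infty)^{-1}\LL_\infty\Psi=-(u+\LL_\infty)^{-1}\J_\A,
\]
where the second identity uses $\LL_\infty\Psi=-\J_\A$. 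Since $-\LL_\infty^{-1}=\int_0^{+\infty}e^{t\LL_\infty}\,dt$ is completely positive (CP), so are $-(u+\LL_\infty)^{-1}=\int_0^{+\infty}e^{-ut}e^{t\LL_\infty}\,dt$ (on its range of convergence) and $\mathcal{M}_u=-(u+\LL_\infty)^{-1}\J_\A$.

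The key step is to establish the operator inequality $\mathcal{M}_u(\I)\leq (1-u\beta)^{-1}\I$. Using $\LL(\I)=0$, so $\J_\A(\I)=-\LL_\infty(\I)$, a direct rewrite gives
\[
\mathcal{M}_u(\I)=(u+\LL_\infty)^{-1}\LL_\infty(\I)=\I+u\bigl[-(u+\LL_\infty)^{-1}\bigr](\I).
\]
For $u\beta<1$ the Neumann series $-(u+\LL_\infty)^{-1}=\sum_{n\geq 0}u^n(-\LL_\infty^{-1})^{n+1}$ is termwise CP. Combining $\|(-\LL_\infty^{-1})(\I)\|_\infty=\|\LL_\infty^{-1}\|_{\infty\to\infty}=\|\LL_{\infty*}^{-1}\|_{1\to 1}=\beta$ with the standard CP monotonicity $\Phi(x)\leq\|x\|_\infty\Phi(\I)$ for CP $\Phi$ and $x\geq 0$, a short induction yields $(-\LL_\infty^{-1})^{n+1}(\I)\leq \beta^{n+1}\I$. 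Summing gives $\bigl[-(u+\LL_\infty)^{-1}\bigr](\I)\leq \beta/(1-u\beta)\,\I$, whence $\mathcal{M}_u(\I)\leq (1-u\beta)^{-1}\I$. Iterating via the CP property of $\mathcal{M}_u$ gives $\mathcal{M}_u^k(\I)\leq (1-u\beta)^{-k}\I$, and tracing against $\rho$ yields
\[
\E_\rho[e^{uT_{\A}(k)}]\leq (1-u\beta)^{-k},\qquad u\in(0,\beta^{-1}).
\]

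The Chernoff inequality then produces $\PP_\rho(T_{\A}(k)/k\geq \langle t_\A\rangle+\gamma)\leq e^{-ku(\langle t_\A\rangle+\gamma)}(1-u\beta)^{-k}$, and optimising over $u\in(0,\beta^{-1})$ at $u^*=(\gamma+\langle t_\A\rangle-\beta)/[\beta(\gamma+\langle t_\A\rangle)]$ (which lies in this interval exactly when $\gamma>\beta-\langle t_\A\rangle$) yields the claimed rate $(\gamma+\langle t_\A\rangle-\beta)/\beta-\log((\gamma+\langle t_\A\rangle)/\beta)$. The main obstacle is the operator inequality $\mathcal{M}_u(\I)\leq(1-u\beta)^{-1}\I$: making it rigorous requires combining complete positivity of the resolvent, the identity $\|(-\LL_\infty^{-1})(\I)\|_\infty=\beta$ (valid because $-\LL_\infty^{-1}$ is CP, cf.\ Theorem~\ref{th:RD}), and convergence of the Neumann series in the relevant operator norm. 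Once this is in place, the remainder is the obvious quantum translation of the classical Chernoff optimisation.
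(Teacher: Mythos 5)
Your proof is correct and takes essentially the same approach as the paper: Chernoff bound, Neumann expansion of the resolvent $-(u+\LL_\infty)^{-1}$, a termwise bound by $\beta$ via complete positivity of $-\LL_\infty^{-1}$, summation of the geometric series to reach $\E_\rho[e^{uT_\A(k)}]\le(1-u\beta)^{-k}$, and the same optimisation over $u$. The only cosmetic difference is that you phrase the key step as the operator inequality $\mathcal{M}_u^k(\I)\le(1-u\beta)^{-k}\I$ iterated via CP-monotonicity, while the paper bounds $\|\mathcal{M}_u\|_{\infty\to\infty}^k$ directly by superoperator norms; by Russo--Dye these two formulations are equivalent for completely positive maps.
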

 The proof of Theorem \ref{theo:countObsQ} can be found in Appendix \ref{proof_theo:countObsQ}. As mentioned above, and in contrast to the classical case, in general there is no explicit expression for the $1\rightarrow 1$ norm of a superoperator, but  thanks to Theorem \ref{th:RD}, we know that
 $$\beta=\|\LL_\infty^{-1}\|_{\infty \rightarrow \infty}=\|\LL_\infty^{-1}(\mathbf{1})\|_{\infty}.$$
 Despite being in general computational demanding, at least there exists an explicit formula for the new expression for $\beta$. We can also derive an upper bound on the variance, in terms of $\beta$, stated below.

\begin{coro} \label{coro:cgiTURq}
Given any non-empty set of jumps $\A$, the variance of the corresponding first passage time at stationarity is bounded from above by:
\[
\frac{{\rm var}_\varsigma(T_{\A}(k))}{k}\leq \left ( 1+ \frac{2}{\tilde{\varepsilon}}\right )\beta^2,
\] 
where
$$\tilde{\varepsilon}:=1-\max\{\|\Psi(x)\|_{\infty\to\infty}: \|x\|_{\infty\to\infty}=1, \,\tr(\varsigma x) = 0\}.$$
\end{coro}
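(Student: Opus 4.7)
The plan is to adapt the strategy behind Corollary \ref{coro:cgiTUR} to the quantum Markov chain $(\varrho_k)_{k\geq 0}$ of conditional states immediately after successive $\A$-jumps, which evolves under $\Psi_*$ and admits $\varsigma$ as its unique invariant state. Writing $T_\A(k)=\sum_{i=1}^{k}s_i$ with $\E_\varsigma[s_i]=\langle t_\A\rangle$ and using stationarity, the first step is to reduce the claim to controlling
\[
\frac{{\rm var}_\varsigma(T_\A(k))}{k}\leq {\rm var}_\varsigma(s_1)+2\sum_{l\geq 1}\bigl|{\rm Cov}_\varsigma(s_1,s_{1+l})\bigr|.
\]

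The central calculation will be a closed-form identity for the mixed moment. Using the iterative generation rule of Section \ref{subsec:QMP}, the joint law of $(s_i,\varrho_i)$ given $\varrho_{i-1}$ is determined by the unnormalised density $\J_{\A*}e^{s\LL_{\infty*}}(\varrho_{i-1})$; integrating in $s$ against this density, with $\int_0^{\infty}s\,e^{s\LL_{\infty*}}\,ds=\LL_{\infty*}^{-2}$, gives, for any $A\in M_d(\mathbb{C})$,
\[
\E\bigl[s_i\,\tr(\varrho_i A)\,\big|\,\varrho_{i-1}\bigr]=-\tr\bigl(\varrho_{i-1}\,\LL_\infty^{-1}\Psi(A)\bigr).
\]
Setting $h:=-\LL_\infty^{-1}(\I)$ so that $\E[s\,|\,\varrho]=\tr(\varrho h)$, combining this identity with the linear Markov propagation $\E[\tr(\varrho_{j-1}h)\,|\,\varrho_i]=\tr(\varrho_i\Psi^{j-1-i}(h))$ and averaging over $\varrho_{i-1}\sim\varsigma$ yields, for $l\geq 1$,
\[
{\rm Cov}_\varsigma(s_1,s_{1+l})=-\tr\bigl(\varsigma\,\LL_\infty^{-1}\Psi^l(h)\bigr)-\langle t_\A\rangle^2.
\]
Centering $h^0:=h-\langle t_\A\rangle\I$ about its stationary mean and using $\Psi(\I)=\I$ collapses this to
\[
{\rm Cov}_\varsigma(s_1,s_{1+l})=-\tr\bigl(\varsigma\,\LL_\infty^{-1}\Psi^l(h^0)\bigr),\quad {\rm var}_\varsigma(s_1)=-2\tr\bigl(\varsigma\,\LL_\infty^{-1}(h^0)\bigr)+\langle t_\A\rangle^2.
\]

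Finally, I bound these expressions using $\tilde\varepsilon$ and $\beta$. Since $\varsigma$ is $\Psi_*$-invariant, the property $\tr(\varsigma h^0)=0$ is preserved by $\Psi$, so the definition of $\tilde\varepsilon$ iterates to $\|\Psi^l(h^0)\|_\infty\leq(1-\tilde\varepsilon)^l\|h^0\|_\infty$. Positivity preservation of $-\LL_\infty^{-1}$ implies $h\geq 0$, and Theorem \ref{th:RD} gives $\|h\|_\infty=\|\LL_\infty^{-1}(\I)\|_\infty=\beta$; together with $0\leq\langle t_\A\rangle\leq\|h\|_\infty$ this yields $\|h^0\|_\infty\leq\beta$ and $\langle t_\A\rangle^2\leq\beta^2$. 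Using $|\tr(\varsigma B)|\leq\|B\|_\infty$ and $\|\LL_\infty^{-1}\|_{\infty\to\infty}=\beta$ and summing the resulting geometric series in $l$ will produce
\[
\frac{{\rm var}_\varsigma(T_\A(k))}{k}\leq 2\beta\|h^0\|_\infty+\langle t_\A\rangle^2+2\beta\|h^0\|_\infty\frac{1-\tilde\varepsilon}{\tilde\varepsilon}\leq\frac{2\beta^2}{\tilde\varepsilon}+\beta^2=\beta^2\!\left(1+\frac{2}{\tilde\varepsilon}\right).
\]

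The main obstacle, absent in the classical proof, is that conditionally on $\varrho_{i-1}$ the variables $s_i$ and $\varrho_i$ are not independent: $\varrho_i$ is a deterministic function of $(\varrho_{i-1},s_i)$. Direct factorisation of $\E[s_i\,\tr(\varrho_iA)\,|\,\varrho_{i-1}]$ into marginal conditional expectations is therefore unavailable. The remedy is to perform the time integral against the unnormalised joint density first, producing the linear ``covariance operator'' $-\LL_\infty^{-1}\Psi$; from there the estimate runs in parallel with the classical argument, since the norm appearing in the definition of $\tilde\varepsilon$ is already the operator norm required by the $\beta$-bound on $\LL_\infty^{-1}$.
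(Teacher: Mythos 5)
Your proof is correct, and it reaches the paper's bound by a genuinely different derivation of the intermediate exact identity for the variance. The paper obtains the same quantities by differentiating the moment generating function of Lemma \ref{lem:tech1Q} twice at $u=0$ (Lemma \ref{lem:asympVarQ}), whereas you condition on the chain of post-jump conditional states $(\varrho_i)$ and exploit the identity
\[
\E[s_i\,\tr(\varrho_i A)\,\mid\,\varrho_{i-1}]=-\tr\bigl(\varrho_{i-1}\,\LL_\infty^{-1}\Psi(A)\bigr),
\]
which cleanly encodes the correlation between an interarrival time and the subsequent post-jump state without requiring independence. After this, the two arguments coincide: your $h^0=h-\langle t_\A\rangle\I$ is exactly $-(\Id-\Pi_\varsigma)\LL_\infty^{-1}(\I)$ in the paper's notation, the iterated contraction $\|\Psi^l(h^0)\|_\infty\leq(1-\tilde\varepsilon)^l\|h^0\|_\infty$ (valid because $\Psi_*$-invariance of $\varsigma$ makes $\tr(\varsigma\,\cdot)=0$ a $\Psi$-stable subspace) is the same estimate the paper applies term by term, and the Russo--Dye identity $\beta=\|\LL_\infty^{-1}(\I)\|_\infty$ feeds the final bound in both cases. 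What your route buys is a self-contained probabilistic derivation that avoids the full analytic machinery of Lemma \ref{lem:asympVarQ}; what the paper's route buys is the exact finite-$k$ formula as a single closed expression, which makes the finite-$k$ correction term explicit. One small remark: you only sketch that the terms converge geometrically; since $\tilde\varepsilon\in(0,1]$ and $\|\LL_\infty^{-1}\|_{\infty\to\infty}=\beta<\infty$ these series do converge absolutely, so this can be said in one line and the argument is complete.
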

The proof of Corollary \ref{coro:cgiTURq} can be found in the Appendix \ref{proof_theo:countObsQ}.

%%%%%%%%%%%%%%%%%%%%%%%%%%%%%%%%%%
\subsection{Examples: Quantum Concentration Bounds}
%%%%%%%%%%%%%%%%%%%%%%%%%%%%%%%%%

In this section we illustrate the quantum concentration results obtained in the second part of the paper with a few simple examples.

%%%%%%%%%%%%%%%%%%%%%%%%%%%%%%%%%%%
\subsubsection{Three-Level Emitter with Dephasing Channel}\label{ex:qAct}
%%%%%%%%%%%%%%%%%%%%%%%%%%%%%%%%%%%%
 We consider a three-level system, with one dissipative jump and one dephasing channel, as sketched in Fig.~\ref{fig:3levelQAct} (a). The system has Hamiltonian 
 $$
 H = \Omega_{01}(|0\rangle\langle 1| + |1\rangle\langle 0|) + \Omega_{12}(|1\rangle\langle 2| + |2\rangle\langle 1|)
 $$ 
 and jump operators 
 $$
 L_1 = \omega_{12}\ket{1}\bra{2}, \qquad 
 L_2 = \omega_{02}(|0\rangle\langle 0|- |2\rangle\langle 2|)
 $$ 
 We count the total number of jumps of both the emission channel and the dephasing channel and compare the lower bounds on the large deviation rate function obtained from Theorems \ref{theo:dynActQ} and \ref{theo:countObsQ} with the exact rate function, see Fig.~\ref{fig:3levelQAct}(b). The exact rate function (full/black) is bounded in the entire region by Theorem \ref{theo:dynActQ} (dashed/blue). Theorem \ref{theo:countObsQ} for general counting observable allows one to bound (dotted/red) 
 the right tail of the rate function: as in the classical case, cf.\ Fig.~\ref{rf3level}, this tail bound is tighter than the activity bound for large enough deviations.

\begin{figure}[t]
    \begin{center}
    \includegraphics[width=1\linewidth]{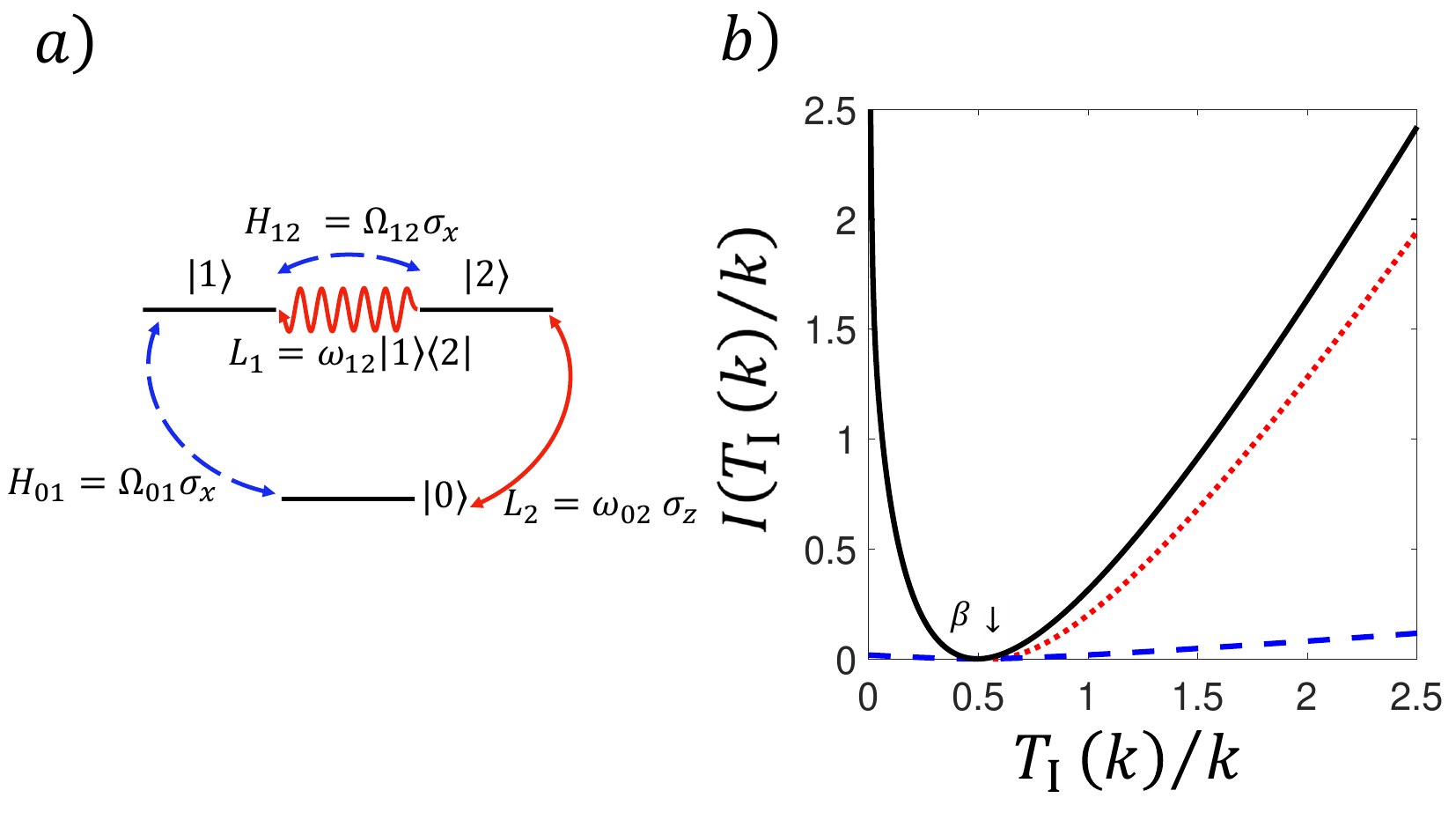}
    \end{center}
    \caption{{\bf Bounds on the rate function of the FPT of the total number of emissions for a quantum three-level system.}  
    (a) Sketch of quantum three-level system. The Hamiltonian (dashed/blue) drives the evolution coherently while the jump operators (solid/red) give rise to dissipative transitions. 
    (b) Exact rate function $I(T_{I}(k)/k)$ (full/black) of the FPT for the total number of quantum jumps, for the case $\Omega_{01}=10$, $\Omega_{12} = 1$, $\omega_{12} = \Omega_{01}$, $\omega_{02}=\frac{1}{5}\Omega_{01}$. Theorem \ref{theo:dynActQ} gives a lower bound on the entire rate function (dashed/blue). Theorem \ref{theo:countObsQ} bounds the tail (dotted/red) in the region $T_{I}(k)/k>\beta$.
    }
    \label{fig:3levelQAct}
\end{figure}

\begin{figure}[t]
    \begin{center}
    \includegraphics[width=1\linewidth]{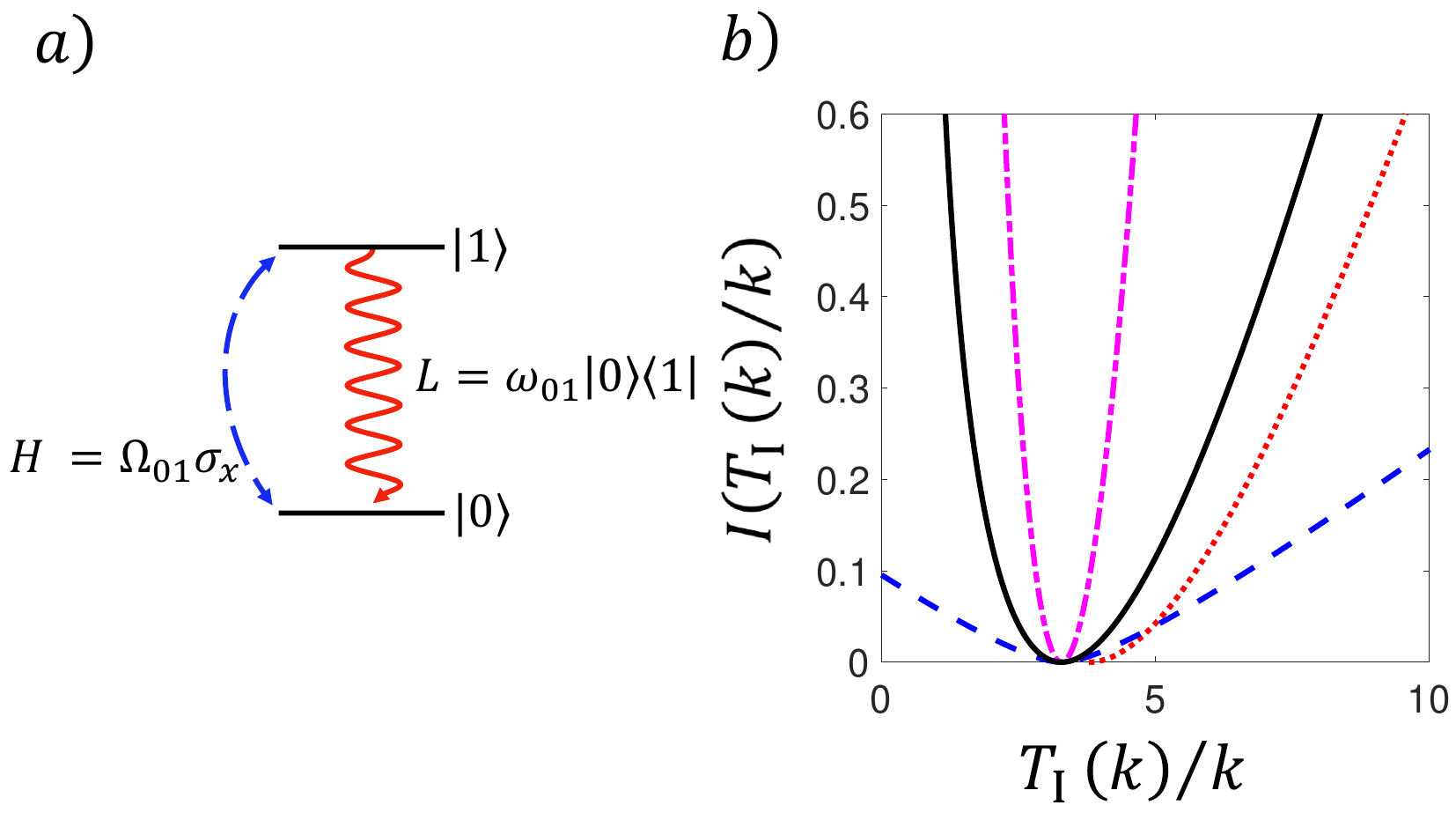}
    \end{center}
    \caption{{\bf Bounds on the rate function of the FPT of the total number of emissions for a two-level emitter.}  
    (a) Sketch of two-level emitter. The Hamiltonian (dashed/blue) drives the evolution coherently while the jump operators (solid/red) give rise to dissipative transitions. 
    (b) Exact rate function $I(T_{I}(k)/k)$ (full/black) of the FPT for the total number of quantum jumps, for the case $\Omega_{01} = 1$, $\omega_{01} = 0.8\Omega_{01}$. As this is a quantum reset process, 
    Theorem \ref{theo:dynActReset} gives a lower bound on the entire rate function (dashed/blue). Theorem \ref{theo:countObsQ} bounds the tail (dotted/red) in the region $T_{I}(k)/k>\beta$. The result from \cite{carollo2019unraveling} gives an upper bound on the rate function (dash-dotted/magenta).
    }
    \label{fig:reset}
\end{figure}

%%%%%%%%%%%%%%%%%%%%%%%%%%%%%%%%
\subsubsection{Two Level Emitter}\label{ex:reset}
%%%%%%%%%%%%%%%%%%%%%%%%%%%%%%%%%%

We illustrate the results of Theorem \ref{theo:dynActReset} by considering a two level emitter with driving Hamiltonian $H=\Omega_{01}(|0\rangle\langle 1| + |0\rangle\langle 1|)$ and jump operator representing the emitted photon $L=\omega_{01}\ket{0}\bra{1}$, see Fig.~\ref{fig:reset}(a). Since $L$ is a rank-one operator, the system jumps to the same state $|0\rangle$ every time there a count. Therefore, the counts process is a renewal process with waiting distribution computed using equation \eqref{pdfReset}. 
In Fig.~\ref{fig:reset}(b) we plot the exact rate function (full/black) and two lower bounds, obtained from our reset process bound of Theorem \ref{theo:dynActReset} (dashed/blue) and the counting observable bound Theorem \ref{theo:countObsQ} (dotted/red). For comparison to known literature we plot the upper bound on the rate function (dot-dashed/magenta) of reset processes obtained via large deviations \cite{carollo2019unraveling}. As in the classical example of Sec.~\ref{ex:3LvlAct}, the bound of Theorem \ref{theo:countObsQ} outperforms that of Theorem \ref{theo:dynActReset} for larger deviations.

\begin{figure}[t]
    \begin{center}
    \includegraphics[width=0.5\linewidth]{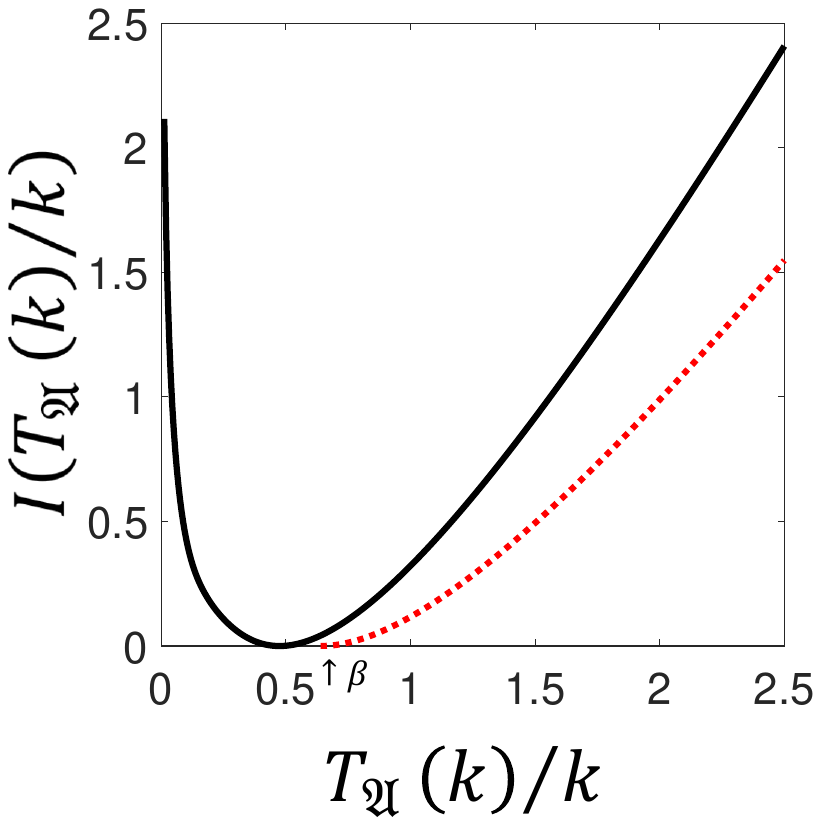}
    \end{center}
    \caption{{\bf Lower bound on the FPT rate function for a counting observable of a quantum three-level system.}
    Exact rate function $I(T_{\A}(k)/k)$ (full/black) of the FPT only counting the dephasing jumps (jump operator by $L_2$), for 
    the same model of Fig.~\ref{fig:3levelQAct}(a). 
    Theorem \ref{theo:countObsQ} gives a lower bound on right deviations (dotted/red) in the region $T_{\A}(k)/k>\beta$.
    }
    \label{fig:countingQ}
\end{figure}

%%%%%%%%%%%%%%%%%%%%%%%%%%%%%%%%%%
\subsubsection{Three Level Emitter Counting Dephasing Jumps}\label{ex:count}
%%%%%%%%%%%%%%%%%%%%%%%%%%%%%%

For our final example we consider a system in which we are only interested in a subset of jumps. We use the same setup as in Sec.~\ref{ex:qAct} but this time we only count the number of dephasing jumps (jump operator $L_2$). In Fig.~\ref{fig:countingQ} we show the exact rate function (full/black) and a lower bound on its right tail from Theorem \ref{theo:countObsQ} (dotted/red). 

\section{Conclusions}
\label{sec:Conc}

When studying stochastic dynamics one often considers a description in terms of stochastic trajectories of fixed overall time and where (time-integrated) observables fluctuate. But there is an alternative description of the same dynamics in terms of trajectories of fluctuating overall time but where one or more observables have a fixed value. It is of interest therefore to formulate general results about dynamics in these two alternative descriptions. This is what we have done in this paper for concentration bounds, by complementing the concentration inequalities for time-integrated quantities of Refs.~\cite{bakewell-smith2023general,girotti2023concentration} by analogous concentration bounds for first passage times in both systems with classical or quantum Markov dynamics. 

The study of FPTs is more involved that that of time-integrated observables, and for that reason we were only able to derive bounds for FTPs for the subset of all trajectory functions known as counting observables (which include fundamental quantities such as the dynamical activity). The concentration inequalities that we find are upper bounds on the probabilities to observe fluctuations in FTPs, and are valid for all values of the observable threshold that defines the FTP and not only in the large threshold limit where large deviation theory applies. The bounds are written in terms of relatively simple quantities which describe the overall properties of the dynamics (and which in an ideal setting can be determined by observation), in particular the longest expected waiting time between events, and the spectral gap of the symmetrised generator (these and similar spectral quantities have been shown to be relevant in other recent works such as \cite{mori2022symmetrized,dechant2023thermodynamic}). Our upper bounds on fluctuations complement the lower bounds from so-called thermodynamic uncertainty relations, thus together providing general two-sided constraints to the likelihood of fluctuations. 

While our results should have wide applicability in the theory of Markov processes and non-equilibrium statistical mechanics, a possible area for further study is their application in constructing confidence intervals for parameter estimation (\cite[Section 5.5]{girotti2023concentration}). It is also possible to apply analogous perturbative techniques as those used here to systems with discrete time dynamics to derive similar results. Also, a better understanding of the transition operators in Eqs.~\eqref{eq:qexp} and \eqref{eq:psi} may allow the derivation of FPT bounds for arbitrary fluctuations of FPTs of generic counting observables, rather than just the tails of their distributions. A further extension is to FPTs of empirical currents, which are of of great interest in the study of non-equilibrium dynamics. Finally, it would be useful to bound the spectral gap in terms of further simpler quantities related to the physics of the process, which would provide more intuitive and and operationally accessible concentration bounds.

\bmhead{Acknowledgments}
This work was supported by the EPSRC grants EP/R04421X/1 and EP/V031201/1. F.G. has been partially supported by the MUR grant Dipartimento di Eccellenza 2023–2027 of Dipartimento di Matematica, Politecnico di Milano and by the INDAM GNAMPA project 2024 “Probabilità quantistica e applicazioni''.

\bigskip

\begin{appendices}
    \section{Upper Bounds on Fluctuations of FPTs for
Classical Markov Processes}
 
\subsection{Proof of Lemma \ref{lem:tech1}} \label{app:tech1}

\begin{lemmaA}
    The following statements hold true:
    \begin{enumerate}
        \item $\overline{\lambda}:=-\max\{\Re(z):z \in {\rm Sp}(\L_\infty)\}>0$, hence $\L_\infty$ is invertible;
        \item $r(\RR^{-1}\W_2)<1,$ therefore $\sum_{k \geq 0} \S^k$ is well defined with $\S = \RR^{-1}\W_2$ and one has
        \[
        -\frac{\I}{\L_\infty}=\sum_{k\geq 0}\left ( \frac{\I}{\RR}\W_2 \right )^k\frac{\I}{\RR},
        \]
        \item for every $u <\overline{\lambda}$, one has
        \[
        \E_\nu[e^{uT_{\A}(k)}]=\left \langle \nu, \left(\frac{\L_\infty}{u+\L_\infty}\Q \right)^k\Iv\right \rangle,
        \]
        \item $\left \|\L_\infty^{-1} \right \|^{-1}_{\infty \rightarrow \infty}\leq \overline{\lambda}$.
    \end{enumerate}
\end{lemmaA}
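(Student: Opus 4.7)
The plan is to establish the four claims in sequence, each leveraging the previous.

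For (1), I would interpret $\L_\infty=\W_2-\RR$ as the generator of the sub-Markov semigroup obtained by killing the original chain at the first jump in $\A$, so that $(e^{t\L_\infty}\Iv)(x)$ equals the survival probability $\PP_x(\text{no }\A\text{-jump by time }t)$. Since $\L$ is irreducible (Hypothesis \ref{hypo:irrC}) and $\A$ is nonempty, from any state one can reach a configuration admitting an outgoing transition in $\A$ within some finite time with positive probability. Finiteness of $E$ upgrades this to a uniform bound of the form $\PP_x(\text{no }\A\text{-jump in }[0,T])\le 1-\delta$ for some $T,\delta>0$ independent of $x$. Iterating via the Markov property gives $\|e^{nT\L_\infty}\|_{\infty\to\infty}=\|e^{nT\L_\infty}\Iv\|_\infty\le (1-\delta)^n$, whence Gelfand's formula yields $r(e^{T\L_\infty})<1$, and the spectral mapping $\mathrm{Sp}(e^{T\L_\infty})=\{e^{Tz}:z\in\mathrm{Sp}(\L_\infty)\}$ forces $\Re(z)<0$ for every $z\in\mathrm{Sp}(\L_\infty)$, i.e.\ $\overline{\lambda}>0$.

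For (2), I would start from the factorisation $-\L_\infty=\RR(\I-\S)$ with $\S=\RR^{-1}\W_2$. The exponential semigroup decay from (1) legitimises the identity $-\L_\infty^{-1}=\int_0^\infty e^{t\L_\infty}dt$; since each $e^{t\L_\infty}\ge 0$ entrywise, so is $-\L_\infty^{-1}$, and hence $(\I-\S)^{-1}=-\L_\infty^{-1}\RR$ is entrywise nonnegative. Combined with $\S\ge 0$, the standard argument (apply Perron--Frobenius to $\S$ and test against $v:=(\I-\S)^{-1}\Iv$, using the telescoping $v=\sum_{k=0}^{n-1}\S^k\Iv+\S^n v$) forces $\S^k\Iv\to 0$ entrywise and therefore $r(\S)<1$. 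The Neumann series then converges to $(\I-\S)^{-1}$, and multiplying by $\RR^{-1}$ on the right gives the expansion of $-\L_\infty^{-1}$.

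For (3), I would apply the strong Markov property at each $\A$-jump time, writing $T_{\A}(k)=\sum_{i=1}^k s_i$ with $s_i:=T_\A(i)-T_\A(i-1)$. The one-step computation is the Dyson-type identity
\[
\E_x[e^{us_1}f(y_1)]=\int_0^\infty e^{ut}(e^{t\L_\infty}\W_1 f)(x)\,dt=(-(u+\L_\infty)^{-1}\W_1 f)(x),
\]
which is valid for $u<\overline{\lambda}$ by part (1). Iterating via the Markov property gives $\E_\nu[e^{uT_\A(k)}f(y_k)]=\langle\nu,\Q_u^k f\rangle$ with $\Q_u:=-(u+\L_\infty)^{-1}\W_1$. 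Using $\Q=-\L_\infty^{-1}\W_1$ and the commutativity of $\L_\infty$ with $(u+\L_\infty)^{-1}$, I rewrite $\Q_u=\L_\infty(u+\L_\infty)^{-1}\Q$ and set $f=\Iv$ to obtain the stated MGF formula.

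For (4), since $-\L_\infty^{-1}$ is entrywise nonnegative by (2), Perron--Frobenius guarantees $r(-\L_\infty^{-1})\in\mathrm{Sp}(-\L_\infty^{-1})$, so $-1/r(-\L_\infty^{-1})$ is a real eigenvalue of $\L_\infty$, and in particular $\|\L_\infty^{-1}\|_{\infty\to\infty}\ge r(-\L_\infty^{-1})$. The delicate remaining point is to identify $r(-\L_\infty^{-1})$ with $1/\overline{\lambda}$. For this I would apply Perron--Frobenius to the nonnegative matrix $e^{t\L_\infty}$: its spectral radius $e^{-t\overline{\lambda}}=\max_{z}|e^{tz}|$ is itself an eigenvalue, so by spectral mapping there is $z\in\mathrm{Sp}(\L_\infty)$ with $e^{tz}=e^{-t\overline{\lambda}}$. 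Choosing $t>0$ small enough that $t\,|\mathrm{Im}(z)|<\pi$ for all eigenvalues $z$ eliminates the $2\pi i$ ambiguity and pins down $z=-\overline{\lambda}$, so $-\overline{\lambda}$ is a real eigenvalue of $\L_\infty$ and $r(-\L_\infty^{-1})=1/\overline{\lambda}$. This last step is the main obstacle: while (1)--(3) are essentially bookkeeping around a sub-Markov generator and its resolvent, matching the rightmost spectral point of $\L_\infty$ to a real eigenvalue requires the Perron--Frobenius argument on $e^{t\L_\infty}$ at small time together with the $2\pi i$ ambiguity resolution.
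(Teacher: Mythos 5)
Your proposal is correct, but you reach items (1) and (2) by a genuinely different route. The paper proves (1) by contradiction: assuming $\overline{\lambda}=0$ produces a nonzero nonnegative $f$ with $\L_\infty f = 0$; pairing $\L f=\W_1 f\ge 0$ against the fully supported stationary measure $\hat{\pi}$ forces $\L f = \W_1 f = 0$, whence irreducibility gives $f=\alpha\Iv$ and positivity of $\W_1$ gives $\alpha=0$. You instead argue probabilistically that the survival probability in $\A^C$ is bounded by $1-\delta$ on some window $[0,T]$ uniformly in the starting state, which yields $\|e^{nT\L_\infty}\|_{\infty\to\infty}\le(1-\delta)^n$ and hence $\overline{\lambda}>0$ via Gelfand. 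The paper's (2) repeats the same contradiction scheme with the jump kernel $\P$ and invariant measure $\pi$; you instead deduce $r(\S)<1$ from the positivity of $(\I-\S)^{-1}=-\L_\infty^{-1}\RR$ and the resulting convergence of the Neumann series for $(\I-\S)^{-1}\Iv$. Both routes are sound: the paper's is shorter and leans directly on Hypothesis 1 through the stationary measure, while yours makes the decay mechanism explicit and leans instead on positivity of the resolvent. Item (3) is essentially the same Dyson/strong-Markov computation in both. For item (4) you supply the missing detail that the paper handles tacitly: to get $r(\L_\infty^{-1})\ge 1/\overline{\lambda}$ one really needs $-\overline{\lambda}$ to be an actual eigenvalue of $\L_\infty$ (otherwise $\min\{|z|:z\in\mathrm{Sp}(\L_\infty)\}$ could exceed $\overline{\lambda}$), and your small-$t$ Perron--Frobenius argument on $e^{t\L_\infty}$, with the $2\pi i$ ambiguity resolution, is precisely the justification the paper presupposes in item (1) when it writes $r(e^{t\L_\infty})=e^{-t\overline{\lambda}}$.
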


\begin{proof}

Note that $\L_\infty$ generates a sub-Markov semigroup $e^{t\L_\infty}$. By Perron-Frobenius theory we know that 
\[
r(e^{t\L_\infty})=e^{-t\overline{\lambda}}\Leftrightarrow \overline{\lambda}:=-\max\{\Re(z): \, z \in {\rm Sp}(\L_\infty)\}.
\]
By contradiction suppose $\overline{\lambda} = 0$. Then there exists a nonzero nonnegative function $f:E \rightarrow [0,+\infty)$ such that $\L_\infty f=0$. We then have
\[
\L f=\L_\infty f +\W_1  f=\W_1 f \geq 0.
\]
Therefore one has
\[
0=\langle\hat{\pi},\L f \rangle = \langle \hat{\pi},\W_1 f \rangle,
\]
which implies that $\L f= \W_1 f = 0$ because $\hat{\pi}$ has full support. Since $\L$ is irreducible, $f=\alpha\Iv$ for some nonnegative $\alpha$, however $\alpha \W_1\Iv=0$ implies that $\alpha$ and therefore $f$ are $0$ (it follows from the positivity of $\W_1$). We came to a contradiction, which proves that $\overline{\lambda}>0$. Therefore ${\rm Sp}(\L_\infty)\subset \{z \in \mathbb{C}: \, \Re(z) \leq -\overline{\lambda} <0\}$
and $\L_\infty$ is invertible.

\bigskip 2. The proof is similar to that of point 1. Notice that $\RR^{-1}\W_2$ is a sub-Markov transition kernel and let $r= r(\RR^{-1}\W_2)$ such that $r \in [0,1]$, and the corresponding nonzero nonnegative eigenvector $f:E \rightarrow [0,+\infty)$. Suppose that $r=1$, then we can write
\[
(\P-\I)f=\RR^{-1}\W_1 f + (\RR^{-1}\W_2 -\I)f=\RR^{-1}\W_1 f,
\]
therefore
\[
0=\langle \pi, (\P-\I)f \rangle =\langle \pi,\RR^{-1}\W_1 f \rangle,
\]
which implies that $\RR^{-1}\W_1 f= (\P-\I)f =0$ from which it follows that $f=0$ as before, hence a contradiction. What we have proved so far shows that the derivation of Eq. \eqref{eq:qexp} is correct.

\bigskip 3. Using Dyson expansion, one can see that
\[\mathbb{P}_\nu(T_\A(k) \leq t)=\int_{\sum_{i=1}^{k}t_i \leq t} \langle \nu,e^{t_1\L_\infty}\W_1 e^{(t_2)\L_\infty}\cdots \W_1e^{\left (t-\sum_{i=1}^{k}t_i \right )\L_\infty}\Iv \rangle dt_1\cdots dt_k.
\] 

Therefore, if $u < \overline{\lambda}$, then
\begin{equation}\label{eq:resLinf}
-\frac{\I}{u+\L_\infty}=\int_{0}^{+\infty}e^{t(u+\L_\infty)}dt
\end{equation}
and we get
\[\mathbb{E}_\nu[e^{uT_\A(k)}]=\left \langle \nu, \left (-\frac{\I}{u+\L_\infty} \W_1 \right )^k \Iv \right \rangle.
\]

\bigskip 4. The Spectral Mapping Theorem implies that ${\rm Sp}(\L_\infty^{-1})=\{z^{-1}:\, z \in {\rm Sp}(\L_\infty)\}$, therefore one has that
\[
 \left\| \frac{{\I}}{{\L_\infty}} \right \|_{\infty \rightarrow \infty} \geq r(\L_\infty^{-1})\geq \frac{1}{\overline{\lambda}} \quad\Leftrightarrow \quad \left\| \frac{{\I}}{{\L_\infty}} \right \|_{\infty \rightarrow \infty}^{-1} \leq \overline{\lambda}.
\]
\end{proof}

%%%%%%%%%%%%%%%%%%%%%%%%%
\subsection{Proof of Theorem \ref{theo:ldpC}}\label{app:LDPc}
%%%%%%%%%%%%%%%%%%%%%%%%%%%%%%%%%

\bigskip\begin{theoA} 
Let us consider any nonempty subset $\A$ of the set of possible jumps. The collection of corresponding FPTs $\{T_{\A}(k)/k\}$ satisfies a LDP with good rate function given by
$$
I_\A(t):=\sup_{u \in \mathbb{R}}\{ut-\log(r(u))\}$$
where
$$
r(u)=\begin{cases} r \left (\Q_u \right ) & \text{ if } u < \overline{\lambda}\\
+\infty & \text{o.w.}\end{cases}$$
where $\Q_u:=-(u+\L_\infty)^{-1}\W_1$ and $\overline{\lambda}:=-\max\{\Re(z):z \in {\rm Sp}(\L_\infty)\}.$
\end{theoA}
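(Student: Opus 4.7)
The plan is to derive the LDP via the Gärtner-Ellis theorem \cite[Theorem 2.3.6]{dembo2010large}, which requires identifying the scaled cumulant generating function (SCGF)
\[
\Lambda(u) := \lim_{k\to\infty} \frac{1}{k}\log \E_\nu[e^{uT_\A(k)}]
\]
with the function $\log r(u)$ in the statement, and verifying that $\Lambda$ is essentially smooth in the sense of Rockafellar. The good rate function of the statement is then simply the Legendre-Fenchel transform $I_\A(t) = \sup_u\{ut - \Lambda(u)\}$, and the conclusion is immediate from Gärtner-Ellis.

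For the identification of the SCGF, I would start from the MGF in Lemma \ref{lem:tech1}(3) and use $\Q = -\L_\infty^{-1}\W_1$ to observe the algebraic identity
\[
\frac{\L_\infty}{u+\L_\infty}\Q \;=\; -(u+\L_\infty)^{-1}\W_1 \;=\; \Q_u,
\]
so that $\E_\nu[e^{uT_\A(k)}] = \langle \nu, \Q_u^k \Iv\rangle$ for $u<\overline{\lambda}$. The integral representation $-(u+\L_\infty)^{-1} = \int_0^{+\infty} e^{t(u+\L_\infty)}dt$, valid exactly for $u<\overline{\lambda}$, shows that $\Q_u$ is nonnegative, so by Perron-Frobenius its spectral radius $r(\Q_u)$ is an eigenvalue admitting nonnegative left and right eigenvectors, strictly positive on the irreducibility class $\{y\in E:\exists x,\,(x,y)\in\A\}$ (which coincides with the support of $\varphi$, the complement being transient). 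Since any initial $\nu$ is pushed into this class after a single application of $\Q_u$, the standard Perron-Frobenius asymptotics give
\[
\lim_{k\to\infty}\frac{1}{k}\log\langle \nu,\Q_u^k\Iv\rangle = \log r(\Q_u),
\]
uniformly in $\nu$. For $u \geq \overline{\lambda}$, the Dyson representation
\[
\E_\nu[e^{uT_\A(k)}] = \int_{[0,\infty)^k} e^{u(t_1+\dots+t_k)} \langle \nu, e^{t_1\L_\infty}\W_1 \cdots e^{t_k\L_\infty}\W_1 \Iv\rangle\,dt_1\cdots dt_k
\]
diverges by monotone convergence (taking $u \uparrow \overline{\lambda}$ from below and beyond), so $\Lambda(u) = +\infty$, matching the definition of $r(u)$.

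Finally, essential smoothness has to be checked. Convexity of $\Lambda$ is automatic for any SCGF. On the interior $(-\infty,\overline{\lambda})$, analytic (Kato) perturbation theory applied to $u\mapsto \Q_u$ guarantees that the Perron eigenvalue $r(\Q_u)$, being simple and isolated, depends analytically on $u$, so $\Lambda$ is smooth there. The steepness condition $\Lambda'(u)\to +\infty$ as $u\uparrow\overline{\lambda}$ follows from $r(\Q_u)\to +\infty$ as $u\uparrow\overline{\lambda}$: indeed, the resolvent $(u+\L_\infty)^{-1}$ develops a simple pole at the Perron eigenvalue $-\overline{\lambda}$ of $\L_\infty$, and this blow-up transfers to $r(\Q_u)$ through a Collatz-Wielandt bound involving the positive Perron eigenvectors of $\L_\infty$ and $\W_1$. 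Gärtner-Ellis then gives the LDP with good rate function $I_\A$ as stated, and the qualitative properties advertised after the theorem (positivity, unique minimum at $\langle t_\A\rangle$ with value zero, the boundary behaviours $I_\A(0^+)=I_\A(+\infty)=+\infty$, and the asymptotic slope $\overline{\lambda}$) all follow from standard Legendre-transform calculus applied to a smooth convex function with vertical asymptote at $u=\overline{\lambda}$.

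The main technical obstacle I anticipate is the verification of steepness at $\overline{\lambda}$, which requires quantitative control of how fast the Perron eigenvalue of $\Q_u$ blows up as $u$ approaches the boundary; a secondary but easier point is handling initial laws $\nu$ that are not fully supported on the irreducibility class of $\Q_u$, which is resolved by the observation that a single application of $\Q_u$ absorbs the mass into that class.
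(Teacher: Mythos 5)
Your proposal takes the same G\"artner--Ellis route as the paper, and the skeleton is right: identify the SCGF with $\log r(\Q_u)$ via the MGF formula from Lemma \ref{lem:tech1}, use the integral representation of $\Q_u$ to get positivity preservation, invoke analytic perturbation theory for smoothness on the interior of the effective domain, and verify steepness at $\overline{\lambda}$. You also correctly flag steepness as the main obstacle and rightly observe that convexity of $\log r(u)$ upgrades $\log r(u)\to\infty$ to $\partial_u\log r(u)\to\infty$ for free, a small simplification of the paper's separate verification of the two limits.

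The argument has, however, two genuine gaps, and the paper closes both with a single idea you are missing. (i) Analytic perturbation theory needs $r(\Q_u)$ to be an \emph{algebraically} simple isolated eigenvalue, and your lower bound on the SCGF needs the Perron eigenvector to be strictly positive where the mass of $\nu\Q_u$ lives; both would follow if $\Q_u$ restricted to the class $\{y:\exists x,\,(x,y)\in\A\}$ were irreducible for every $u<\overline\lambda$, but this is only asserted in the text for $\Q=\Q_0$ and not established for general $u$. (ii) The blow-up $r(\Q_u)\to\infty$ does not follow from a Collatz--Wielandt bound in the way you sketch: the Perron eigenvector of $\L_\infty$ is in general not strictly positive and so cannot serve as the test vector, while using $\Iv$ only forces \emph{some} entries of $\Q_u\Iv$ to diverge (those carried by the spectral projection of $\L_\infty$ at $-\overline\lambda$), not the entrywise minimum. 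The key idea in the paper is to convert the eigenproblem for $\Q_u$ into one for the tilted generator $\L_{s(u)}:=\L+(e^{s(u)}-1)\W_1$ with $s(u)=-\log r(u)$, via the equivalence $\Q_u x = r(u)x \Leftrightarrow \L_{s(u)}x=-ux$. Because $\L_{s}$ inherits irreducibility from $\L$ (Lemma \ref{lem:irredC}), the eigenvector $x(u)$ is strictly positive on \emph{all} of $E$ and $r(u)$ is simple, which fixes (i). It also fixes (ii) through the contradiction the paper runs: if $r(u)$ stayed bounded as $u\uparrow\overline\lambda$, then so would $s(u)$, hence $x(u)$ would converge to the strictly positive Perron vector of $\L_{s(\overline\lambda)}$, keeping $\min_{x\in E} x(u)(x)$ bounded away from zero; but then $\|\Q_u\|_{\infty\to\infty}\leq r(u)\,\|x(u)\|_\infty/\min_{x\in E} x(u)(x)$ would stay bounded, contradicting the divergence of $\|\Q_u\|_{\infty\to\infty}$ produced by the pole of the resolvent at $-\overline\lambda$. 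Without the $\L_{s(u)}$ transformation, neither the strict positivity nor this contradiction argument is available to you.
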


\begin{proof}
    The proof of Lemma \ref{lem:tech1} shows that if $u<\overline{\lambda}:=-\max\{\Re(z):z \in {\rm Sp}(\L_\infty)\}$, then
    \begin{equation}
    \label{eq:MGFTk}
    \mathbb{E}_\nu[e^{uT_{\A}(k)}]= \left \langle \nu, \Q_u^{k} \Iv \ \right \rangle,
    \end{equation}
    where $\Q_u:=-(u+\L_\infty)^{-1}\W_1$. From the expression
    $$\Q_u=\int_0^{+\infty}e^{(u+\L_\infty)t}\W_1 dt$$
    one sees that $\Q_u$ is a positivity preserving map for every $u<\overline{\lambda}$. From Perron-Frobenius theory (see Theorem \ref{th:PF} in Appendix \ref{app:technical}), we know that $r(u):=r(\Q_u)$ is an eigenvalue of $\Q_u$ that admits a positive eigenvector $x(u)$. With simple algebraic manipulations one can see that
    \[
    \Q_u x(u)=r(u)x(u) \Leftrightarrow \L_{s(u)}x(u)=-u x(u)
    \]
    where $\L_{s(u)}:=\L+(e^{s(u)}-1)\W_1$ and $s(u)=-\log(r(u))$.
    
    The perturbations of $\L$ given by $\L_{s}$ for $s \in \mathbb{R}$ are irreducible (see Lemma \ref{lem:irredC} in Appendix \ref{app:technical}), hence they admit a unique positive eigenvector, which is actually strictly positive and corresponds to the the eigenvalue given by $\max\{\Re(z):z \in {\rm Sp}(\L_s)\}$. Therefore, $-u=\max\{\Re(z):z \in {\rm Sp}(\L_{s(u)})\}$ and $x(u)>0$ is the unique eigenvector for $\Q_u$ corresponding to $r(u)$. One can also show that $r(u)$ is in fact algebraically simple for $\Q_u$ as in the proof of Lemma 5.3 in \cite{carbone2015homogeneous}.

    Summing up, one has that for $u<\overline{\lambda}$ the function $u \mapsto r(u)$ is smooth (actually analytic in a complex neighborhood of the values we are considering) and
    \begin{equation} \label{eq:lim}
    \lim_{k \rightarrow +\infty}\frac{1}{k}\log(\mathbb{E}_\nu[e^{uT_\A(k)}])=\log(r(u)).
    \end{equation}
    Indeed, 
    $$
    \frac{1}{k}\log(\mathbb{E}_\nu[e^{uT_\A(k)}]) \leq \frac{1}{k}\log(\|\Q_u^k\|_{\infty \rightarrow \infty}) \rightarrow_{k\rightarrow +\infty} \log(r(u)) $$
    thanks to Gelfand's formula. On the other hand, using \eqref{eq:MGFTk} and  assuming that $\|x(u)\|_\infty \leq 1$, one has
    $$\frac{1}{k}\log(\mathbb{E}_\nu[e^{uT_\A(k)}]) \geq \log(r(u))+\frac{1}{k}\log(\langle \nu,x(u) \rangle)\rightarrow_{k\rightarrow +\infty} \log(r(u)),
    $$
    since $x(u)>0$ and $\langle \nu, x(u) \rangle >0$.

    In order to apply G\"artner-Ellis Theorem (\cite[Theorem 2.3.6]{dembo2010large}), we only need to show that
    \begin{equation} \label{eq:explosion}
    \lim_{u \uparrow \overline{\lambda}} \log(r(u))=\lim_{u \uparrow \overline{\lambda}} \log(r(u))^\prime=+\infty.\end{equation}
    
    Notice that $r(u)$ and $\log(r(u))^\prime=r^\prime(u)/r(u)$ are both monotone non-decreasing (they are limits of monotone non-decreasing functions cf. Eq. \eqref{eq:lim}): the limits in Eq. \eqref{eq:explosion} exist and we only need to show that they cannot be finite.

    Let $\mathbf{T}$ be the spectral projection of $\L_\infty$ corresponding to $-\overline{\lambda}$; we remark that $\L_\infty$ restricted to the range of $\mathbf{T}$ is diagonalisable. To show this, let us assume that this is not the case. If $-\overline{\lambda}$ is an eigenvalue of $\L_\infty$, then $0$ is an eigenvalue of $\L^\prime:=\L_\infty+\overline{\lambda}\I$. The matrix $\L'$ restricted to the range $\mathbf{T}$ is also not diagonalisable. %corresponding to eigenvalue 0. 
    However, this means that the restriction contains a Jordan block, in which case the norm of $e^{t\L^\prime}$ explodes for large $t$ which contradicts the fact that $e^{t\L^\prime}$ generates a contraction semigroup by Lumer-Phillips Theorem (see, for instance, Theorem 3.15 and the following corollaries in \cite{engel2000one-parameter}).

    Let us first show that
    $$\lim_{u \uparrow \overline{\lambda}}r(u)=+\infty.$$
    Notice that $\mathbf{T}\Q \neq 0$, since $\mathbf{T}\Q(\Iv)=\mathbf{T}\Iv \neq 0$. Therefore
    $$
    \Q_u=\frac{\L_\infty}{\L_\infty+u}\Q=\frac{\overline{\lambda}}{\overline{\lambda}-u}\mathbf{T}\Q+ (\I-\mathbf{T})\frac{\L_\infty}{u+\L_\infty}\Q $$
    has a norm that explodes for $u \uparrow \overline{\lambda}$. By contradiction, let us assume that for $u \uparrow \overline{\lambda}$, $r(u)\uparrow r(\overline{\lambda})<+\infty$. This implies that we can choose $x(u)$ such that it converges to the unique strictly positive Perron-Frobenius eigenvector of $\L_{s(\overline{\lambda})}$ and that $\min{\rm Sp}(x(u)) \not\rightarrow0.$ Therefore we have for every $0 \leq u<\overline{\lambda}$
    $$
    \|\Q_u\|_{\infty \rightarrow \infty}=\|\Q_u(\Iv)\|_\infty \leq \frac{1}{\min {\rm Sp}(x(u))}\|\Q_u(x(u))\|_\infty = r(u) \frac{\|x(u)\|_\infty}{\min {\rm Sp}(x(u))}$$.
   Since the right side remains finite as $u\uparrow \bar{\lambda}$, this contradicts the fact that 
   $\|{\bf Q}_u\|_{\infty\to \infty}$ diverges. 
    Notice that the first equality in the previous Eq. is due to Theorem \ref{th:RD} in Appendix \ref{app:technical}.

    Let $l(u)$ be the left eigenvector of $\Q_u$; we can assume that $\langle l(u), x(u) \rangle \equiv 1$, therefore one has
    \[
    \begin{split}
    r(u)&=\langle l(u),-(u+\L_\infty)^{-1}\W_1 x(u) \rangle\\
    &=\underbrace{\langle l(u),-(u+\L_\infty)^{-1}\mathbf{T}\W_1 x(u) \rangle}_{(I)}+\underbrace{\langle l(u),-(u+\L_\infty)^{-1}(\mathbf{1}-\mathbf{T})\W_1 x(u) \rangle}_{(II)}.\\
    \end{split}
    \]
    Since $(II)$ stays bounded, for $u \uparrow \overline{\lambda}$ one has
    $$r(u) \asymp \frac{\overline{\lambda}}{\overline{\lambda}-u} \langle l(u),\mathbf{T}\Q x(u) \rangle $$
    with both sides diverging as $u\uparrow \bar{\lambda}$.
    Differentiating the previous expression for $r(u)$ and dividing for $r(u)$ one gets
    \[\begin{split}
    \frac{r^\prime(u)}{r(u)}&=\frac{\langle l(u),(u+\L_\infty)^{-2}\W_1 x(u) \rangle}{r(u)}\\
    &=\underbrace{\frac{\overline{\lambda}}{(\overline{\lambda}-u)^2}\frac{\langle l(u),\mathbf{T}\Q x(u) \rangle}{r(u)}}_{(I)}+ \underbrace{\frac{\langle l(u),(u+\L_\infty)^{-2}(\I-\mathbf{T})\W_1 x(u) \rangle}{r(u)}}_{(II)}.
    \end{split}\]
    When $u \uparrow \overline{\lambda}$, $(II)\rightarrow 0$, while
    $(I) \asymp (\overline{\lambda}-u)^{-1}$ and we are done.
\end{proof}
\subsection{Proof of Theorem \ref{theo:dynActC} and Corollary \ref{dynActiTURC}} \label{app:dynActC}

\begin{theoA}[Fluctuations of FPT for Activity]
Suppose Hypothesis \ref{hypo:irrC} holds ($\L$ is irreducible) and let $\varepsilon$ be the spectral gap of $\P^\dag \P$. For every $\gamma > 0$ the following holds true:
\begin{equation*}
\begin{split}
&\PP_\nu \left (\frac{T_{\EE}(k)}{k}\geq \langle t_{\EE} \rangle + \gamma \right) \leq C(\nu) \exp \left ( -k \frac{\gamma^2 \varepsilon}{4b_c^2}h\left ( \frac{5c_c\gamma}{2 b_c^2}\right )\right )\\
{\rm and}\\
&\PP_\nu \left (\frac{T_{\EE}(k)}{k} \leq \langle t_{\EE} \rangle - \gamma \right) \leq C(\nu) \exp \left ( -k \frac{\gamma^2 \varepsilon}{4b_c^2}h\left ( \frac{5c_c\gamma}{2 b_c^2}\right )\right ), \quad k\in \mathbb{N},
\end{split}
\end{equation*}
where $h(x):=(\sqrt{1+x}+\frac{x}{2}+1)^{-1}$ and $C(\nu):=\max_{x\in E} \left\{\nu(x)/\pi(x)\right\}$.
\end{theoA}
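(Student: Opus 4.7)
The plan is to prove the right-tail inequality via a Chernoff argument; the left-tail bound then follows by the symmetric argument with $u<0$ throughout.

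First, for $0<u<R_{\min}$, Chernoff's inequality combined with the explicit MGF formula \eqref{eq:MGFcda} gives
\begin{equation*}
\PP_\nu\!\left(\tfrac{T_{\EE}(k)}{k}\geq \langle t_{\EE}\rangle+\gamma\right)\leq e^{-uk(\langle t_{\EE}\rangle+\gamma)}\,\langle\nu,M(u)^k\Iv\rangle,\quad M(u):=\RR(\RR-u)^{-1}\P.
\end{equation*}
Since $M(u)$ preserves non-negativity for $u<R_{\min}$ and $\nu(x)\leq C(\nu)\pi(x)$ pointwise, one can swap $\nu$ for $\pi$ at the cost of the multiplicative factor $C(\nu)$, reducing the problem to estimating $\langle\Iv,M(u)^k\Iv\rangle_\pi$ in $L^2_\pi(E)$.

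The core technical step, following the line of \cite[Sec.~3]{lezaud1998chernoff-type}, is a Hilbert-space perturbation bound for the Perron eigenvalue of $M(u)=(\I+\mathbf{D}(u))\P$, where $\mathbf{D}(u):=u(\RR-u)^{-1}$ is the diagonal multiplication operator. The idea is to replace $\P$ by the Le\'{o}n--Perron operator $\hat{\P}=(1-\varepsilon)\I+\varepsilon\Pi$, which has the same absolute spectral gap $\varepsilon$ as $\P$ but is self-adjoint and explicitly diagonalisable on $L^2_\pi$. A dominance argument based on the multiplicative symmetrisation $\P^\dagger\P$ yields
\begin{equation*}
\langle\Iv,M(u)^k\Iv\rangle_\pi\leq \lambda(u)^k,
\end{equation*}
where $\lambda(u)$ is the top eigenvalue of the positive self-adjoint operator $(\I+\mathbf{D}(u))^{1/2}\hat{\P}(\I+\mathbf{D}(u))^{1/2}$. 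A resolvent expansion around $u=0$ then gives
\begin{equation*}
\log\lambda(u)\leq u\langle t_{\EE}\rangle+\frac{u^2\,b_c^2}{\varepsilon\,(1-\tfrac{5}{2}c_c u)},\qquad 0\leq u<\tfrac{2}{5c_c},
\end{equation*}
the linear coefficient being forced by $\langle\pi,\RR^{-1}\Iv\rangle=\langle t_\EE\rangle$ via first-order Perron--Frobenius perturbation, the factor $\varepsilon^{-1}$ arising from the resolvent $(\I-\hat{\P})^{-1}$ acting on $\{\Iv\}^\perp$ as $\varepsilon^{-1}\I$, and the denominator controlled by $c_c=1/R_{\min}$ coming from the geometric series for $(\RR-u)^{-1}$.

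Substituting back and using $\log(1+x)\leq x$ produces a Chernoff exponent of the form $-k\bigl(u\gamma-u^2 b_c^2/[\varepsilon(1-\tfrac{5}{2}c_c u)]\bigr)$; optimising in $u\in[0,2/(5c_c))$ reproduces the Bennett function $h(x)=(\sqrt{1+x}+x/2+1)^{-1}$ of the statement. The left-tail bound is obtained by repeating the argument with $u<0$: the MGF formula \eqref{eq:MGFcda} remains valid on $(-\infty,R_{\min})$, and the perturbative estimate on $\lambda(u)$ has the same form. Corollary \ref{dynActiTURC} is then extracted by expanding the MGF bound to second order at $u=0$, converting the quadratic coefficient $b_c^2/\varepsilon$ into the variance estimate $\mathrm{Var}_\pi(T_{\EE}(k))/k\leq(1+2/\varepsilon)b_c^2$. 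The main obstacle is the perturbative spectral estimate on $\lambda(u)$: one must simultaneously obtain the correct linear coefficient (so that it cancels $u\langle t_\EE\rangle$ in the exponent) and the sharp quadratic coefficient $b_c^2/\varepsilon$ with the precise denominator controlled by $c_c$, which requires careful use of the Le\'{o}n--Perron comparison rather than naive norm bounds on $M(u)^k$.
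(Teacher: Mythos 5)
Your overall architecture matches the paper's: Chernoff bound via the explicit MGF \eqref{eq:MGFcda}, reduction to $\pi$ at the cost of $C(\nu)$, replacement of $\P$ by the Le\'{o}n--Perron operator $\hat{\P}$ using the symmetrisation lemmas, reduction to the top eigenvalue $\lambda(u)$ of $\F_u^{1/2}\hat{\P}\F_u^{1/2}$, and finally a perturbative expansion of $\lambda(u)$. Your way of extracting $C(\nu)$ by pointwise domination $\nu\leq C(\nu)\pi$ is a slightly cleaner variant of the paper's Cauchy--Schwarz-plus-multiplication-operator step, and is perfectly valid.

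The problem is the asserted spectral estimate
\[
\log\lambda(u)\leq u\langle t_{\EE}\rangle+\frac{u^2\,b_c^2}{\varepsilon\bigl(1-\tfrac{5}{2}c_c u\bigr)} ,
\]
which is both unjustified and false. The actual bound obtained from the Kato/Lezaud expansion (see Eq.~\eqref{notOptConc1} in the paper) is
\[
\log\lambda(u)\leq \langle t_{\EE}\rangle u + b_c^2u^2\left(\frac{1}{1-c_c u}+\frac{1}{\varepsilon-5c_c u}\right)\leq \langle t_{\EE}\rangle u + \frac{2b_c^2u^2}{\varepsilon}\left(1-\frac{5c_c u}{\varepsilon}\right)^{-1},
\]
and already at $u=0$ the quadratic coefficient here is $b_c^2(1+\varepsilon^{-1})\geq 2b_c^2/\varepsilon$, which is strictly larger than your $b_c^2/\varepsilon$; also your denominator $1-\tfrac{5}{2}c_c u$ lacks the $\varepsilon^{-1}$ factor that enters through the radius of convergence of the Kato series. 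Consequently, optimising your version over $u$ yields $\exp\bigl(-k\,\frac{\gamma^2\varepsilon}{2b_c^2}\,h\bigl(\frac{5c_c\gamma\varepsilon}{2b_c^2}\bigr)\bigr)$, which is strictly stronger than and does not reproduce the theorem's exponent $\exp\bigl(-k\,\frac{\gamma^2\varepsilon}{4b_c^2}\,h\bigl(\frac{5c_c\gamma}{2b_c^2}\bigr)\bigr)$. The missing step that generates the correct constants is the combinatorial control of Kato's eigenvalue-perturbation series: one must bound each coefficient $r^{(l)}$ of the series $\lambda(u)=1+\sum_{l\geq1}u^l r^{(l)}$ using $\|\hat{\P}\D^l\|_\pi\leq c_c^l$, $\|\D\Iv\|_\pi^2=b_c^2$, $\|\S^{(1)}\|_\pi=\varepsilon^{-1}$, and the Lezaud counting estimate $N(l)\leq 5^{l-2}$, which is where both the factor $5c_c/\varepsilon$ in the denominator and the doubling (from keeping the $p=1$ and $p\geq2$ contributions separate) actually come from. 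As written, your proposal asserts the conclusion of that analysis with wrong constants and without the argument.
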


\begin{proof}

We begin by using the Chernoff bound to upper bound the probability of $T_{\EE}(k)/k$ right deviating from $\langle t_{\EE} \rangle$ by more than $\gamma>0$, using the moment generating function:
\begin{equation}\label{chernBoundC}
\PP_\nu\left(\frac{T_{\EE}(k)}{k}\geq\langle t_{\EE} \rangle+\gamma\right)\leq e^{-ku(\langle t_{\EE} \rangle+\gamma)}\E_\nu[e^{uT_{\EE}(k)}], \quad u\geq0.
\end{equation}
We now focus on upper bounding the moment generating function using the definition in Eq. \eqref{eq:MGFcda}. Introducing the notation
$$\F_u:=\frac{\RR}{\RR-u}, \quad u <R_{\min},$$
one has that for every $0\leq u <R_{\min}$ the following holds true:
\begin{equation*}\label{mgfChySwz}
\begin{split}
\E_\nu[e^{uT_{\EE}(k)}]&= \left\langle \nu,(\F_u\P)^k\Iv\right\rangle=\left\langle \frac{\nu}{\pi},(\F_u\P)^k \Iv\right\rangle_\pi\\
&=\left\langle \F_u^{\frac{1}{2}}\frac{\nu}{\pi},(\F_u^{\frac{1}{2}}\P \F_u^{\frac{1}{2}})^{k-1}\F_u^{\frac{1}{2}}\Iv\right\rangle_\pi\\
&\leq\left\|\F_u^{\frac{1}{2}}\frac{\nu}{\pi}\right\|_\pi\left\|\F_u^{\frac{1}{2}}\P \F_u^{\frac{1}{2}}\right\|_\pi^{k-1}\left\|\F_u^{\frac{1}{2}}\Iv\right\|_\pi,
\end{split}
\end{equation*}
 where $\frac{\nu}{\pi}(x)=\frac{\nu(x)}{\pi(x)}, \forall x\in E$ and, with a little abuse of notation, we denote $\|\cdot \|_{\pi \rightarrow \pi}$ instead by $\| \cdot\|_\pi$.
We use the notation $\mathbf{M}_{\frac{\nu}{\pi}}$ to denote the multiplication operator corresponding to $\frac{\nu}{\pi}$. We can write the following:
\[
\left\|\F_u^{\frac{1}{2}}\frac{\nu}{\pi}\right\|_\pi = \left\|\F_u^{\frac{1}{2}}\mathbf{M}_{\frac{\nu}{\pi}}\Iv\right\|_\pi
= \left\|\mathbf{M}_{\frac{\nu}{\pi}}\F_u^{\frac{1}{2}}\Iv\right\|_\pi\leq \left\|\mathbf{M}_{\frac{\nu}{\pi}}\right\|_\pi\left\|\F_u^{\frac{1}{2}}\Iv\right\|_\pi=\left\|\frac{\nu}{\pi}\right\|_\infty\left\|\F_u^{\frac{1}{2}}\Iv\right\|_\pi.
\]
Note that $\|\mathbf{M}_{\frac{\nu}{\pi}}\|_\pi = \|\frac{\nu}{\pi}\|_\infty$ since we know that $\mathbf{M}_{\frac{\nu}{\pi}}$ is a diagonal matrix. Applying Lemma \ref{multOpLem} with $f=\F_u^{\frac{1}{2}}\Iv$, Lemma \ref{lpLem} with $\mathbf{A}=\mathbf{B}=\F_u^{\frac{1}{2}}$ and remembering that $\F_u^{\frac{1}{2}}$ is self adjoint one can derive the following inequalities:
\begin{enumerate}
\item $\left\|\F_u^{\frac{1}{2}}\Iv\right\|_\pi\leq\left\|\F_u^{\frac{1}{2}}\hat{\P}\F_u^{\frac{1}{2}}\right\|_\pi^{\frac{1}{2}}$,
\item $\left\|\F_u^{\frac{1}{2}}\P\F_u^{\frac{1}{2}}\right\|_\pi\leq\left\|\F_u^{\frac{1}{2}}\hat{\P}\F_u^{\frac{1}{2}}\right\|_\pi$,
\end{enumerate}
where we recall that $\hat{\P}$ is the Le\'{o}n-Perron matrix associated to $\P$, cf. equation \eqref{eq.LeonPerron}. Therefore, we get:
\begin{equation}\label{expectationUB}
\mathbb{E}_\nu[e^{uT_{\EE}(k)}] \leq \left\|\frac{\nu}{\pi}\right\|_\infty\left\|\F_u^{\frac{1}{2}}\hat{\P}\F_u^{\frac{1}{2}}\right\|_\pi^k, \quad 0\leq u <R_{\min}.
\end{equation}

If we set $C(\nu): = \left \|\frac{\nu}{\pi}\right \|_\infty$, the problem is now reduced to finding an upper bound on $\left\|\F_u^{\frac{1}{2}}\hat{\P}\F_u^{\frac{1}{2}}\right\|_\pi$. Notice that $\F_u^{\frac{1}{2}}\hat{\P}\F_u^{\frac{1}{2}}$ is self adjoint, therefore its norm coincide with its spectral radius $r(u)$. Moreover, $\F_u^{\frac{1}{2}}\hat{\P}\F_u^{\frac{1}{2}}$ is similar to $\hat{\P}\F_u$, hence they share the same spectral radius; finally, Perron-Frobenius theory ensures that $r(u)$ is an eigenvalue of $\P(u):=\hat{\P}\F_u$. 

We can write $\P(u)$ as a power series:
\begin{equation}
\label{eq:series.P(u)}
\P(u) = \hat{\P} + \sum_{l=1}^{\infty}u^l\hat{\P}\left(\frac{\I}{\RR}\right)^l, \quad 0\leq u <R_{\min}.
\end{equation}
For conciseness of notation, we denote $\D:=\RR^{-1}$. Perturbation theory (see for instance \cite{kato1976perturbation}) implies that, if we can bound $\|\hat{\P}\D^l\|_\pi\leq \delta\zeta^{l-1}$ for some $\delta,\zeta>0$ and $l>1$, within the range $|u|<(2\delta\varepsilon^{-1}+\zeta)^{-1}$ - with $\varepsilon$ the spectral gap of $\hat{\P}$ (which is equal to the absolute spectral gap of $\P$). The spectral radius $r(u)$ can be expressed in the following way:
\begin{equation}\label{ruExp}
r(u) = 1 + \sum_{l=1}^{\infty}u^l r^{(l)},
\end{equation}
where:
\begin{equation}\label{eq:ru_exp}
r^{(l)}=\sum_{p=1}^l \frac{(-1)^p}{p} \sum_{\substack{\nu_1+\dots +\nu_p=l,\, \nu_i\geq 1\\
\mu_1+\dots +\mu_p=p-1, \, \mu_j\geq 0}} \tr\left(\hat{\P}\D^{\nu_1}\S^{(\mu_1)}\cdots \hat{\P}\D^{\nu_p}\S^{(\mu_p)}\right),
\end{equation}
with $\S^{(0)}=-\Pi$, $\S^{(1)}=(\hat{\P}-\I+\Pi)^{-1}-\Pi=-\varepsilon^{-1}(\I-\Pi)$ and $\S^{(\mu)}$ the $\mu^{\rm th}$ power of $\S^{(1)}$. Note that 
$\S^{(1)}$ is equal to the Moore-Penrose inverse of $-({\bf 1}-\hat{\bf P})$ and $\|\S^{(\mu)}\|_\pi=\varepsilon^{-\mu}$ for $\mu\geq 1$.
If we set our estimates $\delta=\zeta=c_c:=R_{\min}^{-1}$, we can indeed bound $\|\hat{\P}\D^l\|_\pi$ by:\[
\|\hat{\P}\D^l\|_\pi\leq c_c^l,
\]
which gives a radius of convergence $|u|<\frac{\varepsilon}{c_c(2+\varepsilon)}<\frac{1}{c_c}=R_{\min}$. Using equation \eqref{eq:ru_exp}, we can explicitly determine 
%$r^{(0)}=1$, 
$r^{(1)}=\langle t_{\EE} \rangle$ and 
$r^{(2)}=b_c^2 - \left\langle \D\Iv,
%(\I-\hat{\P})^{-1}
{\bf S}^{(1)}
\hat{\P} \D\Iv\right\rangle_\pi$. We then seek to bound $r^{(l)}$ for $l\geq 3$.
For $p=1$:
\[
-\tr(\hat{\P}\D^l(-\Pi))=\left\langle \D^l\Iv,\Iv\right\rangle_\pi=\sum_{x\in E}\pi(x)\frac{1}{R_x^l}.
\]
For the $p\geq 2$ cases, we get:\[
\begin{split}
-\tr(\hat{\P}\D^{\nu_1}\S^{(\mu_1)}\cdots \hat{\P}\D^{\nu_p}\S^{(\mu_p)}) &= \left\langle \D\Iv,\D^{\nu_1-1}\S^{(\mu_1)}\hat{\P}\D^{\nu_2}\S^{(\mu_2)}\cdots \hat{\P}\D^{\nu_{p-1}}\S^{(\mu_{p-1})}\hat{\P}\D^{\nu_{p}-1}\D\Iv\right\rangle_\pi\\
&\leq\|\D\Iv\|_\pi^2 \|\D\|_\pi^{l-2}\|\S^{(1)}\|_\pi^{p-1},
\end{split}
\]
where we have taken $\mu_p=0$, which is justified since $\mu_1+\cdots+\mu_p=p-1$, there is at least one $\mu_j=0$, and the trace is cyclic. Again using Cauchy-Schwarz we can bound the terms as follows: $\|\S^{(1)}\|_\pi= \frac{1}{\varepsilon}$, $\|\D\|_\pi = c_c$, $\|\D\Iv\|_\pi^2= b_c^2$. We also have that $\varepsilon\leq1$. For $p\geq 2$, each term in the inner sum of equation \eqref{eq:ru_exp} is then bounded by:
\[
b_c^2 \frac{c_c^{l-2}}{\varepsilon^{l-1}}.
\]
From \cite{lezaud1998chernoff-type} the number of terms $N(l)$ in equation \eqref{eq:ru_exp} is bounded by:\begin{equation}\label{numTerms}
N(l)=\sum_{p=1}^n\binom{l-1}{p-1}\binom{2(p-1)}{p-1}\frac{1}{p}\leq 5^{l-2},
\end{equation}
which is valid for $l\geq3$. Combining everything together, the bound on each $r^{(l)}$ becomes:
\[
|r^{(l)}| \leq \left\langle \D^l\Iv,\Iv\right\rangle_\pi + 5^{l-2}b_c^2 \frac{c_c^{l-2}}{\varepsilon^{l-1}}=\left\langle \D^l\Iv,\Iv\right\rangle_\pi+\frac{b_c^2}{5c_c}\left(\frac{5c_c}{\varepsilon}\right)^{l-1}.
\]
Which, through a simple computation is in fact valid for $l=2$ as well. Therefore, the eigenvalue $r(u)$ can be bounded above by:
\[
\begin{split}
r(u)&\leq 1 +\langle t_{\EE} \rangle u+ \sum_{l=2}^\infty \left\langle \D^l\Iv,\Iv\right\rangle_\pi u^l + \frac{b_c^2u}{5c_c}\left(\frac{5c_cu}{\varepsilon}\right)^{l-1}\\
&\leq \exp\left( \langle t_{\EE} \rangle u + \sum_{l=2}^\infty \left\langle \D^l\Iv,\Iv\right\rangle_\pi u^l + \frac{b_c^2u}{5c_c}\left(\frac{5c_cu}{\varepsilon}\right)^{l-1} \right),
\end{split}
\]
where we have used the fact that $1+x\leq e^x$. We can further bound this by focusing on the latter two terms inside the exponential:
\begin{enumerate}
\item
\[
\sum_{l=2}^{\infty}\langle \D^l {\Iv},{\Iv} \rangle_\pi u^l\leq\sum_{l=2}^\infty b_c^2 c_c^{l-2}u^l=b_c^2\frac{u^2}{1-c_cu}
\]
\item
\[
\begin{split}
\sum_{l=2}^\infty\frac{b_c^2u}
{5c_c}\left(\frac{5c_cu}{\varepsilon}\right)^{l-1}&=\sum_{l=2}^\infty\frac{b_c^2u^2}{\varepsilon}\left ( \frac{5c_cu}{\varepsilon}\right)^{l-2}\\
&=\frac{b_c^2u^2}{\varepsilon}\left( 1-\frac{5c_cu}{\varepsilon} \right)^{-1}.\\
\end{split}
\]
\end{enumerate}

The power series for point 1. again gives a radius of convergence of $0\leq u<\frac{1}{c_c}$. Point 2. gives a radius of convergence of $0\leq u < \frac{\varepsilon}{5c_c}<\frac{\varepsilon}{c_c(2+\varepsilon)}<\frac{1}{c_c}$. Combining these terms together and using the upper bound on the Laplace transform in equation \eqref{expectationUB}, we have, for $0<u<\frac{\varepsilon}{5c_c}$:
\begin{equation}\label{notOptConc1}
    \E_\nu[e^{(uT_{\EE}(k))}]\leq C(\nu)\exp\left(k\left(\langle t_{\EE} \rangle u+b_c^2u^2\left(\frac{1}{1-c_cu}+\frac{1}{\varepsilon-5c_cu}\right)\right)\right).
\end{equation}

Since $1-c_cu>\varepsilon-5c_cu$, we can relax slightly the bound on the moment generating function, such that when we apply the Chernoff bound in equation \eqref{chernBoundC}, we get that:
\begin{equation}\label{notOptConc2}
\PP_\nu \left (\frac{T_{\EE}(k)}{k} \geq \langle t_{\EE} \rangle + \gamma \right) \leq C(\nu)\exp\left(-k\left(\gamma u - \frac{2b_c^2u^2}{\varepsilon}\left(1-\frac{5c_cu}{\varepsilon}\right)^{-1}\right)\right).
\end{equation}
Consider the more general expression below, with $\alpha,\beta>0$:
\[
\gamma u - \alpha u^2\left(1-\beta u\right)^{-1}.
\]
Elementary calculations show that, for $|u|<\frac{1}{\beta}$:
\begin{equation}\label{optU}
\sup_u \left\{\gamma u - \alpha u^2\left(1-\beta u\right)^{-1}\right\}= \frac{\gamma^2}{2\alpha}h\left(\frac{\beta\gamma}{\alpha}\right)
\end{equation}
where $h(x):=(1+\frac{x}{2}+\sqrt{1+x})^{-1}$. In our case, $\alpha=\frac{2b_c^2}{\varepsilon}$, $\beta=\frac{5c_c}{\varepsilon}$. Therefore taking the infimum on the right hand side of the bound in \eqref{notOptConc2}, which is valid $\forall u\in [0,\frac{\varepsilon}{5c_c})$, yields the final result for right deviations:
\[
\PP_\nu \left (\frac{T_{\EE}(k)}{k} \geq \langle t_{\EE} \rangle + \gamma \right) \leq C(\nu) \exp \left ( -k \frac{\gamma^2 \varepsilon}{4{b_c^2}}h\left ( \frac{5c_c\gamma}{2 b_c^2}\right )\right ).
\]

To prove the concentration inequality for left deviations, we write the Chernoff bound for this case, this time with $u\leq0$:
\[
\PP_\nu\left(\frac{T_{\EE}(k)}{k}\leq\langle t_{\EE} \rangle-\gamma\right)\leq e^{-ku(\langle t_{\EE} \rangle-\gamma)}\E_\nu[e^{uT_{\EE}(k)}].
\]
We can repeat the proof for right deviations, due to the fact we are upper bounding the absolute value of the terms in the expansion of \eqref{eq:ru_exp} for $l\geq 2$. We obtain an upper bound on the moment generating function:
\[
\E_\nu[e^{(uT_{\EE}(k))}]\leq C(\nu)\exp\left(k\left(\langle t_{\EE} \rangle u+b_c^2u^2\left(\frac{1}{1-c_c|u|}+\frac{1}{\varepsilon-5c_c|u|}\right)\right)\right),
\]
which is valid for $0\leq|u|<\frac{\varepsilon}{5c_c}$. One obtains a concentration bound in terms of $u$ of a similar form to \eqref{notOptConc2}, which when optimised over the allowed $u$ gives the final result for left deviations and concludes the proof of Theorem \ref{theo:dynActC}:
\[
\PP_\nu \left (\frac{T_{\EE}(k)}{k} \leq \langle t_{\EE} \rangle - \gamma \right) \leq C(\nu) \exp \left ( -k \frac{\gamma^2 \varepsilon}{4{b_c^2}}h\left ( \frac{5c_c\gamma}{2 b_c^2}\right )\right ).
\]
\end{proof}

\bigskip \begin{coroA}
The variance of the first passage time for the total activity at stationarity is bounded from above by:
\[
\frac{{\rm var}_\pi(T_{\EE}(k))}{k}\leq \left(1 + \frac{2}{\varepsilon}\right)b_c^2.
\]
\end{coroA}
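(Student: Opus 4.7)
The plan is to extract the variance bound from the moment generating function (MGF) bound already established in the proof of Theorem~\ref{theo:dynActC}. Setting $\nu=\pi$ (so that $C(\pi)=1$), and combining Eq.~\eqref{eq:MGFcda} with the estimate $\mathbb{E}_\pi[e^{uT_{\EE}(k)}]\le \|\F_u^{1/2}\hat{\P}\F_u^{1/2}\|_\pi^{k}=r(u)^k$ that appears mid-proof, one obtains $\kappa_k(u):=\log \mathbb{E}_\pi[e^{uT_{\EE}(k)}] \le k\log r(u)$ on an open neighborhood of $0$. Both sides vanish at $u=0$ and have matching first derivatives there (both equal $k\langle t_\EE\rangle=kr^{(1)}$, by the law of large numbers / direct computation of the stationary mean). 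Hence the standard comparison lemma for smooth functions yields $\kappa_k''(0)\le k(\log r)''(0)$, and since $\kappa_k''(0)={\rm var}_\pi(T_{\EE}(k))$ this gives
\[
\frac{{\rm var}_\pi(T_{\EE}(k))}{k}\;\le\;(\log r)''(0)\;=\;2r^{(2)}-(r^{(1)})^{2}.
\]

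The second step is to evaluate $r^{(2)}$ using the perturbative expression recorded in Eq.~\eqref{eq:ru_exp}, namely $r^{(2)}=b_c^2-\langle \D\Iv,\S^{(1)}\hat{\P}\D\Iv\rangle_\pi$. A short calculation using the Le\'on--Perron structure $\hat{\P}=(1-\varepsilon)\I+\varepsilon\Pi$ gives $(\I-\Pi)\hat{\P}=(1-\varepsilon)(\I-\Pi)$, so $\S^{(1)}\hat{\P}=-\varepsilon^{-1}(1-\varepsilon)(\I-\Pi)$. Together with $\langle \D\Iv,(\I-\Pi)\D\Iv\rangle_\pi=b_c^2-\langle t_\EE\rangle^2\le b_c^2$, this yields
\[
r^{(2)}\;=\;b_c^{2}+\frac{1-\varepsilon}{\varepsilon}\bigl(b_c^{2}-\langle t_\EE\rangle^{2}\bigr)\;\le\;\frac{b_c^{2}}{\varepsilon}.
\]

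Combining the two steps and discarding $(r^{(1)})^{2}=\langle t_\EE\rangle^{2}\ge 0$ gives
\[
\frac{{\rm var}_\pi(T_{\EE}(k))}{k}\;\le\;2r^{(2)}\;\le\;\frac{2b_c^{2}}{\varepsilon}\;\le\;\Bigl(1+\frac{2}{\varepsilon}\Bigr)b_c^{2},
\]
where the last inequality is trivial since $b_c^{2}\ge 0$. Note the intermediate bound $2b_c^{2}/\varepsilon$ is in fact sharper than the stated corollary, but both are valid.

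The main technical point to be careful about is the second-derivative comparison in the first step: one must check that $\kappa_k$ and $k\log r$ are both $C^{2}$ near $u=0$ and that their linear parts genuinely coincide. Analyticity of $r(u)$ near $0$ follows from Kato-type analytic perturbation theory (as used in the proof of Theorem~\ref{theo:dynActC} via the expansion in Eq.~\eqref{eq:series.P(u)} on the disk $|u|<\varepsilon/[c_c(2+\varepsilon)]$); analyticity of $\kappa_k$ near $0$ follows from the explicit formula~\eqref{eq:MGFcda} since $\F_u$ is analytic in $u$ for $|u|<R_{\min}$. Equality of the first derivatives is the (easy) stationary-mean identity $\mathbb{E}_\pi[T_\EE(k)]=k\langle t_\EE\rangle=kr^{(1)}$, established in passing in Eq.~\eqref{eq:mean.t}. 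Once these regularity points are in place, the rest is routine algebra.
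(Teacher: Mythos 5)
Your proof is correct and follows essentially the same approach as the paper: set $\nu=\pi$, use the bound $\mathbb{E}_\pi[e^{uT_\EE(k)}]\le r(u)^k$ established inside the proof of Theorem~\ref{theo:dynActC}, compare second-order Taylor coefficients at $u=0$, and evaluate $r^{(2)}$ via the Le\'on--Perron perturbation series. Your intermediate estimate $2b_c^2/\varepsilon$ is in fact slightly sharper than the stated $\left(1+\tfrac{2}{\varepsilon}\right)b_c^2$ (the paper discards a nonpositive term $(1-2\varepsilon^{-1})\langle t_\EE\rangle^2$ rather than the tighter $(r^{(1)})^2$), but both are valid and the relaxation you note recovers the stated form.
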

\begin{proof} 
\bigskip Notice that for $u\geq 0$ small enough, one has
\[
\begin{split}
\log(\E_\pi[e^{uT_{\EE}(k)}])&=\langle t_\EE \rangle ku+\frac{1}{2}{\rm var}_{\pi}(T_{\EE}(k))u^2+o(u^2)\\
&\leq k\log(r(u))=kr^\prime(0) u + \frac{k}{2}(r^{\prime\prime}(0)-(r^\prime(0))^2)u^2 + o(u^2)
\end{split}
\]

where $r(u)$ is given by equation \eqref{ruExp}. We recall that
\[r^\prime(0)=r^{(1)}=\langle t_\EE \rangle = \langle \D\Iv, \Iv \rangle_\pi, \quad r^{\prime\prime}(0)=2r^{(2)}=2\langle \D\Iv, \D\Iv\rangle_\pi + 2\left\langle \D\Iv,\frac{\hat{\P}}{\I-\hat{\P}} \D\Iv\right\rangle_\pi
\]
therefore
\[
r^{\prime\prime}(0)-(r^\prime(0))^2=\langle {\bf D} \Iv, \Iv \rangle_\pi^2 + 2\left \langle \D\Iv, \frac{\I}{\I-\hat{\P}} \D\Iv\right \rangle_\pi \leq \left (1+\frac{2}{\varepsilon} \right )b_c^2.
\]
Hence ${\rm var}_\pi(T_{\EE}(k))\leq \left (1+\frac{2}{\varepsilon} \right )b_c^2k$.

\end{proof}

%%%%%%%%%%%%%%%%%%%%%%%%%%%%%%%%%%%%%%
\section{Proof of Theorem \ref{theo:countObs}, Corollary \ref{coro:cgiTUR} and  Corollary \ref{countObsBeta} %and Corollary \ref{countObsiTUR}
}
\label{app:countObs}
%%%%%%%%%%%%%%%%%%%%%%%%%%%%%%%%%%%%%%%

\begin{theoA}[Rare Fluctuations of General Counting Observable FPTs]
Let $\L$ be irreducible and $\A\subseteq \EE$ be nonempty. For every $k\in \mathbb{N}$ and $\gamma > \beta - \langle t_\A \rangle$
\[
\PP_\nu \left (\frac{T_{\A}(k)}{k} \geq \langle t_{\A} \rangle + \gamma \right) \leq \exp \left ( -k \left(\frac{\gamma+\langle t_{\A} \rangle-\beta}{\beta}-\log\left(\frac{\gamma+\langle t_{\A} \rangle}{\beta}\right)\right)\right ). 
\]
\end{theoA}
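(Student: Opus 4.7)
The plan is to follow the sketch already indicated in the excerpt: bound the moment generating function of $T_\A(k)$ by that of a sum $Z$ of $k$ i.i.d.\ exponential random variables with parameter $\beta^{-1}$, then apply Chernoff and optimize over $u$. The key reduction is to show that for every starting state $y\in E$ the interarrival time $s$ (a matrix-exponential random variable with rate matrix $-\L_\infty$) satisfies $\E_y[e^{us}]\le (1-u\beta)^{-1}$ for $0\le u<\beta^{-1}$.

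First I would write $T_\A(k)=\sum_{i=1}^k s_i$ and recall that, conditional on the jump chain $Y=(y_0,y_1,\dots)$ with transition matrix $\Q=-\L_\infty^{-1}\W_1$, the interarrival times $s_i$ are independent, with $s_i$ distributed as a matrix-exponential random variable starting from $y_{i-1}$. Since $-\L_\infty \Iv=\W_1\Iv$, its density is $t\mapsto(e^{t\L_\infty}\W_1\Iv)(y_{i-1})$. Integrating by parts (using $\overline{\lambda}>0$ to kill the boundary term at $+\infty$) gives the moment formula
\[
\E_y[s^j]=j!\,\bigl\langle \delta_y,(-\L_\infty^{-1})^j\Iv\bigr\rangle,\qquad j\ge 1.
\]

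Next I would exploit that $R:=-\L_\infty^{-1}=\int_0^{+\infty}e^{t\L_\infty}\,dt$ is positivity preserving. Because $R$ preserves positivity, Theorem~\ref{th:RD} of Appendix~\ref{app:technical} (invoked in the excerpt to rewrite $\beta=\|\L_\infty^{-1}\|_{\infty\to\infty}=\|\L_\infty^{-1}\Iv\|_\infty$) yields $\|R^j\|_{\infty\to\infty}\le \|R\|_{\infty\to\infty}^j=\beta^j$, so $\E_y[s^j]\le j!\,\beta^j$ uniformly in $y$. Summing the Taylor series and using Lemma~\ref{lem:tech1}(4) to guarantee $\beta^{-1}\le\overline{\lambda}$ gives
\[
\E_y[e^{us}]\le \sum_{j\ge 0}(u\beta)^j=\frac{1}{1-u\beta},\qquad 0\le u<\beta^{-1},\ y\in E.
\]
Conditioning on $(y_0,\dots,y_{k-1})$ and using conditional independence of the $s_i$'s therefore yields
\[
\E_\nu[e^{uT_\A(k)}]=\E_\nu\!\left[\prod_{i=1}^k\E_{y_{i-1}}[e^{us_i}]\right]\le (1-u\beta)^{-k}.
\]

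Finally I would apply the Chernoff bound: for any $u\in[0,\beta^{-1})$,
\[
\PP_\nu\!\left(\tfrac{T_\A(k)}{k}\ge \langle t_\A\rangle+\gamma\right)\le \exp\!\left(-k\bigl[u(\langle t_\A\rangle+\gamma)+\log(1-u\beta)\bigr]\right),
\]
and optimize in $u$. The unique maximizer $u^{\star}=\beta^{-1}-(\langle t_\A\rangle+\gamma)^{-1}$ lies in $[0,\beta^{-1})$ precisely when $\gamma>\beta-\langle t_\A\rangle$, and substituting produces the rate function $(\gamma+\langle t_\A\rangle-\beta)/\beta-\log\bigl((\gamma+\langle t_\A\rangle)/\beta\bigr)$ of the statement. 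The only nontrivial step is the uniform moment bound $\E_y[s^j]\le j!\beta^j$: it is exactly here that the positivity of $-\L_\infty^{-1}$ and the identification $\beta=\|{-}\L_\infty^{-1}\|_{\infty\to\infty}$ are used, reducing the matrix-exponential random variable to a scalar exponential of worst-case mean $\beta$. All remaining steps (conditioning, Chernoff, calculus) are standard.
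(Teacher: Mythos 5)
Your overall plan is sound and, notably, it fleshes out the heuristic sketch that appears in the main text right after the theorem; the paper's own appendix proof is a bit more direct — it writes $\E_\nu[e^{uT_\A(k)}]=\langle\nu,\Q_u^k\Iv\rangle$ with $\Q_u=\sum_i u^i(-\L_\infty^{-1})^i\Q$ and estimates $\|\Q_u\|_{\infty\to\infty}\le\sum_i u^i\beta^i=(1-u\beta)^{-1}$ using submultiplicativity of the norm and $\|\Q\|_{\infty\to\infty}=1$, then applies Chernoff. Your moment calculation $\E_y[s^j]=j!\langle\delta_y,(-\L_\infty^{-1})^j\Iv\rangle\le j!\beta^j$ and the resulting one-state bound $\E_y[e^{us}]\le(1-u\beta)^{-1}$ are correct, and they encode exactly the same estimate in probabilistic language.

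There is, however, a genuine (though fixable) gap in the conditioning step. Conditional on the full jump chain $(y_0,\dots,y_k)$, the $s_i$ are indeed independent, but $s_i$ is distributed according to its conditional law given \emph{both} $y_{i-1}$ and $y_i$, not the marginal matrix-exponential law given $y_{i-1}$ alone — the jump index $y_i$ and the waiting time $s_i$ are dependent given $y_{i-1}$. Consequently the asserted identity
\[
\E_\nu[e^{uT_\A(k)}]=\E_\nu\!\left[\prod_{i=1}^k\E_{y_{i-1}}[e^{us_i}]\right]
\]
is false in general (already for $k=2$: the left side is $\langle\nu,\Q_u(\Q_u\Iv)\rangle$ whereas the right side is $\langle\nu,(\Q_u\Iv)\cdot\Q(\Q_u\Iv)\rangle$, and $\Q_u g\ne g\cdot\Q g$ pointwise). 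The \emph{inequality} you want nonetheless survives and can be obtained either by iterating the tower property with the filtration $\mathcal{F}_{i}=\sigma(y_0,s_1,\dots,s_i,y_i)$, i.e.\ $\E[e^{us_k}\mid\mathcal{F}_{k-1}]=\E_{y_{k-1}}[e^{us}]\le\sup_y\E_y[e^{us}]$ and peeling off one factor at a time, or more simply by noting that $\Q_u$ is positivity preserving so $\|\Q_u^k\Iv\|_\infty\le\|\Q_u\Iv\|_\infty^k=(\sup_y\E_y[e^{us}])^k$ by Russo–Dye and submultiplicativity — which is precisely what the paper does. Replacing the false equality by one of these two arguments makes your proof complete, and the rest (Chernoff, the optimizer $u^\star$, the domain condition $\gamma>\beta-\langle t_\A\rangle$) is correct as written.
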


\begin{proof}The proof begins with the same procedure as Theorems \ref{theo:dynActC} and \ref{theo:dynActQ}, but differs in that we do not use the $L_2(\pi)$ Hilbert space.
We again begin by applying Chernoff bound:
\begin{equation}\label{countObsChern}
\PP_\nu\left(\frac{T_{\A}(k)}{k}\geq\langle t_{\A} \rangle+\gamma\right)\leq e^{-uk(\langle t_{\A} \rangle+\gamma)}\E_\nu[e^{uT_{\A}(k)}], \quad u>0.
\end{equation}
The next step is to upper bound the Laplace transform: for $0 \leq u < \|\L_\infty^{-1}\|^{-1}_{\infty\rightarrow \infty}$ one has
\[
\begin{split}
    \E_\nu[e^{uT_{\A}(k)}]&=\left\langle \nu,\left(\frac{\I}{\I+\frac{u}{\L_\infty}} \Q\right)^k\Iv\right\rangle\\
    &=\left\langle \nu,\left(\sum_{i=0}^\infty u^i\left(-\frac{\I}{\L_\infty}\right)^i\Q \right)^k\Iv\right\rangle\\
    &\leq\underbrace{\|\nu\|_{1}}_{=1}\left(\sum_{i=0}^\infty u^i\left\|\frac{\I}{\L_{\infty}}\right\|_{\infty\rightarrow \infty}^i\right)^k\underbrace{\|\Iv\|_{\infty}}_{=1}.
\end{split}
\]
If we denote $\beta:=\left \|\L_{\infty*}^{-1}\right \|_{1\rightarrow 1}$, then for $u<\frac{1}{\beta}$:
\begin{equation}\label{countObsMgfUB}
\E_\nu[e^{uT_{\A}(k)}]\leq \left(\frac{1}{1-\beta u}\right)^k = \exp\left(-k\log(1-\beta u)\right).
\end{equation}
Placing this back into equation \eqref{countObsChern} we get:
\begin{equation}\label{mgfCountObsNonOpt}
\PP_\nu\left(\frac{T_{\A}(k)}{k}\geq\langle t_{\A} \rangle+\gamma\right)\leq\exp\left(-k\left(u(\gamma + \langle t_{\A} \rangle)+\log(1-\beta u)\right)\right), \quad 0 \leq u <\beta^{-1}.
\end{equation}
The minimum of the RHS is achieved at
\[
u^*=\frac{1}{\beta}-\frac{1}{\gamma+\langle t_{\A} \rangle}=\frac{\gamma+\langle t_{\A} \rangle-\beta}{\beta(\gamma + \langle t_{\A} \rangle)}.
\]
We have $u^*<\beta^{-1}$, and  $u^*>0$ %which is smaller than $\beta^{-1}$ 
if 
$\gamma>\beta-\langle t_{\A} \rangle$. Substituting $u^*$ into equation \eqref{mgfCountObsNonOpt} gives the final result:
\[
\PP_\nu \left (\frac{T_{\A}(k)}{k} \geq \langle t_{\A} \rangle + \gamma \right) \leq \exp \left ( -k \left(\frac{\gamma+\langle t_{\A} \rangle-\beta}{\beta}-\log\left(\frac{\gamma+\langle t_{\A} \rangle}{\beta}\right)\right)\right ).
\]
This concludes the proof of Theorem \ref{theo:countObs}.
\end{proof}
\begin{coroA}
Given any non-empty set of jumps $\A$, the variance of the corresponding first passage time at stationarity is bounded from above by:
\[
\frac{{\rm var}_\varphi(T_{\A}(k))}{k}\leq \left ( 1+ \frac{2}{\tilde{\varepsilon}}\right )\beta^2,
\] 
where
$$\tilde{\varepsilon}:=1-\max\{\|\Q f\|_\infty: \|f\|_\infty=1, \,\langle\varphi,f \rangle=0\}.$$ 
\end{coroA}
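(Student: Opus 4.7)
The plan is to use the explicit formula for ${\rm var}_\varphi(T_\A(k))$ supplied by Lemma~\ref{lem:asympVar}, then estimate its two qualitatively different contributions separately. Introduce $\mathbf{G}:=-\L_\infty^{-1}$, which by Lemma~\ref{lem:tech1} is positivity-preserving with $\|\mathbf{G}\|_{\infty\to\infty}=\beta$, and set
\[
g := \mathbf{G}\Iv, \qquad h := 2\mathbf{G}^2\Iv = 2\mathbf{G}g, \qquad \tilde g := g - \langle\varphi,g\rangle\,\Iv.
\]
Here $g(x)=\E_{\delta_x}[s]$ and $h(x)=\E_{\delta_x}[s^2]$ are the conditional first and second moments of a single interarrival time from $x$, and $\tilde g$ is the $\varphi$-centred version of $g$. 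Conditioning on the embedded chain $(y_i)_{i\geq 0}$, which under $\PP_\varphi$ is stationary with law $\varphi$ and transition operator $\Q$, the tower property yields ${\rm cov}_\varphi(s_i,s_{i+n}) = \langle\varphi,\tilde g\,\Q^n\tilde g\rangle$ for $n\geq 1$, so
\[
\frac{{\rm var}_\varphi(T_\A(k))}{k} = {\rm var}_\varphi(s) + \frac{2}{k}\sum_{n=1}^{k-1}(k-n)\,\langle\varphi,\tilde g\,\Q^n\tilde g\rangle, \qquad {\rm var}_\varphi(s) = \langle\varphi,h\rangle - \langle t_\A\rangle^2.
\]

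For the diagonal part, note that $g\leq \beta\Iv$ pointwise and that $\mathbf{G}$ is positivity-preserving with $\mathbf{G}\Iv=g$, so $h = 2\mathbf{G}g \leq 2\beta\,\mathbf{G}\Iv = 2\beta g$. Integrating against $\varphi$ gives $\langle\varphi,h\rangle\leq 2\beta\langle t_\A\rangle$ and hence
\[
{\rm var}_\varphi(s) \leq 2\beta\langle t_\A\rangle - \langle t_\A\rangle^2 \leq \beta^2,
\]
because $\langle t_\A\rangle\in[0,\beta]$ and $t\mapsto 2\beta t - t^2$ is non-decreasing on $[0,\beta]$ with maximum $\beta^2$. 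For the covariance part, the definition of $\tilde\varepsilon$ together with $\varphi\Q=\varphi$ gives, by induction, $\|\Q^n\tilde g\|_\infty\leq(1-\tilde\varepsilon)^n\|\tilde g\|_\infty$. Since $g(x)\in[0,\beta]$ and $\langle\varphi,g\rangle\in[0,\beta]$, we also have $\|\tilde g\|_\infty\leq\beta$, so the $L^1(\varphi)$-$L^\infty$ H\"older inequality yields $|\langle\varphi,\tilde g\,\Q^n\tilde g\rangle|\leq(1-\tilde\varepsilon)^n\beta^2$. Summing with $(k-n)/k\leq 1$ and using $\sum_{n\geq 1}(1-\tilde\varepsilon)^n\leq 1/\tilde\varepsilon$ bounds the covariance sum by $2\beta^2/\tilde\varepsilon$, and combined with the diagonal estimate this delivers ${\rm var}_\varphi(T_\A(k))/k \leq (1+2/\tilde\varepsilon)\beta^2$, as claimed.

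The main obstacle is the sharp bound ${\rm var}_\varphi(s)\leq \beta^2$. The naive estimate $\langle\varphi,h\rangle\leq\|h\|_\infty\leq 2\beta^2$ would give only ${\rm var}_\varphi(s)\leq 2\beta^2$, producing a final constant $(2+2/\tilde\varepsilon)\beta^2$ instead of $(1+2/\tilde\varepsilon)\beta^2$. The improvement relies on the structural identity $h=2\mathbf{G}g$, i.e., the conditional second moment arises from one further application of the positive map $\mathbf{G}$ to the conditional mean $g$, combined with the pointwise domination $g\leq\beta\Iv$; this is the input specific to first-passage problems.
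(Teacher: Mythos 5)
Your decomposition of ${\rm var}_\varphi(T_\A(k))/k$ into the variance of a single interarrival time plus a covariance sum, each controlled by $\beta$ and $\tilde\varepsilon$, follows the same overall route as the paper's proof via Lemma~\ref{lem:asympVar}, and your final constant is correct. There is, however, a genuine error in the covariance identity: you assert $\E[s_i\mid Y]=g(y_{i-1})$ and hence ${\rm cov}_\varphi(s_i,s_{i+n})=\langle\varphi,\tilde g\,\Q^n\tilde g\rangle$, but for general $\A$ the excursion duration $s_i$ from $y_{i-1}$ to the next $\A$-jump is \emph{not} conditionally independent of the excursion endpoint $y_i$; that independence holds only when $\A=\EE$, where a single exponential holding time is drawn independently of the jump destination. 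Correctly, $\E[s_i\mid Y]=\E[s_i\mid y_{i-1},y_i]$, which depends on $y_i$ as well. Iterating the Markov property instead (condition first on everything up to $T_\A(i)$, which gives $\E[s_{i+n}\mid\cdot]=(\Q^{n-1}g)(y_i)$, then on $y_{i-1}$ using $\E_{\delta_x}[s_1 f(y_1)]=(\mathbf{G}\Q f)(x)$ with $\mathbf{G}:=-\L_\infty^{-1}$) yields ${\rm cov}_\varphi(s_i,s_{i+n})=\langle\varphi,\mathbf{G}\,\Q^n\tilde g\rangle$, which is also exactly what Lemma~\ref{lem:asympVar} gives. Your numerical bound survives this correction only by a coincidence of norms: replacing $\|\tilde g\|_\infty\le\beta$ by $\|\mathbf{G}\|_{\infty\to\infty}=\beta$ in the H\"older step reproduces $|\langle\varphi,\mathbf{G}\Q^n\tilde g\rangle|\le(1-\tilde\varepsilon)^n\beta^2$, so the argument is repairable with essentially no change.

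Modulo that repair, your treatment of the diagonal term is genuinely sharper than the paper's. The pointwise domination $h=2\mathbf{G}g\le 2\beta g$ gives ${\rm var}_\varphi(s)\le 2\beta\langle t_\A\rangle-\langle t_\A\rangle^2\le\beta^2$, whereas the paper rewrites the same quantity as $\langle\varphi,\L_\infty^{-1}\Iv\rangle^2+2\langle\varphi,\L_\infty^{-1}(\I-\Pi_\varphi)\L_\infty^{-1}\Iv\rangle$ and bounds it crudely by $3\beta^2$. The paper still lands on $1+2/\tilde\varepsilon$ because it uses the exact geometric sum $\sum_{n\ge1}(1-\tilde\varepsilon)^n=(1-\tilde\varepsilon)/\tilde\varepsilon$, while you round up to $1/\tilde\varepsilon$; the two sources of slack cancel precisely. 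Combining your diagonal estimate with the exact geometric sum would in fact give the slightly stronger constant $\tfrac{2-\tilde\varepsilon}{\tilde\varepsilon}\,\beta^2$, of which $(1+2/\tilde\varepsilon)\beta^2$ is a relaxation.
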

\begin{proof}
From the proof of Lemma \ref{lem:asympVar} one can see that
\[
\begin{split}
\frac{{\rm var}_{\varphi}[T_\A(k)]}{k} &= \left\langle \varphi, \L_\infty^{-2}\Iv \right\rangle^2  + 2\left\langle \varphi, \L_\infty^{-1}(\I-\Pi_\varphi)\L_\infty^{-1}\Iv \right\rangle\\
&+\frac{2}{k} \left \langle \varphi, \L_\infty^{-1}\sum_{i=2}^k\sum_{j=1}^{i-1}\Q^j (\I -\Pi_\varphi)\L_\infty^{-1} \Iv \right \rangle\\
&\leq \left (1+2\left ( 1+\frac{1}{k}\sum_{i=2}^k\sum_{j=1}^{i-1}(1-\tilde{\varepsilon})^j\right ) \right )\beta^2\\
&=\left (1+\frac{2}{\tilde{\varepsilon}} \right )\beta^2 -\frac{2((1-\tilde{\varepsilon})-(1-\tilde{\varepsilon})^{k+1})}{k \tilde{\varepsilon}^2}\beta^2\\
& \leq \left (1+\frac{2}{\tilde{\varepsilon}} \right )\beta^2.
\end{split}\]
\end{proof}
\begin{coroA}[Simple Upper Bound on $\beta$]
For general counting observables, the norm $\beta:=\left\|\L_{\infty}^{-1}\right\|_{\infty\rightarrow \infty}$ is bounded from above by:
\[
\beta \leq c_c\tilde{k}\max_{(x,y) \notin \A} \left \{ \frac{R_{x}}{w_{xy}}\right \}^{\tilde{k}-1}\max_{(x,y) \in \A} \left \{ \frac{R_{x}}{w_{xy}}\right \}\leq c_c\tilde{k}\left (\frac{R_{\rm max}}{w_{\rm min}}\right)^{\tilde{k}}=:\tilde{\beta},
\]
with $c_c, R_{\rm max}, w_{\rm min}$ defined in Eqs. \eqref{eq:cc}, \eqref{eq:Rmax}, and respectively \eqref{eq:wmin}.
The concentration bound in Theorem \ref{theo:countObs} holds with $\beta$ replaced by any of the two upper bounds above.
\end{coroA}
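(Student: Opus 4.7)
The plan is to exploit the probabilistic identity $\beta = \max_{\nu}\mathbb{E}_\nu[T_\A(1)] = \max_{x \in E}\mathbb{E}_{\delta_x}[T_\A(1)]$ established earlier in Section~\ref{sec:countObs}, and to bound each expectation by the product of a worst-case holding time and a geometric estimate on the number of embedded jumps needed to first observe an event in $\A$.

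First, I would decompose $T_\A(1) = \sum_{j=1}^{N_\A}\tau_j$, where $N_\A$ is the number of jumps in the embedded discrete-time chain $\P = \RR^{-1}\W$ up to and including the first one lying in $\A$, and $\tau_j$ is the $j$-th holding time. Conditional on $X_{j-1}=y$ the variable $\tau_j$ is exponential of parameter $R_y$, while $\{N_\A\geq j\}$ is measurable with respect to $(X_0,\dots,X_{j-1})$; conditioning and using $R_y^{-1}\leq c_c$ gives
\begin{equation*}
\mathbb{E}_{\delta_x}[T_\A(1)] = \sum_{j\geq 1}\mathbb{E}_{\delta_x}\!\left[R_{X_{j-1}}^{-1}\mathbf{1}_{\{N_\A\geq j\}}\right]\leq c_c\,\mathbb{E}_{\delta_x}[N_\A].
\end{equation*}

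Next, I would convert the minimax jump distance $\tilde{k}$ into a uniform lower bound on the probability that the embedded chain reaches $\A$ within $\tilde{k}$ steps. The definition furnishes, for every $x\in E$, a path $x=x_0\to x_1\to\cdots\to x_l$ with $l\leq\tilde{k}$ and $(x_{l-1},x_l)\in\A$; if some intermediate edge were already in $\A$ we could replace the path by its shortest prefix ending in $\A$, so without loss of generality the first $l-1$ edges all lie in $\A^C$. The probability the embedded chain follows this particular path equals
\begin{equation*}
\prod_{j=0}^{l-2}\frac{w_{x_j x_{j+1}}}{R_{x_j}}\cdot\frac{w_{x_{l-1} x_l}}{R_{x_{l-1}}}
\;\geq\; \min_{(i,j)\notin\A,\,w_{ij}>0}\!\left(\frac{w_{ij}}{R_i}\right)^{\tilde{k}-1}\,\min_{(i,j)\in\A}\!\left(\frac{w_{ij}}{R_i}\right)\;=:\;p,
\end{equation*}
where I used $w_{ij}/R_i\leq 1$ to inflate the exponent of the first factor from $l-1$ to $\tilde{k}-1$. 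Thus $\mathbb{P}_{\delta_x}(N_\A\leq\tilde{k})\geq p$ uniformly in $x$. The strong Markov property applied at times $\tilde{k}, 2\tilde{k},\dots$ then gives $\mathbb{P}_{\delta_x}(N_\A> j\tilde{k})\leq(1-p)^j$, and summing the tail yields $\mathbb{E}_{\delta_x}[N_\A]\leq\tilde{k}/p$. Combining with the first step produces the first inequality of the corollary, and the second inequality follows from $R_i/w_{ij}\leq R_{\max}/w_{\min}$ together with $\tilde{k}\geq 1$.

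For the final claim, the only place where $\beta$ enters the proof of Theorem~\ref{theo:countObs} is through the moment generating function estimate $\mathbb{E}_\nu[e^{uT_\A(k)}]\leq(1-\beta u)^{-k}$, valid on $u\in[0,\beta^{-1})$; since $(1-\beta u)^{-k}$ is monotone increasing in $\beta$ on this range, replacing $\beta$ by any $\tilde\beta\geq\beta$ yields a valid (weaker) bound on the smaller range $[0,\tilde\beta^{-1})$, and the subsequent Chernoff optimisation carries through unchanged with $\beta$ replaced by $\tilde\beta$. The main technical care is in the second paragraph: one must be careful to split the edge factors according to whether an edge belongs to $\A$ or not and to justify the shortest-prefix reduction; beyond that the argument is routine Markov chain bookkeeping.
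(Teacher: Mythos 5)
Your proof is correct, and it takes a genuinely different route from the paper's. The paper bounds $\beta=\|\L_\infty^{-1}\|_{\infty\to\infty}$ purely operator-theoretically: it expands $-\L_\infty^{-1}=\sum_{l\geq 0}(\RR^{-1}\W_2)^l\RR^{-1}$, splits the geometric series into blocks of length $\tilde k$, bounds $\|(\RR^{-1}\W_2)^{\tilde k}\|$ by exhibiting a fixed path in the graph that certifies a uniform amount of ``leakage'' into $\A$, and sums the resulting geometric series. You instead start from the probabilistic identity $\beta=\max_x\mathbb{E}_{\delta_x}[T_\A(1)]$ (which the paper establishes right before Theorem~\ref{theo:countObs}), write $T_\A(1)=\sum_{j=1}^{N_\A}\tau_j$ for the embedded jump chain, peel off the holding times via conditioning to get $\mathbb{E}_{\delta_x}[T_\A(1)]\leq c_c\,\mathbb{E}_{\delta_x}[N_\A]$, and then bound $\mathbb{E}_{\delta_x}[N_\A]\leq\tilde k/p$ by a uniform success probability $p$ over blocks of $\tilde k$ embedded steps; the quantity $p$ you obtain is identical to the paper's $1-q$, so the two proofs yield the same final constants. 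Your argument is essentially the stochastic translation of the paper's operator manipulation, and it buys a more transparent interpretation (Wald-type decomposition plus geometric-trials bound on the number of embedded steps) at the cost of a few bookkeeping steps (the WLOG shortest-prefix reduction, the exponent inflation using $w_{ij}/R_i\leq 1$), all of which you handle correctly. One small remark: since you apply the Markov property at the deterministic times $\tilde k,2\tilde k,\dots$ of the embedded chain, the ordinary Markov property suffices; invoking the strong Markov property is harmless but unnecessary. The final paragraph on monotonicity of $(1-\beta u)^{-k}$ in $\beta$ correctly justifies replacing $\beta$ by any larger $\tilde\beta$ in the concentration bound.
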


\begin{proof} 
From the expression \eqref{eq:Linfty.inverse} for $-\L_\infty^{-1}$  we obtain
\[
-\frac{\I}{\L_{\infty}}= \sum_{k=0}^\infty \left(\frac{\I}{\RR}\W_{2}\right)^k\frac{\I}{\RR}.
\]
Let $\tilde{k}>0$ be the minimax jump distance as it has been defined before Corollary \ref{countObsBeta}, i.e. the maximum over all states of the minimum number of jumps which suffice to get from that stat to a final jump between states in $\A$. The previous sum can be written in the following way:
\[
\sum_{l=0}^\infty\left(\frac{\I}{\RR}\W_{2}\right)^l=\sum_{m=0}^{\tilde{k}-1}\left(\frac{\I}{\RR}\W_{2}\right)^m\sum_{n=0}^\infty\left(\frac{\I}{\RR}\W_{2}\right)^{\tilde{k}n},
\]
where we break up $l$ into multiples of $\tilde{k}$ and a remainder term, since $\mathbb{N}_0=\cup_{m=0}^{\tilde{k}-1}m+\tilde{k}\mathbb{N}_0$. 
We can upper bound as
\[
\beta=\left \|\frac{\I}{\L_{\infty*}}\right \|_{\infty\rightarrow \infty}\leq\left \|\frac{\I}{\RR}\right \|_{\infty\rightarrow \infty}\sum_{m=0}^{\tilde{k}-1}\left \|\frac{\I}{\RR}\W_{2}\right \|_{\infty\rightarrow \infty}^m\sum_{n=0}^\infty\left\|\left(\frac{\I}{\RR}\W_{2}\right)^{\tilde{k}}\right\|_{\infty\rightarrow \infty}^n.
\]
Notice that $\left\|\RR^{-1}\right\|_{\infty\rightarrow \infty}=c_c$.
As $\A$ is non-empty, the spectral radius of $ \RR^{-1}\W_{2}$ is strictly less than 1 (see Lemma \ref{lem:tech1} ii)); therefore, there exists a $k$ such that $\left\|\left(\RR^{-1}\W_{2}\right)^{k}\right\|_{1\rightarrow 1}\leq q < 1$. We will now show that $k$ can be taken equal to $\tilde{k}$; in this case, we can write the upper bound of $\beta$ in terms of $q$ as:
\begin{equation}\label{betaEq}
\beta \leq c_c\tilde{k}\frac{1}{1-q}.
\end{equation}
So we just need to find $q$. Note that for any matrix ${\bf G}$ with positive entries, one has
$\|{\bf G}\|_{1\rightarrow 1} = \|G \Iv\|_\infty=\max_{x \in E}\langle \delta_x,G \Iv \rangle$. Therefore
\begin{equation}
\label{eq:norm11}
\left\|\left(\frac{\I}{\RR}\W_{2}\right)^{\tilde{k}}\right\|_{1\rightarrow 1} = \left \langle \delta_{x_0},\left(\frac{\I}{\RR}\W_{2}\right)^{\tilde{k}}\Iv\right \rangle,
\end{equation}
where $x_{0}$ is the state which attains the norm. From the structure of the generator, we know  that
\begin{equation}\label{w2xyEq}
\frac{\I}{\RR} \W_2\Iv=\Iv-\frac{\I}{\RR} \W_1\Iv.
\end{equation}
By the definition of the minimax jump distance, we know that there exists a path $x_0,\dots,x_l$ that happens with positive probability and such that $(x_{l-1},x_{l}) \in \A$. We can rewrite the right side of \eqref{eq:norm11} as
\[\begin{split}
&\left \langle \delta_{x_0},\left(\frac{\I}{\RR}\W_{2}\right)^{\tilde{k}}\Iv\right \rangle \leq \left \langle \delta_{x_0},\left(\frac{\I}{\RR}\W_{2}\right)^{l}\Iv\right \rangle =\\
&\left \langle \delta_{x_0},\left(\frac{\I}{\RR}\W_{2}\right)^{l-1}\Iv\right \rangle -\left \langle \delta_{x_0},\left(\frac{\I}{\RR}\W_{2}\right)^{l-1}\left(\frac{\I}{\RR}\W_{1}\right)\Iv\right \rangle=:g_+-g_{-}. 
\end{split}\]
In the first inequality we used the fact that $\mathbf{R}^{-1}\W_1$ is submarkovian, while in the second equality we made use of Eq. \eqref{w2xyEq}.
It is easy to see that $g_+ \leq 1$. Moreover, we know that
\[\begin{split}
    g_- &\geq \frac{w_{x_0x_1}}{R_{x_0}} \cdots \frac{w_{x_{l-1}x_{l}}}{R_{x_{l-1}}} \geq \min_{(x,y) \notin \A} \left \{ \frac{w_{xy}}{R_{x}}\right \}^{\tilde{k}-1}\min_{(x,y) \in \A} \left \{ \frac{w_{xy}}{R_{x}}\right \} \\
    &\geq \min_{(x,y) \in \EE}\left \{ \frac{w_{xy}}{R_{x}}\right \}^{\tilde{k}}\geq \left (\frac{w_{\rm min}}{R_{\max}} \right )^{\tilde{k}}.
\end{split}\]
Hence we can take
$$q=1-\min_{(x,y) \notin \A} \left \{ \frac{w_{xy}}{R_{x}}\right \}^{\tilde{k}-1}\min_{(x,y) \in \A} \left \{ \frac{w_{xy}}{R_{x}}\right \} \leq 1-\left(\frac{w_{\rm min}}{R_{\rm max}}\right)^{\tilde{k}}.$$ Applying this to equation \eqref{betaEq}, we get that
$$\beta \leq c_c\tilde{k}\max_{(x,y) \notin \A} \left \{ \frac{R_{x}}{w_{xy}}\right \}^{\tilde{k}-1}\max_{(x,y) \in \A} \left \{ \frac{R_{x}}{w_{xy}}\right \} \leq c_c\tilde{k}\left(\frac{R_{\rm max}}{w_{\rm min}}\right)^{\tilde{k}}.$$
\end{proof}

%%%%%%%%%%%%%%%%%%%%%%%
%Computation of variance (classical)
%%%%%%%%%%%%%%%%%%%%%%%
\begin{lemma}[Variance of Classical FPTs]   \label{lem:asympVar}
Let $\varphi$ be the invariant measure of $\Q$, cf. Eq \eqref{eq:invmeas}, and let $\Pi_\varphi$ be the map $\Pi_\varphi: f\mapsto \langle\varphi,f\rangle \Iv $. The variance of the first passage time for counting observables is given by:
    \begin{equation}
    \begin{split}
    \frac{{\rm var}_\varphi\left( T_{\A}(k)\right)}{k} = \left \langle \varphi, \L_\infty^{-1} \Iv \right\rangle ^2 &+ 2 \left \langle \varphi, \L_\infty^{-1}\frac{\I}{\I - \Q}(\I -\Pi_\varphi)\L_\infty^{-1}\Iv\right\rangle\\
    &- \frac{2}{k}\left\langle \varphi,\L_\infty^{-1} \frac{\Q - \Q^{k+1}}{(\I - \Q)^2}(\I -\Pi_\varphi)\L_\infty^{-1} \Iv \right\rangle, \quad \forall k \geq 0.
    \end{split}
    \end{equation}
\end{lemma}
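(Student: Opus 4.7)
The plan is to read off $\mathbb{E}_\varphi[T_\A(k)^2]$ from the moment generating function of Lemma \ref{lem:tech1}(3) by differentiating twice at $u=0$. Writing $B(u):=\L_\infty(u+\L_\infty)^{-1}\Q=\Q-u(u+\L_\infty)^{-1}\Q$ one has $B(0)=\Q$, $B'(0)=-\L_\infty^{-1}\Q$ and $B''(0)=2\L_\infty^{-2}\Q$. Differentiating $\langle\varphi,B(u)^k\Iv\rangle$ once at $u=0$ and absorbing the outer powers of $\Q$ with the stationarity identities $\Q\Iv=\Iv$ and $\Q_*\varphi=\varphi$ recovers $\mathbb{E}_\varphi[T_\A(k)]=-k\langle\varphi,\L_\infty^{-1}\Iv\rangle$.

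Applying the Leibniz rule to the second derivative of $B(u)^k$ at $u=0$ produces a diagonal family $\sum_{j=0}^{k-1}\Q^j B''(0)\Q^{k-1-j}$ and an off-diagonal pair family $2\sum_{0\le j<l\le k-1}\Q^j B'(0)\Q^{l-j-1}B'(0)\Q^{k-1-l}$. Sandwiching with $\varphi$ and $\Iv$ and once more using the two stationarity identities to erase the outer powers of $\Q$ yields
\begin{equation*}
\mathbb{E}_\varphi[T_\A(k)^2]=2k\,\langle\varphi,\L_\infty^{-2}\Iv\rangle+2\sum_{m=1}^{k-1}(k-m)\langle\varphi,\L_\infty^{-1}\Q^m\L_\infty^{-1}\Iv\rangle,
\end{equation*}
where $m=l-j$ and the multiplicity $k-m$ counts the admissible positions of the $B'$-pair. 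Subtracting $\mathbb{E}_\varphi[T_\A(k)]^2=k^2\langle\varphi,\L_\infty^{-1}\Iv\rangle^2$ and decomposing every $\Q^m$ (treating the first term above as the $m=0$ contribution with $\Q^0=\I$) through $\Q^m=\Q^m(\I-\Pi_\varphi)+\Pi_\varphi$, I use $\Pi_\varphi\L_\infty^{-1}\Iv=\langle\varphi,\L_\infty^{-1}\Iv\rangle\Iv$ together with $\Q^m\Iv=\Iv$ to collect all $\Pi_\varphi$-pieces into the scalar $\langle\varphi,\L_\infty^{-1}\Iv\rangle^2$; their total coefficient is $2k+2\sum_{m=1}^{k-1}(k-m)-k^2=k$, which upon division by $k$ leaves exactly the first term of the claim.

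The remaining operator part, divided by $k$, reads $\frac{2}{k}\sum_{m=0}^{k-1}(k-m)\langle\varphi,\L_\infty^{-1}\Q^m(\I-\Pi_\varphi)\L_\infty^{-1}\Iv\rangle$, and to reach the stated form I apply the rational identity
\begin{equation*}
\sum_{m=0}^{k-1}(k-m)\Q^m=\frac{k\,\I}{\I-\Q}-\frac{\Q-\Q^{k+1}}{(\I-\Q)^2},
\end{equation*}
verified by multiplying through by $(\I-\Q)^2$ and observing that the resulting polynomial $k\I-(k+1)\Q+\Q^{k+1}$ vanishes to second order at $\Q=\I$, so that the apparent singularity cancels. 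The only subtlety is to interpret $(\I-\Q)^{\pm 1}$: the vector $\Iv$ is a $1$-eigenvector of $\Q$, but the whole expression is post-composed with $(\I-\Pi_\varphi)$, which projects away from $\mathbb{C}\Iv$. Under Hypothesis \ref{hypo:irrC} the stochastic matrix $\Q$ has $1$ as a simple eigenvalue with eigenspace $\mathbb{C}\Iv$, so $(\I-\Q)^{-1}(\I-\Pi_\varphi)=\sum_{m\ge 0}\Q^m(\I-\Pi_\varphi)$ is well defined on the complementary invariant subspace and the identity is rigorous. Combining the scalar and operator pieces concludes the proof.
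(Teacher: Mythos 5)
Your proof is correct and takes essentially the same route as the paper: both differentiate the MGF from Lemma~\ref{lem:tech1}(3) twice at $u=0$, use $\varphi\Q=\varphi$ and $\Q\Iv=\Iv$ to collapse the second derivative to $2\sum_{m=0}^{k-1}(k-m)\langle\varphi,\L_\infty^{-1}\Q^m\L_\infty^{-1}\Iv\rangle$, and then split off the $\Pi_\varphi$ projector before resumming the resulting arithmetic--geometric series. The only differences are presentational: you absorb the diagonal $B''$ contribution as the $m=0$ term and apply the closed-form identity for $\sum_{m=0}^{k-1}(k-m)\Q^m$ in one step, whereas the paper keeps the $2k\langle\varphi,\L_\infty^{-2}\Iv\rangle$ term separate and inserts $\Pi_\varphi+(\I-\Pi_\varphi)$ in two stages; both proofs rest on the same implicit hypothesis that $\I-\Q$ is invertible on the range of $\I-\Pi_\varphi$.
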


\begin{proof}
    We recall the explicit expression for the moment generating function from Lemma \ref{lem:tech1}
    \[
        \E_\varphi[e^{uT_{\A}(k)}]=\left \langle \varphi, \left(\frac{\L_\infty}{u+\L_\infty}\Q \right)^k\Iv\right \rangle.
    \]

    We can write the first moment as

    \begin{equation}\label{eq:mgfDeriv}
        \E_\varphi[e^{uT_\A(k)}]' = -\left \langle \varphi, \sum_{i=1}^k \left ( \frac{\L_\infty}{u + \L_\infty} \Q\right )^{i-1}\left ( \frac{\L_\infty}{(u + \L_\infty)^2} \Q\right )\left ( \frac{\L_\infty}{u + \L_\infty} \Q\right )^{k-i}\Iv \right \rangle.
    \end{equation}
    
    At $u = 0$ this gives us the form of the asymptotic mean.
    \begin{equation}\label{eq:asympMean}
        \E_\varphi[T_\A(k)] = -k\left \langle \varphi, \L_\infty^{-1}\Iv \right \rangle.
    \end{equation}

    Differentiating equation \eqref{eq:mgfDeriv} at $u=0$ gives us the second moment:
    \begin{equation*}
    \begin{split}
        \E_\varphi[T_\A(k)^2] &= 2 \left \langle \varphi, \L_\infty^{-2}\Iv \right \rangle k + \sum_{i=1}^k \left \langle \varphi, \L_\infty^{-1}\left ( \sum_{j=1}^{i-1}\Q^{i-j} + \sum_{j=1}^{k-i}\Q^j\right ) \L_\infty^{-1}\Iv \right \rangle.\\
        &= 2\left\langle \varphi, \L_\infty^{-2}\Iv \right\rangle k + 2 \left \langle \varphi, \L_\infty^{-1} \sum_{1 \leq j < i \leq k}\Q^j \L_\infty^{-1}\Iv \right \rangle\\
        &= 2\left\langle \varphi, \L_\infty^{-2}\Iv \right\rangle k + 2 \sum_{i=2}^k\sum_{j=1}^{i-1}\left \langle \varphi, \L_\infty^{-1}\Iv \right \rangle^2\\
        &\qquad\qquad\qquad\quad+ 2 \left \langle \varphi, \L_\infty^{-1}\sum_{i=2}^k\sum_{j=1}^{i-1}\Q^j (\I -\Pi_\varphi)\L_\infty^{-1} \Iv \right \rangle\\
    \end{split}
    \end{equation*}
    where to arrive at the third line, after $\Q^j$ we have inserted $\Pi_\varphi + \I - \Pi_\varphi$, where $\Pi_\varphi$ is the projection onto $\Iv$. Using the fact that $\sum_{i=2}^k\sum_{j=1}^{i-1}1 = \frac{k}{2}(k-1)$, and recalling equation \eqref{eq:asympMean} for the expression for the first moment, hence
    
    \begin{equation*}
        \begin{split}
        \frac{{\rm var}_\varphi\left ( T_{\A}(k)\right)}{k} = 2\left \langle \varphi, \L_\infty^{-2}\Iv\right \rangle &- \left \langle \varphi, \L_\infty^{-1}\Iv \right \rangle^2 + 2 \left \langle \varphi, \L_\infty^{-1}\frac{\Q}{\I - \Q}(\I -\Pi_\varphi)\L_\infty^{-1}\Iv \right \rangle\\
        &-\frac{2}{k} \left \langle \varphi,\L_\infty^{-1}\frac{\Q - \Q^{k+1}}{(\I - \Q)^2}(\I -\Pi_\varphi)\L_\infty^{-1}\Iv \right \rangle.
    \end{split}
    \end{equation*}
    Finally, we can again place $\Pi_\varphi + \I - \Pi_\varphi$ in the first term, in between the two $\L_\infty^{-1}$. Rearranging this gives the final result
    \begin{equation*}
        \begin{split}
        \frac{{\rm var}_\varphi\left ( T_{\A}(k)\right)}{k} = \left \langle \varphi, \L_\infty^{-1}\Iv\right \rangle^2 &  + 2 \left \langle \varphi, \L_\infty^{-1}\frac{\I}{\I - \Q}(\I -\Pi_\varphi)\L_\infty^{-1}\Iv \right \rangle\\
        &-\frac{2}{k} \left \langle \varphi,\L_\infty^{-1}\frac{\Q - \Q^{k+1}}{(\I - \Q)^2}(\I -\Pi_\varphi)\L_\infty^{-1}\Iv \right \rangle.
    \end{split}
    \end{equation*}
    
\end{proof}

%%%%%%%%%%%%%%%%%%%%%%%%%%%%%%%%%%%%%%%
\section{Quantum Markov Processes}\label{app:markovProofQ}

%%%%%%%%%%%%%%%%%%%%%%%%%%%%%%%%%%%%%%%
\subsection{Proof of Lemma \ref{lemma.statioarystate.generators}}
%%%%%%%%%%%%%%%%%%%%%

\begin{lemmaA}
The generator $\LL$ has a unique invariant state if and only if $\Phi$ does. If $\Phi$ is irreducible then $\mathcal{L}$ is irreducible, but the converse is generally not true.
\end{lemmaA}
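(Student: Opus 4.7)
The plan is to set up an explicit bijection between invariant states of $\LL_*$ and those of $\Phi_*$, and then use it to transfer uniqueness and (one direction of) irreducibility. The forward map is $\hat\sigma \mapsto \sigma := \J_*(\hat\sigma)/\tr(\J_*(\hat\sigma))$ and its inverse is $\sigma \mapsto \hat\sigma := -\LL_{0*}^{-1}(\sigma)/\tr(-\LL_{0*}^{-1}(\sigma))$; both are well defined as soon as $\LL_0$ is invertible, which is already needed to define $\Phi$. The algebraic checks that each map lands in the claimed fixed-point set follow immediately from the decomposition $\LL_* = \LL_{0*} + \J_*$ together with $\Phi_* = -\J_*\LL_{0*}^{-1}$: the stationary equation $\LL_*(\hat\sigma)=0$ rewrites as $\J_*(\hat\sigma) = -\LL_{0*}(\hat\sigma)$ which, inserted into $\Phi_*(\sigma) = -\J_*\LL_{0*}^{-1}(\sigma)$, yields $\Phi_*(\sigma) = \sigma$, and the reverse implication is analogous. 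This bijection transfers uniqueness in both directions, which is the content of the first sentence of the lemma.

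The only subtlety is that both maps must produce genuine states, and this is where the one bit of real work lies. Non-vanishing of the normalisation in the forward map follows because $\J_*(\hat\sigma)=0$ would force $\LL_{0*}(\hat\sigma)=0$ and hence $\hat\sigma=0$ by invertibility of $\LL_{0*}$. For the inverse map the key input is positivity of $-\LL_{0*}^{-1}$, which I will obtain from the integral representation
$$-\LL_{0*}^{-1}(\sigma) = \int_{0}^{+\infty} e^{t\LL_{0*}}(\sigma)\,dt$$
(cf.\ Lemma \ref{lem:tech1Q}): the integrand $e^{t\LL_{0*}}(\sigma) = e^{tG}\,\sigma\,e^{tG^*}$ is the completely positive image of $\sigma$ under a CP semigroup, so the integral is positive and has strictly positive trace whenever $\sigma \neq 0$, producing a bona fide state after normalisation.

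For the second sentence I will exploit the same integral representation once more. If $\Phi$ is irreducible then its unique invariant state $\sigma$ is strictly positive, and since $e^{tG}$ is invertible for every finite $t\geq 0$, the integrand $e^{tG}\sigma e^{tG^*}$ is strictly positive for every $t$; hence $\hat\sigma>0$, and combined with the uniqueness already established in the first part this shows that $\LL$ is irreducible. To see that the converse can fail, I would exhibit the two-level reset atom with $H = \Omega(\ket{0}\bra{1}+\ket{1}\bra{0})$ and a single jump operator $L = \sqrt{\gamma}\,\ket{0}\bra{1}$: the coherent driving makes $\LL$ irreducible (its unique stationary state has full rank), yet every jump produces $\ket{0}\bra{0}$, so $\J_*(\rho)$ is always a scalar multiple of $\ket{0}\bra{0}$, and therefore $\Phi_*$ admits $\ket{0}\bra{0}$ as its unique invariant state, which has a non-trivial kernel, so $\Phi$ is not irreducible. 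The only non-routine ingredient in the entire argument is the positivity of $-\LL_{0*}^{-1}$, and this is already packaged into Lemma \ref{lem:tech1Q}.
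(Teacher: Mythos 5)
Your proof is correct and takes essentially the same route as the paper: the same pair of maps $\hat\sigma\mapsto\J_*(\hat\sigma)$ and $\sigma\mapsto-\LL_{0*}^{-1}(\sigma)$, the same use of complete positivity of $-\LL_{0*}^{-1}$ via the integral representation, and the same observation that strict positivity of $\sigma$ propagates through $e^{tG}\sigma e^{tG^*}$. The two things you add beyond what the paper writes out—the explicit check that the two maps are mutually inverse (which is what actually lets uniqueness transfer, rather than just existence) and the two-level driven emitter as a concrete witness that the converse fails—are genuine but modest tightenings of a proof the paper leaves terse; the paper merely gestures at the converse failing by remarking that $\J_*(\nu)$ need not be full rank.
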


\begin{proof}
Indeed if  $\LL_*(\nu)=0$ for some state $\nu$ then 
$$
\Phi_*(\J_*(\nu)) = 
-\J_* \mathcal{L}_{0*}^{-1} (- \mathcal{L}_{0*} (\nu)) = \J_*(\nu) 
$$
so $\nu^\prime:= \J_*(\nu)/\tr [\J_*(\nu)]$ is a stationary state for $\Phi$. Notice that if $\nu$ is strictly positive, then this is not necessarily true for
$\nu^\prime $, depending on the form of the jump operators. Conversely, if $\Phi_* (\nu) = \nu$ for some state $\nu$ then 
$$
\mathcal{L}_* [\mathcal{L}_{0*}^{-1} (\nu)]= 
(\mathcal{L}_{0*}+ \J_*) [\mathcal{L}_{0*}^{-1} (\nu)] = \nu + 
\J_* \mathcal{L}_{0*}^{-1} (\nu) =
\nu -\Phi_*(\nu ) = 0
$$
so $\nu^\prime = \mathcal{L}_{0*}^{-1} (\nu)/ \tr[\mathcal{L}_{0*}^{-1} (\nu)]$ is a stationary state for $\LL$. Here we used the fact that $-\mathcal{L}^{-1}_{0} = \int_0^{\infty} e^{t\mathcal{L}_0}$ is completely positive. If $\Phi$ is irreducible then $\nu>0$ and $-\LL_{0*}^{-1}(\nu)>0$, therefore $\mathcal{L}$ is irreducible, and Hypothesis \ref{hypo:irrQuantPhi} implies Hypothesis \ref{hypo:irrQuantL}.

\end{proof}

%%%%%%%%%%%%%%%%%%%%%%%%%%%%%%%%%%%%%%%
\subsection{Proof of Lemma \ref{lem:tech1Q}}
%%%%%%%%%%%%%%%%%%%%%

\begin{lemmaA}
Assume that Hypothesis \ref{hypo:irrQuantL} ($\mathcal{L}$ is irreducible) holds. Then the following statements are true:
    \begin{enumerate}
        \item $\overline{\lambda}:= -\max\{\Re(z): z \in {\rm Sp}({\cal L}_\infty)\}>0$, hence ${\cal L}_\infty$ is invertible;
        \item for every $u < \overline{\lambda}$, one has
        \[
        \E_\rho[e^{uT_{\A}(k)}]=\tr\left ( \rho \left((u+\mathcal{L}_\infty)^{-1}\mathcal{L}_\infty\Psi \right)^k(\I)\right ),
        \]
        \item $\|{\cal L}_\infty^{-1}\|_{\infty \rightarrow \infty}^{-1} \leq \overline{\lambda}.$
    \end{enumerate}
\end{lemmaA}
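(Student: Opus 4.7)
The plan is to adapt the structure of the classical Lemma \ref{lem:tech1}, substituting the Perron--Frobenius theorem for nonnegative matrices with its analogue for positive completely positive semigroups on $M_d(\mathbb{C})$ (available via the Krein--Rutman theorem applied to the cone of positive semidefinite matrices). The heart of the argument lies in part 1; parts 2 and 3 then follow from a standard Dyson--Laplace calculation and the spectral mapping theorem, respectively.

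For part 1, my starting observation would be that $e^{t\LL_\infty}$ is a sub-unital completely positive semigroup, since
\[
\LL_\infty(\I) \;=\; \LL(\I)-\J_\A(\I) \;=\; -\sum_{i\in\A} L_i^* L_i \;\leq\; 0.
\]
The quantum Perron--Frobenius/Krein--Rutman theorem then guarantees that the spectral bound $-\overline{\lambda}$ is an eigenvalue of $\LL_\infty$ with a nonzero positive semidefinite eigen-operator $X\geq 0$. Supposing for contradiction that $\overline{\lambda}=0$, I would write $\LL(X)=\LL_\infty(X)+\J_\A(X)=\J_\A(X)\geq 0$ and combine the strict positivity of $\hat\sigma$ with
\[
0 \;=\; \tr(\LL_*(\hat\sigma)\,X) \;=\; \tr(\hat\sigma\,\LL(X)) \;=\; \tr(\hat\sigma\,\J_\A(X))
\]
to deduce $\J_\A(X)=0$, hence $\LL(X)=0$. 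Irreducibility of $\LL$ forces the Heisenberg-picture kernel to be spanned by $\I$, so $X=c\I$; then $0=\J_\A(c\I)=c\sum_{i\in\A} L_i^*L_i$, which together with the tacit assumption that at least one $L_i$ with $i\in\A$ is nonzero gives $c=0$, contradicting $X\neq 0$. Therefore $\overline{\lambda}>0$ and $\LL_\infty$ is invertible.

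For part 2, I would write out the Dyson-type joint density of the first $k$ inter-arrival times of jumps in $\A$,
\[
p(s_1,\dots,s_k) \;=\; \tr\!\bigl(\J_{\A*}\,e^{s_k\LL_{\infty*}}\J_{\A*}\cdots\J_{\A*}\,e^{s_1\LL_{\infty*}}(\rho)\bigr),
\]
dualize to the Heisenberg picture, and use that for $u<\overline{\lambda}$ each factor $\int_0^\infty e^{us}\,e^{s\LL_\infty}\,ds=-(u+\LL_\infty)^{-1}$ converges absolutely by part 1. Fubini then yields $\E_\rho[e^{uT_\A(k)}]=\tr(\rho\,(-(u+\LL_\infty)^{-1}\J_\A)^k(\I))$, and substituting $\J_\A=-\LL_\infty\Psi$ (which follows at once from the definition $\Psi=-\LL_\infty^{-1}\J_\A$) rewrites $-(u+\LL_\infty)^{-1}\J_\A=(u+\LL_\infty)^{-1}\LL_\infty\Psi$, producing the stated formula. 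For part 3, the spectral mapping theorem gives ${\rm Sp}(\LL_\infty^{-1})=\{z^{-1}:z\in{\rm Sp}(\LL_\infty)\}$, so $-1/\overline{\lambda}\in{\rm Sp}(\LL_\infty^{-1})$ and the standard inequality $\|\LL_\infty^{-1}\|_{\infty\to\infty}\geq r(\LL_\infty^{-1})\geq 1/\overline{\lambda}$ gives $\|\LL_\infty^{-1}\|_{\infty\to\infty}^{-1}\leq\overline{\lambda}$.

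The main obstacle I expect is the Perron--Frobenius step in part 1: since $\LL_\infty$ is not assumed irreducible, one only obtains a positive semidefinite (possibly rank-deficient) eigen-operator, and the contradiction must be squeezed out by invoking the irreducibility of the full generator $\LL$ together with the fact that $\A$ carries at least one nontrivial jump operator. The remaining steps are direct quantum translations of the classical arguments and should be routine.
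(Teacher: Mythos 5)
Your proof is correct and follows essentially the same route as the paper's: part 1 via the quantum Perron--Frobenius theorem applied to the sub-unital CP semigroup $e^{t\LL_\infty}$ followed by the contradiction argument using $\tr(\hat\sigma\,\J_\A(X))=0$, irreducibility forcing $X=c\I$, and $\J_\A(\I)\neq 0$; part 2 via the Dyson density and termwise Laplace integration $\int_0^\infty e^{us}e^{s\LL_\infty}\,ds=-(u+\LL_\infty)^{-1}$; and part 3 via the spectral mapping theorem. The only cosmetic difference is that the paper reaches the positive eigen-operator by first applying Perron--Frobenius to the semigroup $e^{t\LL_\infty}$ and then invoking the spectral mapping theorem, while you invoke the generator-level Krein--Rutman statement directly — both are standard and equivalent in finite dimension.
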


\begin{proof}
$\mathcal{L}_\infty$ generates a sub-Markov semigroup $e^{t\mathcal{L}_\infty}$. From the Spectral Mapping Theorem, one has that
\[
{\rm Sp}(e^{t\mathcal{L}_\infty}) = e^{t{\rm Sp}(\mathcal{L}_\infty)}, \quad \forall t\geq 0.
\]
Moreover, since $e^{t\mathcal{L}_\infty}$ is completely positive, by Perron-Frobenius Theorem (Theorem \ref{th:PF}) the spectral radius and largest eigenvalue of $e^{t\mathcal{L}_\infty}$ coincide. Hence:
\[
r(e^{t\mathcal{L}_\infty})=e^{t\lambda}, \lambda:=\max\{\Re(z):z\in {\rm Sp}(\mathcal{L}_\infty)\},
\]
and $\exists x\in M_d(\mathbb{C})$ with $ x\geq 0$:
\[
\mathcal{L}_\infty(x) = \lambda x.
\]
By contradiction, suppose that $\lambda = 0$, then
\[
\LL(x) = \J_\A(x) + \mathcal{L}_\infty(x) = \J_\A(x).
\]
Therefore
\[
\tr(\hat{\sigma} \LL(x)) = \tr(\hat{\sigma} \J_\A(x)) = 0,
\]
since $\LL_*(\hat{\sigma}) = 0$. From the irreducibility assumption (Hypothesis \ref{hypo:irrQuantL}), we have that $\hat{\sigma}>0$, and so $\LL(x) = \J_\A(x) = 0$. Moreover, irreducibility also implies that $x= \alpha\I$ for some $\alpha\in\mathbb{R}$. Therefore $\J_\A(x) = \alpha \J_\A(\I) = 0$. Since $\J_\A(\I)$ is the sum of positive operators, we have that $\alpha = 0 $, consequently $x=0$ and we reach a contradiction.

\bigskip 2. Integrating over all trajectories, one can write:
\[
\PP(T_{\A}(k)\leq t) = \int_{t_1<t_2<\cdots <t_k <t}\tr \left (e^{(t-t_k)\LL_{\infty*}}\J_{\A*}\cdots e^{(t_2-t_1)\LL_{\infty*}}\J_{\A*} e^{t_1\LL_{\infty*}}(\rho)\right )dt_1\cdots dt_k.
\]

For $u < \overline{\lambda}$, one has $$-(u+\LL_\infty)^{-1}=\int_0^{+\infty} e^{(u+\LL_\infty)t}dt,$$ 

hence one can write the Laplace transform of $T_\A(k)$ as:
\[
\mathbb{E}_\rho[e^{uT_{\A}(k)}] = \tr \left( \left( -\J_{\A*}(u+\LL_{\infty*})^{-1} \right)^k (\rho) \right),
\]
and by the definition of $\Psi$ in equation \eqref{eq:psi}, we obtain the statement.

\bigskip 3. The Spectral Mapping Theorem implies that ${\rm Sp}(\LL_\infty^{-1})=\{z^{-1}:\, z \in {\rm Sp}(\LL_\infty)\}$, therefore one has that
\[
 \left\| \LL_\infty^{-1} \right \|_{\infty \rightarrow \infty} \geq r(\LL_\infty^{-1})\geq \frac{1}{\overline{\lambda}} \quad\Leftrightarrow \quad \left\| \LL_\infty^{-1} \right \|_{\infty \rightarrow \infty}^{-1} \leq \overline{\lambda}.
\]
    \end{proof}

%%%%%%%%%%%%%%%%%%%%%%%%%%%%%%
\subsection{Proof of Theorem \ref{theo:ldpQ}}\label{app:proof.ldpQ}
%%%%%%%%%%%%%%%%%%%%%%%%%%%%%%

\bigskip\begin{theoA} Consider a nonempty subset $\A$ of the emission channels. The FPT $T_\A(k)/k$ satisfies a large deviation principle with good rate function given by
$$
I_\A(t):=\sup_{u \in \mathbb{R}}\{ut-\log(r(u))\}$$
where
$$
r(u)=\begin{cases} r \left (\Psi_u \right ) & \text{ if } u < \overline{\lambda}\\
+\infty & \text{o.w.}\end{cases}$$
where $\Psi_u(x):= -(u+\LL_\infty)^{-1}\J_\A(x)$ and $\overline{\lambda}:=-\max\{\Re(z):z \in {\rm Sp}(\LL_\infty)\}.$
\end{theoA}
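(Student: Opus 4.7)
The plan is to transcribe the proof of the classical analogue Theorem \ref{theo:ldpC} to the quantum setting, with classical Perron--Frobenius replaced by its non-commutative counterpart (Theorem \ref{th:PF} in Appendix \ref{app:technical}) for positive maps on $M_d(\mathbb{C})$ and classical irreducibility replaced by Hypothesis \ref{hypo:irrQuantL}. The starting point is to re-express the MGF from Lemma \ref{lem:tech1Q} in terms of $\Psi_u$. Using $\mathcal{L}_\infty \Psi = -\J_\A$, one has $(u+\mathcal{L}_\infty)^{-1}\mathcal{L}_\infty \Psi = -(u+\mathcal{L}_\infty)^{-1}\J_\A = \Psi_u$, whence
\[
\mathbb{E}_\rho[e^{uT_\A(k)}] = \tr\bigl(\rho\,\Psi_u^k(\mathbf{1})\bigr), \qquad u<\overline{\lambda}.
\]
Moreover, the representation $\Psi_u = \int_0^{+\infty} e^{t(u+\mathcal{L}_\infty)}\J_\A\,dt$, valid for $u<\overline{\lambda}$, shows that $\Psi_u$ is completely positive and depends analytically on $u$ in a complex neighbourhood of each such real $u$.

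Next, I would apply quantum Perron--Frobenius to the CP map $\Psi_u$: its spectral radius $r(u):=r(\Psi_u)$ is an eigenvalue with an associated positive eigenvector $X(u)\geq 0$. To lift this to strict positivity and algebraic simplicity, I would recast $\Psi_u X(u) = r(u)X(u)$ as
\[
\bigl(\mathcal{L} + (e^{s(u)}-1)\J_\A\bigr)X(u) = -u\,X(u), \qquad s(u):=-\log r(u),
\]
and check that each perturbed generator $\mathcal{L}_s := \mathcal{L} + (e^s-1)\J_\A$ remains an irreducible Lindbladian --- the quantum analogue of Lemma \ref{lem:irredC}. This upgrades $r(u)$ to a simple eigenvalue with a strictly positive eigenvector $X(u)$, and combining Gelfand's formula with $\tr(\rho X(u))>0$ yields
\[
\lim_{k\to+\infty}\frac{1}{k}\log\mathbb{E}_\rho[e^{uT_\A(k)}] = \log r(u), \qquad u<\overline{\lambda},
\]
with $u\mapsto r(u)$ smooth on $(-\infty,\overline{\lambda})$.

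The remaining step is to verify the Gärtner--Ellis boundary conditions $\log r(u) \to +\infty$ and $(\log r(u))' \to +\infty$ as $u\uparrow \overline{\lambda}$. Let $\mathbf{T}$ be the spectral projection of $\mathcal{L}_\infty$ at $-\overline{\lambda}$; Lumer--Phillips contractivity of $e^{t(\mathcal{L}_\infty + \overline{\lambda}\,\mathrm{Id})}$ forces $\mathcal{L}_\infty|_{\mathrm{Ran}(\mathbf{T})}$ to be diagonalisable, giving the decomposition
\[
\Psi_u = \frac{\overline{\lambda}}{\overline{\lambda}-u}\,\mathbf{T}\Psi + (\mathrm{Id}-\mathbf{T})\,\mathcal{L}_\infty(u+\mathcal{L}_\infty)^{-1}\Psi.
\]
Since $\mathbf{T}\Psi(\mathbf{1}) = \mathbf{T}(\mathbf{1}) \neq 0$, $\|\Psi_u\|_{\infty\to\infty}$ blows up as $u\uparrow\overline{\lambda}$; a boundedness assumption on $r(u)$ would force $\|\Psi_u\|_{\infty\to\infty} \leq r(u)\|X(u)\|_\infty/\min\mathrm{Sp}(X(u))$ (via Theorem \ref{th:RD}) to stay bounded, a contradiction. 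The derivative explosion is then read off from the $(\overline{\lambda}-u)^{-1}$ leading behaviour of $r(u)$ by the same decomposition argument as in the classical proof, and \cite[Theorem 2.3.6]{dembo2010large} delivers the LDP with the stated good rate function.

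I expect the principal technical obstacle to be the quantum version of the irreducibility-preserving perturbation lemma: the classical argument is entrywise and exploits the non-negativity of $\W_1$, whereas in the quantum setting one must work with invariant projections and use the Kraus structure of $\J_\A$ together with the irreducibility of $\mathcal{L}$ to rule out reducing subspaces of $\mathcal{L}_s$. A subsidiary concern is controlling $\min\mathrm{Sp}(X(u))$ away from zero along sequences $u\uparrow\overline{\lambda}$ without an explicit formula for the Perron eigenvector; however, the classical compactness argument transfers once the perturbed generators are known to be irreducible, so the remainder of the proof is a faithful transcription of the classical one.
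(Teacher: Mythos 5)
Your proposal follows essentially the same line as the paper's own proof (Appendix~\ref{app:proof.ldpQ}): re-express the MGF as $\tr(\rho\,\Psi_u^k(\I))$, apply non-commutative Perron--Frobenius to the CP map $\Psi_u$, lift to strict positivity and simplicity via irreducibility of the tilted Lindbladian $\mathcal{L}_{s(u)}$, establish convergence of $\frac{1}{k}\log\E_\rho[e^{uT_\A(k)}]$ to $\log r(u)$, and verify steepness at $\overline{\lambda}$ using the spectral projection $\mathbf{T}$ of $\LL_\infty$ and Lumer--Phillips contractivity before invoking G\"{a}rtner--Ellis. The ``principal technical obstacle'' you flag is indeed handled by the paper as Lemma~\ref{lem:irredQ}, although its proof proceeds by a Dyson-series positivity argument establishing primitivity of the tilted semigroup rather than the invariant-projection/Kraus-structure route you anticipate; both are legitimate ways to reach the same conclusion.
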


\begin{proof}
The proof follows the same method as the proof of the classical case, Theorem \ref{theo:ldpC}; for completeness we write the quantum proof in full. Lemma \ref{lem:irredQ} states that in the domain $u<\overline{\lambda}$, then
$$
\E_\rho[e^{uT_\A(k)}] = \tr\left ( \rho \Psi_u^k(\I)\right ),
$$
where $\Psi_u(x):= -(u+\LL_\infty)^{-1}\J_\A(x)$. Writing $\Psi_u$ as the integral
$$
\Psi_u(x) = \int_0^{\infty}e^{(u + \LL_\infty)t}\J(x)dt
$$
which is the composition of two completely positive maps, hence $\Psi_u$ is completely positive as well. Therefore, Perron-Frobenius Theorem tells us that $r(u):=r(\Psi_u)$ is an eigenvalue of $\Psi_u$, with a positive eigenvector $x(u)$. We can relate this operator with the generator of a quantum dynamical semigroup:
$$
\Psi_u (x(u))=r(u)x(u) \Leftrightarrow \LL_{s(u)}(x(u))=-u x(u),
$$
where $\LL_{s(u)}:=\LL + (e^{s(u)} - 1)\J_\A$ and $s(u):= - \log(r(u))$. Notice that $\LL_{s(u)}$ has the form in Eq. \eqref{eq:perturbed}. Therefore, by Lemma \ref{lem:irredQ} it is irreducible and $x(u)$ is in fact a unique and strictly positive eigenvector of $\LL_{s(u)}$ corresponding to the eigenvalue $-u$.

We first need to show that for $u < \overline{\lambda}$
\begin{equation}\label{eq:QLDPconverge}
\lim_{k\to\infty}\frac{1}{k}\log(\E_\rho[e^{uT_\A(k)}])<\infty,
\end{equation}

Using Holder's inequality, we have that $\tr\left ( \rho \Psi_u^k(\I)\right ) \leq \|\Psi_u^k\|_{\infty \to \infty}$, hence due to Gelfand's formula
$$
\frac{1}{k}\log(\E_\rho[e^{uT_\A(k)}]) \leq \frac{1}{k}\log(\|\Psi_u^k\|_{\infty \rightarrow \infty}) \rightarrow_{k\rightarrow +\infty} \log(r(u)).
$$
Furthermore, we can take $\|x(u)\|_{\infty\to\infty}\leq 1$ and bound from below:
$$
\frac{1}{k}\log(\mathbb{E}_\rho[e^{uT_\A(k)}]) \geq \log(r(u))+\frac{1}{k}\log(\tr\left(\rho x(u)\right))\rightarrow_{k\rightarrow +\infty} \log(r(u)),
$$
which we can do since $x(u)$ is strictly positive, $\rho$ is positive semidefinite so $\tr(\rho x(u))>0$. Therefore, we have shown that the limit in equation \eqref{eq:QLDPconverge} converges to $\log(r(u))<\infty$ in the range $u<\overline{\lambda}$.

We now use the same version of the G\"{a}rtner-Ellis  Theorem (\cite[Theorem 2.3.6]{dembo2010large}) as in the proof of Theorem \ref{theo:ldpC}.
All that remains is to show that $\log(r(u))$ is steep, i.e. as u approaches the boundary $\overline{\lambda}$, both $\log(r(u))$ and $\log(r(u))'$ diverge to $+\infty$. Denote by $\mathbf{T}$ the spectral projection of $\LL_\infty$ with respect to the eigenvalue $-\overline{\lambda}$. We can show that $\LL_\infty\mathbf{T}$ is diagonalisable, i.e. the restriction does not feature any Jordan blocks or equivalently, the algebraic and geometric multiplicity of $\overline{\lambda}$ coincide. To do this, assume $\LL_\infty \mathbf{T}$ is not diagonalisable. Then the map $\LL'\mathbf{T} = (\LL + \overline{\lambda}\Id)\mathbf{T}$ - corresponding to the eigenvalue 0 - is not diagonalisable. One can then always choose $x,y\in M_d(\mathbb{C})$ with $\LL'(y)=0$ and $\LL'(x) = y$. Therefore one has

$$
e^{t\LL'}(x) = x + ty.
$$
The semigroup $e^{t\LL'}$ is in fact a contraction semigroup, which contradicts the above equation. Therefore, we have
$$
\LL_\infty\mathbf{T} = -\overline{\lambda}\mathbf{T}.
$$

Firstly, we show that $\lim_{u \uparrow \overline{\lambda}}r(u)=+\infty$. Note that $\mathbf{T}\Psi\neq 0$ since $\mathbf{T}\Psi(\I)\neq 0$. The map $\Psi_u$ can be written as

\[
\Psi_u = \frac{\LL_\infty}{\LL_\infty + u} = \frac{\overline{\lambda}}{\overline{\lambda} - u}\mathbf{T}\Psi + (\Id - \mathbf{T})\frac{\LL_\infty}{\LL_\infty + u}\Psi
\]

and one can see it has a norm which explodes as $u\uparrow \overline{\lambda}$. Let us assume by contradiction that $r(u)\uparrow r(\overline{\lambda})<\infty$ as $u\uparrow \overline{\lambda}$. Then the eigenvactor $x(u)$ can be chosen to converge to the Perron-Frobenius eigenvector of $\LL_{s(\overline{\lambda})}$ and $\min{\rm Sp}(x(u)) \not\rightarrow0$. We then have for $0\leq u < \overline{\lambda}$

$$
    \|\Psi_u\|_{\infty \rightarrow \infty}=\|\Psi_u(\I)\|_\infty \leq \frac{1}{\min {\rm Sp}(x(u))}\|\Psi_u(x(u))\|_\infty = r(u) \frac{\|x(u)\|_\infty}{\min {\rm Sp}(x(u))}<+\infty
$$
which is a contradiction. Let us show that $\lim_{u \uparrow \overline{\lambda}}\log(r(u))'=+\infty$. For each $u$, we can choose the left eigenvector of $\Psi_u$, $l(u)$, to be such that $\tr(l(u)x(u)) \equiv 1$. Then we can write $r(u)$ as

\[
    \begin{split}
    r(u)&=-\tr \left( l(u)(u+\LL_\infty)^{-1}\LL_\infty \Psi(x(u))\right)\\
    &=\underbrace{-\tr \left( l(u)(u+\LL_\infty)^{-1}\LL_\infty\mathbf{T}\Psi(x(u)) \right)}_{(I)}+\underbrace{-\tr \left( l(u)(u+\LL_\infty)^{-1}\LL_\infty(\Id - \mathbf{T})\Psi(x(u)) \right)}_{(II)}.\\
    \end{split}
    \]
The term $(II)$ remains bounded, hence as $u\uparrow \overline{\lambda}$
$$
r(u) \asymp \frac{\overline{\lambda}}{\overline{\lambda} - u} \tr\left(l(u) \mathbf{T}\Psi(x(u))\right)\rightarrow +\infty.
$$
Since $\log(r(u))' = \frac{r'(u)}{r(u)}$, we can differentiate the expression for $r(u)$ to obtain

\[
\begin{split}
    \frac{r^\prime(u)}{r(u)}&=\frac{\tr\left( l(u)(u+\LL_\infty)^{-2}\LL_\infty\Psi(x(u)) \right)}{r(u)}\\
    &=\underbrace{\frac{\overline{\lambda}}{(\overline{\lambda}-u)^2}\frac{\tr \left( l(u)\mathbf{T}\Psi(x(u)) \right)}{r(u)}}_{(I)}+ \underbrace{\frac{\tr \left( l(u)(\Id - \mathbf{T})\Psi(x(u)) \right)}{r(u)}}_{(II)}.
    \end{split}
    \]
We have again that part $(II)$ is bounded, but for $u\uparrow \overline{\lambda}$, $(I)\asymp (\overline{\lambda} - u)^{-1}$ which completes the proof.

\end{proof}

%%%%%%%%%%%%%%%%%%%%%%%%%%%%%%
\subsection{Proof of Theorem \ref{theo:dynActQ} and Corollary \ref{dynActiTURQ}}
\label{app:proof.th.dynActQ}
%%%%%%%%%%%%%%%%%%%%%%%%%%%%
\begin{theoA}[Fluctuations of FPT for Total Counts]
Assume that Hypothesis \ref{hypo:irrQuantPhi} holds ($\Phi$ be irreducible) and let  $\varepsilon$ be the absolute spectral gap of $\Phi$. Then, for every $\gamma>0$:
\begin{equation*}
\begin{split}
&\PP_\rho \left (\frac{T_I(k)}{k} \geq \langle t_I \rangle + \gamma \right) \leq C(\rho) \exp \left ( -k \frac{\gamma^2 \varepsilon}{8c_q^2}h\left ( \frac{5\gamma}{2 c_q}\right )\right )\\{\rm and}\\
&\PP_\rho \left (\frac{T_I(k)}{k} \leq \langle t_I \rangle - \gamma \right) \leq C(\rho) \exp \left ( -k \frac{\gamma^2 \varepsilon}{8{c_q^2}}h\left ( \frac{5\gamma}{2 c_q}\right )\right ),\quad k\in \mathbb{N},
\end{split}
\end{equation*}
where $h(x):=(\sqrt{1+x}+\frac{x}{2}+1)^{-1}$, $C(\rho):=\left\|\sigma^{-\frac{1}{2}}\rho\sigma^{-\frac{1}{2}}\right\|_\sigma$ and $c_q$ is defined in Eq. \eqref{eq:cq}.
\end{theoA}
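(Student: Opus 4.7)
The plan mirrors the strategy of Theorem \ref{theo:dynActC}, adapted to the non-commutative setting. The starting point is Chernoff's inequality: for $u\geq 0$,
\[
\PP_\rho\!\left(\tfrac{T_I(k)}{k}\geq \langle t_I\rangle+\gamma\right)\leq e^{-uk(\langle t_I\rangle+\gamma)}\,\E_\rho[e^{uT_I(k)}],
\]
and by Lemma \ref{lem:tech1Q} with $\A=I$ (so $\LL_\infty=\LL_0$, $\Psi=\Phi$), the MGF admits the closed form
\[
\E_\rho[e^{uT_I(k)}]=\tr\!\bigl(\rho\,\mathcal{E}_u^{k}(\I)\bigr),\qquad \mathcal{E}_u:=(\I+u\LL_0^{-1})^{-1}\Phi,
\]
valid for $0\leq u<\overline{\lambda}$, where $\overline{\lambda}\geq \|\LL_0^{-1}\|_{\infty\to\infty}^{-1}$.

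The next step is to pass to the KMS Hilbert space $L^2(\sigma)$: write
$\tr(\rho\,y)=\langle \sigma^{-1/2}\rho\sigma^{-1/2},y\rangle_\sigma$, apply Cauchy–Schwarz, and split off the prefactor $C(\rho)=\|\sigma^{-1/2}\rho\sigma^{-1/2}\|_\sigma$. The remaining bound on $\|\mathcal{E}_u^k(\I)\|_\sigma$ is obtained by replacing $\Phi$ with the Léon–Perron channel $\hat\Phi=(1-\varepsilon)\Id+\varepsilon\Pi$ via the operator inequality $\Phi^\dag\Phi\preceq \hat\Phi^2$ (which is precisely the definition of the absolute spectral gap $\varepsilon$), applied iteratively along the product $\mathcal{E}_u^k=(\mathcal{F}_u\Phi)^k$ with $\mathcal{F}_u:=(\I+u\LL_0^{-1})^{-1}$. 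Using complete positivity of $-\LL_0^{-1}=\int_0^{+\infty}e^{t\LL_0}dt$, each factor $\mathcal{F}_u$ is itself completely positive for $u$ in a neighbourhood of $0$, so the symmetrisation goes through and yields a bound of the form $\|\mathcal{E}_u^k(\I)\|_\sigma\leq r(u)^k$, where $r(u)$ is the Perron eigenvalue of $\hat\Phi\,\mathcal{F}_u$.

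The next task is to bound $r(u)$. Expanding
\[
\hat\Phi\,\mathcal{F}_u=\hat\Phi+\sum_{l\geq 1}u^l\,\hat\Phi\,(-\LL_0^{-1})^l,
\]
Kato's perturbation theory for the isolated simple eigenvalue $1$ of $\hat\Phi$ gives
\(
r(u)=1+\sum_{l\geq 1}u^l r^{(l)},
\)
with each $r^{(l)}$ a sum of traces of the form $\tr(\hat\Phi\,\mathrm D^{\nu_1}\,\S^{(\mu_1)}\cdots \hat\Phi\,\mathrm D^{\nu_p}\,\S^{(\mu_p)})$, where $\mathrm D:=-\LL_0^{-1}$ and $\S^{(1)}=-\varepsilon^{-1}(\Id-\Pi)$, $\S^{(0)}=-\Pi$. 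Using $\|\mathrm D\|_\sigma=c_q$ (by definition \eqref{eq:cq}), $\|\S^{(1)}\|_\sigma=\varepsilon^{-1}$, and the combinatorial count $N(l)\leq 5^{l-2}$ from \cite{lezaud1998chernoff-type}, one obtains $|r^{(l)}|\leq c_q^l+(c_q^2/5)(5c_q/\varepsilon)^{l-1}$; identifying $r^{(1)}=\langle t_I\rangle$ and summing the two geometric series yields, for $0\leq u<\varepsilon/(5c_q)$,
\[
r(u)\leq \exp\!\left(\langle t_I\rangle\,u+\frac{4c_q^2 u^2}{\varepsilon}\Bigl(1-\tfrac{5c_q u}{\varepsilon}\Bigr)^{-1}\right),
\]
after the same relaxation $1-c_q u\geq \varepsilon-5c_q u$ used in the classical proof.

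Finally, substitute this into Chernoff's bound to get $\exp\bigl(-k[\gamma u-\alpha u^2(1-\beta u)^{-1}]\bigr)$ with $\alpha=4c_q^2/\varepsilon$, $\beta=5c_q/\varepsilon$, and optimise over $u\in[0,\varepsilon/(5c_q))$ using the elementary identity \eqref{optU}; this produces the claimed right-deviation bound. The left-deviation bound is obtained by running the same argument with $u\leq 0$, using absolute values of the coefficients $r^{(l)}$ for $l\geq 2$ so that the same series bound applies to $|u|$. Corollary \ref{dynActiTURQ} then follows by computing $r''(0)-(r'(0))^2$ from the perturbation expansion and comparing with the quadratic expansion of $\log \E_\sigma[e^{uT_I(k)}]$ at $u=0$, exactly as in the classical case.

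The main obstacle is the symmetrisation step: in the classical proof $\F_u$ is a multiplication operator and hence self-adjoint on $L^2(\pi)$, so $\|\F_u^{1/2}\P\F_u^{1/2}\|_\pi$ arises cleanly from Cauchy–Schwarz. Quantum mechanically $\mathcal{F}_u=(\I+u\LL_0^{-1})^{-1}$ is \emph{not} self-adjoint on $L^2(\sigma)$, so the reduction to $\hat\Phi$ requires carefully exploiting the complete positivity of $-\LL_0^{-1}$ together with the operator inequality $\Phi^\dag\Phi\preceq \hat\Phi^2$ iteratively along the product $(\mathcal{F}_u\Phi)^k$, rather than using a single similarity transform. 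Once this reduction is in place the perturbative estimate for $r(u)$ proceeds verbatim as in the classical argument.
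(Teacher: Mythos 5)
Your plan runs into a genuine gap precisely at the symmetrisation step, which is the one place the quantum argument \emph{cannot} follow the classical template. In the proof of Theorem~\ref{theo:dynActC} the crucial fact is that $\F_u$ is a multiplication operator, hence self-adjoint on $L^2_\pi(E)$, so one can conjugate: $\langle \nu/\pi,(\F_u\P)^k\Iv\rangle_\pi=\langle \F_u^{1/2}\nu/\pi,(\F_u^{1/2}\P\F_u^{1/2})^{k-1}\F_u^{1/2}\Iv\rangle_\pi$, and then apply Lemma~\ref{lpLem} to the \emph{sandwiched} operator. In the quantum case $\mathcal{F}_u=(\Id+u\LL_0^{-1})^{-1}$ is completely positive for $u\geq 0$ but it is \emph{not} KMS-self-adjoint, so there is no square root to conjugate by and no such rewriting of $\tr(\rho\,\mathcal{E}_u^k(\I))$. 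Complete positivity of $\mathcal{F}_u$ does not compensate for this: the claim that ``$\Phi^\dag\Phi\preceq\hat\Phi^2$ applied iteratively along the product $(\mathcal{F}_u\Phi)^k$'' yields $\|\mathcal{E}_u^k(\I)\|_\sigma\leq r(\hat\Phi\mathcal{F}_u)^k$ is not a valid step, because the $\mathcal{F}_u$ factors that sit between successive $\Phi$'s do not commute with $\Phi$ and are not self-adjoint, so there is no telescoping. Moreover $\hat\Phi\mathcal{F}_u$ is itself not self-adjoint, and for a non-normal map the inequality $\|A^k\|\leq r(A)^k$ simply fails in general.

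The paper's own proof circumvents the obstacle in a cruder but correct way: first use plain submultiplicativity, $\|\Phi_u^k\|_\sigma\leq\|\Phi_u\|_\sigma^k$, and then bound the single-factor norm by a self-adjoint operator via Lemma~\ref{lpLem} with $\mathbf{A}=\FF_u$, $\mathbf{B}=\Id$, which gives $\|\FF_u\Phi\|_\sigma\leq\|\FF_u\hat\Phi\FF_u^\dag\|_\sigma^{1/2}$. This produces $\E_\rho[e^{uT_I(k)}]\leq C(\rho)\,r(u)^{k/2}$ with $r(u)$ the Perron eigenvalue of the \emph{self-adjoint} map $\FF_u\hat\Phi\FF_u^\dag$. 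The price is an exponent $k/2$ and a two-sided perturbation $\Phi^{(l)}=\sum_{j=0}^l(-\LL_0^{-1})^{l-j}\hat\Phi(-\LL_0^{\dag})^{-j}$ with $\|\Phi^{(l)}\|_\sigma\leq(l+1)c_q^l\leq(2c_q)^l$ and $r^{(1)}=2\langle t_I\rangle$, which propagates the factor of $2$ through to $\alpha=4c_q^2/\varepsilon$, $\beta=10c_q/\varepsilon$ and ultimately to $\gamma^2\varepsilon/(8c_q^2)$ and $h(5\gamma/(2c_q))$. With your one-sided expansion you write $r^{(1)}=\langle t_I\rangle$ and $\beta=5c_q/\varepsilon$ with $\alpha=4c_q^2/\varepsilon$, which after the optimisation~\eqref{optU} would give $h(5\gamma/(4c_q))$, not $h(5\gamma/(2c_q))$; so even granting the unjustified symmetrisation your constants do not reproduce the stated theorem. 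In short: the Chernoff--perturbation skeleton is right, but you need the paper's route of submultiplicativity followed by a single application of Lemma~\ref{lpLem} to a self-adjoint Le\'on--Perron operator, accepting the resulting $k/2$ and the attendant factors of $2$.
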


\begin{proof}

Applying Chernoff bound, we get, for $u\geq0$:
\begin{equation}\label{chernBoundQ}
\PP_\rho \left (\frac{T_I(k)}{k} \geq \langle t_I \rangle + \gamma \right) \leq e^{-uk(\langle t_I \rangle +\gamma)}\E_\rho\left [ e^{uT_I(k)}\right].
\end{equation}
For $u < \overline{\lambda}$, we define the tilted operator $\Phi_u(x):=\FF_u\Phi(x)$, with $\FF_u(x):=(u\Id+\LL_0)^{-1}\LL_0(x)$. Using Lemma \ref{lem:tech1Q}, for $0 \leq u < \overline{\lambda}$ we can write
\[
\begin{split}
\E_\rho[e^{uT_I(k)}]={\rm tr} \left ( \rho \Phi_u^k(\I) \right ) &= \left\langle \sigma^{-\frac{1}{2}}\rho\sigma^{-\frac{1}{2}},\Phi_u^k(\I) \right\rangle_\sigma\\
&\leq\underbrace{\|\sigma^{-\frac{1}{2}}\rho\sigma^{-\frac{1}{2}}\|_\sigma}_{:=C(\rho)}\underbrace{\|\I\|_\sigma}_{=1}\|\Phi_u^k\|_\sigma,
\end{split}
\]
where with a small abuse of notation $\|\Phi_u^k\|_\sigma$ denotes the operator norm of the map $\Phi_u^k$ with respect to the KMS inner product associated to $\sigma$. We can further break up this operator norm:
\[
\|\Phi_u^k\|_\sigma\leq\|\Phi_u\|_\sigma^k=\|\FF_u\Phi\|_\sigma^k.
\]
We now seek to upper bound $\|\FF_u\Phi_u\|_\sigma$. Conversely to the classical case, $\FF_u\Phi$ is not self adjoint, but we can upper bound its norm with an operator which is. This can be done using Lemma \ref{lpLem} with $\mathbf{A}=\FF_u,\mathbf{B}=\Id$ to get $\|\FF_u\Phi\|_\sigma\leq\|\FF_u\hat{\Phi} \FF_u^\dag\|_\sigma^{\frac{1}{2}}$, with 
$\FF_u^{\dag}(x)=\Gamma_\sigma^{-\frac{1}{2}}\circ (\Id+u\LL_{0*}^{-1})^{-1}\circ\Gamma_\sigma^{\frac{1}{2}}(x)$, cf. equation \eqref{eq:dagger.adjoint}. Since $\FF_u\hat{\Phi}\FF_u^\dag$ is a positive, self adjoint, irreducible map, operator Perron Frobenius theory \cite{wolf2012quantum} says $\|\FF_u\hat{\Phi}\FF_u^\dag\|_\sigma=r(u)$, where $r(u)=\sup\{|\lambda|:\lambda\in {\rm Sp}(\FF_u\hat{\Phi}\FF_u^\dag)\}$ is the spectral radius of $\FF_u\hat{\Phi}\FF_u^\dag$. Hence, the Laplace transform is upper bounded by:
\begin{equation}\label{expectationUBQ}
\E_\rho[e^{uT_I(k)}]\leq C(\rho)r(u)^{\frac{k}{2}}.
\end{equation}
For $u$ small enough, we can expand $\FF_u\hat{\Phi}\FF_u^\dag$ as the power series below:
\begin{equation*}
\begin{split}
\FF_u\hat{\Phi}\FF_u^\dag &= \sum_{j\geq0}u^j\left (-\frac{1}{\LL_0}\right )^j\circ\hat{\Phi}\circ\sum_{l\geq0}u^l\left (-\frac{1}{\LL_0^\dag}\right )^l\\
&=\sum_{l\geq0}u^l\underbrace{\sum_{j=0}^l\left (-\frac{1}{\LL_0}\right )^{l-j}\circ\hat{\Phi}\circ\left(-\frac{1}{\LL_0^\dag}\right )^j}_{\Phi^{(l)}},
\end{split}
\end{equation*}
which is the quantum analogue of equation \eqref{eq:series.P(u)} from the classical case.
Again we bound each term in the power series in order to use operator perturbation theory. Using the definition of $c_q$ in equation \eqref{eq:cq} and Cauchy-Schwarz, we can upper bound $\Phi^{(l)}$ in the form:
\[
\|\Phi^{(l)}\|_\sigma\leq (l+1) c_q^l\leq (2c_q)^l.
\]
Using perturbation theory \cite{kato1976perturbation}, we find that for  $u<\frac{\varepsilon}{2c_q(2+\varepsilon)}$, the spectral radius $r(u)$ can then be expressed as  
\begin{equation} \label{eq:qru1}
r(u) = 1 + \sum_{l=1}^{\infty}u^l r^{(l)},
\end{equation}
where
\begin{equation}
r^{(l)}=\sum_{p=1}^l \frac{(-1)^p}{p} \sum_{\substack{\nu_1+\dots +\nu_p=l,\, \nu_i\geq 1\\
\mu_1+\dots +\mu_p=p-1, \, \mu_j\geq 0}} \TR\left(\Phi^{(\nu_1)}\S^{(\mu_1)}\cdots \Phi^{(\nu_p)}\S^{(\mu_p)}\right).
\label{eq:ru_exp_q}
\end{equation}
We have $\S^{(0)}=-\Pi$, $\S^{(1)}=(\hat{\Phi}-\Id+\Pi)^{-1}-\Pi = 
-\epsilon^{-1}(\Id - \Pi)$, and $\S^{(\mu)}$ the $\mu^{\rm th}$ power of $\S^{(1)}$, and also $\|\S^{(\mu)}\|_\sigma=\varepsilon^{-\mu}$ for $\mu\geq 1$.
We now want to bound the terms in the expression for $r^{(l)}$. For $p=1$:
\begin{eqnarray}\label{rkpBound1}
\left|\TR\left(\Phi^{(\nu_1)}\S^{(\mu_1)}\cdots\Phi^{(\nu_p)}\S^{(\mu_p)}\right)\right|&=&\left|\TR\left( \Phi^{(l)}\S^{(0)}\right)\right|=\left|\left\langle \I,\Phi^{(l)}(\I)\right\rangle_\sigma\right|
\nonumber 
\\
&\leq&\|\I\|_\sigma\|\Phi^{(l)}\|_\sigma
\leq(2c_q)^l\leq 2c_q\left(\frac{2c_q}{\varepsilon}\right)^{l-1}
\end{eqnarray}
since $\varepsilon\leq 1$. For $p\geq 2$, using the fact that one of $\mu_i$ is zero and using trace cyclicity:
\begin{equation}\label{rkpBound2}
\begin{split}
    \left|\TR\left(\Phi^{(\nu_1)}\S^{(\mu_1)}\cdots\Phi^{(\nu_p)}\S^{(\mu_p)}\right)\right|&=\left|\left\langle\I,\Phi^{(\nu_1)}\S^{(\mu_1)}\cdots \S^{(\mu_{p-1})}\Phi^{(\nu_p)}(\I)\right\rangle_\sigma\right|\\
    &\leq\frac{1}{\varepsilon^{p-1}}(2c_q)^l\\
    &\leq\frac{1}{\varepsilon^{l-1}}(2c_q)^l\leq 2c_q\left(\frac{2c_q}{\varepsilon}\right)^{l-1}.
\end{split}
\end{equation}

We can explicitly calculate $r^{(1)}$:
\begin{equation*}
\begin{split}
r^{(1)}&=-\TR\left(\Phi^{(1)}(-\Pi)\right)=\left\langle \I,\Phi^{(1)}(\I)\right\rangle_\sigma\\
&=\left\langle \I,\left(-\frac{\Id}{\LL_0}\right)\hat{\Phi}(\I)\right\rangle_\sigma+\left\langle\I,\hat{\Phi}\left(-\frac{\Id}{\LL_0^\dag}\right)(\I)\right\rangle_\sigma=2\langle t_I \rangle.
\end{split}
\end{equation*}
The term $|r^{(2)}|$ can be bounded using equations \eqref{rkpBound1} and \eqref{rkpBound2}:
\begin{equation*}
\begin{split}
|r^{(2)}|&=\left|\TR\left(\Phi^{(2)}\S^{(0)} \right)-\left\langle\I,\Phi^{(1)}\S^{(1)}\Phi^{(1)}(\I)\right\rangle_\sigma\right|\\
&\leq \frac{(2c_q)^2}{\varepsilon} + \frac{(2c_q)^2}{\varepsilon} = \frac{8c_q^2}{\varepsilon}.
\end{split}
\end{equation*}
Using the bound on the number of terms in \eqref{eq:ru_exp_q}, given by equation \eqref{numTerms}, we can bound the $r^{(l)}$ by $|r^{(l)}|\leq\frac{2c_q}{5}\left(\frac{10c_q}{\varepsilon}\right)^{l-1}$. Now we have all the ingredients in place, we can finish bounding $r(u)$:
\begin{equation*}
    \begin{split}
        r(u)&\leq 1+\sum_{l\geq1}u^l|r^{(l)}|\\
        &\leq 1+2\langle t_I \rangle u +\frac{8c_q^2}{\varepsilon}u^2+\sum_{l\geq 3}u^l\frac{2c_q}{5}\left (\frac{10c_q}{\varepsilon}\right)^{l-1}\\
        &=1+2\langle t_I \rangle u+\frac{8c_q^2}{\varepsilon}u^2 + u^2\frac{4 c_q^2}{\varepsilon}\sum_{l\geq 1}u^l\left (\frac{10c_q}{\varepsilon}\right)^l\\
        &\leq 1+2\langle t_I \rangle u + \frac{8c_q^2}{\varepsilon}u^2\left (1-\frac{10c_qu}{\varepsilon}\right)^{-1}\\
        &\leq \exp\left({2\langle t_I \rangle u+\frac{8c_q^2}{\varepsilon}u^2\left (1-\frac{10c_qu}{\varepsilon}\right)^{-1}}\right),
    \end{split}
\end{equation*}
which is valid for $u<\frac{\varepsilon}{10c_q}<\frac{\varepsilon}{2c_q(2+\varepsilon)}$. Putting this upper bound on $r(u)$ back into equation \eqref{expectationUBQ} gives the upper bound on the Laplace transform of $T_I(k)$:
\begin{equation*}
    \E_\rho[e^{uT_I(k)}]\leq C(\rho) \exp\left(k\left(\langle t_I \rangle u +\frac{4c_q^2u^2}{\varepsilon}\left(1-\frac{10c_qu}{\varepsilon}\right)^{-1}\right)\right).
\end{equation*}
Applying the Chernoff bound in equation \eqref{chernBoundQ} gives, $\forall u\in [0,\frac{\varepsilon}{10c_q})$:
\[
\PP_\rho \left (\frac{T_I(k)}{k} \geq \langle t_I \rangle + \gamma \right) \leq C(\rho)\exp\left(-k\left(\gamma u - \frac{4c_q^2u^2}{\varepsilon}\left(1-\frac{10c_qu}{\varepsilon}\right)^{-1}\right)\right).
\]
Optimisation over the allowed $u$, using result \eqref{optU}, with $\alpha=\frac{4c_q^2}{\varepsilon}$, $\beta=\frac{10c_q}{\varepsilon}$ gives the final concentration inequality for right deviations:
\[
\PP_\rho \left (\frac{T_I(k)}{k} \geq \langle t_I \rangle + \gamma \right) \leq C(\rho) \exp \left ( -k \frac{\gamma^2 \varepsilon}{8{c_q^2}}h\left ( \frac{5\gamma}{2 c_q}\right )\right ).
\]

By considering $u\leq 0$ we can prove the bound for left deviations:
\[
\PP_\rho\left(\frac{T_{I}(k)}{k}\leq\langle t_I \rangle-\gamma\right)\leq e^{-ku(\langle t_I \rangle-\gamma)}\E_\rho[e^{uT_I(k)}].
\]
By repeating the process we get a similar bound on the Laplace transform, valid for $0\leq|u|<\frac{\varepsilon}{10c_q}$:
\begin{equation*}
    \E_\rho[e^{uT_I(k)}]\leq C(\rho) \exp\left(k\left(\langle t_I \rangle u +\frac{4c_q^2u^2}{\varepsilon}\left(1-\frac{10c_q|u|}{\varepsilon}\right)^{-1}\right)\right).
\end{equation*}
One obtains the upper bound on left deviations, a symmetric bound to right deviations:
\[
\PP_\rho \left (\frac{T_I(k)}{k} \leq \langle t_I \rangle - \gamma \right) \leq C(\rho) \exp \left ( -k \frac{\gamma^2 \varepsilon}{8{c_q^2}}h\left ( \frac{5\gamma}{2 c_q}\right )\right ).
\]
\end{proof}
\begin{coroA}
The variance of the first passage time for total counts is bounded from above by:
\[
\frac{{\rm var}_\sigma(T_{I}(k))}{k}\leq \left (\frac{4}{\varepsilon} -(1-\varepsilon) \right )c_q^{2}.
\]
\end{coroA}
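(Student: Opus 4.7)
My plan is to mirror the derivation of Corollary \ref{dynActiTURC} in the classical case, by extracting a second-order bound on $\log\E_\sigma[e^{uT_I(k)}]$ from the moment generating function estimate proved in Theorem \ref{theo:dynActQ}. First I would specialise that bound to $\rho=\sigma$: since $\|\I\|_\sigma^2=\tr(\sigma)=1$ we have $C(\sigma)=1$, so $\log\E_\sigma[e^{uT_I(k)}]\leq (k/2)\log r(u)$ in a neighbourhood of $0$, where $r(u)$ is the Perron eigenvalue of $\FF_u\hat{\Phi}\FF_u^\dag$. Expanding the LHS as $\langle t_I\rangle\,ku+\tfrac{1}{2}{\rm var}_\sigma(T_I(k))u^2+o(u^2)$ and the RHS as $(k/2)\bigl[r^{(1)}u+(r^{(2)}-(r^{(1)})^2/2)u^2+O(u^3)\bigr]$, then matching the $u^2$ coefficients and using $r^{(1)}=2\langle t_I\rangle$ (already obtained in the proof of Theorem \ref{theo:dynActQ}) yields
\[
\frac{{\rm var}_\sigma(T_I(k))}{k}\leq r^{(2)}-2\langle t_I\rangle^2.
\]

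To evaluate $r^{(2)}$ I would apply standard Rayleigh--Schr\"odinger perturbation theory to the simple eigenvalue $r(0)=1$ of $\hat{\Phi}$, whose right eigenvector is $\I$ and whose left eigenvector (in the KMS inner product) is realised by $\sigma$. Since $\hat{\Phi}$ coincides with $(1-\varepsilon)\Id$ on the orthogonal complement of $\I$, the reduced resolvent is $\mathcal{S}=\varepsilon^{-1}(\Id-\Pi)$, so
\[
r^{(2)}=\langle\I,\Phi^{(2)}(\I)\rangle_\sigma+\varepsilon^{-1}\bigl\|(\Id-\Pi)\Phi^{(1)}(\I)\bigr\|_\sigma^{2},
\]
with $\Phi^{(1)}$, $\Phi^{(2)}$ exactly as defined in the theorem's proof. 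Setting $G:=-\LL_0^{-1}$, a direct Cauchy--Schwarz estimate gives $|\langle\I,\Phi^{(2)}(\I)\rangle_\sigma|\leq 3c_q^2$ using $\|G\|_\sigma=\|G^\dag\|_\sigma=c_q$ and $\|\hat{\Phi}\|_\sigma=1$.

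The critical step is to bound $\|(\Id-\Pi)\Phi^{(1)}(\I)\|_\sigma$ sharply enough to produce the claimed coefficient: the na\"\i ve estimate $\|\Phi^{(1)}(\I)\|_\sigma\leq 2c_q$ only yields $(3+4/\varepsilon)c_q^2$, which is strictly weaker than $(4/\varepsilon-(1-\varepsilon))c_q^2$ for every $\varepsilon\in(0,1]$. Here I would use the algebraic identity $(\Id-\Pi)\hat{\Phi}=(1-\varepsilon)(\Id-\Pi)$ (a consequence of $\Pi\hat{\Phi}=\hat{\Phi}\Pi=\Pi$), applied to the decomposition $\Phi^{(1)}(\I)=G\I+\hat{\Phi}G^\dag\I$, to obtain
\[
(\Id-\Pi)\Phi^{(1)}(\I)=(\Id-\Pi)G\I+(1-\varepsilon)(\Id-\Pi)G^\dag\I,
\]
after which the triangle inequality together with $\|(\Id-\Pi)\,\cdot\,\|_\sigma\leq\|\cdot\|_\sigma$ delivers the sharp estimate $\|(\Id-\Pi)\Phi^{(1)}(\I)\|_\sigma\leq(2-\varepsilon)c_q$. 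Combining the two bounds and discarding the non-positive $-2\langle t_I\rangle^2$ gives
\[
\frac{{\rm var}_\sigma(T_I(k))}{k}\leq 3c_q^2+\frac{(2-\varepsilon)^2}{\varepsilon}c_q^2=\left(\frac{4}{\varepsilon}-(1-\varepsilon)\right)c_q^2,
\]
via the elementary simplification $3+(2-\varepsilon)^2/\varepsilon=4/\varepsilon-1+\varepsilon$, closing the argument.
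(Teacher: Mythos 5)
Your proposal is correct and follows essentially the same strategy as the paper: specialise the moment generating function bound of Theorem \ref{theo:dynActQ} to $\rho=\sigma$, match second-order coefficients to obtain ${\rm var}_\sigma(T_I(k))/k\leq r^{(2)}-2\langle t_I\rangle^2$, and then use the Rayleigh--Schr\"odinger expression for $r^{(2)}$. The only difference is organisational: where the paper substitutes $\hat{\Phi}=(1-\varepsilon)\Id+\varepsilon\Pi$ everywhere and bounds six-or-so scalar terms one by one (while keeping the $2\langle t_I\rangle^2$ piece explicit so that it cancels exactly against $-(r'(0))^2/2$), you instead exploit self-adjointness of $\Phi^{(1)}$ to write the second-order term as $\varepsilon^{-1}\|(\Id-\Pi)\Phi^{(1)}(\I)\|_\sigma^2$, bound it via the identity $(\Id-\Pi)\hat{\Phi}=(1-\varepsilon)(\Id-\Pi)$ to get the sharp estimate $(2-\varepsilon)c_q$, bound $|\langle\I,\Phi^{(2)}(\I)\rangle_\sigma|\leq 3c_q^2$ wholesale, and drop the non-positive $-2\langle t_I\rangle^2$; the two bookkeeping conventions yield the same constant. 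Your version is a little more compact and arguably makes clearer \emph{why} the coefficient $(2-\varepsilon)^2/\varepsilon$ appears, at the (harmless) price of bounding $|\langle\I,\Phi^{(2)}(\I)\rangle_\sigma|$ including the $2\langle t_I\rangle^2$ contribution rather than cancelling it.
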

\begin{proof}
Notice that for $u\geq 0$ small enough, one has
\[
\begin{split}
\log(\E_\sigma[e^{uT_{I}(k)}])&=\langle t_I \rangle ku+\frac{1}{2}{\rm var}_{\sigma}(T_{I}(k))u^2+o(u^2)\\
&\leq \frac{k}{2}\log(r(u))=\frac{k}{2}r^{\prime}(0)  u + \frac{k}{4}(r^{\prime\prime}(0)-(r^\prime(0))^2)u^2 + o(u^2)
\end{split}
\]

where $r(u)$ is given by equation \eqref{eq:qru1}. We recall that
\[r^\prime(0)=r^{(1)}=2\langle t_I \rangle = -2\left\langle \I,\LL_0^{-1}(\I)\right\rangle_\sigma\]
and
\[\begin{split}\frac{r^{\prime\prime}(0)}{2}\quad &=r^{(2)}=\langle \I, \Phi^{(2)} (\I)\rangle_\sigma + \left\langle \I,\Phi^{(1)}(\Id-\hat{\Phi})^{-1}\Phi^{(1)}(\I)\right\rangle_\sigma\\
&=2\langle \I, \LL_0^{-2} (\I) \rangle_\sigma+\langle \I, \LL_0^{-1}\hat{\Phi}(\LL^{\dagger}_{0})^{- 1} (\I) \rangle_\sigma\\
&+\langle \I, ((\LL_0^\dagger)^{-1}+\LL_0^{-1}\hat{\Phi})(\Id-\hat{\Phi})^{-1} (\hat{\Phi}(\LL_0^\dagger)^{-1}+\LL_0^{-1}) (\I) \rangle_\sigma\\
&=2\langle \I, \LL_0^{-1} (\I) \rangle^2_\sigma+2\langle \I, \LL_0^{-1}(\Id-\Pi) \LL_{0}^{-1}(\I) \rangle_\sigma \\
&+\langle \I, \LL_0^{-1}\hat{\Phi} (\LL_{0}^\dagger)^{-1}(\I)  \rangle_\sigma+ \varepsilon^{-1}\langle \I, (\LL_0^\dagger)^{-1}(\Id-\Pi) \LL_{0}^{-1}(\I) \rangle_\sigma\\
&+(1-\varepsilon)\varepsilon^{-1}(\langle \I, (\LL_0^\dagger)^{-1}(\Id-\Pi) (\LL_{0}^\dagger)^{-1}(\I) \rangle_\sigma+\langle \I, \LL_0^{-1}(\Id-\Pi) \LL_{0}^{-1}(\I) \rangle_\sigma)\\
&+(1-\varepsilon)^{2}\varepsilon^{-1}\langle \I, \LL_0^{-1}(\Id-\Pi) (\LL_{0}^\dagger)^{-1}(\I) \rangle_\sigma\\
&\leq 2\langle \I, \LL_0^{-1} (\I) \rangle^2_\sigma + \left (\frac{4}{\varepsilon}-(1-\varepsilon) \right )c_q^2.
\end{split}\]
Therefore one has
\[
\frac{r^{\prime\prime}(0)-(r^\prime(0))^2}{2}\leq \left (\frac{4}{\varepsilon}-(1-\varepsilon)\right ) c_q^{2}
\]
and we proved the statement.

\end{proof}

%%%%%%%%%%%%%%%%%%%
\subsection{Proof of Theorem \ref{theo:dynActReset} and Corollary \ref{dynActiTURreset}}
\label{app:proof_theo:dynActReset}
%%%%%%%%%%%%%%%%%%%%%%%
\begin{theoA}[Fluctuations of FPT for Total Counts in Reset Processes]
Assume that Hypothesis \ref{hypo:irrQuantL} holds ($\mathcal{L}$ be irreducible) the jump operators are of the form \eqref{eq.reset} (reset process). Let $\varepsilon$ be the spectral gap of $\P^\dag \P$. For every $\gamma>0$:
\begin{equation*}
\begin{split}
&\PP_\nu \left (\frac{T_I(k)}{k} \geq \langle t_I \rangle + \gamma \right) \leq C(\nu) \exp \left ( -k \frac{\gamma^2 \varepsilon}{4b_r^2}h\left ( \frac{5c_r\gamma}{2 b_r^2}\right )\right )\\
{\rm and}\\
&\PP_\nu \left (\frac{T_I(k)}{k} \leq \langle t_I \rangle - \gamma \right) \leq C(\nu) \exp \left ( -k \frac{\gamma^2 \varepsilon}{4b_r^2}h\left ( \frac{5c_r\gamma}{2 b_r^2}\right )\right ),\quad k\in \mathbb{N},
\end{split}
\end{equation*}
where $h(x):=(\sqrt{1+x}+\frac{x}{2}+1)^{-1}$ and $C(\nu):=(\sum_x \nu(x)^2/\pi(x))^{\frac{1}{2}}$.
\end{theoA}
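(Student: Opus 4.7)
The plan is to parallel the proof of the classical Theorem \ref{theo:dynActC}, leveraging the fact---established in Section \ref{sec:reset}---that for a reset process the sequence of click labels is an irreducible classical Markov chain on $I$ with transition matrix $\P$ of \eqref{transProbReset} and stationary distribution $\pi$. The essential twist is that the joint distribution of the waiting time and the next label, conditional on the current label $i$, is no longer a product: they are correlated via a joint density built from $f_i$ of \eqref{pdfReset}.

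First I would apply Chernoff's inequality and express the moment generating function using the semi-Markov structure,
\[
\E_\nu[e^{uT_I(k)}]=\langle\nu,\P_u^k\Iv\rangle,\qquad (\P_u)_{ij}:=\langle x_j|(-u\Id-\LL_{0*})^{-1}(\ket{y_i}\bra{y_i})|x_j\rangle,
\]
which is well-defined and analytic for $u<\overline{\lambda}$, with $\P_0=\P$. Passing to the Hilbert space $L^2(\pi)$ on the label space, Cauchy-Schwarz yields the prefactor $C(\nu)=\|\nu/\pi\|_\pi=(\sum_i\nu(i)^2/\pi(i))^{1/2}$; then, using $\P\Iv=\Iv$ together with a Cauchy-Schwarz comparison in the spirit of the Léon-Perron argument in the proof of Theorem \ref{theo:dynActC}, one should majorise $\|\P_u^k\Iv\|_\pi$ by the $k$-th power of the spectral radius $r(u)$ of a self-adjoint one-parameter perturbation of the Léon-Perron operator $\hat{\P}$ of \eqref{eq.LeonPerron}, whose spectral gap equals the absolute spectral gap $\varepsilon$ of $\P$.

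The heart of the proof is then a Kato perturbative expansion of $r(u)$ around $u=0$, mirroring \eqref{eq:ru_exp}. Writing $\P_u=\sum_{n\geq 0}u^n\P^{(n)}$ with coefficients $\P^{(n)}_{ij}=\langle x_j|(-\LL_{0*})^{-(n+1)}(\ket{y_i}\bra{y_i})|x_j\rangle$, the key a priori bound $\|\P^{(n)}\|\leq c_r^n$ (the reset analogue of $\|\hat{\P}\D^l\|_\pi\leq c_c^l$ in the classical proof) follows from Theorem \ref{th:RD} together with the Lindblad identity $\sum_j\ket{x_j}\bra{x_j}=-\LL_0(\I)$, which telescopes to $(-\LL_0)^{-(n+1)}(\sum_j\ket{x_j}\bra{x_j})=(-\LL_0)^{-n}(\I)$ and hence yields $\sum_j\P^{(n)}_{ij}\leq c_r^n$ via the definition $c_r=\|\LL_0^{-1}(\mathbf{1})\|_\infty$. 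A direct computation then identifies $r^{(1)}=\langle t_I\rangle$ and matches $r^{(2)}$ with $b_r^2$ plus a reduced-resolvent contribution of order $\varepsilon^{-1}$---here one uses the identity $2b_r^2=\sum_i\pi(i)\,\E[\tau^2|\,i]$, obtained by integrating by parts on $f_i$, which is the exact analogue of $2b_c^2=\sum_x\pi(x)(2/R_x^2)$ in the classical case. The counting bound \eqref{numTerms} on the number of terms of order $l$ then produces the same geometric estimates as in the classical proof with $b_c\rightsquigarrow b_r$ and $c_c\rightsquigarrow c_r$, yielding an exponential bound on $r(u)$ of the same form as in \eqref{notOptConc1} in the range $0\leq u<\varepsilon/(5c_r)$. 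Substituting back into Chernoff and optimising via the identity \eqref{optU} gives the stated right-deviation bound; the left-deviation bound follows by running the same argument with $u\leq 0$, since the bounds on each $|r^{(l)}|$ depend only on $|u|$. Corollary \ref{dynActiTURreset} then follows by matching $\log\E_\pi[e^{uT_I(k)}]$ against $k\log r(u)$ at $u=0$ and reading off $r''(0)-(r'(0))^2\leq (1+2/\varepsilon)b_r^2$.

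The main obstacle lies in the first step: unlike the classical case, $\P_u$ does not factorise as $\F_u\P$ with a diagonal $\F_u$, because the waiting time and the next label are genuinely correlated in a reset process, so the clean $\F_u^{1/2}\P\F_u^{1/2}$ symmetrisation of the classical proof is not directly available. One must therefore compare $\P_u$ against an appropriately tilted version of $\hat{\P}$ via a refined Cauchy-Schwarz argument tailored to the rank-one jump operators $L_i=\ket{y_i}\bra{x_i}$, or equivalently work with $\P_u^\dagger\P_u$ at the cost of reorganising the resulting $k/2$ exponent. Once this comparison step is in place, the Kato perturbation machinery \cite{kato1976perturbation} and the combinatorial counting bound of Lezaud \cite{lezaud1998chernoff-type} transfer essentially verbatim from the classical setting.
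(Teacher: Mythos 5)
You have put your finger on the step that is genuinely delicate, and — worth noting — it is also the step that the paper's own proof appears to gloss over. The paper's proof opens with the identity $\E_\nu[e^{uT_I(k)}]=\langle\nu,(\F_u\P)^k\Iv\rangle$ with $\F_u$ \emph{diagonal}, which, if correct, reduces the reset computation to the classical proof of Theorem \ref{theo:dynActC}; the rest of the argument then runs the Kato expansion on $\hat\P\F_u$ essentially verbatim, with $\D^{(l)}$ replacing $\D^l$. But this product form asserts that, conditionally on the current label $i$, the waiting time and the next click label are independent, and this fails for a generic reset process. The correct one-step tilted operator is $(\P_u)_{ij}=-\langle y_i|(u+\LL_0)^{-1}(|x_j\rangle\langle x_j|)|y_i\rangle=\int_0^\infty e^{ut}\,|\langle x_j|e^{Gt}|y_i\rangle|^2\,dt$. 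Its row sums do coincide with $(\F_u)_{ii}$ (thanks to $\sum_j|x_j\rangle\langle x_j|=-\LL_0(\I)$, so $\P_u\Iv=\F_u\Iv$), and the telescoping identity $\P^{(l)}\Iv=\D^{(l)}\Iv$ you invoke holds for the same reason; but $\P_u\neq\F_u\P$ as operators. A concrete check: with $H=\Omega(|0\rangle\langle1|+|1\rangle\langle0|)$, $L_1=|0\rangle\langle0|$, $L_2=|1\rangle\langle1|$, one gets $(\P_u)_{11}=\tfrac12\bigl(\tfrac{1}{1-u}+\tfrac{1-u}{(1-u)^2+4\Omega^2}\bigr)$ while $(\F_u)_{11}p_{11}=\tfrac12\bigl(\tfrac{1}{1-u}+\tfrac{1}{(1-u)(1+4\Omega^2)}\bigr)$, which coincide only at $u=0$. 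So the factorisation that both the paper and the ``clean transfer'' in your outline rely on is not available except in degenerate cases (a single jump operator, i.e. a renewal process, or the embedded classical chain where $e^{Gt}$ acts as a scalar on each $|y_i\rangle$).

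The fallback you propose --- bound $\|\P_u^k\Iv\|_\pi\le\|\P_u\|_\pi^k=r(\P_u^\dagger\P_u)^{k/2}$ and perturb the self-adjoint $\P_u^\dagger\P_u$, or equivalently apply Lemma \ref{lpLem} with $\mathbf{A}=\Id$, $\mathbf{B}=\Id$ to compare against $\hat\P$ --- is the honest route, but the $k/2$ exponent propagates through the optimisation and yields constants of the shape of Theorem \ref{theo:dynActQ} (e.g.\ denominator $8b_r^2$), not the sharper $4b_r^2$ stated in Theorem \ref{theo:dynActReset}. To recover the stated prefactor one would need a symmetrisation of $\P_u$ that does not cost a factor of two: the self-adjointness of the diagonal $\F_u$ is exactly what the paper wishes to exploit for this, but that only works if the (false) factorisation holds. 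So the ``refined Cauchy--Schwarz argument tailored to the rank-one jump operators'' that you leave unspecified is not a detail; it is the crux of the proof, and at present neither your proposal nor the paper's appendix actually supplies it, nor is it clear that the constants in the theorem as stated can be obtained along this line without an additional idea.
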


\begin{proof}
The general formula for the Laplace transform of $T_\EE(k)$ in the case of reset processes reads:
\[
\E_\nu[e^{uT_I(k)}]=\left\langle\nu, (\F_u \P)^k\Iv\right\rangle, \quad \text{ for }u<\overline{\lambda}.
\]
$\P$ is given by equation \eqref{transProbReset} and $\F_u$ is a diagonal matrix whose entries are $(\F_{u})_{ii}=\tr(\ket{y_i}\bra{y_i}(\Id+u\LL_0^{-1})^{-1}(\I))$, $y_i\in \mathbb{C}_d$. $\F_u$ is self adjoint and we can use Lemmas \ref{multOpLem} and \ref{lpLem} as in the classical case to upper bound the norm $\left\|\P\F_u\right\|_\pi$. We seek as before to expand $\hat{\P}\F_u$ ($\hat{\P}$ given by its definition in section \ref{CMarkov}):
\[
\hat{\P}\F_u = \hat{\P} + \sum_{l=1}^\infty u^l \hat{\P}\D^{(l)},
\]
where $\D^{(l)}$ is a diagonal matrix with entries $\tr\left(\ket{y_i}\bra{y_i}\left(-\LL_0^{-1}\right)^l(\I)\right)$. Note that contrary to the classical case this is not simply some diagonal matrix $\D$ raised to a power $l$. The $L_2(\pi)$ norm $\|\D^{(l)}\|_\pi=\D^{(l)}_{\rm max}$, where $\D^{(l)}_{\rm max}$ is the maximum absolute element in $\D^{(l)}$. Therefore, $\D^{(l)}$ can be upper bounded in $L_2(\pi)$ by:
\begin{equation}\label{eq:Dl_estim}
\begin{split}
\|\D^{(l)}\|_\pi & = \sup_{i\in I } ~\tr\left (\ket{y_{\tilde{i}}}\bra{y_{\tilde{i}}}\left(-\frac{\Id}{\LL_0}\right)^l(\I)\right )\\
&\leq \left\|\frac{\Id}{\LL_0}\right\|_{\infty\rightarrow \infty}^l=c_r^l.
\end{split}
\end{equation}
Similarly, if we set $b_r^2=\sum_{i\in I}\pi(i)\left\|\LL_{0*}^{-2}(\ket{y_i}\bra{y_i})\right\|_{1}$, we can upper bound $\|\D^{(l)}\Iv\|_\pi$:
\begin{equation}\label{eq:Dl1_estim}
    \begin{split}
        \|\D^{(l)}\Iv\|_\pi^2 &= \sum_{i\in I}\pi(i)\tr\left (\ket{y_i}\bra{y_i}\left(-\frac{\Id}{\LL_0}\right)^l(\I)\right)^2\\
        &\leq \sum_{i\in I}\pi(i)\tr\left(\ket{y_i}\bra{y_i}\left(-\frac{\Id}{\LL_0}\right)^l(\I)\right)\sup_{i}\left\{\tr\left(\ket{y_i}\bra{y_i}\left(-\frac{\Id}{\LL_0}\right)^l(\I)\right)\right\}\\
        &\leq\sum_{i\in I}\pi(i)\tr\left(\left(-\frac{\Id}{\LL_{0*}}\right)^2(\ket{y_i}\bra{y_i})\left(-\frac{\Id}{\LL_0}\right)^{l-2}(\I)\right)c_r^l\\
        &\leq b_r^2c_r^{2(l-1)}.
    \end{split}
\end{equation}
Therefore $\|\D^{(l)}\Iv\|_\pi\leq b_rc_r^{l-1}$. The final step which differs slightly from the classical case is in calculating the $r^{(l)}$ from equation \eqref{eq:ru_exp}. We get the same results for $r^{(0)}$, $r^{(1)}$ and $r^{(2)}$, but for $l\geq 3$ and $p=1$:
\[
-\tr(\hat{\P}\D^{(l)}(-\Pi))=\left\langle \Iv,\D^{(l)}\Iv\right\rangle_\pi\leq b_r^2c_r^{l-2},
\]
in which we used the same trick as when bounding $\|\D^{(l)}\Iv\|_\pi$. For $p\ge 2$:
\begin{equation*}
\begin{split}
-\tr(\hat{\P}\D^{(\nu_1)}\S^{(\mu_1)}\cdots \hat{\P}\D^{(\nu_p)}\S^{(\mu_p)})&=\left\langle \Iv, \D^{(\nu_1)}\S^{(\mu_1)}\cdots \hat{\P}\D^{(\nu_p)}\Iv\right\rangle_\pi\\
&\leq \|\D^{(\nu_1)} \Iv\|_\pi \|\D^{(\nu_p)}\Iv\|_\pi \|\S\|_\pi^{p-1} \|\D^{(\nu_2)}\|_\pi\cdots \|\D^{(\nu_{p-1})}\|_\pi\\
&\leq b_r^2\frac{c_r^{l-2}}{\varepsilon^{l-1}}.
\end{split}
\end{equation*}
From here the proof for the reset and classical process are identical and so we obtain Theorem \ref{theo:dynActReset}. Note that for left deviations we can indeed repeat the classical proof.

\end{proof}

\begin{coroA}
The variance of the first passage time for total counts is bounded from above by:
\[
\frac{{\rm var}_\pi(T_I(k))}{k}\leq \left (1 + \frac{2}{\varepsilon} \right )b_r^2.
\]
\end{coroA}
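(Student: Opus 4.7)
The plan is to mimic the classical Corollary \ref{dynActiTURC}, exploiting the machinery already developed in the proof of Theorem \ref{theo:dynActReset}. Setting $\nu=\pi$ so that $C(\pi)=1$, the proof of Theorem \ref{theo:dynActReset} gives
\[
\E_\pi[e^{uT_I(k)}]\le r(u)^k, \qquad r(u):=r(\hat{\P}\F_u),
\]
for $u$ in a neighbourhood of $0$. I would Taylor-expand both sides,
\[
\langle t_I\rangle ku + \tfrac{1}{2}\mathrm{var}_\pi(T_I(k))\,u^2 + o(u^2) \le k\log r(u)
= kr^{(1)}u + \tfrac{k}{2}\bigl(r''(0)-(r'(0))^2\bigr)u^2 + o(u^2).
\]
Matching at first order forces $r^{(1)}=\langle t_I\rangle$, so the quadratic comparison yields
$\mathrm{var}_\pi(T_I(k))/k \le 2r^{(2)}-(r^{(1)})^2$.

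Next I would compute $r^{(2)}$ via Kato's perturbation formula applied to the expansion $\hat{\P}\F_u = \hat{\P}+\sum_{l\ge 1}u^l\hat{\P}\D^{(l)}$ introduced in the proof of Theorem \ref{theo:dynActReset}. Using the same scheme as \eqref{eq:ru_exp}, the $p=1$ contribution to $r^{(2)}$ is $\langle\Iv,\D^{(2)}\Iv\rangle_\pi$, while the $p=2$ contribution (two copies of $\hat{\P}\D^{(1)}$ separated by the reduced resolvent $\S^{(1)}=-\varepsilon^{-1}(\I-\Pi)$) gives $\varepsilon^{-1}\bigl(\langle \D^{(1)}\Iv,\hat{\P}\D^{(1)}\Iv\rangle_\pi-\langle t_I\rangle^2\bigr)$. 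Substituting $\hat{\P}=(1-\varepsilon)\I+\varepsilon\Pi$ and using $\Pi\D^{(1)}\Iv=\langle t_I\rangle\Iv$ reduces this to
\[
r^{(2)} = \langle\Iv,\D^{(2)}\Iv\rangle_\pi + \tfrac{1-\varepsilon}{\varepsilon}\bigl(\|\D^{(1)}\Iv\|_\pi^2-\langle t_I\rangle^2\bigr).
\]

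To close the bound I would invoke two estimates on the diagonals, both in the spirit of the proof of Theorem \ref{theo:dynActReset}: the identity $\langle\Iv,\D^{(2)}\Iv\rangle_\pi=b_r^2$, which follows from the fact that $\LL_{0*}^{-2}(|y_i\rangle\langle y_i|)\ge 0$ (as $-\LL_{0*}^{-1}$ is completely positive) and hence $\mathrm{tr}(\cdot)=\|\cdot\|_1$; and $\|\D^{(1)}\Iv\|_\pi^2\le b_r^2$, which follows from Jensen's inequality $(\langle y_i|X|y_i\rangle)^2\le\langle y_i|X^2|y_i\rangle$ applied to the positive operator $X=-\LL_0^{-1}(\I)$, coupled with the duality rewriting $\langle y_i|X^2|y_i\rangle=\|\LL_{0*}^{-2}(|y_i\rangle\langle y_i|)\|_1$. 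Substituting and discarding the nonnegative term $\frac{2-\varepsilon}{\varepsilon}\langle t_I\rangle^2$ gives
\[
\frac{\mathrm{var}_\pi(T_I(k))}{k} \le 2b_r^2+\frac{2(1-\varepsilon)}{\varepsilon}b_r^2 = \frac{2b_r^2}{\varepsilon} \le \Bigl(1+\frac{2}{\varepsilon}\Bigr)b_r^2,
\]
which is the claim.

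The one delicate point is handling the $p=2$ term of the Kato expansion. Unlike the classical setting where all the perturbation coefficients are powers of a single diagonal matrix $\D$, here $\D^{(1)}$ and $\D^{(2)}$ are genuinely independent objects, so each must be bounded separately by $b_r^2$ via the positivity and Jensen arguments above rather than through an identity $\D^{(2)}=(\D^{(1)})^2$; once this is done the remaining algebra is routine.
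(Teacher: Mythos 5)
Your Kato expansion for $r^{(2)}$ is correct and in fact stated more carefully than the paper's own Corollary proof, which transcribes the classical identity $r^{(2)}=\langle\D\Iv,\D\Iv\rangle_\pi+\left\langle\D\Iv,\tfrac{\hat\P}{\I-\hat\P}\D\Iv\right\rangle_\pi$ verbatim even though $\D^{(2)}\neq(\D^{(1)})^2$ for reset processes; you correctly keep $\langle\Iv,\D^{(2)}\Iv\rangle_\pi$ for the $p=1$ term, and your simplification of the $p=2$ term checks out. The gap is the final estimate $\|\D^{(1)}\Iv\|_\pi^2\le b_r^2$. Your ``duality rewriting'' asserts $\langle y_i|X^2|y_i\rangle=\|\LL_{0*}^{-2}(\ket{y_i}\bra{y_i})\|_1$ with $X:=-\LL_0^{-1}(\I)$, but duality actually gives $\|\LL_{0*}^{-2}(\ket{y_i}\bra{y_i})\|_1=\tr\!\left(\ket{y_i}\bra{y_i}\,\LL_0^{-2}(\I)\right)=\langle y_i|\LL_0^{-2}(\I)|y_i\rangle$, and $\LL_0^{-2}(\I)=\LL_0^{-1}\!\left(\LL_0^{-1}(\I)\right)$ is the \emph{superoperator} applied twice, not the \emph{matrix} square $X^2=\left(\LL_0^{-1}(\I)\right)^2$. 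Writing $-\LL_0^{-1}(Y)=\int_0^\infty e^{G^*t}Ye^{Gt}\,dt$ one sees $\LL_0^{-2}(\I)=\int_0^\infty\int_0^\infty e^{G^*(s+t)}e^{G(s+t)}\,ds\,dt$ while $X^2=\int_0^\infty\int_0^\infty e^{G^*s}e^{Gs}e^{G^*t}e^{Gt}\,ds\,dt$; these coincide only when $G$ is normal. So the Cauchy--Schwarz step $\langle y_i|X|y_i\rangle^2\le\langle y_i|X^2|y_i\rangle$ bounds $\left((\D^{(1)})_{ii}\right)^2$ by the wrong quantity, and the needed inequality $\left((\D^{(1)})_{ii}\right)^2\le(\D^{(2)})_{ii}$ is not established.

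You should also be aware that this gap is inherited from the source: the paper's Corollary invokes ``$\|\D\Iv\|_\pi\le b_r$'' citing the estimate \eqref{eq:Dl1_estim}, but the chain of inequalities there inserts a factor $(-\LL_0^{-1})^{l-2}(\I)$ which only makes sense for $l\ge 2$, so the case $l=1$ that the Corollary actually needs is not covered by that derivation. A genuine argument must show $\left(\int_0^\infty f_i(t)\,dt\right)^2\le\int_0^\infty t\,f_i(t)\,dt$ for $f_i(t)=\|e^{Gt}y_i\|^2$; this holds when $G$ is normal (then $f_i$ is a mixture of exponentials and Jensen applies to the spectral measure), but it is not an automatic consequence of complete positivity. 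A minor additional remark: the identification $r^{(1)}=\langle t_I\rangle$ should come from the direct computation $r^{(1)}=\langle\Iv,\D^{(1)}\Iv\rangle_\pi=\langle t_I\rangle$, not from ``matching'' --- the one-sided inequality $\log\E_\pi[e^{uT_I(k)}]\le k\log r(u)$ with equality at $u=0$ only yields $\langle t_I\rangle\le r'(0)$, and one needs actual equality of first derivatives before comparing second-order coefficients.
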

\begin{proof}
\bigskip Notice that for $u\geq 0$ small enough, one has
\[
\begin{split}
\log(\E_\pi[e^{uT_{I}(k)}])&=\langle t_I \rangle ku+\frac{1}{2}{\rm var}_{\pi}(T_{I}(k))u^2+o(u^2)\\
&\leq k\log(r(u))=kr^\prime(0) u + \frac{k}{2}(r^{\prime\prime}(0)-(r^\prime(0))^2)u^2 + o(u^2)
\end{split}
\]

where $r(u)$ is given by equation \eqref{ruExp}. We recall using the definition of $\D$ from section \ref{app:proof_theo:dynActReset} that for reset processes
\[r^\prime(0)=r^{(1)}=\langle t_I \rangle = \langle \D\Iv, \Iv \rangle_\pi, \quad r^{\prime\prime}(0)=2r^{(2)}=2\langle \D\Iv, \D\Iv\rangle_\pi + 2\left\langle \D\Iv,\frac{\hat{\P}}{\I-\hat{\P}} \D\Iv\right\rangle_\pi
\]
and remind ourselves that $\|\D\Iv\|_\pi \leq b_r$. Therefore
\[
r^{\prime\prime}(0)-(r^\prime(0))^2=\langle {\bf D} \Iv, \Iv \rangle_\pi^2 + 2\left \langle \D\Iv, \frac{\I}{\I-\hat{\P}} \D\Iv\right \rangle_\pi \leq \left (1+\frac{2}{\varepsilon} \right )b_r^2.
\]
Hence ${\rm var}_\pi(T_{I}(k))\leq \left (1+\frac{2}{\varepsilon} \right )b_r^2k$.
\end{proof}

%%%%%%%%%%%%%%%%%%%%%%%%
\subsection{Proof of Theorem \ref{theo:countObsQ} and Corollary \ref{coro:cgiTURq}}
\label{proof_theo:countObsQ}
%%%%%%%%%%%%%%%%%%%%%%
\begin{theoA}[Rare Fluctuations of General Quantum Counting Observable FPTs]
    Assume that Hypothesis \ref{hypo:irrQuantL} holds ($\mathcal{L}$ be irreducible), and let $\A\subseteq I$ be nonempty. For every $\gamma > \beta - \langle t_\A \rangle$:
    \[
    \PP_\nu \left (\frac{T_{\A}(k)}{k} \geq \langle t_{\A} \rangle + \gamma \right) \leq \exp \left ( -k \left(\frac{\gamma+\langle t_{\A} \rangle-\beta}{\beta}-\log\left(\frac{\gamma+\langle t_{\A} \rangle}{\beta}\right)\right)\right ), k\in \mathbb{N}.
    \]
    \end{theoA}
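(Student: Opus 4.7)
The plan is to mirror the classical proof of Theorem \ref{theo:countObs}, with the role of the $\infty \to \infty$ norm on functions played by the corresponding superoperator norm on matrices, and using positivity-preserving properties of the maps involved. The starting point is Chernoff's inequality: for every $u \geq 0$,
\[
\PP_\rho\!\left(\frac{T_\A(k)}{k} \geq \langle t_\A\rangle + \gamma\right) \leq e^{-uk(\langle t_\A\rangle+\gamma)}\,\E_\rho[e^{uT_\A(k)}],
\]
so the task reduces to controlling the MGF. Combining the formula in point 2 of Lemma \ref{lem:tech1Q} with $\Psi = -\LL_\infty^{-1}\J_\A$ gives, for $0\le u<\overline{\lambda}$,
\[
\E_\rho[e^{uT_\A(k)}] = \tr\!\Big(\rho\,\big({-}(u+\LL_\infty)^{-1}\J_\A\big)^k(\I)\Big).
\]

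Next, I would expand the resolvent as a Neumann series in $u$. For $0 \leq u < \beta^{-1}$, using $(u+\LL_\infty)^{-1}\LL_\infty = (\Id + u\LL_\infty^{-1})^{-1} = \sum_{i \geq 0} u^i(-\LL_\infty^{-1})^i$ and regrouping,
\[
-(u+\LL_\infty)^{-1}\J_\A \;=\; \sum_{i\geq 0} u^i\,(-\LL_\infty^{-1})^i\,\Psi .
\]
The key observation is that each summand is completely positive: $-\LL_\infty^{-1} = \int_0^\infty e^{t\LL_\infty}\,dt$ is CP because $e^{t\LL_\infty}$ is CP, and $\J_\A$ is CP by construction, so $\Psi$ is CP. Moreover $\Psi$ is \emph{unital}: from $\LL(\I)=0$ we get $\J_\A(\I) = -\LL_\infty(\I)$, hence $\Psi(\I)=\I$, so $\|\Psi\|_{\infty\to\infty}=1$.

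Using H\"older's inequality $|\tr(\rho X)| \leq \|\rho\|_1 \|X\|_\infty = \|X\|_\infty$, submultiplicativity of the operator norm, and the identity $\|\mathcal{E}\|_{\infty\to\infty} = \|\mathcal{E}(\I)\|_\infty$ valid for positivity-preserving maps (Theorem \ref{th:RD}), together with $\beta = \|\LL_{\infty*}^{-1}\|_{1\to 1} = \|\LL_\infty^{-1}\|_{\infty\to\infty}$, I obtain
\[
\E_\rho[e^{uT_\A(k)}] \leq \Big\|\sum_{i\geq 0} u^i(-\LL_\infty^{-1})^i\Psi\Big\|_{\infty\to\infty}^k \leq \Big(\sum_{i\geq 0} u^i \beta^i\Big)^k = (1-\beta u)^{-k},
\]
for every $0 \leq u < 1/\beta$. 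Inserting this into the Chernoff bound and optimising the exponent $u(\gamma+\langle t_\A\rangle) + \log(1-\beta u)$ over $u \in [0,1/\beta)$ yields the minimiser $u^\ast = \beta^{-1} - (\gamma+\langle t_\A\rangle)^{-1}$, which is admissible precisely when $\gamma > \beta - \langle t_\A\rangle$, and produces exactly the rate in the statement.

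The main obstacle, compared with the classical argument, is making sure that the series expansion, the series bound, and the operator-norm manipulations are all justified by \emph{positivity} rather than by a classical $\ell^1$–$\ell^\infty$ duality: this is where unitality of $\Psi$, complete positivity of $-\LL_\infty^{-1}$, and the identity $\|\mathcal{E}\|_{\infty\to\infty}=\|\mathcal{E}(\I)\|_\infty$ for positive maps are essential. Once these are in place, the optimisation in $u$ is identical to the classical case and gives the claimed tail bound.
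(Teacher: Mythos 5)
Your proof is correct and follows essentially the same route as the paper: Chernoff bound, the MGF formula from Lemma~\ref{lem:tech1Q}, a Neumann-series expansion of the resolvent, a submultiplicative norm estimate giving $\E_\rho[e^{uT_\A(k)}]\le(1-\beta u)^{-k}$ for $0\le u<1/\beta$, and then the same optimisation over $u$ as in Theorem~\ref{theo:countObs}. If anything, you make explicit a few points the paper leaves implicit — unitality of $\Psi$ (so that $\|\Psi\|_{\infty\to\infty}=1$ via Russo--Dye) and complete positivity of $-\LL_\infty^{-1}$ — which are indeed the reasons the estimate goes through.
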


\begin{proof}
We once again use Chernoff bound to upper bound the probability in terms of the Laplace transform:
\[
\PP_\nu\left(\frac{T_{\A}(k)}{k}\geq\langle t_{\A} \rangle+\gamma\right)\leq e^{-uk(\langle t_{\A} \rangle+\gamma)}\E_\nu[e^{uT_{\A}(k)}], \quad u>0.
\]
Next, we write the Laplace transform in terms of the Hilbert-Schmidt inner product:
\begin{equation*}
\begin{split}
    \E_\rho[e^{uT_\A(k)}] &= \left \langle \rho, \left( \frac{\Id}{\Id + \frac{u}{\LL_\infty}}\Psi\right)^k(\I) \right \rangle_{\rm HS}\\
    & = \left \langle \rho, \left( \sum_{i=0}^\infty u^i \left(-\frac{\Id}{\LL_\infty}\right)\Psi\right)^k(\I) \right \rangle_{\rm HS}, \quad u<\|\LL_{\infty*}^{-1}\|_{1\rightarrow 1}^{-1}\\
    &\leq \|\rho\|_{1\rightarrow 1}\left( \sum_{i=0}^\infty u^i \left\|\frac{\Id}{\LL_{\infty*}}\right\|_{1\rightarrow 1}\|\Psi\|_{\infty\rightarrow \infty}\right)^k\left \|\I\right \|_{\infty\rightarrow \infty}.
\end{split}
\end{equation*}
We can denote $\beta:= \|\LL_{\infty*}^{-1}\|_{1\rightarrow 1}$ to obtain:
\begin{equation}\label{eq:mgfBoundCountQ}
    \E_\nu[e^{uT_{\A}(k)}]\leq \left(\frac{1}{1-\beta u}\right)^k = \exp\left(-k\log(1-\beta u)\right).
\end{equation}

From here on, the proof is identical to that of Theorem \ref{theo:countObs}.

\end{proof}

\begin{coroA}
Given any non-empty set of jumps $\A$, the variance of the corresponding first passage time at stationarity is bounded from above by:
\[
\frac{{\rm var}_\varsigma(T_{\A}(k))}{k}\leq \left ( 1+ \frac{2}{\tilde{\varepsilon}}\right )\beta^2,
\] 
where
$$\tilde{\varepsilon}:=1-\max\{\|\Psi(x)\|_{\infty\to\infty}: \|x\|_{\infty\to\infty}=1, \,\tr(\varsigma x) = 0\}.$$
\end{coroA}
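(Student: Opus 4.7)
The plan is to mirror the proof of the classical Corollary \ref{coro:cgiTUR}: first derive a closed-form expression for ${\rm var}_\varsigma(T_\A(k))$ by twice differentiating the moment generating function of Lemma \ref{lem:tech1Q} at $u=0$, then bound each resulting term by elementary norm estimates involving $\beta$ and $\tilde{\varepsilon}$.

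For the variance formula, I will set $M_u := (u+\LL_\infty)^{-1}\LL_\infty\Psi$, so that $\E_\varsigma[e^{uT_\A(k)}] = \tr(\varsigma M_u^k(\I))$, with $M_0 = \Psi$, $\partial_u M_u|_0 = -\LL_\infty^{-1}\Psi$, and $\partial_u^2 M_u|_0 = 2\LL_\infty^{-2}\Psi$. Two structural facts drive the computation: $\Psi$ is unital, since $\LL(\I)=0$ forces $\J_\A(\I) = -\LL_\infty(\I)$ and hence $\Psi(\I)=\I$; and $\Psi_*(\varsigma)=\varsigma$, so the projection $\Pi_\varsigma(x):=\tr(\varsigma x)\I$ satisfies $\Psi^m\Pi_\varsigma = \Pi_\varsigma\Psi^m = \Pi_\varsigma$ for every $m\geq 0$. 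Expanding $(M_u^k)''$ by the Leibniz rule, evaluating at $u=0$, tracing against $\varsigma$ (which collapses the outermost $\Psi^{i-1}$ factors by stationarity and the interior $\Psi^j(\I)=\I$ by unitality), subtracting $(k\langle t_\A\rangle)^2$, and inserting $\I = \Pi_\varsigma + (\I-\Pi_\varsigma)$ in the appropriate places, I obtain
\[
\frac{{\rm var}_\varsigma(T_\A(k))}{k} = \langle t_\A\rangle^2 + 2\tr\bigl(\varsigma\LL_\infty^{-1}(\I-\Pi_\varsigma)\LL_\infty^{-1}(\I)\bigr) + \frac{2}{k}\sum_{m=1}^{k-1}(k-m)\tr\bigl(\varsigma\LL_\infty^{-1}\Psi^m(\I-\Pi_\varsigma)\LL_\infty^{-1}(\I)\bigr).
\]

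The three contributions are then bounded as follows. By Theorem \ref{th:RD}, $\beta = \|\LL_\infty^{-1}\|_{\infty\to\infty} = \|\LL_\infty^{-1}(\I)\|_\infty$, and the complete positivity of $-\LL_\infty^{-1}$ gives $0 \leq -\LL_\infty^{-1}(\I) \leq \beta\I$ and $\langle t_\A\rangle \in [0,\beta]$, whence $\langle t_\A\rangle^2 \leq \beta^2$. The same positivity yields the sharpened estimate $\|(\I-\Pi_\varsigma)\LL_\infty^{-1}(\I)\|_\infty = \|{-\LL_\infty^{-1}(\I)} - \langle t_\A\rangle\I\|_\infty \leq \beta$, so combining with $\|\LL_\infty^{-1}\|_{\infty\to\infty} = \beta$ and H\"older gives $|\tr(\varsigma\LL_\infty^{-1}(\I-\Pi_\varsigma)\LL_\infty^{-1}(\I))|\leq \beta^2$. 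For the tail, since $(\I-\Pi_\varsigma)\LL_\infty^{-1}(\I)$ has zero $\varsigma$-expectation, the definition of $\tilde{\varepsilon}$ gives $\|\Psi^m(\I-\Pi_\varsigma)\LL_\infty^{-1}(\I)\|_\infty \leq (1-\tilde{\varepsilon})^m\beta$, and each summand is bounded by $\beta^2(1-\tilde{\varepsilon})^m$. Using $\frac{1}{k}\sum_{m=1}^{k-1}(k-m)(1-\tilde{\varepsilon})^m \leq \sum_{m=1}^\infty (1-\tilde{\varepsilon})^m = (1-\tilde{\varepsilon})/\tilde{\varepsilon}$, the three pieces sum to $\beta^2 + 2\beta^2 + 2\beta^2(1-\tilde{\varepsilon})/\tilde{\varepsilon} = (1+2/\tilde{\varepsilon})\beta^2$, as desired.

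I expect the main obstacle to lie in the first step: the Leibniz expansion of $(M_u^k)''$ produces a double sum of composites of the form $\Psi^{i-1}\LL_\infty^{-1}\Psi^{j-i}\LL_\infty^{-1}\Psi^{k-j+1}$ applied to $\I$, and one must carefully exploit unitality and $\Psi_*$-stationarity to collapse the boundary $\Psi$ factors, re-index the symmetric pair sum $i<j$, and then split via $\Pi_\varsigma + (\I-\Pi_\varsigma)$ so that the $k^2\langle t_\A\rangle^2$ piece is extracted cleanly and the off-diagonal contraction responsible for the $2/\tilde{\varepsilon}$ factor is exposed. Once that identity is in place, the remaining estimates are routine H\"older and geometric-series bookkeeping, the sharpness of the constant coming from the positivity-based refinement $\|(\I-\Pi_\varsigma)\LL_\infty^{-1}(\I)\|_\infty \leq \beta$ rather than the trivial $2\beta$.
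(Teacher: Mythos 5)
Your proof is correct and follows essentially the same route as the paper, which invokes the variance identity of Lemma \ref{lem:asympVarQ} (obtained exactly as you describe, by twice differentiating the MGF from Lemma \ref{lem:tech1Q} and splitting via $\Pi_\varsigma + (\I-\Pi_\varsigma)$) and then bounds each piece using $\beta$ as the longest timescale and iterated $\tilde{\varepsilon}$-contraction of $\Psi$ on the mean-zero element $(\I-\Pi_\varsigma)\LL_\infty^{-1}(\I)$. The only differences are cosmetic: you bound the triangular weighted sum directly by the full geometric series $\sum_{m\geq 1}(1-\tilde{\varepsilon})^m$ rather than evaluating the finite sum and dropping a negative remainder as the paper does, and you make explicit the positivity argument for $\|(\I-\Pi_\varsigma)\LL_\infty^{-1}(\I)\|_\infty\leq\beta$ that the paper leaves implicit.
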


\begin{proof}
From the proof of Lemma \ref{lem:asympVarQ} one can see that
\[
\begin{split}
\frac{{\rm var}_{\varsigma}[T_\A(k)]}{k} &= \tr\left (\varsigma \LL_\infty^{-2}(\I) \right )^2  + 2\tr\left (\varsigma\LL_\infty^{-1}(\Id-\Pi_\varsigma)\LL_\infty^{-1}(\I) \right )\\
&+\frac{2}{k} \tr\left (\varsigma\LL_\infty^{-1}\sum_{i=2}^k\sum_{j=1}^{i-1}\Psi^j (\Id -\Pi_\varsigma)\LL_\infty^{-1}(\I) \right ) \\
& \leq\left (1+2\left ( 1+\frac{1}{k}\sum_{i=2}^k\sum_{j=1}^{i-1}(1-\tilde{\varepsilon})^j\right ) \right )\beta^2\\
&=\left (1+\frac{2}{\tilde{\varepsilon}} \right )\beta^2 -\frac{2((1-\tilde{\varepsilon})-(1-\tilde{\varepsilon})^{k+1})}{k \tilde{\varepsilon}^2}\beta^2\\
&\leq\left (1+\frac{2}{\tilde{\varepsilon}} \right )\beta^2.\\
\end{split}
\]
\end{proof}

%%%%%%%%%%%%%%%%%%%%
%Computation of variance (quantum)
%%%%%%%%%%%%%%%%%%%%

\begin{lemma}[Variance of Quantum FPTs]   \label{lem:asympVarQ}
Let $\varsigma$ be the invariant state of $\Psi$, cf. section \ref{sec:fptCount}, and let $\Pi_\varsigma$ be the map $\Pi_\varsigma: x\mapsto \tr\left( \varsigma x\right )\I$. The variance of the first passage time for counting observables $\forall k \geq 0$ is given by:
    \begin{equation}
    \begin{split}
    \frac{{\rm var}_\varsigma\left( T_{\A}(k)\right)}{k} = \tr \left (\varsigma \LL_\infty^{-1}(\I)\right) ^2 &+ 2 \tr\left ( \varsigma \LL_\infty^{-1}\frac{\Id}{\Id - \Psi}(\Id - \Pi_\varsigma)\LL_\infty^{-1}(\I)\right )\\
    &- \frac{2}{k}\tr\left ( \varsigma \LL_\infty^{-1}\frac{\Psi - \Psi^{k+1}}{(\Id - \Psi)^2}(\Id - \Pi_\varsigma)\LL_\infty^{-1}(\I)\right ).
    \end{split}
    \end{equation}
\end{lemma}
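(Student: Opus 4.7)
The plan is to mirror the derivation of the classical counterpart Lemma \ref{lem:asympVar} using the quantum moment generating function provided by Lemma \ref{lem:tech1Q}:
\[
\E_\varsigma[e^{uT_\A(k)}]=\tr\!\left(\varsigma\,\bigl((u+\LL_\infty)^{-1}\LL_\infty\Psi\bigr)^k(\I)\right).
\]
Writing $G(u):=(u+\LL_\infty)^{-1}\LL_\infty\Psi$, one has $G(0)=\Psi$, $G'(0)=-\LL_\infty^{-1}\Psi$, and $G''(0)=2\LL_\infty^{-2}\Psi$, so differentiating $\tr(\varsigma\,G(u)^k(\I))$ produces sums of products of $\Psi$'s and $\LL_\infty^{-j}$'s applied to $\I$. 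The two ingredients that make these sums collapse nicely are $\Psi(\I)=\I$ (which follows from $\Psi=-\LL_\infty^{-1}\J_\A$ together with $\J_\A(\I)=-\LL_\infty(\I)$) and $\Psi_*(\varsigma)=\varsigma$, so that $\tr(\varsigma\,\Psi^{i}(\cdot))=\tr(\varsigma\,\cdot\,)$ for every $i\geq 0$.

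First I would compute the first derivative of the MGF at $u=0$. After expanding and using the two invariance identities to collapse all $k$ surviving terms to a common value, one obtains $\E_\varsigma[T_\A(k)] = -k\,\tr(\varsigma\LL_\infty^{-1}(\I))$, consistent with equation \eqref{eq:asymptotic.fpt}. Differentiating once more and again invoking $\Psi(\I)=\I$ and $\Psi_*(\varsigma)=\varsigma$ yields
\[
\E_\varsigma[T_\A(k)^2] = 2k\,\tr\!\left(\varsigma\LL_\infty^{-2}(\I)\right) + 2\sum_{m=1}^{k-1}(k-m)\,\tr\!\left(\varsigma\LL_\infty^{-1}\Psi^{m}\LL_\infty^{-1}(\I)\right),
\]
where the combinatorial factor $(k-m)$ counts ordered pairs $1\le i<j\le k$ with $j-i=m$. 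Subtracting $\E_\varsigma[T_\A(k)]^2$ and dividing by $k$ gives a preliminary expression for the variance.

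To match the stated form I would then insert $\Id=\Pi_\varsigma+(\Id-\Pi_\varsigma)$ immediately before the rightmost $\LL_\infty^{-1}(\I)$ in each trace. Because $\Psi^m\Pi_\varsigma=\Pi_\varsigma$ and $\Pi_\varsigma\LL_\infty^{-1}(\I)=\tr(\varsigma\LL_\infty^{-1}(\I))\,\I$, the $\Pi_\varsigma$-contribution of each term equals $\tr(\varsigma\LL_\infty^{-1}(\I))^2$; a straightforward count $2+(k-1)-k=1$ shows that these $\Pi_\varsigma$-pieces combined with $-\tfrac{1}{k}\E_\varsigma[T_\A(k)]^2$ leave exactly $+\tr(\varsigma\LL_\infty^{-1}(\I))^2$. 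The $(\Id-\Pi_\varsigma)$-pieces are then resummed using the finite-geometric identity
\[
\sum_{m=1}^{k-1}(k-m)\Psi^{m} \;=\; k\,\frac{\Psi}{\Id-\Psi} \;-\; \frac{\Psi-\Psi^{k+1}}{(\Id-\Psi)^{2}},
\]
which is well defined on $\mathrm{ran}(\Id-\Pi_\varsigma)$ since $\Id-\Psi$ is invertible there by uniqueness of $\varsigma$. Finally, absorbing the $2\tr(\varsigma\LL_\infty^{-1}(\Id-\Pi_\varsigma)\LL_\infty^{-1}(\I))$ coming from $2\tr(\varsigma\LL_\infty^{-2}(\I))$ via $\Id+\Psi(\Id-\Psi)^{-1}=(\Id-\Psi)^{-1}$ produces the stated identity.

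The main obstacle is purely bookkeeping: everything lives in the noncommutative matrix algebra, so operator ordering must be respected at every step, and one must verify carefully that $\Pi_\varsigma$ is a two-sided projection commuting with $\Psi$ (which is precisely where the two invariance identities $\Psi(\I)=\I$ and $\Psi_*(\varsigma)=\varsigma$ are used). No perturbative or spectral-gap input is required, so once this algebraic bookkeeping is done the classical argument from Appendix \ref{app:countObs} transfers verbatim.
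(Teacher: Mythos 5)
Your proposal is correct and follows essentially the same approach as the paper's proof: differentiate the MGF from Lemma \ref{lem:tech1Q} twice at $u=0$, collapse the resulting sums via $\Psi(\I)=\I$ and $\Psi_*(\varsigma)=\varsigma$, insert $\Pi_\varsigma + (\Id - \Pi_\varsigma)$ before the final $\LL_\infty^{-1}(\I)$, and resum the finite geometric series using $\Id+\Psi(\Id-\Psi)^{-1}=(\Id-\Psi)^{-1}$. The only cosmetic difference is that you pass directly to the reindexed sum $\sum_{m=1}^{k-1}(k-m)\Psi^m$, whereas the paper first writes the double sum $\sum_{1\le j<i\le k}\Psi^j$ before collapsing.
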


\begin{proof}
    We recall the explicit expression for the moment generating function from Lemma \ref{lem:tech1Q}
    \[
        \E_\varsigma[e^{uT_{\A}(k)}]=\tr\left ( \varsigma \left((u+\mathcal{L}_\infty)^{-1}\mathcal{L}_\infty\Psi \right)^k(\I)\right ).
        \]

    We can write the first moment as

    \begin{equation}\label{eq:mgfDerivQ}
        \E_\varsigma[e^{uT_\A(k)}]' = -\sum_{i=1}^k \tr\left(\varsigma\left ( \frac{\LL_\infty}{u + \LL_\infty} \Psi\right )^{i-1}\left ( \frac{\LL_\infty}{(u + \LL_\infty)^2} \Psi\right )\left ( \frac{\LL_\infty}{u + \LL_\infty} \Psi\right )^{k-i}(\I) \right).
    \end{equation}
    
    At $u = 0$ this gives us the form of the asymptotic mean.
    \begin{equation}\label{eq:asympMeanQ}
        \E_\varsigma[T_\A(k)] = -k\tr \left(  \varsigma \LL_\infty^{-1}(\I) \right).
    \end{equation}

    Differentiating equation \eqref{eq:mgfDeriv} at $u=0$ gives us the second moment:
    \begin{equation*}
    \begin{split}
        \E_\varsigma[T_\A(k)^2] &= 2 \tr \left(\varsigma\LL_\infty^{-2}(\I) \right) k + \sum_{i=1}^k \tr\left (\varsigma \LL_\infty^{-1}\left ( \sum_{j=1}^{i-1}\Psi^{i-j} + \sum_{j=1}^{k-i}\Psi^j\right ) \LL_\infty^{-1}(\I) \right ).\\
        &= 2\tr\left(\varsigma \LL_\infty^{-2}(\I) \right) k + 2 \tr\left (\varsigma \LL_\infty^{-1} \sum_{1 \leq j < i \leq k}\Psi^j \LL_\infty^{-1}(\I) \right)\\
        &= 2\tr\left (\varsigma \LL_\infty^{-2}(\I) \right) k + 2 \sum_{i=2}^k\sum_{j=1}^{i-1}\tr\left (\varsigma \LL_\infty^{-1}(\I) \right )^2\\
        &\qquad\qquad\qquad\qquad+ 2 \tr\left (\varsigma \LL_\infty^{-1}\sum_{i=2}^k\sum_{j=1}^{i-1}\Psi^j (\Id -\Pi_\varsigma)\LL_\infty^{-1}(\I) \right )
    \end{split}
    \end{equation*}
    where to arrive at the third line, after $\Psi^j$ we have inserted $\Pi_\varsigma + \Id - \Pi_\varsigma$, where $\Pi_\varsigma$ is the projection onto $\I$. Using the fact that $\sum_{i=2}^k\sum_{j=1}^{i-1}1 = \frac{k}{2}(k-1)$, and recalling equation \eqref{eq:asympMeanQ} for the expression for the first moment, hence
    
    \begin{equation*}
        \begin{split}
        \frac{{\rm var}_\varsigma\left ( T_{\A}(k)\right)}{k} = 2\tr\left (\varsigma\LL_\infty^{-2}(\I)\right ) &- \tr \left (\varsigma \LL_\infty^{-1}(\I) \right )^2\\
        &+ 2 \tr\left (\varsigma\LL_\infty^{-1}\frac{\Psi}{\Id - \Psi}(\Id -\Pi_\varsigma)\LL_\infty^{-1}(\I) \right )\\
        &-\frac{2}{k} \tr \left (\varsigma\LL_\infty^{-1}\frac{\Psi - \Psi^{k+1}}{(\Id - \Psi)^2}(\Id -\Pi_\varsigma)\LL_\infty^{-1}(\I) \right ).
    \end{split}
    \end{equation*}
    Finally, we can again place $\Pi_\varsigma + \Id - \Pi_\varsigma$ in the first term, in between the two $\LL_\infty^{-1}$. Rearranging this gives the final result, $\forall k \geq 0$
    \begin{equation*}
    \begin{split}
        \frac{{\rm var}_\varsigma\left( T_{\A}(k)\right)}{k} = \tr \left (\varsigma \LL_\infty^{-1}(\I)\right) ^2 &+ 2 \tr\left ( \varsigma \LL_\infty^{-1}\frac{\Id}{\Id - \Psi}(\Id - \Pi_\varsigma)\LL_\infty^{-1}(\I)\right )\\
        &- \frac{2}{k}\tr\left ( \varsigma \LL_\infty^{-1}\frac{\Psi - \Psi^{k+1}}{(\Id - \Psi)^2}(\Id - \Pi_\varsigma)\LL_\infty^{-1}(\I)\right ).
    \end{split}
    \end{equation*}
    
\end{proof}

\subsection{Technical Results}
\label{app:technical}
\subsubsection{Perron-Frobenius Theory}

For readers' convenience we report below the results from Perron-Frobenius theory and, more in general, some well known results in the theory of positivity preserving maps acting on finite dimensional functional and matrix spaces ($C^*$-algebras) that we used in the proofs of this work. We refer to Chapter XIII in \cite{gantmacher1959the-theory} for the commutative case and to \cite{evans1978spectral} for the general case of $C^*$-algebras (which includes the cases of our interest). Let ${\cal A}$ be either $\ell^\infty(E)$ or $M_n(\mathbb{C})$ and let $\Psi$ be a positivity preserving map acting on ${\cal A}$. $\Psi$ is said to be irreducible if there exists no non-trivial projection $p \in {\cal A}$ such that
\begin{equation} \label{eq:irred}
\Psi(p) \leq \alpha p
\end{equation}
for some positive constant $\alpha$. One can show that if $\Psi$ is a transition matrix or a quantum channel, the definition above of irreducibility coincide with having a unique faithful invariant state.

\bigskip \begin{theo}[Perron-Frobenius Theory] \label{th:PF}
Let $r$ be the spectral radius of $\Psi$. The following statements hold true.
\begin{enumerate}
    \item $r \in {\rm Sp}(\Psi)$ and there exists $x \in {\cal A}$, $x \geq 0$ such that $\Psi(x)=rx$.
\end{enumerate}
If $\Psi$ is irreducible, then one has some further results.
\begin{enumerate}
\item $r$ is a geometrically simple eigenvalue;
\item $x$ is strictly positive and is the unique positive eigenvector.
\end{enumerate}
\end{theo}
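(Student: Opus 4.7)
The plan is to prove the assertions sequentially, handling both $\mathcal{A}=\ell^\infty(E)$ and $\mathcal{A}=M_n(\mathbb{C})$ with a unified cone-theoretic argument. Throughout let $\mathcal{A}_+$ denote the positive cone, fix a strictly positive state $\omega$ on $\mathcal{A}$, and denote by $\mathbf{e}$ the unit of $\mathcal{A}$ ($\Iv$ in the commutative case, the identity matrix otherwise). For the existence assertion, I would use a perturbation-plus-compactness argument: introduce the strictly positivity-improving perturbation $\Psi_\varepsilon(x):=\Psi(x)+\varepsilon\,\omega(x)\,\mathbf{e}$ for $\varepsilon>0$. The continuous self-map $x\mapsto \Psi_\varepsilon(x)/\omega(\Psi_\varepsilon(x))$ of the compact convex set $K:=\{x\in\mathcal{A}_+:\omega(x)=1\}$ admits a fixed point $x_\varepsilon\in K$ by Brouwer's theorem, so $\Psi_\varepsilon(x_\varepsilon)=r_\varepsilon x_\varepsilon$ for some $r_\varepsilon>0$. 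Monotonicity of the spectral radius under positive perturbations yields $r_\varepsilon\geq r$, while a convergent subsequence as $\varepsilon\downarrow 0$ produces $x\in K$ and $r_0\geq r$ with $\Psi(x)=r_0 x$; since $r_0\in{\rm Sp}(\Psi)$ we have $r_0\leq r$, hence $r_0=r$.

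To upgrade $x$ to a strictly positive eigenvector under irreducibility, let $p$ denote the support projection of $x$: the indicator of $\{i:x_i>0\}$ in the commutative case, and the orthogonal projection onto $\mathrm{ran}(x)$ in the matrix case. Functional calculus provides constants $0<c\leq\|x\|_\infty$ with $c\,p\leq x\leq\|x\|_\infty\,p$, and applying the order-preserving map $\Psi$ gives
\[
c\,\Psi(p)\leq \Psi(x)=r\,x\leq r\,\|x\|_\infty\,p,
\]
so $\Psi(p)\leq \alpha p$ with $\alpha=r\|x\|_\infty/c$. The irreducibility hypothesis \eqref{eq:irred} rules out non-trivial $p$, and since $x\neq 0$ forces $p\neq 0$, we conclude $p=\mathbf{e}$, i.e.\ $x$ is strictly positive.

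For geometric simplicity of $r$ and uniqueness of the positive eigenvector, I would argue in two steps. First, let $y$ be any eigenvector of $\Psi$ at $r$; since positive maps on $C^*$-algebras are $*$-preserving and $r$ is real, the self-adjoint and skew-adjoint parts of $y$ are both eigenvectors, reducing to self-adjoint $y$. Define $t:=\sup\{s\in\mathbb{R}:y-sx\in\mathcal{A}_+\}$; in the matrix case $t$ equals the smallest eigenvalue of $x^{-1/2}yx^{-1/2}$, well defined by the strict positivity of $x$ just established. Then $z:=y-tx\in\mathcal{A}_+$ is an eigenvector of $\Psi$ at $r$ with non-trivial kernel by construction, and applying the strict-positivity result to $z$ forces $z=0$, hence $y=tx$, so the $r$-eigenspace is one-dimensional. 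Second, for the global uniqueness of the positive eigenvector, apply the existence and strict-positivity results to the predual $\Psi_*$ (also positive and irreducible, with the same spectral radius) to obtain $\ell>0$ with $\Psi_*(\ell)=r\ell$; for any positive eigenvector $y$ at eigenvalue $\lambda$, the pairing $\lambda\langle\ell,y\rangle=\langle\ell,\Psi(y)\rangle=\langle\Psi_*\ell,y\rangle=r\langle\ell,y\rangle$ with $\langle\ell,y\rangle>0$ forces $\lambda=r$.

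The main difficulty is the noncommutative version of the strict-positivity step: defining the support projection and establishing the sandwich $c\,p\leq x\leq\|x\|_\infty\,p$ require functional calculus for positive matrices, and deducing $p\in\{0,\mathbf{e}\}$ from $\Psi(p)\leq \alpha p$ relies on the operator-algebraic formulation of irreducibility in \eqref{eq:irred} that is tailored to cover both the commutative and matrix cases (cf.\ \cite{evans1978spectral}). A subsidiary care point in the existence step is ensuring $r_0=r$ in the perturbation limit: rather than controlling $r_\varepsilon$ from above in terms of $r+O(\varepsilon)$, one bypasses this by noting that the limiting eigenvalue $r_0$ automatically lies in ${\rm Sp}(\Psi)$, hence $r_0\leq r$.
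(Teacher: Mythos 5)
The paper does not prove Theorem~\ref{th:PF}; it simply cites \cite{gantmacher1959the-theory} for the commutative case and \cite{evans1978spectral} for the general $C^*$-algebra case, so there is no internal proof to compare against. Your self-contained cone-theoretic argument is essentially correct and unifies both cases along standard Perron--Frobenius lines: Brouwer's fixed point theorem on the order simplex $K$ for existence (with the $\varepsilon\,\omega(\cdot)\mathbf{e}$ regularisation to avoid division by zero), the support-projection argument to upgrade positivity to strict positivity under irreducibility, the ``maximal subtraction'' $z=y-tx$ to pin down geometric simplicity, and duality against a positive left eigenvector of $\Psi_*$ to show all positive eigenvectors sit at eigenvalue $r$.

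Two steps are asserted but deserve a sentence of justification. First, the claim $r_\varepsilon\geq r$ is not purely ``monotonicity of the spectral radius'' stated in the abstract: one should note that $x_\varepsilon>0$ (because $\Psi_\varepsilon(x_\varepsilon)\geq\varepsilon\,\mathbf{e}$), hence $c\,\mathbf{e}\leq x_\varepsilon\leq C\,\mathbf{e}$ for some $0<c\leq C$, and then $\Psi^k(\mathbf{e})\leq\Psi_\varepsilon^k(\mathbf{e})\leq c^{-1}\Psi_\varepsilon^k(x_\varepsilon)=c^{-1}r_\varepsilon^k x_\varepsilon\leq (C/c)\,r_\varepsilon^k\,\mathbf{e}$; combined with the Russo--Dye theorem (Theorem~\ref{th:RD}) and Gelfand's formula this gives $r\leq r_\varepsilon$. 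Second, the parenthetical ``$\Psi_*$ also positive and irreducible'' is the self-duality of the irreducibility condition \eqref{eq:irred}, which is true but not immediate from the definition; the cleanest route is to invoke the equivalent characterisation that $\Psi$ is irreducible iff $(\Id+\Psi)^{n-1}$ is positivity improving, a property that is manifestly self-dual via the trace pairing, and note $\big((\Id+\Psi)^{n-1}\big)_*=(\Id+\Psi_*)^{n-1}$. With those two remarks added, the argument is complete.
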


\noindent 
Another important result is the Russo-Dye Theorem.

\bigskip \begin{theo}[Russo-Dye Theorem] \label{th:RD}
$\Psi$ attains its norm at the identity of ${\cal A}$.
\end{theo}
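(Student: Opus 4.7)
The plan is to handle the two forms of ${\cal A}$ appearing in the hypothesis separately, since the commutative case is essentially a computation while the matrix case requires more care.

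For the commutative case ${\cal A} = \ell^\infty(E)$: a positivity preserving map $\Psi$ is represented by a matrix with non-negative entries $\Psi_{xy} \geq 0$. The induced operator norm on $\ell^\infty$ is the maximum row sum, $\|\Psi\|_{\infty \to \infty} = \max_x \sum_y |\Psi_{xy}|$. Since the entries are non-negative, the absolute values can be dropped, and $\sum_y \Psi_{xy} = (\Psi \mathbf{1})(x)$, whose maximum over $x$ equals $\|\Psi(\mathbf{1})\|_\infty$. The inequality in the opposite direction is trivial, so equality holds.

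For the matrix case ${\cal A} = M_n(\mathbb{C})$: I would first reduce to self-adjoint inputs. If $a = a^*$ satisfies $\|a\|_\infty \leq 1$, then in the operator order $-\mathbf{1} \leq a \leq \mathbf{1}$. Applying the positive map $\Psi$ preserves the order, so $-\Psi(\mathbf{1}) \leq \Psi(a) \leq \Psi(\mathbf{1})$, from which $\|\Psi(a)\|_\infty \leq \|\Psi(\mathbf{1})\|_\infty$ follows immediately by the characterisation of the operator norm for self-adjoint elements. The key step is then extending this bound to arbitrary $x \in M_n(\mathbb{C})$ with $\|x\|_\infty \leq 1$. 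Here I would invoke the Russo--Dye lemma that every element of the closed unit ball of a unital $C^*$-algebra is a limit of convex combinations of unitaries, reducing the problem to showing $\|\Psi(U)\|_\infty \leq \|\Psi(\mathbf{1})\|_\infty$ for unitaries $U$. Since the $C^*$-subalgebra generated by a single unitary is commutative, any positive map restricted to it admits a Schwarz-type inequality of the form $\Psi(U)^*\Psi(U) \leq \|\Psi(\mathbf{1})\|_\infty\, \Psi(U^*U) = \|\Psi(\mathbf{1})\|_\infty\, \Psi(\mathbf{1})$ after suitable normalisation, which yields the required bound.

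The main obstacle is the non-self-adjoint part of the matrix case, since positive (as opposed to completely positive) maps do not in general satisfy a $2$-positive Schwarz inequality on all of $M_n(\mathbb{C})$. The workaround is precisely the restriction to the commutative subalgebra generated by a single unitary, where positivity is equivalent to complete positivity and the Schwarz inequality is automatic; this is the content of the original Russo--Dye argument. Combining the self-adjoint bound, the unitary-convex-combination reduction, and the trivial lower bound $\|\Psi\|_{\infty\to\infty} \geq \|\Psi(\mathbf{1})\|_\infty$ (which holds because $\|\mathbf{1}\|_\infty = 1$) yields the desired equality $\|\Psi\|_{\infty \to \infty} = \|\Psi(\mathbf{1})\|_\infty$.
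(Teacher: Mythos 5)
The paper does not actually supply a proof of this statement: it quotes the Russo--Dye theorem as a known result, pointing to \cite{gantmacher1959the-theory} for the commutative case and to \cite{evans1978spectral} for the $C^*$-algebraic case. So there is no ``paper proof'' to compare against, only the references.

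Your argument is a correct rendering of the standard proof. The commutative case is the immediate row-sum calculation you give: a positivity-preserving operator on $\ell^\infty(E)$ has non-negative entries, so its $\infty\to\infty$ norm, the maximum row sum, coincides with $\|\Psi(\mathbf{1})\|_\infty$. For the matrix case, the self-adjoint step is fine (and gives good intuition) but is in fact subsumed by the unitary step, since every self-adjoint contraction is already an average of two unitaries; the load-bearing part of the argument is the reduction to unitaries via the Russo--Dye description of the unit ball as the convex hull of unitaries (in $M_n(\mathbb{C})$ this hull is already closed, so no limiting argument is needed), combined with the observation that a positive map whose \emph{domain} is the commutative $C^*$-algebra $C^*(U,\mathbf{1})$ is automatically completely positive (Stinespring), hence satisfies the Schwarz inequality and yields $\Psi(U)^*\Psi(U)\leq \|\Psi(\mathbf{1})\|_\infty\,\Psi(\mathbf{1})$, i.e.\ $\|\Psi(U)\|_\infty\leq\|\Psi(\mathbf{1})\|_\infty$. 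Together with the trivial inequality $\|\Psi\|_{\infty\to\infty}\geq\|\Psi(\mathbf{1})\|_\infty$ this gives the claimed equality. This is precisely the textbook argument for positive maps between $C^*$-algebras, so the proposal is correct and fully consistent with what the paper is invoking.
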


\subsubsection{Irreducibility of the tilted semigroup}\label{sec:irred}

In this section we will show that when $\LL$ generates an irreducible quantum Markov semigroup (Hypothesis \ref{hypo:irrQuantL}), then the semigroup generated by any perturbation of the form
\begin{equation}\label{eq:perturbed}
\LL_s=\LL_0+\sum_{i\in I}e^{a_is}{\cal J}, \quad a_i,s \in \mathbb{R}\end{equation}
generates an irreducible semigroup as well, in the sense that for every $t>0$, $e^{t\LL_s}$ is irreducible according to the definition in Eq. \eqref{eq:irred}. In fact, what we will prove is even stronger: it is well known (see for instance \cite[Proposition 7.5]{wolf2012quantum}) that the irreducibility of $e^{t\LL}$ for $t>0$ is equivalent to the (a priori stronger) property that $e^{t\LL_*}(\rho)>0$ for every $t>0$ and every initial state $\rho$; we will show that such property, often called primitivity, is owned by the semigroup generated by $\LL_s$ as well.

\bigskip \begin{lemma} \label{lem:irredQ}
If $\LL$ satisfies Hypothesis \ref{hypo:irrQuantL}, then $\LL_s$ generates a primitive semigroup.
\end{lemma}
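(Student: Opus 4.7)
The strategy is to compare the Dyson expansion of $e^{t\LL_{s*}}$ with that of $e^{t\LL_*}$ term by term, exploiting that the perturbation $\LL_s - \LL = \sum_{i}(e^{a_is}-1){\cal J}_i$ only rescales the individual jump pieces by strictly positive factors. Writing
\[
e^{t\LL_{s*}}(\rho) = e^{t\LL_{0*}}(\rho) + \sum_{k\geq 1}\sum_{(i_1,\dots,i_k)\in I^k} e^{s\sum_{j=1}^k a_{i_j}}\, T_{(i_1,\dots,i_k)}(\rho),
\]
where
\[
T_{(i_1,\dots,i_k)}(\rho) := \int_{\Delta_k(t)} e^{(t-\sum_j t_j)\LL_{0*}}{\cal J}_{i_k *} e^{t_k\LL_{0*}}\cdots {\cal J}_{i_1 *} e^{t_1 \LL_{0*}}(\rho)\, dt_1\cdots dt_k
\]
with $\Delta_k(t):=\{(t_1,\dots,t_k)\in[0,t]^k : \sum_j t_j \leq t\}$, each $T_{(i_1,\dots,i_k)}(\rho)$ is positive semi-definite (an integral of compositions of the completely positive maps $e^{u\LL_{0*}}$ and ${\cal J}_{i*}$ applied to $\rho\geq 0$), and each coefficient $e^{s\sum_j a_{i_j}}$ is strictly positive for every $s\in\mathbb{R}$. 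The analogous expansion of $e^{t\LL_{*}}(\rho)$ involves the same positive operators $T_{(i_1,\dots,i_k)}(\rho)$, but with all coefficients equal to $1$. Absolute convergence of the perturbed series is immediate in finite dimension, via the uniform majorant $\exp\!\bigl(t\|\LL_{0*}\|+t|I|e^{|s|\max_i |a_i|}\max_i\|{\cal J}_{i*}\|\bigr)$.

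The key elementary fact is that if $\{A_n\}_{n\geq 0}$ is a countable family of positive semi-definite operators and $\{c_n\}$, $\{d_n\}$ are strictly positive weights for which the sums $A = \sum_n c_n A_n$ and $B = \sum_n d_n A_n$ converge in operator norm, then
\[
\ker A \;=\; \ker B \;=\; \bigcap_{n\geq 0} \ker A_n .
\]
Indeed, for any vector $v$, $\langle v, Av\rangle = \sum_n c_n \langle v, A_n v\rangle$ is a sum of non-negative numbers with strictly positive weights, so it vanishes iff $\langle v, A_n v\rangle = 0$ for every $n$, i.e.\ iff $v\in\bigcap_n\ker A_n$; the characterisation does not depend on the particular positive weights. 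Applied to the two Dyson expansions above this gives, for every $t>0$ and every state $\rho$,
\[
\ker e^{t\LL_{s*}}(\rho) \;=\; \ker e^{t\LL_{*}}(\rho).
\]

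Finally, Hypothesis \ref{hypo:irrQuantL} together with the equivalence between irreducibility of $\LL$ and primitivity of the associated semigroup (\cite[Proposition 7.5]{wolf2012quantum}) implies that the right-hand kernel is trivial for every $t>0$, so $e^{t\LL_{s*}}(\rho)>0$ for every state $\rho$ and every $t>0$, which is exactly primitivity of $e^{t\LL_{s*}}$. There is no serious obstacle in this plan: the only mildly delicate point is the convergence of the rescaled Dyson expansion and the permissibility of grouping positive contributions in the kernel-intersection argument, both of which are routine in the finite-dimensional setting.
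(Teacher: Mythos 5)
Your proof is correct and follows essentially the same route as the paper: both arguments split $\LL_s = \LL_0 + \sum_i e^{a_i s}\J_i$, expand $e^{t\LL_{s*}}(\rho)$ as a Dyson series whose terms are completely positive integrals with strictly positive scalar coefficients $e^{s\sum_j a_{i_j}}$, observe that $\langle v, e^{t\LL_{s*}}(\rho)v\rangle=0$ therefore forces each individual term to vanish (and hence $\langle v, e^{t\LL_*}(\rho)v\rangle=0$, since the Dyson series for $\LL_*$ uses the same terms with coefficients $1$), and then invoke primitivity of $e^{t\LL}$ to conclude $v=0$. The only difference is cosmetic: you package the positivity step as a small lemma about kernels of positive combinations of PSD operators, while the paper manipulates the quadratic form directly; also your opening remark identifies the perturbation as $\LL_s-\LL = \sum_i (e^{a_is}-1)\J_i$, whose coefficients can be negative, but the expansion you actually write down correctly uses the split $\LL_s = \LL_0 + \sum_i e^{a_is}\J_i$, so this is harmless.
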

\begin{proof}
    Let us consider a vector $v \in \mathbb{C}^d$ and a state $\rho$ such that $v \in \ker(e^{t\LL_{s*}}(\rho))$ for some $t>0$. Then, using the Dyson series, one can easily see that $\langle v, e^{t\LL_{s*}}(\rho) v \rangle=0$ implies that $\langle v, e^{t\LL_{0*}}(\rho) v \rangle=0$ and for every $k \geq 1$ and $i_1,\dots, i_k \in I$
    $$
    e^{s\sum_{j=1}^{k}a_{i_j}}\int_{\sum_{j=1}^{k}t_j \leq t}\langle v, e^{\left (t -\sum_{j=1}^{k}t_j \right )\LL_{0*}} {\cal J}_{i_k} \cdots {\cal J}_{i_1}e^{t_1 \LL_{0*}}(\rho) v\rangle dt_1 \cdots dt_k =0,$$
    which is equivalent (since $e^{s\sum_{j=1}^{k}a_{i_j}}>0$) to
    $$\int_{\sum_{j=1}^{k}t_j \leq t}\langle v, e^{\left (t -\sum_{j=1}^{k}t_j \right )\LL_{0*}} {\cal J}_{i_k} \cdots {\cal J}_{i_1}e^{t_1 \LL_{0*}}(\rho) v\rangle dt_1 \cdots dt_k=0.$$
    However, using now the Dyson series for $e^{t\LL_{0*}}(\rho)$, one sees that equations above imply that $\langle v,e^{t\LL_{0*}}(\rho) v \rangle =0$. However, since $e^{t\LL}$ is primitive, this implies that $v=0$ and we are done.
\end{proof}

In the case of a classical Markov chain, the equivalence of irreducibility and primitivity for a Markov chain is a consequence of Levy's Theorem (\cite[Theorem 8]{freedman2012approximating}). With the same proof line, one can show the result also for the generator $\L$ of a classical Markov chain and its perturbations of the form
$$
\L_s=\sum_{x \neq y} e^{sa_{xy}}w_{xy}-\RR, \quad a_{xy},s \in \mathbb{R}.$$
Therefore we can state the following.

\bigskip \begin{lemma} \label{lem:irredC}
If $\L$ satisfies Hypothesis \ref{hypo:irrC}, then $\L_s$ generates a primitive semigroup.
\end{lemma}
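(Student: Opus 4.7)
The plan is to mirror the argument of Lemma \ref{lem:irredQ}, which for classical processes reduces to a transparent path-counting argument. Splitting $\L_s = -\RR + \W_s$ with $\W_s := \sum_{x \neq y} e^{s a_{xy}} w_{xy} |x\rangle\langle y|$, the Dyson expansion of the dual semigroup reads
\[
e^{t\L_{s*}}(\nu) = e^{-t\RR_*}(\nu) + \sum_{k \geq 1} \int_{\{\sum_j t_j \leq t\}} e^{\left(t - \sum_j t_j\right)(-\RR_*)} \W_{s*}\, e^{t_k(-\RR_*)} \cdots \W_{s*}\, e^{t_1(-\RR_*)}(\nu)\, dt_1 \cdots dt_k.
\]
The goal is to show primitivity, i.e.\ that $\langle \delta_y, e^{t\L_{s*}}(\nu)\rangle > 0$ for every $t > 0$, every $y \in E$, and every nonzero nonnegative measure $\nu$ on $E$.

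Because $-\RR$ is diagonal and $\W_s$ is purely off-diagonal, the $k$-th term above decomposes as a sum over discrete trajectories $(x_0, x_1, \ldots, x_k)$ in $E$ with $w_{x_{j-1} x_j} > 0$ for every $j$; the contribution of each trajectory factorises as
\[
\Bigl( e^{s \sum_{j=1}^k a_{x_{j-1} x_j}} \Bigr) \Bigl( \prod_{j=1}^k w_{x_{j-1} x_j} \Bigr)\, I_k(t; x_0, \ldots, x_k)\, \nu(x_0),
\]
where $I_k(t; x_0,\dots,x_k)>0$ is a time-ordered integral depending only on the escape rates along the path. The tilting prefactor $e^{s \sum a}$ is strictly positive for every $s\in\mathbb{R}$, so each path contributes nonnegatively and vanishes in the tilted expansion if and only if it vanishes in the untilted expansion at $s=0$.

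Now suppose by contradiction that $\langle \delta_y, e^{t\L_{s*}}(\nu)\rangle = 0$ for some $t>0$, $y\in E$, and nonzero $\nu \geq 0$. All the path contributions being nonnegative, each must vanish individually; dividing through by the strictly positive tilting factors yields vanishing of the corresponding untilted contributions, hence $\langle \delta_y, e^{t\L_{*}}(\nu)\rangle = 0$. By Hypothesis \ref{hypo:irrC}, $\L$ is irreducible, and Lévy's theorem (cf.\ \cite[Theorem 8]{freedman2012approximating}) guarantees that $e^{t\L_*}$ has strictly positive entries for every $t>0$, delivering the contradiction. The main obstacle is only bookkeeping --- writing out the path decomposition of the Dyson series in a clean way --- and is in fact simpler than in the quantum case, since $-\RR$ and $\W_s$ split as genuinely commuting diagonal and off-diagonal components and the Dyson summands are literally sums over deterministic state sequences, bypassing the operator-ordering delicacy of the noncommutative setting.
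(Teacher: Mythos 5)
Your proof is correct and follows essentially the same approach as the paper, which simply refers the reader to the Dyson-series argument of Lemma \ref{lem:irredQ} combined with L\'evy's theorem for the base case $s=0$. You have made the path-decomposition of the Dyson series explicit for the classical setting, which the paper leaves implicit, but the underlying idea --- that the tilting factors $e^{s\sum a}$ are strictly positive and so do not alter the support of $e^{t\L_{s*}}(\nu)$ --- is identical. (One small imprecision: $-\RR$ and $\W_s$ do not commute; what makes the classical case transparent is merely that $\RR$ is diagonal, so the Dyson summands really are sums over explicit state sequences, but this does not affect your argument.)
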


\subsubsection{Technical Lemmas} Below we report in our notation two technical lemmas that were proved in \cite{fan2021hoeffdings} and which are used in the proofs of some of the bounds obtained in this paper. 

\bigskip \begin{lemma}[Lemma 21 (i) in \cite{fan2021hoeffdings}]\label{multOpLem}
Let $\mathbf{M}_f:L_\pi^2(E)\rightarrow L_\pi^2(E)$ be the multiplication operator associated to a real valued function $f$, i.e. $\mathbf{M}_fg=fg$ for every $g \in L_\pi^2(E)$. Let $\hat{\bf P}$ be the Le\'{o}n-Perron  operator defined in equation \eqref{eq.LeonPerron}. Then the following statement holds:
\[
\|f\|_\pi^2\leq\|\mathbf{M}_f\hat{\P}\mathbf{M}_f\|_\pi.
\]

\end{lemma}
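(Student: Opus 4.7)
The plan is to apply the variational characterisation of the operator norm to the self-adjoint positive semi-definite operator $\mathbf{M}_f\hat{\P}\mathbf{M}_f$ on $L^2_\pi(E)$, using the test vector $g:=f/\|f\|_\pi$ (the case $f=0$ being trivial). Recall that $\hat{\P}=(1-\varepsilon)\I+\varepsilon\Pi$ is a convex combination of the identity and the rank-one orthogonal projection $\Pi$ onto $\Iv$, so $\hat{\P}$ is self-adjoint and positive; since $f$ is real-valued, $\mathbf{M}_f$ is also self-adjoint, and hence $\mathbf{M}_f\hat{\P}\mathbf{M}_f$ is positive semi-definite with operator norm equal to its largest eigenvalue. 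In particular, $\|\mathbf{M}_f\hat{\P}\mathbf{M}_f\|_\pi\geq\langle g,\mathbf{M}_f\hat{\P}\mathbf{M}_f g\rangle_\pi$ for any unit vector $g$.

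With the choice $g=f/\|f\|_\pi$ one has $\mathbf{M}_f g=f^2/\|f\|_\pi$; substituting the explicit formula for $\hat{\P}$ together with the identity $\langle f^2,\Iv\rangle_\pi=\|f\|_\pi^2$ yields, after a direct computation,
\[
\langle g,\mathbf{M}_f\hat{\P}\mathbf{M}_f g\rangle_\pi=(1-\varepsilon)\frac{\|f^2\|_\pi^2}{\|f\|_\pi^2}+\varepsilon\|f\|_\pi^2.
\]

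The only nontrivial ingredient is then to observe that Cauchy--Schwarz applied to the pairing $\langle f^2,\Iv\rangle_\pi$ gives $\|f\|_\pi^2=\langle f^2,\Iv\rangle_\pi\leq\|f^2\|_\pi\cdot\|\Iv\|_\pi=\|f^2\|_\pi$, since $\Iv$ is a unit vector in $L^2_\pi(E)$. Squaring yields $\|f^2\|_\pi^2\geq\|f\|_\pi^4$, so the first term in the display above is at least $(1-\varepsilon)\|f\|_\pi^2$, and the two contributions combine to give $\|f\|_\pi^2$, proving the claim. Everything else is bookkeeping; the subtle point is precisely this Cauchy--Schwarz cancellation, which is what makes the lower bound independent of the absolute spectral gap $\varepsilon$.
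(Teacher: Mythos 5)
Your argument is correct. The paper itself does not prove this statement (it only cites Lemma~21(i) of Fan et al.), so there is no in-paper proof to compare against; your derivation provides a correct, self-contained justification. The variational step $\|\mathbf{M}_f\hat{\P}\mathbf{M}_f\|_\pi \geq \langle g,\mathbf{M}_f\hat{\P}\mathbf{M}_f\,g\rangle_\pi$ for the unit test vector $g=f/\|f\|_\pi$ (valid since the operator is self-adjoint, $\hat{\P}$ being a convex combination of $\I$ and the orthogonal projection $\Pi$, and $\mathbf{M}_f$ self-adjoint for real $f$) is exactly the natural route, the explicit evaluation
\[
\langle g,\mathbf{M}_f\hat{\P}\mathbf{M}_f\,g\rangle_\pi=(1-\varepsilon)\frac{\|f^2\|_\pi^2}{\|f\|_\pi^2}+\varepsilon\|f\|_\pi^2
\]
is right, and the Cauchy--Schwarz bound $\|f\|_\pi^2=\langle f^2,\Iv\rangle_\pi\leq\|f^2\|_\pi$ (using $\|\Iv\|_\pi=1$) correctly closes the estimate so that the $\varepsilon$-dependence cancels. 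The trivial case $f=0$ is handled, and nothing else is needed.
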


The following result is a slight generalisation of Lemma 2.1 (iii) in \cite{fan2021hoeffdings}. In the following, we will use $\Psi$ to denote either a transition matrix $\P$ or a quantum channel $\Phi$, $\chi$ to indicate their invariant state, i.e. $\pi$ and $\sigma$, and we denote by ${\cal H}$ the Hilbert space corresponding to their invariant state, i.e. $L^2_\pi(E)$ and $L^2(\sigma)$, respectively. We recall that the notation $\hat{\Psi}$ stands for the Le\'{o}n-Perron version of $\Psi$.

\bigskip \begin{lemma}\label{lpLem}
For any operators $\mathbf{A},\mathbf{B}$ acting on $\mathcal{H}$, the following holds true:
\[
\|\mathbf{A}\Psi \mathbf{B}\|_\chi\leq\left\|\mathbf{B}^\dag\hat{\Psi}\mathbf{B}\right\|_\chi^{\frac{1}{2}}\left\|\mathbf{A}\hat{\Psi}\mathbf{A}^\dag\right\|_\chi^{\frac{1}{2}}.
\]
\begin{proof}
Let us consider $h_1, h_2\in \mathcal{H}$, then
\begin{equation*}
\begin{split}
|\langle \Psi h_1, h_2 \rangle_\chi|&=|\langle(\Id-\Pi)(\Psi-\Pi)(\Id-\Pi)h_1,h_2 \rangle_\chi + \langle \Pi h_1,h_2\rangle_\chi|\\
&=|\langle(\Psi-\Pi)(\Id-\Pi)h_1,(\Id-\Pi)h_2 \rangle_\chi + \langle \Pi h_1,h_2\rangle_\chi|\\
&\leq|\langle(\Psi-\Pi)(\Id-\Pi)h_1,(\Id-\Pi)h_2 \rangle_\chi| + |\langle \Pi h_1,h_2\rangle_\chi|\\
&\leq (1-\varepsilon) \|(\Id-\Pi)h_1\|_\chi\|(\Id-\Pi)h_2\|_\chi+|\langle h_1,\chi \rangle\langle\chi, h_2\rangle|\\
&\leq \sqrt{(1-\varepsilon)\|(\Id-\Pi)h_1\|_\chi^2+|\langle\chi, h_1\rangle|^2}\sqrt{(1-\varepsilon)\|(\Id-\Pi)h_2\|_\chi^2+|\langle\chi, h_2\rangle|^2}\\
&=\langle \hat{\Psi}h_1,h_1 \rangle_\chi^{\frac{1}{2}}\langle \hat{\Psi}h_2,h_2 \rangle_\chi^{\frac{1}{2}}.
\end{split}
\end{equation*}
We can then proceed to complete the lemma:
\begin{equation*}
\begin{split}
\|\mathbf{A}\Psi \mathbf{B}\|_\chi&=\sup_{h_1,h_2:\|h_i\|_\chi=1}|\langle \mathbf{A}\Psi \mathbf{B} h_1,h_2\rangle_\chi|\\
&=\sup_{h_1,h_2:\|h_i\|_\chi=1}|\langle \Psi \mathbf{B} h_1,\mathbf{A}^\dag h_2\rangle_\chi|\\
&\leq\sup_{h_1,h_2:\|h_i\|_\chi=1}\langle \hat{\Psi}\mathbf{B} h_1,\mathbf{B} h_1\rangle_\chi^{\frac{1}{2}}\langle \hat{\Psi}\mathbf{A}^\dag h_2,\mathbf{A}^\dag h_2\rangle_\chi^{\frac{1}{2}}\\
&=\sup_{h_1:\|h_1\|_\chi=1}\langle \mathbf{B}^\dag\hat{\Psi}\mathbf{B} h_1,h_1\rangle_\chi^{\frac{1}{2}}\sup_{h_2:\|h_2\|_\chi=1}\langle \mathbf{A}\hat{\Psi}\mathbf{A}^\dag h_2,h_2\rangle_\chi^{\frac{1}{2}}\\
&=\left\|\mathbf{B}^\dag\hat{\Psi}\mathbf{B}\right\|_\chi^{\frac{1}{2}}\left\|\mathbf{A}\hat{\Psi}\mathbf{A}^\dag\right\|_\chi^{\frac{1}{2}}.
\end{split}
\end{equation*}
\end{proof}
\end{lemma}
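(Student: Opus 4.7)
The plan is to reduce the operator norm to a bilinear-form supremum and then apply a Cauchy–Schwarz-type bound for $\Psi$ in which the Léon-Perron operator $\hat\Psi$ plays the role of a dominating self-adjoint object. Concretely, I would start from
\[
\|\mathbf{A}\Psi\mathbf{B}\|_\chi = \sup_{\|h_1\|_\chi=\|h_2\|_\chi=1}\bigl|\langle \mathbf{A}\Psi\mathbf{B}h_1,h_2\rangle_\chi\bigr| = \sup_{\|h_1\|_\chi=\|h_2\|_\chi=1}\bigl|\langle \Psi\mathbf{B}h_1,\mathbf{A}^\dagger h_2\rangle_\chi\bigr|,
\]
so the lemma will follow once we establish the pointwise bound
\[
\bigl|\langle \Psi g_1,g_2\rangle_\chi\bigr| \leq \langle \hat\Psi g_1,g_1\rangle_\chi^{1/2}\langle \hat\Psi g_2,g_2\rangle_\chi^{1/2}
\]
for arbitrary $g_1,g_2\in \mathcal{H}$, and we then specialise to $g_1=\mathbf{B}h_1$, $g_2=\mathbf{A}^\dagger h_2$.

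For that pointwise bound, the key step is to split each vector into its component along the invariant state and its orthogonal complement using the projection $\Pi:h\mapsto \langle \chi,h\rangle \chi$ (viewing $\chi$ as the unit vector representing the invariant element in the KMS/$L^2_\pi$ inner product). Since $\Psi\Pi=\Pi\Psi=\Pi$, I would write $\Psi=\Pi + (\Id-\Pi)(\Psi-\Pi)(\Id-\Pi)$, giving
\[
\langle \Psi g_1,g_2\rangle_\chi = \langle (\Psi-\Pi)(\Id-\Pi)g_1,(\Id-\Pi)g_2\rangle_\chi + \langle \Pi g_1,g_2\rangle_\chi.
\]
The absolute-spectral-gap definition of $\varepsilon$ (or of $1-\varepsilon$ as the norm of $\Psi$ restricted to $(\Id-\Pi)\mathcal{H}$, understood via the multiplicative symmetrisation) bounds the first term by $(1-\varepsilon)\|(\Id-\Pi)g_1\|_\chi\|(\Id-\Pi)g_2\|_\chi$, while the second equals $\langle\chi,g_1\rangle\overline{\langle\chi,g_2\rangle}$. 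Applying the triangle inequality followed by the scalar Cauchy–Schwarz inequality to the pair of vectors $\bigl((1-\varepsilon)^{1/2}\|(\Id-\Pi)g_i\|_\chi,\,|\langle\chi,g_i\rangle|\bigr)$ then collapses the expression into exactly
\[
\sqrt{(1-\varepsilon)\|(\Id-\Pi)g_1\|_\chi^2 + |\langle\chi,g_1\rangle|^2}\;\sqrt{(1-\varepsilon)\|(\Id-\Pi)g_2\|_\chi^2 + |\langle\chi,g_2\rangle|^2},
\]
and each factor is recognised as $\langle \hat\Psi g_i,g_i\rangle_\chi^{1/2}$ using the explicit form $\hat\Psi=(1-\varepsilon)\Id+\varepsilon\Pi$.

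Finally, I would substitute $g_1=\mathbf{B}h_1$ and $g_2=\mathbf{A}^\dagger h_2$, use $\langle \hat\Psi\mathbf{B}h_1,\mathbf{B}h_1\rangle_\chi=\langle \mathbf{B}^\dagger\hat\Psi\mathbf{B}h_1,h_1\rangle_\chi$ and similarly for the $\mathbf{A}$-term, and take suprema independently over $h_1$ and $h_2$; since both $\mathbf{B}^\dagger\hat\Psi\mathbf{B}$ and $\mathbf{A}\hat\Psi\mathbf{A}^\dagger$ are self-adjoint and positive semidefinite, those suprema equal their operator norms and the conclusion follows. The main subtle point — and the place I would slow down — is the justification of the bound $\|(\Id-\Pi)\Psi(\Id-\Pi)\|_\chi\leq 1-\varepsilon$ in the noncommutative setting, where $\Psi$ is not self-adjoint with respect to the KMS inner product: here I would invoke that $1-\varepsilon$ is by definition the square root of the second-largest eigenvalue of $\Psi^\dagger\Psi$, which provides the required operator bound via $\|\Psi\xi\|_\chi\leq (1-\varepsilon)\|\xi\|_\chi$ for $\xi\perp\chi$.
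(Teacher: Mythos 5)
Your proposal is correct and follows essentially the same route as the paper's proof: you decompose $\Psi$ via $\Pi$, bound the off-diagonal part with $1-\varepsilon$, combine with the scalar Cauchy--Schwarz inequality to produce the $\langle\hat\Psi g_i,g_i\rangle_\chi^{1/2}$ factors, and then substitute $g_1=\mathbf{B}h_1$, $g_2=\mathbf{A}^\dagger h_2$ and optimise. Your extra remark about justifying $\|\Psi\xi\|_\chi\le(1-\varepsilon)\|\xi\|_\chi$ for $\xi\perp\chi$ via the multiplicative symmetrisation is exactly the right reading of the paper's definition of the absolute spectral gap, and does not deviate from the paper's argument.
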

\end{appendices}

\bibliography{bibliography}

\end{document}